\crefname{figure}{Fig.}{Figs.}
\crefname{definition}{Def.}{Defs.}
\crefname{defn}{Def.}{Defs.}
\crefname{equation}{Eq.}{Eqs.}
\crefname{theorem}{Thm.}{Thms.}
\crefname{thm}{Thm.}{Thms.}
\crefname{lemma}{Lem.}{Lems.}
\crefname{lem}{Lem.}{Lems.}
\crefname{corollary}{Cor.}{Cors.}
\crefname{cor}{Cor.}{Cors.}
\crefname{proposition}{Prop.}{Props.}
\crefname{prop}{Prop.}{Props.}
\crefname{enumi}{Item}{Items}
\crefname{example}{E.g.}{E.g.}
\crefname{section}{Sec.}{Secs.}
\crefname{table}{Table}{Tables}
\crefname{appendix}{Appendix}{Apps.}
\crefname{lstlisting}{List.}{List.}
\Crefname{lstlisting}{List.}{List.}
\theoremstyle{plain}
\newtheorem{thm}{Theorem}[section]
\newtheorem{lem}[thm]{Lemma}
\newtheorem{prop}[thm]{Proposition}
\newtheorem{cor}[thm]{Corollary}
\theoremstyle{definition}
\newtheorem{defn}{Definition}[section]
\newtheorem{exmp}{Example}[section]
\theoremstyle{remark}
\newtheorem*{rem}{Remark}
\theoremstyle{plain}
\newtheorem*{thm*}{Theorem}
\newtheorem*{lem*}{Lemma}
\newtheorem*{prop*}{Proposition}
\newtheorem*{cor*}{Corollary}
\theoremstyle{definition}
\newtheorem*{defn*}{Definition}
\newtheorem*{conj*}{Conjecture}
\newtheorem*{exmp*}{Example}
\newenvironment{lemmabiss}[1]{\begin{trivlist}
\item[\hskip \labelsep {\bfseries Lemma #1}.]
\it}{\end{trivlist}}
\newenvironment{thmbiss}[1]{\begin{trivlist}
\item[\hskip \labelsep {\bfseries Theorem #1}]
\it}{\end{trivlist}}
\newenvironment{propbis}[1]{\begin{trivlist}
\item[\hskip \labelsep {\bfseries Proposition #1}]
\it}{\end{trivlist}}
\newenvironment{propbiss}[2]{\begin{trivlist}
\item[\hskip \labelsep {\bfseries Proposition #1}] {(#2).}
\it}{\end{trivlist}}
\tikzstyle{block} = [rectangle, draw, fill=white!20, text width=4em, text centered, minimum height=3em]
\tikzstyle{line} = [draw, very thick, color=black!90, -latex']
\newcommand{\citeMac}[1]{\citeN{#1}}
\begin{document}

\title[Adventures in Monitorability]{Adventures in Monitorability:\\ From Branching to Linear Time and Back Again}         


\author[L. Aceto]{Luca Aceto}
\orcid{0000-0002-2197-3018}             
\affiliation{
  \institution{Gran Sasso Science Institute}            
  \city{L'Aquila}
  \country{Italy}                    
}
\affiliation{
	\department{School of Computer Science}             
	\institution{Reykjavik University}           
	\city{Reykjavik}
	\country{Iceland}                   
}
\email{luca@ru.is}          

\author[A. Achilleos]{Antonis Achilleos}
\orcid{0000-0002-1314-333X}             
\affiliation{
  \department{School of Computer Science}             
  \institution{Reykjavik University}           
  \city{Reykjavik}
  \country{Iceland}                   
}
\email{antonios@ru.is}         
\author[A. Francalanza]{Adrian Francalanza}
\orcid{0000-0003-3829-7391}             
\affiliation{
	\department{Department of Computer Science, ICT}             
	\institution{University of Malta}           
	\city{Msida}
	\country{Malta}                   
}
\email{adrian.francalanza@um.edu.mt}         
\author[A. Ing\'{o}lfsd\'{o}ttir]{Anna Ing\'{o}lfsd\'{o}ttir}
\orcid{0000-0001-8362-3075}             
\affiliation{
	\department{School of Computer Science}             
	\institution{Reykjavik University}           
	\city{Reykjavik}
	\country{Iceland}                   
}
\email{annai@ru.is}         
\author[K. Lehtinen]{Karoliina Lehtinen}
\affiliation{
	\institution{Kiel University}           
	\city{Kiel}
	\country{Germany}                   
}
\affiliation{
	\institution{University of Liverpool}           
	\city{Liverpool}
	\country{United Kingdom}                   
}
\email{k.lehtinen@liverpool.ac.uk}         


\begin{abstract}
%

%
%
This paper establishes a comprehensive theory of runtime monitorability for Hennessy-Milner logic with recursion, a very expressive variant of the modal $\mu$-calculus.
It investigates the monitorability of that logic with a linear-time semantics and then compares the obtained results with ones that were previously presented in the literature for a branching-time setting.
Our work establishes an expressiveness hierarchy of monitorable fragments of Hennessy-Milner logic with recursion in a linear-time setting and exactly identifies what kinds of guarantees can be given using runtime monitors for each fragment in the hierarchy.
Each fragment is shown to be complete, in the sense that
it
can express all properties
that can be monitored under the corresponding guarantees.
The study is carried out using a principled approach to monitoring that connects the semantics of the logic and the operational semantics of monitors.
The proposed framework supports the automatic, compositional synthesis of correct monitors from monitorable properties.
\end{abstract}

\begin{CCSXML}
	<ccs2012>
	<concept>
	<concept_id>10003752.10003790.10002990</concept_id>
	<concept_desc>Theory of computation~Logic and verification</concept_desc>
	<concept_significance>500</concept_significance>
	</concept>
	<concept>
	<concept_id>10003752.10003790.10003793</concept_id>
	<concept_desc>Theory of computation~Modal and temporal logics</concept_desc>
	<concept_significance>500</concept_significance>
	</concept>
	<concept>
	<concept_id>10003752.10003766.10003770</concept_id>
	<concept_desc>Theory of computation~Automata over infinite objects</concept_desc>
	<concept_significance>300</concept_significance>
	</concept>
	<concept>
	<concept_id>10003752.10003777.10003787</concept_id>
	<concept_desc>Theory of computation~Complexity theory and logic</concept_desc>
	<concept_significance>100</concept_significance>
	</concept>
	</ccs2012>
\end{CCSXML}

\ccsdesc[500]{Theory of computation~Logic and verification}
\ccsdesc[500]{Theory of computation~Modal and temporal logics}
\ccsdesc[100]{Theory of computation~Automata over infinite objects}
\ccsdesc[100]{Theory of computation~Complexity theory and logic}


\keywords{monitorability, linear-time and branching-time logics, monitor synthesis}  

\maketitle

\section{Introduction}
\label{sec:intro}

%
%
The ubiquitous proliferation of software---from
 high-frequency stock market trading and autonomous vehicles, down to mundane objects such as mobile phones and household appliances---makes a strong case for stringent software correctness requirements.
This proliferation has also substantially altered the manner in which software is developed and deployed.
Today's software often consists of multiple components (\eg third-party libraries, mobile apps, microservices, cloud services \etc) that are
developed and maintained by independent software organisations.
In this setting,  access to the components' internal workings varies (\eg open-source versus  proprietary code) and
different components may be
subject to diverse quality controls.
Moreover, time-to-market constraints often impose multiple deployment phases where software is rolled out \emph{in stages} and third-party components change without notice from one deployment phase to the next.
Requirements from various stakeholders may also evolve between deployment phases and occasionally become conflicting.
These realities
suggest that there is
no silver bullet for ensuring software correctness.
Any adequate solution will most likely need to employ \emph{multiple verification techniques}  (\eg testing, model checking, theorem proving, log analysis, type checking, monitoring \etc) in a coherent manner, spanning the various stages of the software development lifecyle.

Runtime Verification (RV) \cite{BartocciFFR:18:RVIntro} is
a lightweight verification technique that checks
for the correctness of the system under scrutiny by analysing the \emph{current execution} exhibited by the system.
RV generally assumes a logic (or some other formal language) for describing the correctness specifications of the system.
From these specifications, (online) RV generates computational entities called \emph{monitors} that are then instrumented to run with the system so as to \emph{incrementally} analyse its execution (expressed as a trace of captured events) and
reach \emph{(irrevocable) judgements} relating to system violations or satisfactions for these specifications.
These characteristics make RV an ideal candidate to be used in a  multi-pronged approach towards ensuring software correctness:
it can
verify the correctness of components that are either not available for inspection prior to deployment, or are too
expensive
to check via more exhaustive
and less scalable
verification techniques such as model checking~\cite{Baier2008Book,Clarke1999Book}.
%
%
Importantly, in settings where multiple verification techniques are used,
one cannot necessarily expect specifications to be expressed in a language tailored specifically to RV.
Indeed, the use of disparate specification logics
specific to every verification technique that is used for validating system correctness is expensive.
Moreover, an RV-specific property language leads to a poor separation of concerns between the effort required to formulate the specifications and the engineering endeavour needed to determine how to best verify them.
%
%
Therefore 
it is
natural and important
to
develop
RV foundations that are based on \emph{general-purpose specification languages}, which subsume application-specific verification concerns.
%

In order for RV to be used effectively in this way, a few
foundational
questions need to be
addressed.
%
Principal among them is the question of \emph{monitorability}:
for sufficiently expressive specification logics, it is often the case that
some specifications
\emph{cannot} be monitored at runtime.
For example, the observation of finite executions does not give sufficient information to decide whether the specification ``every request is eventually followed by an answer'' is satisfied.
%
It is thus important to identify \emph{which} specifiable properties are monitorable and \emph{which are not}, since this directly impinges on whether to use RV or some other verification technique instead.
Another fundamental question is that of monitor \emph{correctness}.
Monitors are often considered part of the trusted computing base and any errors in their code could either invalidate the runtime analysis they perform or, even worse, compromise the execution of the system itself.
In order to ensure monitor correctness, one must first establish what it means for a monitor to
\emph{adequately verify} a specification at runtime.
In fact, there may be a number of plausible definitions for this notion, each contributing to different monitor implementations.
The question of what it means to adequately verify  a specification at runtime directly impacts the question of monitorability as well, and guides the design of algorithms for the synthesis of correct monitors from monitorable properties.
A third fundamental question concerns the limits of monitor \emph{expressiveness}.
After one has established the monitorability of a set of properties from a reasonably general specification logic,
it is important to know whether this set contains \emph{all} properties that can be expressed in the logic and can, at the same time, be monitored at runtime.
This is the question of \emph{maximality} of
the monitorable fragment of the specification language, and
its importance lies in the knowledge that
one can identify a logical sub-language that
syntactically characterises all monitorable properties:
syntactic characterisations of monitorable properties provide a core calculus for conducting further studies and facilitate tool construction.

In prior work~\cite{FrancalanzaAI15,FraAI:17:FMSD,AceAFI:17:FSTTCS,FrancalanzaAAACDMI:17:RV,AceAFI:18:FOSSACS},
these foundational questions have been investigated for a highly expressive logic called Hennessy-Milner Logic with recursion (\UHML)~\cite{Larsen:90:HMLRec},  a variant of the \UCalc~\cite{Koz:83:TCS}, that can embed a variety of widely used logics such as LTL and CTL, thus guaranteeing a good level of generality for the obtained results.
A distinctive aspect of this programme of study is the differentiation between
the \emph{semantics of the logic} on the one hand, and the \emph{operational semantics of monitors}
on the other, which mirrors the separation of concerns required for the multi-pronged verification approach advocated
earlier.
%
Within the proposed framework, the definitions of monitorability and correctness emerge naturally as relationships between the two semantics.
That is, the relationship between the verdicts reached by a monitor and the satisfaction of a specification by the observed system naturally characterises both the monitor's correctness and the specification's monitorability.

Despite its merits, that body of work remains rather disconnected from the more established classical results on monitorability~\cite{MannaPnueli:91:TCS,ChangMannaPnueli:92:ALP,PnueliZaks:06:FM,BauerLeuckerSchallhart:10:LandC,
FalconeFernandezMounier:STTT:12}.
One major complication obstructing a unified understanding of all these monitoring theories is the fact that the former work on \UHML is carried out for a \emph{branching-time} semantics, whereas the classical theories target specifications for a \emph{linear-time} semantics.
%
Propitiously, however, the \UCalc also has a well-established linear-time semantics, which can be easily adapted to \UHML.
This provides us with an opportunity to extend the principled
framework developed in \citeMac{AceAFI:17:FSTTCS} and \citeMac{FrancalanzaAI15,FraAI:17:FMSD}  to a linear-time setting, offering an ideal
basis
to better understand the connections between monitorability for branching-time and linear-time specifications.
We
contend that
this framework
is
general enough to lay the foundations for a potential unified theory of monitorability.

\paragraph{Contributions and Synopsis.}  This paper sets out to establish a comprehensive theory of monitorability for \UHML, by investigating the monitorability of
that logic with a linear-time semantics and then comparing the obtained results  with those presented in the literature in a branching-time setting.
%
We identify the trade-offs between monitoring guarantees and expressiveness:
In general, the more we expect from monitors, the fewer specifications can be monitored. Here we establish an expressiveness hierarchy within  linear-time \UHML and identify exactly what kind of guarantees can be given for each type of specification.
\begin{itemize}
\item We show that, compared to branching time, linear time allows for a much stronger notion
of monitorability requiring that a monitor correctly report both the satisfaction and the violation of the property it checks on all system executions.
We identify a fragment of \UHML that captures exactly linear-time properties with such monitors (\Cref{prop:hml-maximal}), and show how to
synthesise monitors from them
(\Cref{def:mon-synt-complete}).
\item
For any collection of monitors with irrevocable acceptance and rejection verdicts, which are reported after examining a finite prefix of the observed execution, we show a strong maximality result for the above-mentioned logical fragment (\Cref{thm:stronger-HML-maximality}), which guarantees that all monitorable properties of traces can be expressed in that fragment of \UHML.
\item We apply the weaker notion of monitorability called \textit{partial} monitorability from \citeMac{FraAI:17:FMSD}, which guarantees that a monitor does not reach an incorrect verdict and reaches a verdict for either all violations or all satisfactions. Again, we give a syntactic characterisation of linear-time properties that can be monitored with such monitors (\Cref{prop:linear-partial-maximality}), we show how to
synthesise correct monitors from them
(\Cref{def:formula-to-monitor-part}), and prove maximality results.
%
\item We establish a
relationship between specifications that are partially monitorable in branching-time and in linear-time semantics (\Cref{sec:branchingtime}). To establish this result, we
study
how considering specifications over \textit{both} finite and infinite executions affects monitorability.
Our main observation here is that the syntactic fragment identified as partially monitorable with respect to branching-time semantics and the one identified as partially monitorable with respect to linear-time semantics are equally expressive under linear-time semantics over a finite set of actions. This bridges the gap in the treatment of monitorability on linear- versus branching-time domains.
%
\end{itemize}

Our results establish a unified foundation for an increasingly important verification technique, covering \emph{both} branching-time and linear-time specifications.
We establish simple syntactic characterisations for specifications that can be monitored at runtime for various monitor requirements.
For each characterisation, we provide a synthesis function that automates the generation of the corresponding monitors, whose correctness proofs depend on delicate arguments about the monitor semantics.
%
%
This approach facilitates the design and implementation of correct monitors,
along the lines of previous work on tool construction~\cite{Attard:16:RV,Attard:17:Book,FrancalanzaS15}.
Throughout our technical development, we also highlight
the subtle aspects of moving between semantics of branching processes, infinite traces, and potentially finite traces, and provide ample discussion on how they affect monitorability.
Crucially, our results are not just limited to our line of work.
For instance, the syntactic characterisations of monitorable properties set maximality limits to a number of existing RV tools using popular logics such as LTL since these logics can be embedded in our general language \UHML.


The proofs of all the results in the paper may be found in the 
appendix.

\section{Preliminaries}
\label{sec:preliminaries}

\begin{figure}[!h]
  \textbf{Syntax}
  \begin{align*}
      \hV,\hVV \in \UHML &\bnfdef  \hTru && (\text{truth})&
                               &\bnfsepp  \hFls && (\text{falsehood})\\
             & \bnfsepp \hOr{\hV\,}{\,\hVV}  && (\text{disjunction}) &
             & \bnfsepp \hAnd{\hV\,}{\,\hVV}  && (\text{conjunction}) \\
             &\bnfsepp \hSuf{\ASet}{\hV} && (\text{possibility}) &
              &\bnfsepp \hNec{\ASet}{\hV} && (\text{necessity}) \\
              & \bnfsepp \hMinX{\hV} && (\text{min. fixpoint}) &
              & \bnfsepp \hMaxX{\hV} && (\text{max. fixpoint}) \\
              & \bnfsepp\; X && (\text{rec. variable})\\
    \end{align*}
    \\
  \textbf{Linear-Time Semantics}
     \[\begin{array}{rlrl}
      \hSemL{\hTru,\sigma}  & \!\!\!\deftxt   \Trc
      &
      \hSemL{\hFls,\sigma}  & \!\!\!\deftxt   \emptyset
      \\
      \hSemL{\hOr{\hV_1}{\hV_2},\sigma} & \!\!\!\deftxt   \hSemL{\hV_1,\sigma} \cup \hSemL{\hV_2,\sigma}
      \qquad\qquad
       &
      \hSemL{\hAnd{\hV_1}{\hV_2},\sigma} & \!\!\!\deftxt   \hSemL{\hV_1,\sigma} \cap \hSemL{\hV_2,\sigma}
      \\
      \hSemL{\hSuf{\ASet}{\hV},\sigma}  &
      \multicolumn{3}{l}{
          \!\!\!\deftxt \sset{\tr \;|\;
           \exists \trr \cdot \exists
          \acta\in \ASet
           \cdot\tr=\act\trr
          \;\text{ and }\; \trr \in \hSemL{\hV,\sigma}
          }
      }
    \\
    \hSemL{\hNec{\ASet}{\hV},\sigma}  &
    \multicolumn{3}{l}{
      \!\!\!\deftxt \sset{\tr \;|\;
       \forall \trr \cdot \forall
      \acta \in \ASet
       \cdot  \tr=\acta\trr
      \;\text{ implies }\;\trr \in \hSemL{\hV,\sigma}
       }
     }

    \\
    \hSemL{\hMin{\!\hVarX}{\hV},\sigma} & \!\!\!\deftxt
    \bigcap \sset{\TSet \;|\;  \hSemL{\hV,\sigma[\hVarX\mapsto \TSet]} \subseteq \TSet\ }

    \\
    \hSemL{\hMax{\!\hVarX}{\hV},\sigma} & \!\!\!\deftxt
    \bigcup \sset{\TSet \;|\;  \TSet \subseteq \hSemL{\hV,\sigma[\hVarX\mapsto \TSet]}\ }
    \quad\;
    &
    \hSemL{\hVarX,\sigma} & \!\!\!\deftxt   \sigma(\hVarX)
    \\
    \\
    \end{array}
    \]
  \\
  \textbf{Branching-Time Semantics}
  \[
  \begin{array}{rlrl}
    \hSemB{\hTru,\rho}  & \!\!\!\deftxt \Proc
      &
    \hSemB{\hFls,\rho}  & \!\!\!\deftxt \emptyset
    \\
    \hSemB{\hOr{\hV_1}{\hV_2},\rho} &
    \!\!\!\deftxt \hSemB{\hV_1,\rho} \cup \hSemB{\hV_2,\rho}
    \qquad\qquad
    &
    \hSemB{\hAnd{\hV_1}{\hV_2},\rho} &
    \!\!\!\deftxt \hSemB{\hV_1,\rho} \cap \hSemB{\hV_2,\rho}
    \\
    \hSemB{\hSuf{\ASet}{\hV},\rho}  &
    \multicolumn{3}{l}{
      \!\!\!\deftxt
          \sset{p \;|\; \exists q\cdot \exists \acta\in\ASet \cdot
          p \wtraS{\acta} q   \;\text{ and }\; q \in \hSemB{\hV,\rho} }
    }
    \\
    \hSemB{\hNec{\ASet}{\hV},\rho}  &
    \multicolumn{3}{l}{
      \!\!\!\deftxt
        \sset{p \;|\; \forall q\cdot \forall \acta \in \ASet \cdot p  \wtraS{\act} q \;\text{ implies }\; q \in \hSemB{\hV,\rho} }
    }
    \\
    \hSemB{\hMin{\!\hVarX}{\hV},\rho} &
    \!\!\!\deftxt
    \bigcap
    \sset{\PSet \;|\;  \hSemB{\hV,\rho[\hVarX\mapsto \PSet]} \subseteq \PSet}
    \\
    \hSemB{\hMax{\!\hVarX}{\hV},\rho} &
    \!\!\!\deftxt
    \bigcup
    \sset{\PSet \;|\;  \PSet \subseteq \hSemB{\hV,\rho[\hVarX\mapsto \PSet]}}
    &
    \hSemB{\hVarX,\rho} &
    \!\!\!\deftxt \rho(\hVarX)
    \end{array}\]
  \caption{\UHML Syntax, Linear-Time and Branching-Time Semantics}
  \label{fig:recHML}
\end{figure}
We provide a brief overview of our touchstone logic, \UHML~\cite{Larsen:90:HMLRec,AceILS:2007}, a reformulation of the highly expressive and extensively studied \UCalc~\cite{Koz:83:TCS}.
%
\subsection{The Syntax}
The logic described in \Cref{fig:recHML} is a mild generalisation of \UHML  \cite{Larsen:90:HMLRec,AceILS:2007}.
It assumes a
set of actions, $\acta,\actb,\ldots
\in \Act$, together with a distinguished \emph{internal} action $\tau$, where $\tau\not\in \Act$.
 We refer to the actions in $\Act$ as \emph{external} actions, as opposed to the
action $\tau$, and use ${\actu \in \Act\cup\sset{\tau}}$ to refer to either.
The metavariables $\ASet,\BSet,\ldots \subseteq \Act$ range over sets of (external) actions, where the convenient notation $\coASet$ is occasionally used to denote $\Act\setminus \ASet$;
whenever the context allows us to do so unambiguously, singleton sets $\sset{\acta}$ are also occasionally denoted as \acta, and $\overline{\{\acta \}}$ is occasionally denoted as  $\coact{\acta}$.

The grammar in \Cref{fig:recHML} also assumes a countable set of logical variables $\hVarX,\hVarY \in \LVars$.
Apart from the standard constructs for truth, falsehood, conjunction and disjunction, the logic is equipped with
existential and universal modal operators that use sets of actions, \ASet.
A hallmark of the logic is the use of \emph{two} recursion operators  that express least or greatest fixpoints:
formulae \hMinX{\hV} and  \hMaxX{\hV}  bind free instances of the logical variable \hVarX in \hV, inducing the usual notions of open/closed formulae and  formula equality up to alpha-conversion.
A formula is said to be guarded if every fixpoint variable appears within the scope of a modality within its fixpoint binding.
All formulae are assumed to be guarded (without loss of expressiveness~\cite{kupferman00}).
For a formula $\hV$, we use $l(\hV)$ to denote the length of $\hV$ as a string of symbols.


\subsection{The models}

We provide
linear- and branching-time
interpretations for the logic.
%
%
The metavariables $\tV,\tVV{\in}\Trc=\Act^\omega$ range over \emph{infinite} sequences of external actions, abstractly representing complete system runs;  the metavariable $\TSet \subseteq \Trc$ ranges over \emph{sets} of traces.
\emph{Finite traces}, denoted as $\ftV,\ftVV \in \Act^\ast$, represent \emph{finite} prefixes of a system run or finite executions.
\emph{Explicit traces}, denoted as $\etV,\etVV \in {(\Act \cup \{\tau\})}^\ast$, represent  \emph{detailed finite} prefixes of a system run that also include its internal transitions;
the function \filter{\etV} returns the finite trace \ftV\ that is left after dropping all the \actt-actions from \etV.
We say that two explicit traces \emph{agree on the external actions}, denoted as $\etV_1 \equiv_\Act \etV_2$, whenever $\filter{\etV_1}=\filter{\etV_2}$.
A trace (\resp finite trace) with action \acta at its head is denoted as $\acta\tV$ (\resp $\acta\ftV$).
An explicit trace with action \actu at its head is denoted as $\actu\etV$.
%
%
Similarly, a trace with a prefix \ftV\ and continuation \tV\ is denoted as $\ftr\tr$.
%


The denotational semantic function $\hSemL{-}$  in \Cref{fig:recHML} maps a formula to a set of traces, and is referred to as the \emph{linear-time} semantics of \UHML.
It uses valuations that map logical variables to sets of traces,
$\sigma: \LVars \to \powset{\Trc}$, to define the semantics by induction on the structure of the formulae.
Intuitively, $\sigma(\hVarX)$ is the set of traces assumed to satisfy \hVarX.
The cases for truth, falsehood, disjunction  and conjunction are straightforward.
An
existential modal formula \hSuf{\ASet}{\hV} denotes all traces with a prefix action $\acta$ from the action set \ASet and a continuation that satisfies \hV.
A universal modal
formula \hNec{\ASet}{\hV} denotes all traces that are either \emph{not} prefixed by any \acta in \ASet, or
have a continuation \tVV\ satisfying \hV.
The sets of traces satisfying the least and greatest fixpoint formulae, \hMinX{\hV} and  \hMaxX{\hV}, are  defined as intersection (\resp union) of all the pre-fixpoints (\resp post-fixpoints) of the function induced by the formula \hV.
%

%
The second interpretation of \UHML, denoted by \hSemB{-}, is defined in terms of \emph{processes}, \Proc, and is referred to as the branching-time semantics.
It assumes a set of process states, $p,q,\ldots \in \Proc$ where $\PSet\subseteq \Proc$,  and a transition relation, $\reduc \subseteq (\Proc\times(\Act\cup\sset{\tau})\times \Proc)$.
The triple $\langle \Proc,(\Act\cup\sset{\tau}),\reduc\rangle$ forms a Labelled Transition System (LTS)~\cite{Keller:1976:CACM}.
The suggestive notation $p \traS{\actu} p'$ denotes $(p,\actu,p') \in\ \reduc$;
we also write $p \traSN{\actu}$ to denote $\neg(\exists p'\cdot \;p\traS{\actu}p')$.
We employ the usual notation for weak transitions and write $p \wreduc p'$ in lieu of $p (\traS{\tau})^{\ast} p'$ and $p \wtraS{\actu} p'$ for
$p \wreduc\cdot\traS{\actu}\cdot\wreduc p'$, referring to $p'$ as a \actu-derivative of $p$.
As we have done for strong transitions, for weak transitions we use
$p \wtraS{\actu}$ to denote $\exists p'\cdot \;p\wtraS{\actu}p'$
and
$p \centernot{\wtraS{\actu}}$ to denote $\neg(\exists p'\cdot \;p\wtraS{\actu}p')$.
%
%
Sequences of weak transitions
$p\wtra{\act_1}\cdots\wtra{\act_n} p'$ are written as $p \wtraS{\ftV} p'$,
where $\ftV = \act_1 \cdots \act_n$.
Similarly, for strong transitions, $p\tra{\actu_1}\cdots\tra{\actu_n} p'$ is written as $p \etraS{\etV} p'$,
where $\etV = \actu_1 \cdots \actu_n$. We say that $p$ produces a trace $t=\act_1\act_2\cdots$  if there are processes $p_0, p_1, p_2, \ldots$ such that $p = p_0$ and $p_0 \wtraS{\act_1} p_1 \wtraS{\act_2} p_2 \cdots$.
While an LTS can be used to model a single system,
it can also model all possible system behaviours.

The branching-time semantics in \Cref{fig:recHML} follows the linear-time semantics for most cases, using a valuation from variables to \emph{sets of processes}, $\rho: \LVars \to \powset{\Proc}$, instead.
The main differences are \wrt the modal formulae.
A
universal modal
formula \hNec{\ASet}{\hV} requires \emph{all} $\acta$-derivatives of a   process, where $\acta \in \ASet$, to satisfy \hV.
By contrast, an
existential modal formula \hSuf{\ASet}{\hV} requires the \emph{existence} of at least one $\acta$-derivative, for some $\acta\in\ASet$, that satisfies \hV.

%
For closed formulae, we use \hSemL{\hV} and \hSemB{\hV} in lieu of \hSemL{\hV,\sigma} and \hSemL{\hV,\rho}  (for some $\sigma$ and $\rho$) \resp since the semantics is independent of the valuation.
We also write $\hSem{\hV}$ instead of \hSemL{\hV} or \hSemB{\hV},  whenever the correct interpretation can be
discerned from the context or the specific interpretation is unimportant.
Unless otherwise stated, we
assume that the formulae we consider are all \emph{closed}.
%
%

\begin{exmp}[Expressiveness] \label{ex:rechml-expressivity}
  For arbitrary formulae $\hV,\hVV \in \ltmu$, we can encode the following characteristic  LTL operators \cite{Clarke1999Book} as:
  \begin{align}
    \textsf{X}\,\hV & {\deftxt} \hSuf{\Act}{\hV}
    &
    \hV\,\textsf{U}\,\hVV & {\deftxt} \hMinY{\bigl(\hOr{\hVV}{(\hAnd{\hV}{\hSuf{\Act}\hVarY})}\bigr)}
    &
    \hV\,\textsf{R}\,\hVV & {\deftxt} \hMaxY{\bigl(\hOr{(\hAnd{\hVV}{\hV})}{(\hAnd{\hVV}{\hSuf{\Act}\hVarY})}\bigr)} \tag*{\qedd}
  \end{align}
\end{exmp}

\begin{exmp}[Comparison] \label{ex:lin-vs-bra-rechml}
   %
   Assume $\Act = \sset{a,b,c}$.
   Consider the two formulae
   \begin{align*}
      \hV_1 &=  \hNec{a}{\hNec{a}{\hFls}}
      &
      \hV_2 &= \hNec{a}{(\hOr{\hSuf{a}{\hTru}}{\hSuf{\sset{b,c}}{\hTru}})}
   \end{align*}
   together with
   the trace (denoted by the $\omega$-regular expression)
   \begin{math}
     \tV = (a.b)^\omega
   \end{math},
   and the (non-deterministic) process (described by the regular CCS syntax \cite{milner:89:CCS})
   \begin{math}
     p  = \rec{x}{(\ch{\prf{a}{\prf{b}{x}}}{\ch{\prf{a}{\prf{a}{x}}}{
     \prf{a}{\nil}}})}
   \end{math}.
   %
   In particular, we note that $p$ can produce the infinite trace \tV.

    Whereas $\tV \in \hSemL{\hV_1}$, we have $p \not\in \hSemB{\hV_1}$ because along one branch we have $p \wtraS{a} \prf{a}{p}$ and $\prf{a}{p} \not\in \hSemB{\hNec{a}{\hFls}}$.
    In linear-time semantics, the equality
    $\hSemL{\hOr{\hSuf{\ASet}{\hTru}}{\hSuf{\coASet}{\hTru}}} = \hSemL{\hTru}$ holds for \emph{each} \ASet.
    One can also easily deduce that $\hSem{\hNec{\ASet}{\hTru}} = \hSem{\hTru}$  for both linear- and branching-time semantics from the semantics of \Cref{fig:recHML}.
    Hence, in our case (where $\Act = \sset{a,b,c}$), we obtain  $\hSemL{\hOr{\hSuf{a}{\hTru}}{\hSuf{\sset{b,c}}{\hTru}}}  = \hSemL{\hTru}$ by instantiating
    $\hSemL{\hOr{\hSuf{\ASet}{\hTru}}{\hSuf{\coASet}{\hTru}}} = \hSemL{\hTru}$ with $\ASet = \sset{a}$.
    As a result,
    $\hV_2$
    is equivalent
    to \hTru under linear-time semantics and we have $\tV {\in} \hSemL{\hV_2}$ for \emph{every} trace \tV.
    However, under branching-time semantics $\hSemB{\hOr{\hSuf{a}{\hTru}}{\hSuf{\sset{b,c}}{\hTru}}} \neq \hSemB{\hTru}$
    (one witness for the inequality is the deadlocked process \nil,
    $\hSemB{\hTru} \ni \nil \not\in \hSemB{\hOr{\hSuf{a}{\hTru}}{\hSuf{\sset{b,c}}{\hTru}}}$). In fact, $p \traS{a} \nil$ and thus $p \not\in \hSemB{\hV_2}$.   \qedd
\end{exmp}

\begin{rem}
  Action sets \ASet in \hNec{\ASet}{\hV} and \hSuf{\ASet}{\hV} are typically
  expressed using
  predicates in tools such as those described in
  \citeMac{Attard:16:RV}, \citeMac{Attard:17:Book} and \citeMac{AcetoCFI18}.
  For example, modalities can be labelled by an output action on port $x$ carrying payload $\langle 8,y\rangle$ where the
  data variables $x$ and $y$ are constrained by conditions, as in
  $\hNec{ \textbf{out}(x,\langle 8,y\rangle),
  (192.188.34.42 \geq x \geq 192.188.34.1) \wedge  \textbf{mod}(y) = 1
  }{\hV}$.
  In the sequel, we shall assume that \Act (and thus any action set \ASet) is a \emph{finite} set of actions.
  This helps to simplify our technical development and enables us to focus on the core issues being studied.
  However, finite action sets are not necessarily a limitation since, in most cases,  infinite data sets can be treated in a finite manner using standard symbolic techniques
  (\eg see \citeMac{Fra:17:CONCUR} for a recent treatment of the subject in the context of monitors). \qedd
\end{rem}

\begin{rem}
  For a finite set $I$ of indices, the (standard) notation $\bigwedge_{i \in I} \hV_i$  denotes $\hTru$ when $I = \emptyset$, and
  a conjunction of the formulae in $\{ \hV_i \mid i \in I \}$ when $I \neq \emptyset$.
  Similarly $\bigvee_{i \in I} \hV_i$ denotes $\hFls$ when $I = \emptyset$, and
  a disjunction of the formulae in $\{ \hV_i \mid i \in I \}$ when $I \neq \emptyset$.
  These notations are justified by the fact that $\vee$ and $\wedge$ are commutative and associative with respect to all the semantics considered in the paper.
  We also observe that, for both semantics, $\hNec{A}{\hV}$ is equivalent to $\bigwedge_{\act \in A}\hNec{\act}\hV$, and $\hSuf{A}{\hV}$ is equivalent to $\bigvee_{\act \in A}\hSuf{\act}\hV$ for finite $A$, so we use these equivalent notations interchangeably.
  \qedd
\end{rem}

\section{A monitoring framework}
\label{sec:monitors}


\begin{figure}[t]
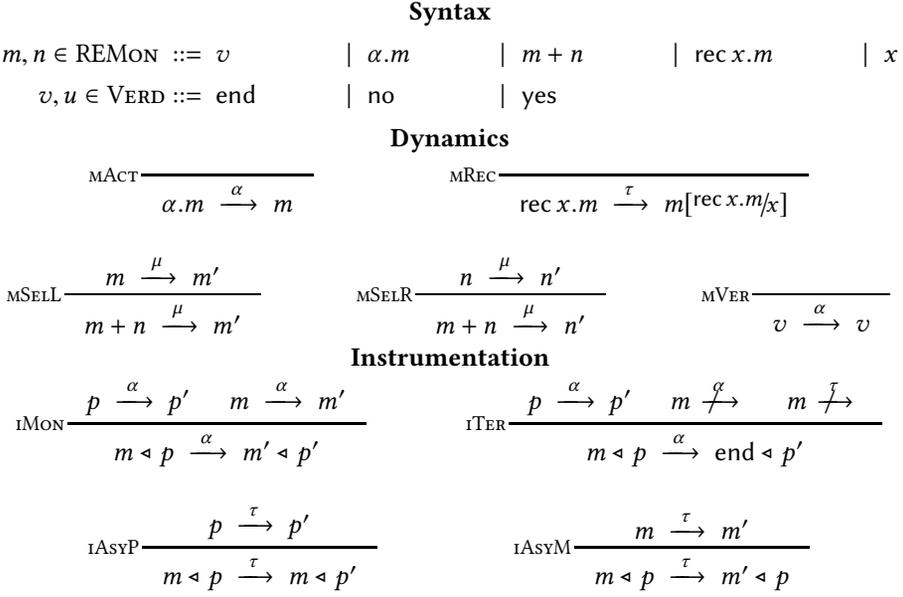

  \textbf{Syntax}
  \begin{align*}
    \mV,\mVV\in\REMon\ &\bnfdef\  \vV
    && \bnfsepp \prf{\acta}{\mV}
    && \bnfsepp \ch{\mV}{\mVV}
    && \bnfsepp \rec{x}{\mV}
    &&\bnfsepp x \\
    \vV,\vVV \in \Verd & \bnfdef\ \stp
    &&  \bnfsepp \no
    && \bnfsepp \yes
  \end{align*}
  \textbf{Dynamics}
   \begin{mathpar}
 \inference[\rtit{mAct}]{
 }{\prf{\acta}{\mV}  \traSS{\acta} \mV}
 \and
 %
\inference[\rtit{mRec}]{}{\rec{x}{\mV} \traSS{\tau} \mV\subS{\rec{x}{\mV}}{x}}
  \\\\\\
   \inference[\rtit{mSelL}]{\mV \traSS{\actu} \mV'}{\ch{\mV}{\mVV}  \traSS{\actu} \mV'}
   \and
  \inference[\rtit{mSelR}]{\mVV \traSS{\actu} \mVV'}{\ch{\mV}{\mVV}  \traSS{\actu} \mVV'}
  \and
  \inference[\rtit{mVer}]{
  }{\vV \traSS{\acta} \vV}
\end{mathpar}
 \textbf{Instrumentation}
 \begin{mathpar}
   \inference[\rtit{iMon}]{p \traSS{\acta} p' & \mV \traSS{\acta} \mV'}{\sys{\mV}{p} \traSS{\acta} \sys{\mV'}{p'}}
  \and
\inference[\rtit{iTer}]{p \traSS{\acta} p' & \mV \traSSN{\acta}  & \mV \traSSN{\tau} }{\sys{\mV}{p} \traSS{\acta} \sys{\stp}{p'}}
\\\\\\
\inference[\rtit{iAsyP}]{p \traSS{\tau} p'}{\sys{\mV}{p} \traSS{\tau} \sys{\mV}{p'}}
   \and
\inference[\rtit{iAsyM}]{\mV \traSS{\tau} \mV'}{\sys{\mV}{p} \traSS{\tau} \sys{\mV'}{p}}
 \end{mathpar}
  \caption{Monitors and Instrumentation}
  \label{fig:monit-instr}
\end{figure}

A distinctive feature of the work in \citeMac{AceAFI:17:FSTTCS,AceAFI:18:FOSSACS} and \citeMac{FraAI:17:FMSD} is the full description of the monitoring setup used, which incorporates the monitor definition together with the system instrumentation mechanism---monitor compositionality results have shown that the semantics of monitors in an instrumented setup differs substantially from that given for monitors in isolation~\cite{Fra:16:FOSSACS,Fra:17:CONCUR}.
Here we follow this comprehensive approach.

\subsection{Regular Monitors}

Regular monitors are LTSs defined by the grammar and transition rules in \cref{fig:monit-instr},
 used already in \citeMac{AceAFI:17:FSTTCS} and \citeMac{FraAI:17:FMSD}.
A transition $\mV \traS{\acta} \mVV$ denotes that the monitor in state \mV can \emph{analyse} the (external) action \acta and transition to state \mVV.
%
%
Monitors may reach any one of \emph{three} verdicts after analysing a finite trace: \emph{acceptance}, \yes, \emph{rejection}, \no, and the \emph{inconclusive} verdict \stp.
We highlight the transition rule for verdicts in \Cref{fig:monit-instr}, describing the fact that from a verdict state any action can be analysed by transitioning to the same state;  verdicts are thus \emph{irrevocable}.
The remaining constructs and transitions  are standard.
  If \emph{at most one} of the verdicts $\yes,\no$ appears in $\mV$, then $\mV$ is called a \emph{single-verdict monitor}.
Otherwise, $\mV$ is called a dual-verdict monitor.
Just like for formulae,
we use  $l(\mV)$ to denote the length of $\mV$ as a string of symbols.
In the sequel, for a finite nonempty set of indices $I$, we use notation $\sum_{i \in I}\mV_i$ to denote a combination of the monitors in
$\{\mV_i \mid i \in I\}$
using the operator $+$.
The notation is justified, because $+$ is commutative and associative with respect to the transitions that a resulting monitor can exhibit.
We also use the shorthand notation \prf{\ASet}{\mV} to denote $\sum_{\acta{\in}\ASet} \prf{\acta}{\mV}$ (for finite non-empty \ASet).
The regular monitors in \Cref{fig:monit-instr} have an important property, namely that their state space, \ie the set of reachable states, is finite.
This is a valuable property for ensuring reasonable overheads in terms of the amount of memory the monitor will use at runtime (see \Cref{prop:reg-mon-fin-state}%
, whose proof is in  \Cref{sec:monitors-appendix}).

\begin{lem}[Verdict Persistence]\label{lem:ver-persistence}
  \begin{math}
    \vV \etraS{\etV} \mV \text{ implies } \mV=\vV.
  \end{math}\qed 
\end{lem}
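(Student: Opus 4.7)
The plan is to prove the statement by straightforward induction on the length of the explicit trace $\etV$, using the observation that the only transition rule in \Cref{fig:monit-instr} whose left-hand side matches a verdict $\vV$ is \rtit{mVer}.

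First I would inspect the transition rules and note that rule \rtit{mVer} is the unique rule that can fire when the source state is a verdict: the other monitor rules \rtit{mAct}, \rtit{mRec}, \rtit{mSelL}, \rtit{mSelR} all have source terms of a syntactic shape ($\prf{\acta}{\mV}$, $\rec{x}{\mV}$, $\ch{\mV}{\mVV}$) that is disjoint from $\Verd = \{\stp,\no,\yes\}$. Moreover, \rtit{mVer} only applies to external actions $\acta \in \Act$, so a verdict admits \emph{no} $\tau$-transitions, and every external-action transition it admits is a self-loop $\vV \traSS{\acta} \vV$.

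Then I would proceed by induction on the length $n$ of the explicit trace $\etV = \actu_1 \cdots \actu_n$. In the base case $n = 0$, we have $\etV = \varepsilon$ and hence $\mV = \vV$ by definition of $\etraS{\varepsilon}$. In the inductive step, suppose $\vV \etraS{\actu_1 \cdots \actu_n} \mV$, so there exists $\mV''$ with $\vV \traSS{\actu_1} \mV'' \etraS{\actu_2 \cdots \actu_n} \mV$. The first transition must be derived by \rtit{mVer}, since no other rule is applicable with a verdict on the left-hand side; hence $\actu_1 \in \Act$ and $\mV'' = \vV$. Applying the induction hypothesis to $\vV \etraS{\actu_2 \cdots \actu_n} \mV$ yields $\mV = \vV$, as required.

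The argument is essentially a routine case analysis on the inference rules, and no step poses a real obstacle. The only subtlety worth flagging is that the claim holds for \emph{explicit} traces (which include $\tau$), even though verdicts have no $\tau$-transitions: whenever some $\actu_i = \tau$ the premise $\vV \etraS{\etV} \mV$ is simply vacuous, since the required strong transition at that position cannot be derived.
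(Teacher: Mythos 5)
Your proof is correct; the paper treats this lemma as immediate (it is stated with no accompanying proof), and your induction on the length of the explicit trace, driven by the observation that \rtit{mVer} is the only rule applicable to a verdict and only yields self-loops on external actions, is exactly the routine argument that justifies it. Your remark that a $\tau$ in $\etV$ makes the premise vacuous is a correct and worthwhile detail.
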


\begin{defn}[Monitor Reachable States] \label{def:mon-reach}
  \begin{math}
    \reach{\mV} \deftxt \setof{\mVV}{ \exists \etV \cdot \mV \etraS{\etV} \mVV} .
  \end{math}
  \qedd
\end{defn}

\begin{prop} \label{prop:reg-mon-fin-state} Regular monitors are finite state \ie for all $\mV\in\REMon$, \reach{\mV} is finite.
	\qed
\end{prop}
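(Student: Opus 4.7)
The plan is to show that, for each $\mV \in \REMon$, there is a finite set $C(\mV)$ of closed monitor terms that contains $\mV$ and is closed under the monitor transition relation; then $\reach{\mV} \subseteq C(\mV)$ by induction on transition sequences, so $\reach{\mV}$ is finite. A direct structural induction on $\mV$ does not suffice, because rule \rtit{mRec} produces $\mVV\subS{\rec{x}{\mVV}}{x}$, which is strictly larger than $\rec{x}{\mVV}$ as a string of symbols and can, in principle, keep growing under repeated unfolding. The right viewpoint is that the body $\mVV$ is itself a subterm of the original $\mV$, so any term produced by \rtit{mRec} is built from subterms of $\mV$ glued together by substitutions that refer only to the recursion binders already present in $\mV$.

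To make this precise, assume by alpha-renaming that all bound variables in $\mV$ are distinct. For each binder $\rec{x}{\mV_x}$ occurring in $\mV$, its free variables are all bound by strictly enclosing binders of $\mV$; closing those outer binders first, by induction on nesting depth (the outermost binders of the closed term $\mV$ are themselves closed), we obtain a closed term $\widehat{\rec{x}{\mV_x}}$. For each subterm $\mV'$ of $\mV$, let $\widehat{\mV'}$ be the closed term obtained by substituting every free variable $x$ of $\mV'$ by $\widehat{\rec{x}{\mV_x}}$. Set
\[
C(\mV) \deftxt \setof{\widehat{\mV'}}{\mV' \text{ is a subterm of }\mV}.
\]
Because $\mV$ has only finitely many subterms, $C(\mV)$ is finite, and $\mV \in C(\mV)$ since $\mV$ is closed.

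The heart of the argument is the closure lemma: if $\mVV \in C(\mV)$ and $\mVV \traSS{\actu} \mVV'$, then $\mVV' \in C(\mV)$. I would proceed by case analysis on the last rule used. For \rtit{mVer}, \rtit{mAct}, \rtit{mSelL}, and \rtit{mSelR}, the target is $\widehat{\mV''}$ for an appropriate immediate subterm $\mV''$ of the $\mV'$ witnessing $\mVV = \widehat{\mV'}$, so it lies in $C(\mV)$. The delicate case is \rtit{mRec}: if $\mVV = \widehat{\rec{x}{\mV_x}}$ with $\rec{x}{\mV_x}$ a subterm of $\mV$, then unfolding yields precisely $\widehat{\mV_x}$, because the canonical substitution for the subterm $\mV_x$ extends that for $\rec{x}{\mV_x}$ by the clause $x \mapsto \widehat{\rec{x}{\mV_x}}$. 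A straightforward induction on the length of transition sequences then shows $\reach{\mV} \subseteq C(\mV)$.

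I expect the main obstacle to be the substitution bookkeeping in the \rtit{mRec} case: one must verify that applying \rtit{mRec} to the \emph{closed} term $\widehat{\rec{x}{\mV_x}}$ really yields $\widehat{\mV_x}$, via a small auxiliary lemma equating the syntactic substitution $\mV_x\theta\subS{\rec{x}{\mV_x}\theta}{x}$ with the canonical closure of $\mV_x$. The alpha-renaming assumption is precisely what rules out variable capture during these compositions of substitutions.
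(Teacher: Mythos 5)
Your proof is correct and follows essentially the same route as the paper's: the paper defines a finite set $\states{\mV}$ of closed-up subterms by structural recursion (the recursion case $\states{\rec{x}{\mVV}} = \states{\mVV}\subS{\rec{x}{\mVV}}{x}$ playing exactly the role of your canonical closures $\widehat{\mV'}$), bounds its size by $\size{\mV}$, and handles the delicate \rtit{mRec} case via substitution lemmata that correspond to your auxiliary lemma equating $\mV_x\theta\subS{\rec{x}{\mV_x}\theta}{x}$ with the closure of $\mV_x$. The only differences are cosmetic: the paper proves the exact equality $\reach{\mV}=\states{\mV}$ by structural induction on $\mV$, whereas you prove the inclusion $\reach{\mV}\subseteq C(\mV)$ by showing your candidate set is transition-closed, which is all that finiteness requires.
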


We define the following behavioural predicate on monitors, which
relates to
their correctness.

\begin{defn}[Monitor Consistency] \label{def:consitency-totality}
\quad
 A monitor $\mV$ is \emph{consistent} when there is \emph{no} finite trace $\ftr$ such that  $\mV \wtraS{\ftr} \yes$ and $\mV \wtraS{\ftr} \no$.
  \qedd
\end{defn}

Monitors are intended to run in conjunction with the system (\ie process) they are analysing.
Following \citeMac{Fra:16:FOSSACS,Fra:17:CONCUR} and \citeMac{FraAI:17:FMSD}, \Cref{fig:monit-instr} defines a transition relation for a process $p$ instrumented with a monitor \mV, denoted as \sys{\mV}{p}.
The relation is parametric with respect to the transition semantics of the process $p$ and the monitor, as long as the latter includes the inconclusive verdict \stp\ (\eg the monitor transition semantics given in \Cref{fig:monit-instr} does).
The semantics relegates the monitor \mV to a \emph{passive role} in an instrumented system \sys{\mV}{p}, meaning that  \sys{\mV}{p} transitions with an external action \acta only when $p$ transitions with that action.
For instance, when $p$ transitions with action \acta to some $p'$, and \mV can analyse this action and transition to state $\mV'$, the instrumented pair transitions in lockstep to \sys{\mV'}{p'}; see rule \rtit{iMon}.
Conversely, if $p$ wants to transition with an action \acta that the instrumented monitor is \emph{not} able to analyse (perhaps due to underspecification), the instrumented system is \emph{still} allowed to transition with \acta, but the monitor analysis is prematurely aborted to the inconclusive state; see rule \rtit{iTer}.
The other rules allow monitors and processes to execute independently of one another with respect to internal (\actt-)moves.

\begin{exmp} \label{ex:mon-acc-n-rej}
When the monitor $\rec{x}{(\ch{\prf{a}{x}}{\prf{b}{\yes}})}$
is instrumented with the process $\prf{a}{\rec{x}{\prf{b}{x}}}$, it can reach an acceptance verdict
thus:
%
   \begin{align*}
    &\textbf{}\sys{\rec{x}{(\ch{\prf{a}{x}}{\prf{b}{\yes}})}}
    {\prf{a}{\rec{x}{\prf{b}{x}}}}
    \traS{\tau}
    \sys{(\ch{\prf{a}{(\rec{x}{(\ch{\prf{a}{x}}{\prf{b}{\yes}})})}}{\prf{b}{\yes}})}
    {\prf{a}{\rec{x}{\prf{b}{x}}}}
    \traS{a}
    \\
    & \sys{{\rec{x}{(\ch{\prf{a}{x}}{\prf{b}{\yes}}})}}
    {\rec{x}{\prf{b}{x}}}
    \etraS{\actt\actt}
    \sys{\ch{\prf{a}{\rec{x}{(\ch{\prf{a}{x}}{\prf{b}{\yes}})}}}{\prf{b}{\yes}}}
    {\prf{b}{\rec{x}{\prf{b}{x}}}}
    \traS{b}
    \sys{{\yes}}
    {{\rec{x}{\prf{b}{x}}}} .
   \end{align*}
   However, if the same process is instrumented with a slightly different monitor $\rec{x}{(\ch{\prf{a}{\prf{a}{x}}}{\prf{b}{\yes}})}$
   we obtain  a different verdict.
   \begin{align*}
    &
    \sys
     {\rec{x}{
      (\ch
        {\prf{a}{\prf{a}{x}}}
        {\prf{b}{\yes}}
      )
      }}
    {\prf{a}{\rec{x}{\prf{b}{x}}}}
    \traS{\actt}
    \sys{(
      \ch
      {\prf{a}{
        \prf{a}{
          (
            \rec{x}{
              (
                \ch{\prf{a}{\prf{a}{x}}}{\prf{b}{\yes}}
              )
            }
          )
        }
       }
      }
      {\prf{b}{\yes}}
    )}
    {\prf{a}{\rec{x}{\prf{b}{x}}}}
    \traS{a}
    \\
    & \sys{
    \prf{a}{\rec{x}{(\ch{\prf{a}{x}}{\prf{b}{\yes}}})}}
    {\rec{x}{\prf{b}{x}}}
    \traS{\actt}
    \sys{
      \prf{a}{
        \rec{x}{(
        \ch
        {\prf{a}{x}}
        {\prf{b}{\yes}}
        )}
      }
    }
    {\prf{b}{\rec{x}{\prf{b}{x}}}}
    \traS{b}
    \sys{{\stp}}
    {{\rec{x}{\prf{b}{x}}}}
   \end{align*}
   The last transition is obtained via rule \rtit{iTer}, whereby the process exhibited an action  that the current monitor state was unable to analyse (\ie it could only analyse action $a$, not $b$).  \qedd
\end{exmp}

The following lemmata describe how the respective monitor and system LTSs  can be composed and decomposed according to instrumentation
~\cite{Fra:16:FOSSACS,FraAI:17:FMSD}.

\begin{lem}[General Unzipping
] \label{lem:unzipping}
  \sys{\mV}{p} \wtraS{\ftV} \sys{\mVV}{q} implies
  \begin{itemize}
  \item $ p \wtraS{\ftV} q$ and
  \item $\mV \wtraS{\ftV} \mVV$ or ($\exists\ftV_1,\ftV_2,\acta,\mV' \cdot \ftV=\ftV_1\acta\ftV_2$ and $\mV \wtraS{\ftV_1} \mV' \traSN{\actt}$ and $\mV' \traSN{\acta}$ and $\mVV=\stp$). \qed
  \end{itemize}
\end{lem}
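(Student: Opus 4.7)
The plan is a routine induction on the length $k$ of the strong-transition derivation realising $\sys{\mV}{p} \wtraS{\ftV} \sys{\mVV}{q}$, i.e., on the number of single $\traS{\actu}$-steps used. In the base case $k = 0$ we have $\mV = \mVV$, $p = q$, and $\ftV$ empty, so both $p \wtraS{\ftV} q$ and $\mV \wtraS{\ftV} \mVV$ hold trivially, witnessing the left disjunct of the monitor clause.

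For the inductive step I would case-split on the instrumentation rule firing at the first step $\sys{\mV}{p} \traS{\actu} \sys{\mV_1}{p_1}$ and apply the induction hypothesis to the length-$(k-1)$ residual $\sys{\mV_1}{p_1} \wtraS{\ftV'} \sys{\mVV}{q}$, where $\ftV = \actu\,\ftV'$ when $\actu \in \Act$ and $\ftV = \ftV'$ when $\actu = \actt$. The process conjunct $p \wtraS{\ftV} q$ follows in every case by prepending the process component of the first step, noting that \rtit{iAsyM} contributes no process move but also has $\actu = \actt$, so that $\ftV = \ftV'$. For the monitor clause, rules \rtit{iMon}, \rtit{iAsyP}, and \rtit{iAsyM} are purely propagative: in the left disjunct of the IH we prepend the monitor step (or nothing, for \rtit{iAsyP}); in the right disjunct we extend the prefix by $\actu$ (or leave it unchanged when $\actu = \actt$), keeping the same witnesses for the decisive action and the stuck monitor, and inheriting $\mVV = \stp$.

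The decisive case is \rtit{iTer}, which fires with $\actu = \acta$ satisfying $\mV \traSN{\actt}$ and $\mV \traSN{\acta}$ and producing $\mV_1 = \stp$. Here \Cref{lem:ver-persistence} forces $\mVV = \stp$, so we directly witness the right disjunct with $\ftV_1$ empty, the existentially quantified $\mV'$ from the statement instantiated by $\mV$, and $\ftV_2 = \ftV'$. The only point that requires attention is the \rtit{iMon} sub-case feeding the right disjunct of the IH: one has to check that the extended split $\ftV = (\acta\,\ftV_1')\,\acta'\,\ftV_2'$ fits the statement, but this is a routine concatenation argument. I expect no genuine obstacle: the disjunction in the conclusion has been engineered precisely to match the two possible shapes of an instrumented trace produced by the rules of \Cref{fig:monit-instr}---either the monitor accompanied the process throughout, or \rtit{iTer} aborted it to \stp at the first unanalysable action, after which \Cref{lem:ver-persistence} pins the final monitor component.
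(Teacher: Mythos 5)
Your proof is correct and is essentially the expected argument: the paper does not reprove this lemma (it imports it from \citeMac{Fra:16:FOSSACS} and \citeMac{FraAI:17:FMSD}), and the standard proof there is exactly your induction on the number of strong steps in the instrumented derivation, with a case split on the four instrumentation rules and an appeal to \Cref{lem:ver-persistence} to force $\mVV=\stp$ once \rtit{iTer} has fired. The only point worth making explicit is that $\wtraS{\varepsilon}$ must be read as $\wreduc$ for the \rtit{iAsyP}/\rtit{iAsyM} prefixing steps (and the base case) to go through, which is the intended convention.
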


\begin{lem}[Zipping
] \label{lem:zipping}
  \begin{math}
    (p \wtraS{\ftV} q\text{ and }    \mV \wtraS{\ftV} \mVV)
  \end{math}
      implies
  \begin{math}
      {\sys{\mV}{p} \wtraS{\ftV} \sys{\mVV}{q}}
  \end{math}. \qed
\end{lem}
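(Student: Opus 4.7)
The plan is to prove the lemma by induction on the length of the finite trace $\ftV$. Before doing so, I would first establish two easy auxiliary facts that let us lift the asynchronous components of weak transitions to the instrumented setting:
\begin{enumerate}
\item[(i)] If $p \wreduc p'$ then $\sys{\mV}{p} \wreduc \sys{\mV}{p'}$, by a straightforward induction on the number of $\tau$-steps, each step being justified by rule \rtit{iAsyP}.
\item[(ii)] If $\mV \wreduc \mV'$ then $\sys{\mV}{p} \wreduc \sys{\mV'}{p}$, analogously using rule \rtit{iAsyM}.
\end{enumerate}

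With (i) and (ii) in hand, the base case $\ftV = \varepsilon$ is immediate: $p \wreduc q$ and $\mV \wreduc \mVV$ give $\sys{\mV}{p} \wreduc \sys{\mV}{q} \wreduc \sys{\mVV}{q}$ by first applying (i) and then (ii). For the inductive step, write $\ftV = \acta \ftV'$ and unpack the hypotheses: $p \wtraS{\acta} p_2 \wtraS{\ftV'} q$ means that there exist $p_0, p_1, p_2$ with $p \wreduc p_0 \traS{\acta} p_1 \wreduc p_2 \wtraS{\ftV'} q$, and similarly $\mV \wreduc \mV_0 \traS{\acta} \mV_1 \wreduc \mV_2 \wtraS{\ftV'} \mVV$. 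The key single-action step is then:
\[
\sys{\mV}{p} \wreduc \sys{\mV}{p_0} \wreduc \sys{\mV_0}{p_0} \traS{\acta} \sys{\mV_1}{p_1} \wreduc \sys{\mV_1}{p_2} \wreduc \sys{\mV_2}{p_2},
\]
where the first two steps use (i) and (ii), the synchronised $\acta$-step uses rule \rtit{iMon}, and the final two steps use (i) and (ii) again. Applying the induction hypothesis to $p_2 \wtraS{\ftV'} q$ and $\mV_2 \wtraS{\ftV'} \mVV$ yields $\sys{\mV_2}{p_2} \wtraS{\ftV'} \sys{\mVV}{q}$, which composes with the above to give $\sys{\mV}{p} \wtraS{\acta \ftV'} \sys{\mVV}{q}$ as required.

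The proof is essentially routine; the only conceptual point to verify is that the instrumentation rules allow us to interleave $p$'s and $\mV$'s $\tau$-moves freely around the synchronised external step, which is exactly what rules \rtit{iAsyP} and \rtit{iAsyM} guarantee. Note in particular that rule \rtit{iTer} is never invoked here, since by assumption $\mV$ is able to analyse every action along $\ftV$, so no monitor abortion occurs. This is the opposite situation to \Cref{lem:unzipping}, where one must account for the possibility that the instrumented system terminates the monitor early. The main (mild) obstacle is purely bookkeeping: keeping the interleaving of $\tau$-transitions of $p$ and $\mV$ straight, which is handled cleanly by the two auxiliary lifts (i) and (ii) above.
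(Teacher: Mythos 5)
Your proof is correct and follows the standard argument: the paper itself defers the proof of \Cref{lem:zipping} to the cited prior work, and the argument there has exactly this shape --- induction on the length of $\ftV$, lifting the $\tau$-moves of the process and the monitor into the instrumented system via \rtit{iAsyP} and \rtit{iAsyM}, and synchronising each external action via \rtit{iMon}. Your closing remark is also apt: since one only needs to \emph{exhibit} a transition sequence, the potential applicability of \rtit{iTer} is irrelevant here, in contrast to the unzipping direction.
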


Within this framework, we
can formalise our understanding of process and trace acceptance and rejection by a monitor.
Acceptances and rejections will constitute the monitoring counterpart to formula satisfactions and violations from \Cref{sec:preliminaries} when we consider our definitions of monitorability.

\begin{defn}[Process and Trace Acceptance and Rejection]  \label{def:acc-n-rej}
 A monitor \emph{\mV rejects $p$ along \ftr}, denoted as $\rej{\mV,p,\ftr}$, if $\sys{\mV}{p} \wtraS{\ftr} \sys{\no}{p'}$ for some $p'$.
 Similarly,
 \emph{\mV accepts $p$ along \ftr}, denoted as $\acc{\mV,p,\ftr}$, if $\sys{\mV}{p} \wtraS{\ftr} \sys{\yes}{p'}$  for some $p'$.
  \begin{itemize}
    \item A monitor \mV rejects (\resp accepts) \tr, using the abuse of notation \rej{\mV,\tr} (\resp \acc{\mV,\tr}), if $\exists p,\ftr,\trr$  such that
    $ {\tr = \ftr\trr}$ and  $\rej{\mV,p,\ftr}$ (\resp $\acc{\mV,p,\ftr}$).
    \item A monitor \mV rejects (\resp accepts) $p$, using the abuse of notation \rej{\mV,p} (\resp \acc{\mV,p}), if $\exists \ftr$ such that $\rej{\mV,p,\ftr}$ (\resp $\acc{\mV,p,\ftr}$).
  \end{itemize}
 We also say that \emph{\mV rejects \ftr} as a shorthand for $\exists p \cdot \rej{\mV,p,\ftr}$, and similarly, \emph{\mV accepts \ftr} is a shorthand for $\exists p \cdot \acc{\mV,p,\ftr}$. \qedd
\end{defn}

As
\Cref{def:acc-n-rej,lem:unzipping,lem:zipping}
make clear, a monitor accepts or rejects a finite trace \ftV\ iff it can transition to the appropriate verdict by reading \ftV. This hints at the fact that each monitor might be ``equivalent to a deterministic one''. As we will see in \Cref{prop:determinization},
this is indeed the case.

\subsection{Parallel Composition of Monitors}
\label{sec:parallel-mon}

\begin{figure}[t]
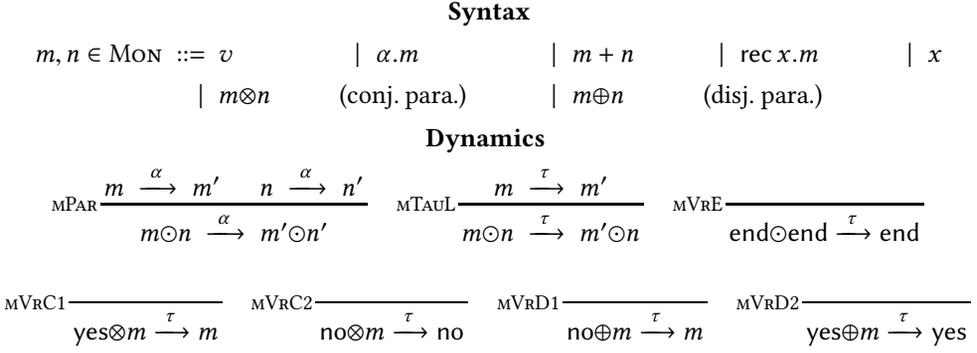

  \textbf{Syntax}
  \begin{align*}
    \mV,\mVV\in\Mon\ \bnfdef\
    &~~\vV
    && \bnfsepp \prf{\acta}{\mV}
    && \bnfsepp \ch{\mV}{\mVV}
    && \bnfsepp \rec{x}{\mV}
    &&\bnfsepp x \\
    \bnfsepp &~~\mV \paralC \mVV  && \text{(conj. para.)}&& \bnfsepp \mV \paralD \mVV && \text{(disj. para.)}
  \end{align*}
  \textbf{Dynamics}
  \setpremisesend{0.3ex} 
   \begin{mathpar}
 \inference[\rtit{mPar}]{\mV\traSS{\acta}\mV' & \mVV\traSS{\acta}\mVV'}
 {\mV\paralG\mVV\traSS{\acta}\mV'\paralG\mVV'}
  \quad
 \inference[\rtit{mTauL}]{\mV\traSS{\tau}\mV'}
 {\mV\paralG\mVV\traSS{\tau}\mV'\paralG\mVV}
 \quad
  \inference[\rtit{mVrE}]{}{\stp\paralG\stp \traS{\tau} \stp}
  \\\\\\
 \inference[\rtit{mVrC1}]{}{{{\yes\paralC\mV} \traS{\tau} {\mV}}}
 \quad
 \inference[\rtit{mVrC2}]{}{{{\no\paralC\mV} \traS{\tau} {\no}}}
\quad
 \inference[\rtit{mVrD1}]{}{{{\no\paralD\mV} \traS{\tau} {\mV}}}
 \quad
 \inference[\rtit{mVrD2}]{}{{{\yes\paralD\mV} \traS{\tau} {\yes}}}
\end{mathpar}
\setpremisesend{.75ex}
  \caption{Parallel Monitors. The syntax and dynamics of parallel monitors are extensions of the ones for regular monitors, as presented in \Cref{fig:monit-instr}. Parallel monitors use the same instrumentation as regular monitors.}
  \label{fig:monit-parallel}
\end{figure}

When relating monitors to formulae, it may be convenient \emph{not} to view monitors as one monolithic
entity
but rather as a \emph{system of sub-monitors} where the constituent submonitors are concerned with checking specific subformulae.
For instance, the use of sub-monitors executing in parallel facilitates the synthesis of monitors from formulae in a \emph{compositional} fashion.
Monitors with parallel composition,
$\mV,\mVV \in \Mon$, are defined by the grammar and transition rules in \cref{fig:monit-parallel}.  In particular, we endow monitors with conjunctive parallelism, $\paralC$, and disjunctive parallelism, $\paralD$. We use the notation $\paralG$ to range over either $\paralC$ or $\paralD$ (\ie $\paralG \in \sset{\paralC,\paralD}$).

\Cref{fig:monit-parallel} also outlines the behaviour of parallel monitors.
Rule \rtit{mPar} states that \emph{both} submonitors need to be able to analyse an external action \acta for their parallel composition to transition with that action.
The rules in \Cref{fig:monit-parallel} also allow \actt-transitions for the reconfiguration of parallel compositions of monitors.
For instance, rules \rtit{mVrC1} and \rtit{mVrC2} describe the fact that, whereas $\yes$  verdicts are uninfluential in conjunctive parallel compositions, $\no$  verdicts supersede the verdicts of other monitors in a conjunctive parallel compositions (\Cref{fig:monit-parallel} omits  the obvious symmetric rules).
The dual applies for $\yes$  and $\no$  verdicts in a disjunctive parallel composition, as described by rules \rtit{mVrD1} and \rtit{mVrD2}.
Rule \rtit{mVrE} applies to both forms of parallel composition and consolidates multiple inconclusive verdicts.
Finally, rules \rtit{mTauL} and its dual \rtit{mTauR} (omitted) are contextual rules for these monitor reconfiguration steps.

We identify a useful monitor predicate that obviates the need for the rule \rtit{iTer} of \Cref{fig:monit-instr} that prematurely terminates monitors; see \Cref{ex:mon-acc-n-rej}. In the case of parallel monitors, it also allows us to neatly decompose the monitor behaviour in terms of the respective sub-monitors.

\begin{defn}[Monitor Reactivity]
\label{def:reactive-mon}
We call a monitor $\mV$ \emph{reactive} when for every $\mVV \in \reach{\mV}$ and $\act \in \Act$, there is some $\mVV'$ such that $\mVV \wtraS{\acta} \mVV'$. \qedd
\end{defn}

\Cref{ex:reactiveness} below indicates why the assumption that $\mV_1$ and $\mV_2$ are reactive is needed in \Cref{cor:parallel-and-monitors}, which states that parallel monitors behave as expected with respect to the acceptance and rejection of traces as long as the consitituent submonitors are reactive.

\begin{exmp} \label{ex:reactiveness}
 Assume that $\Act=\sset{a,b}$.
  The monitors $\ch{\prf{a}{\yes}}{\prf{b}{\no}}$ and $\rec{x}{(\ch{\prf{a}{x}}{\prf{b}{\yes}})}$
  are both reactive.
  The monitor $\mV = \prf{a}{\yes}\paralC\prf{b}{\no}$, however, is \emph{not} reactive.
 Since the submonitor $\prf{a}{\yes}$ can only transition with $a$, according to the rules of \Cref{fig:monit-parallel}, $\mV$ cannot transition with any action that is not $a$.
 Similarly, as the submonitor $\prf{b}{\no}$ can only transition with $b$, $\mV$ cannot transition with any action that is not $b$.
 Thus, $\mV$ cannot transition to any monitor, and therefore it cannot reject or accept any trace.
 By contrast, the  monitor $\mVV = (\ch{\prf{a}{\yes}}{\prf{b}{\stp}}) \paralC (\ch{\prf{b}{\yes}}{\prf{a}{\stp}})$ is reactive, because its constituent submonitors are reactive as well. \qedd
\end{exmp}
\begin{lem}[Monitor Composition and Decomposition] \label{cor:parallel-and-monitors}
For reactive $\mV_1$ and $\mV_2$:
\begin{itemize}
\item
$\mV_1 \paralC \mV_2$ rejects $\tV$ if and only if either $\mV_1$ or $\mV_2$ rejects $\tV$.
\item
$\mV_1 \paralC \mV_2$ accepts $\tV$ if and only if both $\mV_1$ and $\mV_2$ accept $\tV$.

\item
$\mV_1 \paralD \mV_2$ rejects $\tV$ if and only if both $\mV_1$ and $\mV_2$ reject $\tV$.
\item
$\mV_1 \paralD \mV_2$ accepts $\tV$ if and only if either $\mV_1$ or $\mV_2$ accepts $\tV$.
 \qed
\end{itemize}
\end{lem}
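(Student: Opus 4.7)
The plan is to reduce each of the four statements about acceptance or rejection of an infinite trace $\tV$ by $\mV_1 \paralG \mV_2$ to statements about weak transitions of the form $\mV \wtraS{\ftV} \vV$ for some finite prefix $\ftV$ of $\tV$. This reduction follows from \Cref{lem:unzipping,lem:zipping} together with the observation that any finite action sequence is produced by some process (e.g.\ the canonical one $\prf{\act_1}{\ldots \prf{\act_n}{\nil}}$); hence for any monitor $\mV$, $\mV$ rejects $\tV$ iff $\mV \wtraS{\ftV} \no$ for some prefix $\ftV$ of $\tV$, and analogously for acceptance. Each claim therefore becomes a statement about when $\mV_1 \paralG \mV_2$ weakly transitions to $\yes$ or $\no$, in terms of the weak transitions of $\mV_1$ and $\mV_2$ alone.

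The central technical ingredient is a lifting lemma for reactive submonitors: if $\mV_1 \wtraS{\ftV} \mVV_1$ and $\mV_2 \wtraS{\ftV} \mVV_2$, then $\mV_1 \paralG \mV_2 \wtraS{\ftV} \mVV_1 \paralG \mVV_2$. This is proved by induction on $|\ftV|$, using rule \rtit{mPar} to synchronise external actions and \rtit{mTauL}/\rtit{mTauR} to interleave independent $\tau$-steps; reactivity is needed so that \rtit{mPar} is applicable at every external action in $\ftV$. Conversely, an induction on the derivation of $\mV_1 \paralG \mV_2 \wtraS{\ftV} \mV'$ shows that either $\mV' = \mVV_1 \paralG \mVV_2$ with $\mV_i \wtraS{\ftV} \mVV_i$, or the derivation passes through an application of a verdict-combining rule after which the surviving side behaves exactly as a standalone monitor.

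Given this machinery, each part follows by case analysis on the verdict-combining rules, together with verdict persistence (\Cref{lem:ver-persistence}). For Part~1, $(\Leftarrow)$: if, say, $\mV_1 \wtraS{\ftV} \no$, reactivity yields $\mV_2 \wtraS{\ftV} \mVV_2$ and the lifting combined with \rtit{mVrC2} gives $\mV_1 \paralC \mV_2 \wtraS{\ftV} \no$; $(\Rightarrow)$: the only rule producing $\no$ at the root of a $\paralC$ composition is \rtit{mVrC2}, which forces one side to have reached $\no$ along a prefix of $\ftV$. For Part~2, $(\Leftarrow)$: persistence lets us pick a common prefix along which both sides reach $\yes$, after which the lifting and \rtit{mVrC1} give $\mV_1 \paralC \mV_2 \wtraS{\ftV} \yes \paralC \yes$ and then $\yes$. $(\Rightarrow)$: tracing back from the final $\yes$, the first application of \rtit{mVrC1} at the root absorbs a $\yes$ from one side (say $\mV_1$), so $\mV_1$ accepts; the surviving $\mV_2$-descendant must then itself reach $\yes$ along the remaining portion of $\ftV$, so $\mV_2$ also accepts. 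Parts~3 and~4 are dual, using \rtit{mVrD1} and \rtit{mVrD2}.

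The main obstacle is the $(\Rightarrow)$ direction in each part, where a weak parallel derivation may interleave synchronised external actions, independent $\tau$-steps, and verdict-combining rules that permanently absorb one of the two sides. Extracting clean sub-derivations for $\mV_1$ and $\mV_2$ across such an absorption event---and correctly aligning the prefixes of $\ftV$ along which the individual verdicts are reached---is the delicate point; here verdict persistence is the crucial tool, since it lets the individual sub-derivations be extended to the full prefix $\ftV$ without altering the reached verdict.
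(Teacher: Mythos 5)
Your proposal is correct and follows essentially the same route as the paper: the paper reduces the statement to its finite-trace analogue (Lemma~\ref{lem:mon-combinators} in \Cref{sec:parallel-properties-appendix}) and proves that by the same pair of inductions you describe --- one on the explicit trace of the parallel derivation with a case split on the last rule applied (decomposition), and one on the witnessing derivation of the component monitor, using reactivity of the other component to let it follow along (composition) --- with verdict persistence aligning the prefixes exactly as in your argument. The only cosmetic difference is that you package the composition direction as a standalone ``lifting lemma'' whereas the paper folds it into the induction.
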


%
Parallel monitors are a convenient formalism for constructing monitors in a compositional fashion and facilitate the definition of monitor synthesis functions from a specification logic.
However, these monitors are
only
as
expressive
as
regular monitors, as \Cref{prop:extended-mon-to-reg-mon} demonstrates. \Cref{subsec:transformations} is devoted to the proof of this result.

%
%
%
%


\subsection{Monitor Transformations: Parallel to Regular}
\label{subsec:transformations}

We describe how one can transform a parallel monitor to a verdict-equivalent regular one.
For this, we use known results about alternating finite automata, restated here for completeness.

\begin{defn}[Alternating Automata]
  An alternating finite automaton is a quintuple $A = (Q, \Sigma, q_0, \delta, F)$, where $Q$ is a finite set of states,
  $\Sigma$ is a finite alphabet, $q_0$ is the starting state, $F \subseteq Q$ is the set of accepting/final states, and $\delta : (Q\times \Sigma) \to (2^Q \to \{0,1\})$ is the transition function.
An alternating finite automaton is non-deterministic (NFA) if for each $\act \in \Sigma$ and $q\in Q$, there is some $S_{q,\act} \subseteq Q$, such that for all $S \subseteq Q$, $\delta(q,\act)(S)=1$ if and only if $S \cap S_{q,a} \neq \emptyset$.
\qedd
\end{defn}

  Intuitively, given a state $q \in Q$ and a symbol $\act \in \Sigma$,
  $\delta$ returns a boolean function on $2^Q$ that evaluates,
  given a truth-assignment on the states of $Q$ (represented by a subset of $Q$),
  an assigned truth-value for $q$. We can extend the transition function
  to
  $\delta^*: (Q\times \Sigma^*) \to (2^Q \to \{0,1\})$, so that $\delta^*(q, \varepsilon)(R) = 1$ iff $q \in R$,
  and
  $\delta^*(q,\act w)(R) = \delta(q,\act)(\{ q' \in Q \mid  \delta^*(q',w)(R) = 1 \})$.
  We say that the automaton \emph{accepts} $w \in \Sigma^*$ when $\delta^*(q_0,w)(F) = 1$, and that
  it recognizes $L\subseteq \Sigma^*$ when $L$ is the set of strings accepted by the automaton.

\begin{defn}[Monitor Language Acceptance and Rejection]\label{def:mon-language}
A monitor $\mV$ accepts (\resp rejects) a set of finite traces (\ie a language) $L \subseteq \Act^*$ when for every $\ftr \in \Act^*$, $\ftr \in L$ if and only if $\mV$ accepts (\resp rejects) $\ftr$. We call the set that $\mV$ accepts (\resp rejects) $L_a(\mV)$ (\resp $L_r(\mV)$).
\qedd
\end{defn}

\begin{prop}\label{prop:monitor2automaton}
    For every reactive parallel monitor $\mV$, there is an alternating automaton that
    accepts $L_a(\mV)$
    and one that
    accepts $L_r(\mV)$.
  \end{prop}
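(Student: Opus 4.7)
My plan is to construct, from a reactive parallel monitor $\mV$, an alternating finite automaton accepting $L_a(\mV)$ by working directly from the monitor's labelled transition system. Since the paper's own definition explicitly identifies non-deterministic automata as a special case of alternating ones, exhibiting an NFA will suffice; the construction for $L_r(\mV)$ is then symmetric, merely replacing $\yes$ by $\no$ in the final-state set.

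The first and hardest step is to establish that $\reach{\mV}$ is finite for reactive parallel monitors. For regular sub-monitors this is exactly \Cref{prop:reg-mon-fin-state}. For a parallel combination $\mV_1 \paralG \mV_2$, a structural induction shows that the reachable states lie in $\{\mVV_1 \paralG \mVV_2 \mid \mVV_1 \in \reach{\mV_1},\ \mVV_2 \in \reach{\mV_2}\}$ together with the states produced by the verdict-propagation rules, which either collapse the parallel structure (as in $\yes \paralC \mVV \traS{\tau} \mVV$) or reduce it to a single verdict. Reactivity is what excludes pathological cases of unguarded recursion crossing a parallel operator (such as $\rec{x}{(\prf{a}{\yes} \paralC x)}$), which would otherwise generate unboundedly nested configurations. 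This is the main obstacle in the argument.

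Once finiteness is in hand, I will define the AFA $A_\mV = (Q, \Act, \mV, \delta, F)$ by taking $Q = \reach{\mV}$ with $\mV$ as initial state, $F = \{\mVV \in Q \mid \mVV \wreduc \yes\}$, and transition function $\delta(\mVV, \act)(R) = 1$ iff $R \cap S_{\mVV,\act} \neq \emptyset$, where $S_{\mVV,\act} = \{\mVV' \in Q \mid \mVV \wtraS{\act} \mVV'\}$. This is precisely the NFA form admitted by the paper's definition, and absorbing silent moves into $\wtraS{\act}$ and $F$ is the standard device for eliminating $\tau$-steps. Correctness then follows by a routine induction on $|\ftV|$: unfolding $\delta^*$ shows that $\delta^*(\mV, \ftV)(F) = 1$ iff there exist $\mVV_1, \ldots, \mVV_n \in Q$ with $\mV \wtraS{\act_1} \mVV_1 \wtraS{\act_2} \cdots \wtraS{\act_n} \mVV_n \wreduc \yes$ and $\ftV = \act_1 \cdots \act_n$, which is exactly $\mV \wtraS{\ftV} \yes$. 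Combined with \Cref{lem:unzipping} and \Cref{lem:zipping} (the latter applied to any process that can produce $\ftV$), this is equivalent to $\acc{\mV, p, \ftV}$ for some $p$, and hence to $\ftV \in L_a(\mV)$ by \Cref{def:mon-language}.
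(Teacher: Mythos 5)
Your construction breaks down at exactly the step you identify as the crux: the claim that $\reach{\mV}$ is finite for every reactive parallel monitor is false, and reactivity does not rescue it. Take $\Act = \sset{a}$ and $\mV = \rec{x}{(\prf{a}{x} \paralC \prf{a}{x})}$. Every reachable state is a parallel composition of copies of $\mV$ and $\prf{a}{\mV}$, each of which can weakly perform $a$, so $\mV$ is reactive; yet unfolding the recursion and firing rule \rtit{mPar} doubles the parallel width at every $a$-step (after $n$ steps one has $2^n$ copies of $\mV$ in parallel, and no verdict-collapsing rule ever applies), so $\reach{\mV}$ is infinite. This is not an oversight on the paper's part: the remark following \Cref{prop:monitor2automaton} explicitly calls parallel monitors ``potentially infinite-state,'' and the whole point of the proposition, via \Cref{cor:parallel2NFA} and \Cref{prop:extended-mon-to-reg-mon}, is to show that they are nonetheless no more expressive than finite-state regular monitors. \Cref{prop:reg-mon-fin-state} covers only regular monitors precisely because its proof has no case for $\paralC$ or $\paralD$. (Note the proposition itself remains true for this example --- $L_a(\mV)=L_r(\mV)=\emptyset$ --- but your construction has no finite state set to build on, so the proof strategy fails.)

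The paper's proof sidesteps the reachable-state space entirely: it takes as states the finitely many \emph{syntactic submonitors} of $\mV$ and uses genuine alternation, setting $\delta(\mVV \paralC \mVV',\act)(S) = 1$ only when both $\delta(\mVV,\act)(S) = 1$ and $\delta(\mVV',\act)(S) = 1$, so conjunctive parallelism is handled by universal branching rather than by an explicit product of configurations. Reactivity is then needed not for finiteness but for the soundness of the $\paralD$ clause, where one must ensure both components can actually follow the action (see the remark after the proposition). Salvaging your NFA-on-$\reach{\mV}$ route would require first establishing something like \Cref{prop:extended-mon-to-reg-mon}, which the paper derives \emph{from} this proposition, so the argument would be circular.
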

  \begin{proof}
    We describe the
    process of constructing an alternating automaton that accepts $L_a(\mV)$
    --- the case for $L_r(\mV)$ is similar.
    We assume that for every variable $x$ that appears in $\mV$, there is a unique submonitor of $\mV$ of the form $\rec x \mVV$, such that $x$ appears in $\mVV$.
%
    The automaton for $\mV$ is $A_\mV = (Q,\Act,\mV,\delta,F)$, where
    \begin{itemize}
    \item $Q$ is the set of submonitors of $\mV$;
    \item $F = \{ \mVV \in Q \mid n \text{ accepts } \varepsilon \} $;
    \item Let for every $S \subseteq Q$, $\delta_0 (q,\act) (S) = 1$ iff $q \in F$;  $\delta$ is the closure of $\delta_0$ under the following conditions.
    For every $S \subseteq Q$:
      \begin{itemize}
        \item if $\mVV \in S$, then $\delta (\act.n,\act) (S) = 1$; 
        \item if $\delta (\mVV,\act) (S) = 1$ or $\delta (\mVV',\act) (S) = 1$, then $\delta (\mVV + \mVV',\act) (S) = 1$;
        \item if $\delta (\mVV,\act) (S) = 1$ or $\delta (\mVV',\act) (S) = 1$,
              and
              $\mVV \wtraS{\act}$
              and
              $\mVV' \wtraS{\act}$,
              then $\delta (\mVV \paralD \mVV',\act) (S) = 1$;
        \item if $\delta (\mVV,\act) (S) = 1$ and $\delta (\mVV',\act) (S) = 1$, then $\delta (\mVV \paralC \mVV',\act) (S) = 1$;
        \item if $\delta (\mVV,\act) (S) = 1$ and $\rec x \mVV \in Q$, then $\delta (\rec x \mVV,\act) (S) = \delta (x,\act) (S) = 1$.
      \end{itemize}
    \end{itemize}
In \Cref{sec:transformations-appendix}, we present the remaining
proof, 
that
$\mV$ accepts $\ftr$ if and only if
$\delta^*(\mV,\ftr) (F) = 1$.
  \end{proof}

\begin{rem}
%
	The assumption that the monitor is reactive is necessary for the construction in the proof of \Cref{prop:monitor2automaton} to be correct.
Consider, for example, the monitor $\mV_1 = \prf{a}{\prf{a}{\yes}} \paralD \prf{a}{\prf{b}{\yes}}$.
Although $\prf{a}{\prf{a}{\yes}} \traS{a} \prf{a}{\yes} \traS{a} \yes$, the monitor does not accept any trace since $\prf{b}{\yes} \centernot{\traS{a}}$.
By the construction, in the resulting alternating automaton, $F = \{\yes\}$,  and therefore $\delta(\yes,a)(F) = 1$, implying that $\delta(a.\yes,a)(F) = 1$, in turn implying that $\delta^*(\mV_1,aa)(F) = 1$, according to the closure conditions for $\delta$.
Therefore, $aa$ is a finite trace that the automaton accepts and the monitor does not.

In light of our assumption that monitor $\mV$ in \Cref{prop:monitor2automaton} is reactive, the third condition for $\delta$ in the construction in the proof of the proposition may seem superfluous.
However, reactivity does not transfer to submonitors. For example, let
$\mV_2 = \ch{(\prf{a}{\yes} \paralD \prf{b}{\yes})}{\ch{\prf{a}{\stp}}{\prf{b}{\stp}}}$.
Reasoning similarly to the above argument for $\mV_1$, $\mV_2$
is a reactive parallel monitor, which accepts no traces. On the other hand, a more naive construction that ensures that $\delta (\mVV \paralD \mVV',\act) (S) = 1$ whenever $\delta (\mVV,\act) (S) = 1$ or $\delta (\mVV',\act) (S) = 1$, would result in an automaton that accepts the finite trace $a$.

As we see in the remainder of this section, \Cref{prop:monitor2automaton} implies that potentially \emph{infinite-state} parallel monitors
are equivalent to \emph{finite-state} regular monitors.
The subtleties that we
pointed out are the trade-off for keeping the construction of the alternating automaton
straightforward.
\qedd
\end{rem}

  \begin{cor}\label{cor:parallel2NFA}
    For every reactive parallel monitor $\mV$, there is an NFA that
    accepts $L_a(\mV)$
    and an NFA that
    accepts $L_r(\mV)$, and each has at most $2^{l(\mV)}$ states.
  \end{cor}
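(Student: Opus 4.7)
The plan is to chain \Cref{prop:monitor2automaton} with the classical subset construction that translates an alternating finite automaton into a nondeterministic one. Given a reactive parallel monitor $\mV$, \Cref{prop:monitor2automaton} already supplies an alternating automaton $A_\mV = (Q, \Act, \mV, \delta, F)$ with $L(A_\mV) = L_a(\mV)$, whose state set $Q$ is the set of submonitors of $\mV$. The analogous automaton for $L_r(\mV)$ is obtained by the symmetric construction (replacing $\yes$ by $\no$ in the definition of $F$).

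The first step is to bound $|Q|$ by $l(\mV)$. Each submonitor of $\mV$ is determined by a position in the string representation of $\mV$, so the number of distinct submonitors is at most the length $l(\mV)$ of that string. Hence $A_\mV$, and its rejection-analogue, each have at most $l(\mV)$ states.

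The second step is the standard AFA-to-NFA translation. Given an AFA $A = (Q,\Sigma,q_0,\delta,F)$, one forms the NFA $A' = (2^Q, \Sigma, \{q_0\}, \delta', F')$ whose states are the subsets $R \subseteq Q$ that are ``current truth assignments'', with $F' = \{ R \subseteq Q \mid R \subseteq F \text{ or more generally satisfies the acceptance condition}\}$ and $\delta'(R,\act) = \{ R' \subseteq Q \mid \text{for all } q \in R,\; \delta(q,\act)(R')=1 \}$. A routine induction on the length of the input word shows that $\delta^*(q_0,w)(F) = 1$ iff $A'$ has an accepting run on $w$ starting from $\{q_0\}$; this is the classical result (see, \eg, Chandra, Kozen, and Stockmeyer). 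Applying this to $A_\mV$ yields an NFA with at most $2^{|Q|} \leq 2^{l(\mV)}$ states that recognizes $L_a(\mV)$, and similarly for $L_r(\mV)$.

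The only conceptual obstacle is ensuring that the submonitor bound $|Q| \le l(\mV)$ is tight enough; this follows immediately because distinct submonitors occupy distinct positions in the syntactic representation of $\mV$, so no further combinatorial argument is needed. The remaining work is entirely routine bookkeeping for the subset construction.
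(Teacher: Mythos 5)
Your proposal is correct and follows essentially the same route as the paper: take the alternating automaton from \Cref{prop:monitor2automaton}, bound its state count by the number of submonitors and hence by $l(\mV)$, and invoke the classical AFA-to-NFA translation with its $2^k$ blow-up (the paper simply cites this known result where you sketch the subset construction). The only cosmetic remark is that your acceptance condition for the resulting NFA should be pinned down as $R \subseteq F$, but this is the standard bookkeeping you already acknowledge.
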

  \begin{proof}
  The alternating automaton that is constructed in the proof of \Cref{prop:monitor2automaton} has at most as many states as there are submonitors in $\mV$ which, in turn, are not more than $l(\mV)$.
  Furthermore, it is a known result  that every alternating automaton with $k$ states can be converted into an NFA with at most $2^k$ states that accepts the same language~\cite{alternation,FJY90}.
  \end{proof}

We now have all the ingredients to complete the proof of \Cref{prop:extended-mon-to-reg-mon}.  This relies on a notion of monitor equivalence from \citeN{AceAFI:2017:CIAA} that focusses on how monitors can reach verdicts.

\begin{defn}[Verdict Equivalence] \label{def:verdict-eq}
	Monitors \mV and \mVV are \emph{acceptance equivalent} (\resp rejection equivalent), denoted as $\mV\maeq\mVV$ (\resp $\mV\mreq\mVV$), if for every finite trace \ftr,
$\mV \wtra{\ftr} \yes$  iff  $\mVV \wtra{\ftr} \yes$ (\resp $\mV \wtra{\ftr} \no$  iff  $\mVV \wtra{\ftr} \no$).
They are
\emph{verdict equivalent}, denoted as $\mV\mveq\mVV$, if
they are both acceptance- and rejection-equivalent.
 \qedd
\end{defn}


\begin{prop}\label{prop:extended-mon-to-reg-mon}
For all reactive
parallel monitors \mV, there exist regular monitors $\mVV_1,\mVV_2$, and \mVV such that
$\mVV_1$ and $\mVV_2$ are single-verdict monitors that are respectively acceptance-equivalent and rejection-equivalent to $\mV$, and
\mV and \mVV are verdict equivalent,
and $l(\mVV_1), l(\mVV_2), l(\mVV) = 2^{O\left(l(\mV)\cdot 2^{l(\mV)}\right)}$.
\end{prop}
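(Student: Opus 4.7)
The plan is to leverage \Cref{cor:parallel2NFA}, which yields NFAs $N_a$ and $N_r$ recognising $L_a(\mV)$ and $L_r(\mV)$ respectively, each with at most $2^{l(\mV)}$ states. From these NFAs I shall synthesise regular monitors that emit $\yes$ exactly on $L_a(\mV)$ and $\no$ exactly on $L_r(\mV)$, then glue them with $+$ to obtain the dual-verdict monitor.

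Concretely, given an NFA $N = (Q,\Act,q_0,\delta,F)$ with $k$ states and a chosen verdict $\vV \in \{\yes,\no\}$, I build a regular monitor $\mV_N^{\vV}$ by a standard state-elimination procedure: eliminating states one at a time yields a regular expression over $\Act$ of size $2^{O(k)}$ denoting $L(N)$, which can be embedded into the regular-monitor syntax by translating each alphabet symbol $\act$ as the prefix $\act.(\;\cdot\;)$, Kleene star as $\rec{x}{\ldots}$, union as $+$, and placing $\vV$ at the accepting leaves. A small twist is needed when $q_0 \in F$, so that the empty trace is accepted: replacing the bare verdict by $\rec{y}{\vV}$ at the relevant summand provides the required $\tau$-step, since $\rec{y}{\vV} \traS{\tau} \vV$ and mSelR then propagates this step through a choice context. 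By construction $\mV_N^{\vV}$ contains only $\vV$ among $\{\yes,\no\}$ (hence is single-verdict), and $\mV_N^{\vV} \wtraS{\ftr} \vV$ iff $\ftr \in L(N)$.

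I then set $\mVV_1 = \mV_{N_a}^{\yes}$ and $\mVV_2 = \mV_{N_r}^{\no}$. By \Cref{def:mon-language} and the above characterisation, $\mVV_1 \wtraS{\ftr} \yes$ iff $\ftr \in L_a(\mV)$ iff $\mV \wtraS{\ftr} \yes$, so $\mV \maeq \mVV_1$; symmetrically $\mV \mreq \mVV_2$. Finally I define $\mVV = \mVV_1 + \mVV_2$. Using rules mSelL and mSelR, any weak path of $\mVV$ reaching $\yes$ must, after its first transition, lie entirely inside $\mVV_1$ (since $\mVV_2$ contains no $\yes$), so $\mVV \wtraS{\ftr} \yes$ iff $\mVV_1 \wtraS{\ftr} \yes$; dually for $\no$. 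Hence $\mVV \mveq \mV$.

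For the size bound, \Cref{cor:parallel2NFA} gives $k \le 2^{l(\mV)}$, so each $\mVV_i$ has size $2^{O(2^{l(\mV)})}$, which is within $2^{O(l(\mV)\cdot 2^{l(\mV)})}$; the outer $+$ adds only a constant, and all bounds follow. The main obstacle is the careful handling of the conversion: on one hand, guaranteeing that single-verdict monitors emit their verdicts \emph{on exactly the right prefixes} requires that the NFA-to-monitor translation preserves not only the accepted language but also the prefix at which the verdict is first emitted; on the other, the subtle empty-word case described above must be treated consistently so that $\varepsilon$-acceptance/rejection is preserved. The remaining ingredients---verdict persistence (\Cref{lem:ver-persistence}), the zipping/unzipping lemmata, and the algebraic properties of $+$---are then routine.
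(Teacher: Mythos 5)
Your overall architecture is exactly the paper's: obtain NFAs for $L_a(\mV)$ and $L_r(\mV)$ from \Cref{cor:parallel2NFA}, convert each into a single-verdict regular monitor, and combine the two with $+$ (whose verdict-separation property is \Cref{lem:mon-combinators}(5)). The difference is that the paper simply invokes the known NFA-to-regular-monitor construction of \citeMac{determinization}, which turns an NFA with $n$ states into a verdict-equivalent regular monitor of size $2^{O(n\log n)}$, yielding the stated bound for $n\leq 2^{l(\mV)}$; you instead try to inline that step via state elimination and a regex-to-monitor translation, and this is where your argument has a genuine gap.

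The gap is concatenation. Your translation table covers symbols, union, Kleene star and the accepting leaves, but the regular-monitor grammar of \Cref{fig:monit-instr} has \emph{no sequential composition operator}, so a concatenation $r_1r_2$ can only be rendered by substituting the monitor for $r_2$ into every ``accepting leaf'' of the monitor for $r_1$ (continuation passing). This has two consequences you do not address. First, correctness of the $\rec{x}{\cdot}$ encoding of $r^\ast$ depends on threading the continuation through the loop correctly, which needs to be spelled out. Second, and more seriously for the quantitative claim: under this substitution the number of continuation copies multiplies across concatenations (already $(\acta+\actb)(\acta+\actb)\cdots(\acta+\actb)$ produces exponentially many leaves in the number of factors), so a state-elimination regex of size $2^{O(k)}$ can yield a monitor of size $2^{2^{O(k)}}$. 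With $k\leq 2^{l(\mV)}$ this is $2^{2^{O(2^{l(\mV)})}}$, which overshoots the claimed $2^{O\left(l(\mV)\cdot 2^{l(\mV)}\right)}$ by an exponential. To repair this you should either cite the $2^{O(n\log n)}$ NFA-to-monitor construction directly (as the paper does), or go via determinisation of the NFA and a spanning-tree/back-edge encoding of the resulting DFA, rather than via regular expressions. The remaining parts of your argument (single-verdictness, acceptance/rejection equivalence, and the analysis of $\mVV_1+\mVV_2$) are fine.
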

\begin{proof}
Let $A_\mV^a$ be an NFA for $L_a(\mV)$ with at most $2^{l(\mV)}$ states, and let $A_\mV^r$ be an NFA for $L_r(\mV)$ with at most $2^{l(\mV)}$ states, which exist by \Cref{cor:parallel2NFA}.
From these NFAs, we can construct regular monitors $\mV_R^a$ and $\mV_R^r$,
such that
$\mV_R^a$ accepts $L_a(\mV)$ and $\mV_R^r$ rejects $L_r(\mV)$, and $l(\mV_R^a), l(\mV_R^r) = 2^{O\left(l(\mV)\cdot 2^{l(\mV)}\right)}$ \cite{determinization}.
Therefore,
$\mV_R^a \maeq \mV$ and $\mV_R^r \mreq \mV$, and
$\mV_R^a + \mV_R^r$ is regular and verdict-equivalent to $\mV$, and $l(\mV_R^a + \mV_R^r) =  2^{O\left(l(\mV)\cdot 2^{l(\mV)}\right)}$.
\end{proof}

The techniques of \citeMac{determinization} can also be used to produce deterministic monitors.

\begin{defn}[\cite{determinization}]\label{def:determinism}
  A regular monitor $\mV$ is \emph{syntactically deterministic} iff
  every sum of at least two summands
    which appears in $\mV$
  is of the form
  $\sum_{\act \in A} \act.m_\act$,
  where $A \subseteq \Act$.
  \qedd
\end{defn}

\begin{exmp}
The monitor \ch{\prf{a}{\prf{b}{\yes}}}{\prf{a}{\prf{a}}{\no}} is not syntactically deterministic while the verdict-equivalent monitor \prf{a}{(\ch{\prf{b}{\yes}}{\prf{a}{\no}})} is syntactically deterministic.
\qedd
\end{exmp}

One can also consider non-syntactic notions of determinism, such as if $\mV \wtraS{\ftr} \mVV $ and $\mV \wtraS{\ftr} \mVV'$, then $\mVV \mveq \mVV'$. \Cref{lem:det_mon_is_det} shows that syntactic determinism implies this semantic notion. Henceforth we will simply say deterministic to mean syntactically deterministic.

\begin{lem}[\cite{determinization}]\label{lem:det_mon_is_det}
  If $\mV$ is deterministic, $\mV \wtraS{\ftr} \mVV $, and $\mV \wtraS{\ftr} \mVV'$, then $\mVV \mveq \mVV'$.
   \qed
\end{lem}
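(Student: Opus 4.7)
The plan is induction on the length of $\ftr$. The key structural fact, which does most of the work, is that in a deterministic regular monitor, internal ($\actt$) and external transitions are mutually exclusive at every (closed) reachable state, and each kind is deterministic. Inspection of the rules of \Cref{fig:monit-instr} shows that the sole source of a $\actt$-transition is the unfolding rule \rtit{mRec}; the selection rules \rtit{mSelL} and \rtit{mSelR} only propagate transitions out of their summands, and by syntactic determinism every summand of a sum is action-prefixed, so no $\actt$-move can arise from a sum. Hence any closed reachable state is either a recursion $\rec{x}{\mVV}$ with a unique $\actt$-successor, or a \emph{tau-silent} state (a verdict, an action prefix, or a sum $\sum_{b \in A} b.m_b$ with pairwise distinct $b$) whose transitions are all external and uniquely determined by the head action. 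Guardedness ensures that the $\actt$-chain $\mV \traS{\actt} \mV_1 \traS{\actt} \mV_2 \traS{\actt} \cdots$ is finite, terminating in a unique tau-silent state $\mV^\star$. Furthermore, syntactic determinism is preserved under the substitution performed by \rtit{mRec} and under taking sub-expressions, so every reachable state is again deterministic.

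Base case ($\ftr = \varepsilon$): both $\mVV$ and $\mVV'$ lie on the linear $\actt$-chain from $\mV$, so both reach $\mV^\star$ via further $\actt$-steps. Because $\actt$-steps have no external effect, for every $\ftr'$ and verdict $\vV$ we get $\mVV \wtraS{\ftr'} \vV$ iff $\mV^\star \wtraS{\ftr'} \vV$ iff $\mVV' \wtraS{\ftr'} \vV$, which is exactly $\mVV \mveq \mVV'$.

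Inductive step ($\ftr = \act \ftr'$): decompose $\mV \wtraS{\act\ftr'} \mVV$ as $\mV \wtraS{\varepsilon} \mV_1 \traS{\act} \mV_2 \wtraS{\ftr'} \mVV$, and analogously obtain $\mV_1', \mV_2'$ for $\mVV'$. Since non-tau-silent states cannot perform external transitions, both $\mV_1$ and $\mV_1'$ must coincide with $\mV^\star$. The deterministic form of $\mV^\star$ yields a unique $\act$-successor, forcing $\mV_2 = \mV_2'$, and applying the inductive hypothesis to $\mV_2 \wtraS{\ftr'} \mVV$ and $\mV_2 \wtraS{\ftr'} \mVV'$ gives $\mVV \mveq \mVV'$. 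The main obstacle is the structural analysis in the first paragraph: establishing the mutual exclusion of $\actt$- and external moves, together with the determinism and finiteness of the $\actt$-chain, requires careful use of the guardedness assumption and of the syntactic invariants preserved by recursion unfolding.
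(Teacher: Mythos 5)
Your overall strategy matches the machinery this paper actually relies on. The lemma itself is imported from the cited determinisation paper rather than reproved here, but the appendix contains exactly your structural analysis: \Cref{lem:determinism-and-taus} shows that a deterministic monitor with a $\actt$-move is a recursion whose \emph{only} transition is that unique unfolding (so $\actt$- and external moves are mutually exclusive and each is deterministic), and \Cref{lem:strong-determinism} derives from it strong determinism along explicit traces. Your induction on $\ftr$, with the decomposition into a $\actt$-chain followed by a uniquely determined external step, is the intended argument, and your observation that syntactic determinism is preserved by the substitution in \rtit{mRec} is the right invariant to make the inductive hypothesis applicable.

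There is one genuine gap: the appeal to guardedness. The paper assumes guardedness only for \emph{formulae}; the monitor grammar of \Cref{fig:monit-instr} admits unguarded recursions such as $\rec{x}{x}$, which is syntactically deterministic (\Cref{def:determinism} constrains only sums) yet generates an infinite $\actt$-chain. Your terminal tau-silent state $\mV^\star$ then need not exist, and both your base case and your inductive step quote it. (The paper itself confronts this possibility: the proof of \Cref{lem:deterministic-sound-complete-formula-rec} explicitly handles the case $\rec{x}{\mV} \; (\traS{\actt})^{k} \; \rec{x}{\mV}$.) The repair is short. Regular monitors are finite-state, so a non-terminating $\actt$-chain must enter a loop, and by your own mutual-exclusion fact every state on that chain has only its $\actt$-move; hence no state reachable from $\mV$ by the empty trace can ever fire an external action or reach a verdict. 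In that case $\mVV$ and $\mVV'$ are trivially verdict-equivalent in the base case, and the inductive step is vacuous because $\mV$ has no $\act$-derivative. With this case added, your argument is complete.
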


\begin{thm}[\cite{determinization}]\label{thm:determ_from_determ}
	For every consistent regular monitor $\mV$, there is a verdict-equivalent deterministic regular monitor $\mVV$ such that $l(\mVV) = 2^{2^{O(l(\mV))}}$.
	\qed
\end{thm}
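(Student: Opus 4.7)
The plan is to leverage standard determinisation techniques for finite automata and show how they can be adapted to the monitor syntax while preserving the verdict structure. Starting from a consistent regular monitor $\mV$, I would first reduce to two language problems, one for acceptance and one for rejection. Since $\mV$ is regular, by \Cref{prop:reg-mon-fin-state} the set $\reach{\mV}$ is finite, with cardinality $O(l(\mV))$. Treating the reachable states as nodes and the $\wtraS{\act}$-transitions as edges, one obtains NFAs $A_a$ and $A_r$ of size $O(l(\mV))$ with $L(A_a) = L_a(\mV)$ and $L(A_r) = L_r(\mV)$; here the final states of $A_a$ (respectively $A_r$) are those states in $\reach{\mV}$ from which $\yes$ (respectively $\no$) is reachable by internal moves alone. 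Verdict persistence (\Cref{lem:ver-persistence}) ensures that once a final state is entered, every continuation remains final, so we can equivalently turn the final states into absorbing sinks.

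Next I would determinise $A_a$ and $A_r$ via the subset construction, obtaining DFAs $D_a$ and $D_r$ with at most $2^{O(l(\mV))}$ states each, and then take their product $D = D_a \times D_r$, still with $2^{O(l(\mV))}$ states. Each state of $D$ is a pair $(S_a, S_r)$ carrying enough information to tell, after reading any finite trace $\ftr$, whether $\ftr \in L_a(\mV)$, whether $\ftr \in L_r(\mV)$, or neither. Consistency of $\mV$ is what makes this into a well-defined verdict assignment: by \Cref{def:consitency-totality}, $L_a(\mV) \cap L_r(\mV) = \emptyset$, so no reachable product state is simultaneously accepting for $D_a$ and for $D_r$. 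Each product state is therefore labelled unambiguously with at most one of $\yes$ or $\no$, and once labelled it remains so on every extension, exactly as the monitor semantics requires for verdict states.

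Finally I would translate $D$ back into the syntax of regular monitors. For each non-verdict state $q$ of $D$, introduce a recursion variable $x_q$ and a defining equation $x_q = \sum_{\acta \in A_q} \prf{\acta}{\mVV_{q,\acta}}$, where $A_q \subseteq \Act$ is the set of actions enabled from $q$ and $\mVV_{q,\acta}$ is either $\yes$, $\no$, or $x_{q'}$ according to whether the $\acta$-successor $q'$ is labelled accepting, rejecting, or non-verdict. Since $D$ is deterministic, the summands have pairwise distinct action prefixes, so the resulting monitor meets the syntactic determinism condition of \Cref{def:determinism}. Unfolding the mutual recursion into a single term of the form $\rec{x_{q_0}}{\ldots}$ in the grammar of \Cref{fig:monit-instr} costs at most one further exponential blow-up in the number of states, as is standard when translating a DFA into a regular expression, yielding $l(\mVV) = 2^{2^{O(l(\mV))}}$. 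Verdict equivalence with $\mV$ follows by induction on $\ftr$ from the construction of $D$ and the closure of verdict states.

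The main obstacle will be the translation from the product DFA back to the monitor syntax without losing determinism: naive inlining of the recursion equations can introduce non-deterministic branching via nested $+$, and reactivity of the resulting term must be re-established by padding missing $\acta$-summands with inconclusive continuations so that rule \rtit{iTer} is not silently invoked in a way that would upset verdict equivalence. Once the fixpoint structure is set up so that each summation in the unfolded term respects \Cref{def:determinism}, the size bound follows from the usual DFA-to-regular-expression argument, and correctness is a straightforward induction that uses \Cref{lem:det_mon_is_det} to transfer the deterministic behaviour of $D$ to $\mVV$.
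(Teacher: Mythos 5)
This theorem is imported from \citeMac{determinization} and the paper gives no proof of its own, so there is no in-paper argument to compare against; judged on its merits, your reconstruction is sound and matches the combinatorial content of the cited construction. The route you take --- read off NFAs for $L_a(\mV)$ and $L_r(\mV)$ from the finite reachable-state LTS (using \Cref{prop:reg-mon-fin-state} and verdict persistence to make final states absorbing), determinise by the subset construction, take the product, use consistency to get a well-defined single-verdict labelling of product states, and then fold the resulting deterministic transition system back into a single regular-monitor term --- is essentially the standard one; the cited work performs the powerset argument directly on systems of monitor equations rather than detouring through classical automata, but the two exponentials arise in the same two places (one from the subset construction, one from collapsing a system of $N$ mutually recursive equations into a single nested-$\mathsf{rec}$ term \`a la Beki\v{c}, where each variable elimination can multiply the term size). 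Two small points of care: your appeal to ``DFA to regular expression'' is the right intuition but the relevant fact is the exponential cost of variable elimination in equational systems over the monitor grammar, which is what actually yields $l(\mVV)=2^{2^{O(l(\mV))}}$; and the padding of missing $\acta$-summands you worry about is not needed for syntactic determinism, since \Cref{def:determinism} only requires the summands of each sum to carry pairwise distinct action prefixes over some $A\subseteq\Act$, and omitting an action neither creates a spurious verdict nor destroys verdict equivalence (the instrumentation merely falls back to $\stp$ via \rtit{iTer}). Neither point affects correctness.
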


\begin{prop}\label{prop:determinization}
For every consistent reactive parallel monitor $\mV$, there is a verdict-equivalent deterministic regular monitor $\mVV$ such that $l(\mVV) = 2^{2^{2^{O\left(l(\mV)\cdot 2^{l(\mV)}\right)}}}$.
\end{prop}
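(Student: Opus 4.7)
The plan is to chain together \Cref{prop:extended-mon-to-reg-mon} and \Cref{thm:determ_from_determ}, tracking the size blow-up through each transformation. First I would apply \Cref{prop:extended-mon-to-reg-mon} to the given consistent reactive parallel monitor $\mV$ to obtain a regular monitor $\mVV'$ that is verdict-equivalent to $\mV$ and has size $l(\mVV') = 2^{O\left(l(\mV) \cdot 2^{l(\mV)}\right)}$.

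Next I would observe that $\mVV'$ inherits consistency from $\mV$. Indeed, verdict equivalence means that $\mV$ and $\mVV'$ accept and reject the same finite traces; so if there were a finite trace $\ftV$ such that $\mVV' \wtraS{\ftV} \yes$ and $\mVV' \wtraS{\ftV} \no$, then $\mV$ would also rejects and accept $\ftV$ (via \Cref{def:acc-n-rej}, using some process that can produce $\ftV$), contradicting the consistency of $\mV$. With consistency of $\mVV'$ in hand, I would apply \Cref{thm:determ_from_determ} to obtain a deterministic regular monitor $\mVV$ verdict-equivalent to $\mVV'$, with $l(\mVV) = 2^{2^{O(l(\mVV'))}}$.

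Finally I would compose the bounds: substituting the bound on $l(\mVV')$ yields
\[
l(\mVV) = 2^{2^{O(l(\mVV'))}} = 2^{2^{O\left(2^{O(l(\mV) \cdot 2^{l(\mV)})}\right)}} = 2^{2^{2^{O\left(l(\mV) \cdot 2^{l(\mV)}\right)}}},
\]
as required, where the last equality absorbs the outer $O(\cdot)$ into the exponent. Since verdict equivalence is clearly transitive (it is defined trace-by-trace in \Cref{def:verdict-eq}), $\mVV \mveq \mV$. There is no real obstacle here beyond bookkeeping; the one subtlety worth highlighting is the preservation of consistency when moving from $\mV$ to $\mVV'$, which is needed because \Cref{thm:determ_from_determ} has consistency as a hypothesis, and this is exactly where we use that verdict equivalence aligns acceptances and rejections across the two monitors.
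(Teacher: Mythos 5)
Your proposal is correct and follows essentially the same route as the paper's proof: apply \Cref{prop:extended-mon-to-reg-mon}, observe that verdict equivalence transfers consistency to the intermediate regular monitor, then apply \Cref{thm:determ_from_determ} and compose the size bounds. The consistency-preservation step you highlight is exactly what the paper notes (parenthetically) when it calls the intermediate monitor ``verdict-equivalent (hence consistent)''.
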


\begin{proof}
Using \Cref{prop:extended-mon-to-reg-mon},
$\mV$
can be translated into  a (possibly nondeterministic) verdict-equivalent (hence consistent) regular monitor $\mVV_r$, such that $l(\mVV_r) = 2^{O\left(l(\mV)\cdot 2^{l(\mV)}\right)}$.
\Cref{thm:determ_from_determ} can
then be used to convert $\mVV_r$
into a verdict-equivalent deterministic regular monitor $\mVV$, such that $l(\mVV) = 2^{2^{O(l(\mVV_r))}}$.
Therefore, $l(\mVV) = 2^{2^{2^{O\left(l(\mV)\cdot 2^{l(\mV)}\right)}}}$.
\end{proof}

\section{Monitorability for \ltmu}
\label{sec:ltmu-monitorability}

\emph{Monitorability} is the study of the relationship between the semantics of a logic on the one hand (\ie satisfactions and violations), and the verdicts that can be discerned by the monitoring setup on the other (\ie acceptances and rejections).
The concept relies on what
a \emph{correct} monitor for a particular formula
is,
which, in turn, defines what it means for a formula to be \emph{monitorable}.
In this section we focus on the monitorability of \ltmu.
Based on the definition of trace acceptance and rejection of \Cref{def:acc-n-rej}, we adapt the concepts of monitor soundness and completeness (\wrt a formula) from \citeMac{FraAI:17:FMSD} to the linear-time setting.

\begin{defn}[Linear-time Monitor Soundness and Completeness] \label{def:soundness-n-completeness}
\quad
\begin{itemize}
  \item A monitor \mV is \emph{sound} for a (closed) formula \hV of \ltmu over
  traces if, \emph{for all} $\tV\in \Trc$:
  \begin{itemize}
  \item  \rej{\mV,\tV} implies $\tV\not \in \hSemL{\hV}$;
  \item  \acc{\mV,\tV} implies $\tV\in \hSemL{\hV}$.
  \end{itemize}
  \item A monitor \mV is \emph{violation-complete} for a (closed) formula \hV of \ltmu over traces if \emph{for all} $\tV\in \Trc$, $\tV\not\in\hSemL{\hV}$ implies  \rej{\mV,\tV}.
  It is \emph{satisfaction-complete} if $\tV\in \hSemL{\hV}$ implies \acc{\mV,\tV}.
  \item A monitor \mV is \emph{complete} for a (closed) formula \hV of \ltmu if it is \emph{both} violation- and satisfaction-complete for it. \qedd
\end{itemize}
\end{defn}

The definition of soundness and completeness for monitors
depends on the semantics given to the formulae.
Since we focus on linear-time semantics
in this section,
instead of saying that a monitor is sound or violation- or satisfaction-complete, or complete for a formula over traces, we respectively simply say that it is sound or violation- or satisfaction-complete, or complete for the formula.
In \Cref{sec:branchingtime}, we will introduce variations of \Cref{def:soundness-n-completeness} that depend on different semantics for \ltmu.
Observe that a monitor that is \emph{sound} for some formula must be \emph{consistent}. 



Following \citeMac{FraAI:17:FMSD}, we assume that the minimum requirement for a monitor to correctly correlate to a formula is for it to be \emph{sound}.
%
It can be however argued that, depending on the circumstance of the application requirements, different notions of completeness may be deemed adequate enough.
%
It turns out
that \emph{not} all formulae can be monitored adequately at runtime.
Moreover,
the more stringent the requirement for adequate monitoring, the more are the formulae that \emph{cannot} be monitored.
In the remainder of the section, we consider different definitions for adequate monitoring and establish \ltmu monitorability results in each case.

In \Cref{sec:complete-monitorability}, we
present
monitorability results with respect to complete monitors.
In \Cref{sec:strictly-complete}, we introduce the additional requirement of \emph{tightness} for a monitor, under which the monitor reaches a verdict as soon as it has read sufficient information from the input trace and not later. We explain what one needs to do to construct a tight monitor.
In \Cref{sec:partially-complete}, we establish monitorability results for \emph{partially complete} monitors, which are satisfaction-complete or violation-complete for their respective formulae, but are not required to be both.
This relaxation
allows us to monitor for more formulae.
Finally, in \Cref{sec:strictly-complete-w-recursion}, we
examine what one must do to construct tight partially complete monitors, and we explain why the methods of \Cref{sec:strictly-complete}
are not likely to apply for this case.

\subsection{Complete Monitorability}
\label{sec:complete-monitorability}

We first consider \emph{(sound and) complete} monitors as our notion of adequate monitoring for a particular formula.
This induces the following definition of monitorable formula and (sub)logic.

\begin{defn}[Complete Monitorability]
  \label{def:complete-monitorability}
%
 A
 formula $\hV\in\UHML$ is
 \emph{complete-monitorable} over traces iff there \emph{exists} a monitor \mV that is  sound and complete for it.
 A (sub)logic $\LSet\subseteq \UHML$ is  \emph{complete-monitorable} over traces iff
 each formula $\hV\in\LSet$ is complete-monitorable.
 \qedd
\end{defn}

\begin{rem}
  In this section we only use \Cref{def:complete-monitorability} for the linear-time interpretation of \UHML.
  However, its general form allows it to be used for other interpretations of the logic, with the appropriate adaptation of complete monitors (\eg along the lines of \citeMac{FraAI:17:FMSD}). \qedd
\end{rem}


As the following results highlight, soundness and completeness for monitors are invariant
under verdict equivalence.

\begin{prop}\label{prop:vedict-equiv-implies-complete-formula}\label{prop:complete-semantic-equivalence}
  If \mV is sound and complete for \hV then
  \begin{enumerate}
    \item $\mV \mveq \mVV$ implies \mVV is sound and complete for \hV;
    \item \mV is a sound and complete monitor for $\hV'$ implies $\hSemL{\hV}=\hSemL{\hV'}$. \qed
  \end{enumerate}
\end{prop}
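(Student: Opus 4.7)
The key preliminary observation is that the predicates ``$\mV$ accepts $\tV$'' and ``$\mV$ rejects $\tV$'' from \Cref{def:acc-n-rej} depend only on the monitor's own behaviour, not on the ambient process. Concretely, I would first establish the following fact: for every $\tV \in \Trc$, $\mV$ accepts $\tV$ iff there is a finite prefix $\ftV$ of $\tV$ with $\mV \wtraS{\ftV} \yes$, and analogously for rejection. The forward direction uses \Cref{lem:unzipping} (unzipping): from $\sys{\mV}{p} \wtraS{\ftV} \sys{\yes}{p'}$ we get $\mV \wtraS{\ftV} \yes$, noting that the alternative clause of the unzipping lemma forces $\stp$ and is therefore incompatible with reaching $\yes$. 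The converse uses \Cref{lem:zipping} (zipping): given $\mV \wtraS{\ftV}\yes$, pick any process $p$ capable of performing $\ftV$ (e.g.\ a simple linear process built from the symbols of $\ftV$) and zip to obtain $\sys{\mV}{p}\wtraS{\ftV}\sys{\yes}{p'}$, which witnesses $\acc{\mV,\tV}$ since $\ftV$ is a prefix of $\tV$. The rejection case is identical with $\no$ in place of $\yes$.

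For item (1), assume $\mV\mveq\mVV$. By \Cref{def:verdict-eq}, for every finite trace $\ftV$ we have $\mV\wtraS{\ftV}\yes$ iff $\mVV\wtraS{\ftV}\yes$, and likewise for $\no$. Combining this with the preliminary observation, $\mV$ and $\mVV$ accept (resp.\ reject) exactly the same infinite traces. Since \Cref{def:soundness-n-completeness} phrases soundness and (violation/satisfaction) completeness purely in terms of which traces are accepted or rejected, soundness and completeness for $\hV$ transfer verbatim from $\mV$ to $\mVV$.

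For item (2), assume $\mV$ is sound and complete for both $\hV$ and $\hV'$. Take any $\tV\in\hSemL{\hV}$. By satisfaction-completeness of $\mV$ for $\hV$, we have $\acc{\mV,\tV}$; then, by the acceptance clause of soundness of $\mV$ for $\hV'$, $\tV\in\hSemL{\hV'}$. The reverse inclusion is symmetric, interchanging the roles of $\hV$ and $\hV'$. Hence $\hSemL{\hV}=\hSemL{\hV'}$. (One could equivalently argue through the violation side using violation-completeness and the rejection clause of soundness.)

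The only nontrivial step is the preliminary observation, and even that is a routine consequence of the zipping/unzipping lemmas already available in \Cref{sec:monitors}; the rest is essentially unfolding definitions. I would therefore either inline this observation at the top of the proof or isolate it as a one-line auxiliary lemma to keep the two parts crisp.
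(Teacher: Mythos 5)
Your proposal is correct and takes essentially the same route as the paper: the paper likewise reduces acceptance/rejection of a trace to the monitor reaching a verdict on a finite prefix via the unzipping/zipping lemmas (a fact it states informally after \Cref{def:acc-n-rej}), transfers verdicts through \Cref{def:verdict-eq} for item (1), and chains satisfaction-completeness for $\hV$ with soundness for $\hV'$ for item (2). Making the preliminary observation explicit, as you suggest, is a reasonable presentational improvement but not a mathematical difference.
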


In line with other works on monitorability \cite{MannaPnueli:91:TCS,ChangMannaPnueli:92:ALP,PnueliZaks:06:FM,BauerLeuckerSchallhart:10:LandC,FalconeFernandezMounier:STTT:12,CiniFrancalanza:15:TACAS,FraAI:17:FMSD},  \emph{not} all properties in \ltmu are complete monitorable.

\begin{exmp} \label{ex:not-complete-monitorable}
  %
  The formula $\hV_1=\hSuf{a}{\hTru}\,\textsf{U}\,\hSuf{b}{\hTru}$ is \emph{not} complete-monitorable.
  For if, by contradiction, we assume that it was then there must \emph{exist} some sound and complete monitor \mV for $\hV_1$.
  Since the trace $a^\omega \not\in \hSemL{\hV_1}$, this monitor \mV rejects $a^\omega$ which, by \Cref{def:acc-n-rej}, means that it must reach a violation after observing a \emph{finite} prefix  $a^k$ (for $k\geq 0$).
  But this would also mean that \mV rejects \emph{all} traces of the form $a^k b \tV$, which clearly satisfy $\hV_1$, thereby contradicting the assumption that \mV is sound.
  Similarly, it can be argued that the formula $\hV_2=\hSuf{a}{\hSuf{b}{\hSuf{b}{\hTru}}}\,\textsf{R}\,\hSuf{a}{\hTru}$
  is \emph{not} complete-monitorable either.
  For if it was, a sound and complete monitor $\mV_2$ would accept the trace $a^\omega$ after analysing some
  prefix
  $a^n$
  of it;
  this would also mean that this monitor would also accept any trace of the form $a^{n}ba\tV$, which clearly violates the property.  Thus, no such monitor exists.
  \qedd
\end{exmp}

\Cref{ex:not-complete-monitorable} raises the question as to which \ltmu properties can be monitored according to \Cref{def:complete-monitorability}.
To answer this question,
we first identify a  fragment of \ltmu that is guaranteed to be complete-monitorable
and then show its maximality.

\begin{defn}[The complete-monitorable fragment of \ltmu]
  \label{def:complete-fragment-HML}
   The recursion-free syntactic fragment of \ltmu (a syntactic variant of HML \cite{HennessyM:1985:JACM}) is defined as:
\begin{align*}
    \hV,\hVV \in \HML &\bnfdef  \hTru &
           &\bnfsepp  \hFls &
           & \bnfsepp \hOr{\hV\,}{\,\hVV}  &
           & \bnfsepp \hAnd{\hV\,}{\,\hVV} &
           &\bnfsepp \hSuf{\ASet}{\hV} &
            &\bnfsepp \hNec{\ASet}{\hV}
            . \tag*{\qedd}
  \end{align*}
\end{defn}

For every formula $\hV \in \HML$, we can define a monitor synthesis function as follows.

\begin{defn}[Complete Monitor Synthesis] \label{def:mon-synt-complete}
The function $\hSyn{-}: \HML \to \Mon$ is defined inductively as follows:
\begin{align*}
  \hSyn{\hFls} & \defeq \no &
  \hSyn{\hAndF} &\defeq \hSyn{\hV_1} \paralC \hSyn{\hV_2} &
  \hSyn{\hNec{\ASet}{\hV}} &\defeq \ch{\prf{\ASet}{\hSyn{\hV}}}{\uprf{\ASet}{\yes}} \\
  \hSyn{\hTru} & \defeq \yes &
  \hSyn{\hOrF} &\defeq \hSyn{\hV_1} \paralD \hSyn{\hV_2} &
  \hSyn{\hSuf{\ASet} \hV} &\defeq \ch{\prf{\ASet}{\hSyn{\hV}}}{\uprf{\ASet}{\no}}
  .
  \tag*{\qedd}
\end{align*}
\end{defn}

\begin{lem} \label{lem:synt-complete-mon-reactive}
  For all $\hV \in \HML$, \hSyn{\hV} is reactive. \qed
\end{lem}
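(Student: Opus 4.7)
The plan is to prove the claim by structural induction on $\hV \in \HML$, showing that every state in $\reach{\hSyn{\hV}}$ admits a weak transition on every $\act \in \Act$. The base cases $\hV = \hTru$ and $\hV = \hFls$ are immediate: the synthesized monitor is a verdict $\vV \in \{\yes,\no\}$, and rule \rtit{mVer} gives $\vV \traS{\act} \vV$ for every $\act$, while $\reach{\vV} = \{\vV\}$.

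For the modality cases $\hV = \hNec{\ASet}{\hVV}$ and $\hV = \hSuf{\ASet}{\hVV}$, the synthesized monitor has the shape $\ch{\prf{\ASet}{\hSyn{\hVV}}}{\uprf{\ASet}{\vV}}$ for some verdict $\vV$. From this initial state, every $\act \in \ASet$ is analysable via \rtit{mAct} and \rtit{mSelL}, leading to $\hSyn{\hVV}$, which is reactive by induction; every $\act \notin \ASet$ is analysable via \rtit{mAct} and \rtit{mSelR}, leading to $\vV$, which is reactive by the base case. Since reactivity is inherited by reachable states (because $\reach{\mVV}\subseteq\reach{\mV}$ whenever $\mVV \in \reach{\mV}$), the whole monitor is reactive.

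The conjunctive and disjunctive cases $\hV = \hAndF$ and $\hV = \hOrF$ reduce, by the induction hypothesis, to showing that parallel composition preserves reactivity: if $\mV_1$ and $\mV_2$ are reactive, so is $\mV_1 \paralG \mV_2$. I would establish this as an auxiliary lemma by classifying the reachable states. Every $\mVV \in \reach{\mV_1 \paralG \mV_2}$ has one of three shapes, obtained through the interplay of \rtit{mPar}, \rtit{mTauL}/\rtit{mTauR}, and the verdict-absorbing rules \rtit{mVrC1}, \rtit{mVrC2}, \rtit{mVrD1}, \rtit{mVrD2}, \rtit{mVrE}: (i) a parallel composition $\mV_1' \paralG \mV_2'$ with $\mV_i' \in \reach{\mV_i}$, which can read any $\act$ via \rtit{mPar} using reactivity on each side; (ii) a single monitor $\mV_i'$ isolated after absorbing a non-dominating verdict (via \rtit{mVrC1} or \rtit{mVrD1}), which is reactive by hypothesis; or (iii) a pure verdict left after absorption by a dominating verdict (via \rtit{mVrC2}, \rtit{mVrD2}, or \rtit{mVrE}), which is reactive by \rtit{mVer}.

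The main obstacle is therefore the parallel-composition case, specifically a clean characterisation of $\reach{\mV_1 \paralG \mV_2}$ that accounts for arbitrary interleavings of independent $\tau$-moves on each side with the verdict-absorbing $\tau$-rules. Once this characterisation is in place, reactivity propagates routinely through the three cases above, and the induction closes.
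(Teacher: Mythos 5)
Your proof is correct and follows essentially the same route as the paper's: structural induction on $\hV$, with the verdict and modality cases immediate and the boolean cases reduced to showing that $\paralG$ preserves reactivity of its components. The paper dispatches that last step with a one-line appeal to rule \rtit{mPar}, whereas you additionally classify the reachable states of $\mV_1 \paralG \mV_2$ to account for the verdict-absorbing $\tau$-rules --- a more careful rendering of the same argument rather than a different one.
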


\begin{exmp}\label{ex:mon-synt-complete}
Assuming $\Act = \sset{a,b,c}$, the synthesised monitor for $\hV=\hAnd{\hNec{a}{\hSuf{b}{\hTru}}}{\hSuf{a}{\hNec{c}{\hFls}}}$, where $\hSemL{\hV} = \setof{ab\tV}{\tV\in\Act^\omega}$, is
\begin{align*}
  \hSyn{\hV} = \mV&= \bigl(\ch{\prf{a}{(\ch{\prf{b}{\yes}}{\prf{\sset{a,c}}{\no}})}}{\prf{\sset{b,c}}{\yes}}\bigr) \paralC \bigl(\ch{\prf{a}{(\ch{\prf{c}{\no}}{\prf{\sset{a,b}}{\yes}})}}{\prf{\sset{b,c}}{\no}}\bigr)
  .
\end{align*}
When we compose $\mV$ with $p=\rec{x}{\prf{a}{\prf{b}{x}}}$, we observe the following monitored behaviour:
\begin{align*}
  \!\!\sys{\mV}{p} \etraS{\actt a} \sys{\bigl((\ch{\prf{b}{\yes}}{\prf{\sset{a,c}}{\no}})\paralC (\ch{\prf{c}{\no}}{\prf{\sset{a,b}}{\yes}})\bigr)}{\prf{b}{p}} \traS{b} \sys{\yes\paralC\yes}{p} \traS{\actt} \sys{\yes}{p}
  .\tag*{{\qedd}}
\end{align*}
\end{exmp}

We show that, for each formula $\hV \in \HML$, the monitor \hSyn{\hV} is the witness sound and complete monitor for it. This, in turn, shows that \HML is complete-monitorable,
in the sense of
\Cref{def:complete-monitorability}.
%

\begin{prop} \label{prop:hml-monitorable}
For all $\hV \in \HML$, \hSyn{\hV} is a sound and complete monitor for \hV.
\end{prop}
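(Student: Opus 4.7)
The plan is to prove the statement by structural induction on $\hV \in \HML$, establishing soundness and completeness simultaneously. The first step is to exploit the instrumentation lemmas and the reactivity of synthesized monitors to reduce the work to reasoning about the monitor's transitions alone. Since $\hSyn{\hV}$ is reactive by \Cref{lem:synt-complete-mon-reactive}, rule \rtit{iTer} is never triggered when it runs with a process, so by \Cref{lem:unzipping,lem:zipping} we can characterize acceptance/rejection as follows: $\hSyn{\hV}$ rejects $\tV$ iff there exists a finite prefix $\ftr$ of $\tV$ with $\hSyn{\hV} \wtraS{\ftr} \no$, and analogously for acceptance with $\yes$. With this equivalence in hand, the claim reduces to a statement purely about monitor transitions vs. the linear-time semantics.

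The base cases are immediate: $\hSyn{\hTru} = \yes$ accepts every trace and rejects none, which matches $\hSemL{\hTru} = \Trc$; dually for $\hSyn{\hFls} = \no$ and $\hSemL{\hFls} = \emptyset$. For the conjunction and disjunction cases, I would apply \Cref{cor:parallel-and-monitors} (whose reactivity hypothesis is discharged by \Cref{lem:synt-complete-mon-reactive}) to decompose acceptance and rejection of $\hSyn{\hV_1} \paralC \hSyn{\hV_2}$ and $\hSyn{\hV_1} \paralD \hSyn{\hV_2}$ into conditions on the constituent submonitors, and then plug in the induction hypothesis together with the fact that $\hSemL{\hAnd{\hV_1}{\hV_2}}$ and $\hSemL{\hOr{\hV_1}{\hV_2}}$ are the intersection and union of $\hSemL{\hV_1}$ and $\hSemL{\hV_2}$.

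The modality cases are handled by a one-step analysis. For $\hSuf{\ASet}{\hV}$, the synthesized monitor's only transitions on a first action $\acta$ of $\tV = \acta \tV'$ are: to $\hSyn{\hV}$ if $\acta \in \ASet$, and to $\no$ if $\acta \notin \ASet$. Under linear-time semantics, $\acta\tV' \in \hSemL{\hSuf{\ASet}{\hV}}$ iff $\acta \in \ASet$ and $\tV' \in \hSemL{\hV}$, so soundness and completeness of $\hSyn{\hSuf{\ASet}{\hV}}$ on $\tV$ follow by applying the induction hypothesis to $\hSyn{\hV}$ on $\tV'$. The case for $\hNec{\ASet}{\hV}$ is symmetric, with roles of $\yes$ and $\no$ swapped, noting that $\acta\tV' \in \hSemL{\hNec{\ASet}{\hV}}$ iff $\acta \notin \ASet$ or $\tV' \in \hSemL{\hV}$.

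The main technical obstacle is verifying the reduction to ``monitor transitions alone'' with the required care for the parallel composition's silent $\tau$-reductions (rules \rtit{mVrC1}, \rtit{mVrC2}, \rtit{mVrD1}, \rtit{mVrD2}, \rtit{mVrE}); it is precisely to avoid re-proving these reconfiguration properties here that we lean on \Cref{cor:parallel-and-monitors}. A minor subtlety is ensuring that the finite-prefix witness found by the induction hypothesis on the continuation $\tV'$ can be prepended with the first action $\acta$ to yield a finite prefix of $\tV$ taking $\hSyn{\hSuf{\ASet}{\hV}}$ (or $\hSyn{\hNec{\ASet}{\hV}}$) to the appropriate verdict; this is straightforward from the monitor's one-step transition.
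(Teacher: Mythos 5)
Your proposal is correct and follows essentially the same route as the paper's proof: structural induction on $\hV$, reduction of acceptance/rejection to monitor transitions via \Cref{def:acc-n-rej} and the zipping/unzipping lemmas, use of \Cref{lem:synt-complete-mon-reactive} to discharge the reactivity hypothesis of \Cref{cor:parallel-and-monitors} for the boolean cases, and a one-step transition analysis for the modalities. The only cosmetic difference is that the paper treats soundness and completeness as two separate inductions rather than simultaneously.
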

\begin{proof}
  From \Cref{def:soundness-n-completeness}, \emph{soundness} requires us to show that
  $(i)$ \rej{\hSyn{\hV},\tV}   implies $\tV\not \in \hSem{\hV}$ and
  $(ii)$ \acc{\hSyn{\hV},\tV} implies $\tV\in \hSem{\hV}$.
  \emph{Completeness}, requires us to show
  $(i)$  $\tV\not \in \hSem{\hV}$  implies \rej{\hSyn{\hV},\tV} and
  $(ii)$  $\tV\in \hSem{\hV}$  implies \acc{\hSyn{\hV},\tV}.
  See \Cref{sec:complete-mon-app}.
\end{proof}

\begin{cor} \label{cor:hml-monitorable} \HML is complete monitorable. \qed
\end{cor}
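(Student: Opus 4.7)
The plan is to observe that this corollary is essentially immediate from \Cref{prop:hml-monitorable} combined with the definition of complete-monitorability of a (sub)logic. Unfolding \Cref{def:complete-monitorability}, we must show that every closed formula $\hV \in \HML$ is complete-monitorable, i.e., that for each such $\hV$ there exists a monitor that is sound and complete for it.

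First, I would fix an arbitrary $\hV \in \HML$ and exhibit $\hSyn{\hV}$ as the witness monitor. \Cref{def:mon-synt-complete} defines $\hSyn{\hV}$ for every HML formula by straightforward structural recursion, so $\hSyn{\hV}$ is a well-defined (parallel) monitor in \Mon. Then \Cref{prop:hml-monitorable} directly yields that $\hSyn{\hV}$ is sound and complete for $\hV$ under the linear-time semantics, so $\hV$ is complete-monitorable in the sense of \Cref{def:complete-monitorability}. Since $\hV$ was arbitrary, every formula of $\HML$ is complete-monitorable, which by \Cref{def:complete-monitorability} means that the sublogic $\HML$ itself is complete-monitorable.

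If one wishes to emphasise that this witness monitor can be taken to be a regular monitor (rather than a parallel one), I would additionally note that $\hSyn{\hV}$ is reactive by \Cref{lem:synt-complete-mon-reactive}, so \Cref{prop:extended-mon-to-reg-mon} applies and produces a verdict-equivalent regular monitor $\mVV$; soundness and completeness are preserved under verdict equivalence by \Cref{prop:complete-semantic-equivalence}, so $\mVV$ is also a sound and complete monitor for $\hV$. However, this step is not strictly required by \Cref{def:complete-monitorability}, which only asks for the existence of \emph{some} monitor. There is no real obstacle here: the proposition does all the work, and the corollary is a one-line consequence of quantifying the proposition over all $\hV \in \HML$.
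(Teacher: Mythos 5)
Your proof is correct and follows exactly the paper's route: the corollary is an immediate consequence of \Cref{prop:hml-monitorable}, which exhibits $\hSyn{\hV}$ as the witness sound and complete monitor for each $\hV\in\HML$, combined with \Cref{def:complete-monitorability}. The optional remark about passing to a verdict-equivalent regular monitor is harmless but, as you note, not needed.
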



\newcommand{\noVar}[1]{\ensuremath{\textsf{noV}(#1)}}
\newcommand{\noRec}[1]{\ensuremath{\textsf{noR}(#1)}}
\newcommand{\degree}[1]{\ensuremath{\textsf{deg}(#1)}}
\newcommand{\deter}[1]{\ensuremath{\textsf{det}(#1)}}

Following \citeMac{FraAI:17:FMSD}, we go one step further and show that the fragment \HML of \Cref{def:complete-fragment-HML} is \emph{maximally expressive} with respect to sound and complete monitors.
By this we mean that
\emph{every} formula $\hV\in\UHML$ that
is complete-monitorable, in the sense of \Cref{def:complete-monitorability},
is semantically equivalent to
a formula from \HML.
%
Thus, we can limit ourselves to the syntactic fragment \HML without sacrificing any expressiveness in terms of complete-monitorable properties.

We show this claim in two steps.
First, we tighten expressiveness results from \Cref{sec:monitors} for the specific case of complete monitoring.
Concretely, we argue that every complete-monitorable formula (\Cref{def:complete-monitorability}) can  be monitored adequately by a \emph{recursion-free} syntactically deterministic monitor (see \Cref{def:determinism}).
This is shown via \Cref{lem:remove-rec-complete-monitor}, which relies on \Cref{def:no-rec-mon}.
In the second step, we devise an inverse synthesis function
to obtain complete-monitorable \HML\ formulae from \emph{recursion-free}  deterministic monitors, \Cref{lem:consistent-implies-sound-and-complete}.
This formula synthesis function is then used for \Cref{prop:hml-maximal}, the last main result of \Cref{sec:complete-monitorability}.

\begin{defn}[Removing Monitor Recursion]
 \label{def:no-rec-mon}
 For each monitor \mV, we define \noRec{\mV} thus:
 \begin{align*}
   \noRec{x} & \defeq \stp &
   \noRec{\vV} &  \defeq \vV &
   \noRec{\rec{x}{\mVV}} &\defeq  \noRec{\mVV}
   \\
   \noRec{\ch{\mVV_1}{\mVV_2}} &\defeq \ch{\noRec{\mVV_1}}{\noRec{\mVV_2}} &
   \noRec{\prf{\acta}{\mVV}} & \defeq \prf{\acta}{\noRec{\mVV}}
   .\tag*{\qedd}
 \end{align*}
\end{defn}

\begin{lem}
  \label{lem:remove-rec-complete-monitor}
  If \mV is a syntactically deterministic monitor that is sound and complete for \hV, then $\noRec{\mV}$ is also a sound and complete monitor for \hV.
\end{lem}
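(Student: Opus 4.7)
The plan is to extract from the hypotheses a strong boundedness property of $\mV$'s verdict-reaching behavior, and then use it to show that removing recursion cannot alter which traces are accepted or rejected.

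First, I would show that every computation of $\mV$ reaches a verdict within a uniformly bounded number of external actions. Since $\mV$ is sound and complete for $\hV$, every trace $\tV$ is either accepted or rejected by $\mV$ after some finite prefix. Syntactic determinism of $\mV$, via \Cref{lem:det_mon_is_det}, ensures that the computation on $\tV$ is essentially unique (up to verdict equivalence of intermediate states), so the verdict is reached by this unique deterministic computation, and from every reachable state of $\mV$ every continuation yields a verdict in finitely many external actions. Combining this with \Cref{prop:reg-mon-fin-state} (finite reachable states) and a K\"onig-style argument yields a uniform bound $N$ such that $\mV$ reaches a verdict within $N$ external actions along every trace.

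Second, I would argue that no verdict-reaching computation of $\mV$ actually executes a substituted recursion variable. Such an execution would correspond to unfolding $\rec{x}{\mVV'}$ a second time via \rtit{mRec}, cycling back through non-verdict states to the same configuration (by determinism), and hence producing arbitrarily long verdict-free computations---contradicting the bound $N$. Thus, along every verdict-reaching computation, each $\rec{x}{\mVV'}$ is unfolded at most once, and the substituted copies of $\rec{x}{\mVV'}$ placed in $x$'s positions inside the body are never subsequently entered.

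Third, I would use this property to show that $\mV$ and $\noRec{\mV}$ accept (resp.\ reject) exactly the same finite traces, by a straightforward induction on transition derivations: $\noRec{\mV}$ differs from $\mV$ only by replacing each bound variable with $\stp$, and by step~2 those positions are never reached along verdict-reaching computations of $\mV$, so any $\mV \wtraS{\ftV} \vV$ with $\vV \in \{\yes,\no\}$ is matched by $\noRec{\mV} \wtraS{\ftV} \vV$. Conversely, any verdict reached by $\noRec{\mV}$ must also be reached by $\mV$, since the only "new" occurrences of $\stp$ introduced by $\noRec$ cannot produce $\yes$ or $\no$ by verdict persistence (\Cref{lem:ver-persistence}). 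It follows that $\noRec{\mV}$ is sound and complete for $\hV$.

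The main obstacle is step~2: formally establishing the no-re-entry property. This requires careful tracking of substitution positions through the operational semantics and invoking the bound $N$ from step~1 together with the syntactic determinism of $\mV$ to rule out the cyclic non-verdict computations that any genuine re-entry would create.
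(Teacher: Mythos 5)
Your overall strategy coincides with the paper's: show that $\mV$ and $\noRec{\mV}$ are verdict equivalent (your step~3) and conclude via \Cref{prop:vedict-equiv-implies-complete-formula}, the crux being the no-re-entry property that a verdict-reaching computation of a deterministic sound-and-complete monitor never passes through a substituted recursion variable. The paper isolates this as \Cref{lem:deterministic-sound-complete-formula-rec}: if a verdict-reaching computation of $\rec{x}{\mVV}$ does pass through the substituted occurrence of $x$, then $\rec{x}{\mVV}\wtraS{\ftV_1}\rec{x}{\mVV}$ for some $\ftV_1$, and completeness applied to the \emph{single} trace $\ftV_1^\omega$, combined with \Cref{lem:det_mon_is_det}, verdict persistence and the strong-determinism corollaries, forces $x$ to be syntactically a verdict --- a contradiction. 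No uniform bound is needed there, so your step~1 (the K\"onig-style bound $N$, which the paper does use, but in the proof of \Cref{thm:stronger-HML-maximality}) is a detour.

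The genuinely shaky point is the contradiction you draw in step~2. ``The cycle yields arbitrarily long verdict-free computations, contradicting the bound $N$'' does not follow as stated: acceptance and rejection are existentially quantified over derivations ($\mV\wtraS{\ftV}\vV$ asserts that \emph{some} computation reaches $\vV$), so the existence of an unboundedly long verdict-free computation on $\ftV_1^\omega$ is perfectly compatible with the monitor also reaching a verdict on a short prefix along a different derivation --- for a nondeterministic monitor such as $\rec{x}{(\ch{\prf{a}{x}}{\prf{a}{\yes}})}$ both happen at once. To close the gap you must invoke syntactic determinism a second time, to show that the computation of $\mV$ on a fixed trace is essentially unique (each non-verdict state of a deterministic monitor has at most one outgoing transition, and verdicts are absorbing), so that the cycle really does leave $\ftV_1^\omega$ undecided; the contradiction is then with completeness on that one trace, and the bound $N$ buys you nothing extra. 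This uniqueness argument is precisely what \Cref{lem:determinism-and-taus} and \Cref{lem:strong-determinism3} supply in the paper's appendix, so your plan is repairable, but the repair essentially reconstructs the paper's proof of \Cref{lem:deterministic-sound-complete-formula-rec}.
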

\begin{proof}
  Using \Cref{prop:vedict-equiv-implies-complete-formula}, the result follows if we show that $\mV\mveq\noRec{\mV}$. See \Cref{sec:complete-mon-app}.
\end{proof}

The next step towards proving \Cref{prop:hml-maximal} is that of synthesising formulae from any
recursion-free
syntactically deterministic monitor, which can be described by the following grammar.

\begin{defn}[%
Recursion-free
	Deterministic Monitors] \label{def:finite-determ-monitors}\
  \begin{align*}
      \mV,\mVV\in\FMon\ \bnfdef\  \no \; \bnfsepp \yes \; \bnfsepp \textstyle \sum_{\acta \in \ASet} \prf{\acta}{\mV_\acta}
      . \tag*{\qedd}
  \end{align*}
\end{defn}
%

We now show how to convert any recursion-free monitor \mV into an \HML formula $\mSyn{\mV}$.
We then argue that a \emph{reactive} \mV monitors soundly and completely for $\mSyn{\mV}$.

\begin{defn}
  \label{def:formula-synt-complete} The synthesis function $\mSyn{-}: \FMon \to \HML$ is defined as follows:
  \begin{align*}
    \mSyn{\yes} & = \hTru
    &
    \mSyn{\no} & = \hFls
    &
    {\textstyle \mSyn{\sum_{\acta\in\ASet} \prf{\acta}{\mV_\acta}} }
    & = {\textstyle \bigwedge_{\acta\in \ASet} \hNec{\acta}{\mSyn{\mV_\acta}}}
    \tag*{\qedd}
  \end{align*}
\end{defn}

\begin{lem}\label{lem:consistent-implies-sound-and-complete}
  Every \emph{reactive} monitor  $\mV\in\FMon$  is a sound and complete monitor for \mSyn{\mV}. \qed
\end{lem}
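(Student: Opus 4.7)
The plan is to prove the result by structural induction on $\mV \in \FMon$. The key preliminary observation is a decomposition lemma for reactive monitors in $\FMon$: for every infinite trace $\tV \in \Trc$, $\mV$ accepts (resp.\ rejects) $\tV$ if and only if there exists a finite prefix $\ftV$ of $\tV$ with $\mV \wtraS{\ftV} \yes$ (resp.\ $\mV \wtraS{\ftV} \no$). To establish this, note that $\FMon$ monitors have no $\tau$ transitions at all (no recursion, no parallel), so reactivity of any $\mVV \in \reach{\mV}$ forces $\mVV \traS{\acta}$ for every $\acta \in \Act$. Consequently rule \rtit{iTer} never applies during a derivation $\sys{\mV}{p} \wtraS{\ftV} \sys{\vV}{p'}$, and \Cref{lem:unzipping} yields $\mV \wtraS{\ftV} \vV$ directly. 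Conversely, given $\mV \wtraS{\ftV} \vV$ with $\ftV$ a prefix of $\tV$, any process $p$ producing $\tV$ satisfies $p \wtraS{\ftV} p'$ for some $p'$, and \Cref{lem:zipping} supplies the required instrumented derivation.

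With the decomposition in hand, the induction on $\mV$ proceeds as follows. The base cases $\mV = \yes$ and $\mV = \no$ are immediate: $\yes \wtraS{\varepsilon} \yes$ shows that $\yes$ accepts every trace, while \Cref{lem:ver-persistence} prevents $\yes$ from ever reaching $\no$; this matches $\hSemL{\hTru} = \Trc$, and the case for $\no$ is dual. For the inductive step $\mV = \sum_{\acta \in A} \prf{\acta}{\mV_\acta}$, reactivity of $\mV$ forces $A = \Act$, and each $\mV_\acta \in \reach{\mV}$ is itself reactive. Fix any trace $\tV = b\tVV$ with $b \in \Act$. The only strong transition of $\mV$ on $b$ is $\mV \traS{b} \mV_b$, so any prefix $\ftV$ of $\tV$ with $\mV \wtraS{\ftV} \yes$ must have the form $b\ftV'$ with $\mV_b \wtraS{\ftV'} \yes$. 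Applying the decomposition to $\mV_b$ together with the inductive hypothesis yields the chain: $\mV$ accepts $\tV$ iff $\mV_b$ accepts $\tVV$ iff $\tVV \in \hSemL{\mSyn{\mV_b}}$. On the semantic side, $\tV = b\tVV \in \hSemL{\mSyn{\mV}} = \hSemL{\bigwedge_{\acta \in \Act} \hNec{\acta}{\mSyn{\mV_\acta}}}$ reduces exactly to $\tVV \in \hSemL{\mSyn{\mV_b}}$, since only the $b$-conjunct is non-vacuous on $\tV$. Composing these equivalences gives soundness and completeness for acceptance; rejection is handled symmetrically, using $\no$ in place of $\yes$.

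The most delicate step is the decomposition lemma: one must verify that \rtit{iTer} cannot fire at any reachable configuration, and simultaneously exploit the fact that determinism of the sum provides a unique $b$-residual so that the inductive hypothesis applies cleanly to $\mV_b$. Once this bridge between the instrumented dynamics and the bare monitor dynamics is secured, the remainder of the induction is routine bookkeeping, matching the two-way case split on the head action of $\tV$ against the conjunction structure of $\mSyn{\mV}$.
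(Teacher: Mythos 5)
Your proof is correct and follows essentially the same route as the paper's: structural induction on $\mV$, with the sum case resolved by peeling off the head action $b$ of the trace, invoking the inductive hypothesis on the unique residual $\mV_b$, and matching this against the single non-vacuous conjunct of $\bigwedge_{\acta\in\Act}\hNec{\acta}{\mSyn{\mV_\acta}}$. The only difference is that you make explicit the bridge between instrumented acceptance/rejection and bare monitor transitions (via \Cref{lem:unzipping,lem:zipping}), which the paper leaves implicit after \Cref{def:acc-n-rej}.
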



We are now in a position to prove the
expressive
maximality of \HML from \Cref{def:complete-fragment-HML}.

\begin{prop}[Maximality for \HML] \label{prop:hml-maximal}
  For each $\hV\in\UHML$, if \hV is complete-monitorable, then there exists some $\hVV\in\HML$ such that $\hSemL{\hV}=\hSemL{\hVV}$.
\end{prop}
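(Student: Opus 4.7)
The plan is to compose the machinery already developed in \Cref{sec:monitors,sec:complete-monitorability} in order to transform any complete-monitoring witness for $\hV$ into a recursion-free deterministic monitor that lies within the range of the inverse synthesis function $\mSyn{-}$, then apply that function to extract an \HML formula semantically equivalent to $\hV$.

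Concretely, I would proceed as follows. First, by assumption there exists a (parallel) monitor $\mV$ that is sound and complete for $\hV$; soundness immediately yields that $\mV$ is consistent (otherwise some trace would be both rejected and accepted, contradicting \Cref{def:soundness-n-completeness}). Without loss of generality, I may take $\mV$ to be reactive, because any non-reactive monitor can be made reactive by inserting summands $\prf{\acta}{\stp}$ at every reachable non-verdict state and every action $\acta \in \Act$ for which no transition exists; thanks to rule \rtit{iTer} and \Cref{lem:ver-persistence}, this transformation preserves the acceptance and rejection of every trace and hence preserves soundness and completeness. Now \Cref{prop:determinization} produces a deterministic regular monitor $\mV_d$ that is verdict-equivalent to $\mV$, and \Cref{prop:complete-semantic-equivalence}(1) then tells us that $\mV_d$ is itself sound and complete for $\hV$.

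Next, \Cref{lem:remove-rec-complete-monitor} shows that $\noRec{\mV_d}$ is also sound and complete for $\hV$, and the transformation $\noRec{-}$ preserves syntactic determinism by construction (\Cref{def:no-rec-mon}). Thus $\noRec{\mV_d}$ is a recursion-free, syntactically deterministic monitor. The subtle point—and in my view the main obstacle—is to argue that $\noRec{\mV_d}$ can be regarded as an element of $\FMon$ with the reactivity property required by \Cref{lem:consistent-implies-sound-and-complete}. After removing recursion, the only residual non-$\FMon$ shapes are $\stp$-leaves produced from recursion variables. Completeness of $\noRec{\mV_d}$ forces these $\stp$-leaves to be \emph{unreachable}: if some finite trace $\ftV$ led to $\stp$, then by \Cref{lem:ver-persistence} no extension of $\ftV$ could ever reach $\yes$ or $\no$, contradicting completeness (applied to any infinite extension of $\ftV$ and its negation). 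One can therefore prune these dead branches and, if need be, restore the required shape $\sum_{\acta \in \Act} \prf{\acta}{\mV_\acta}$ at every remaining non-verdict node, obtaining a reactive monitor $\mV_F \in \FMon$ that is verdict-equivalent to $\noRec{\mV_d}$ and hence sound and complete for $\hV$.

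Finally, by \Cref{lem:consistent-implies-sound-and-complete}, $\mV_F$ is a sound and complete monitor for the \HML formula $\hVV \defeq \mSyn{\mV_F}$, which, inspecting \Cref{def:formula-synt-complete}, is built only from $\hTru,\hFls$ and necessity modalities, and so lies in \HML. Since $\mV_F$ is sound and complete for both $\hV$ and $\hVV$, \Cref{prop:complete-semantic-equivalence}(2) yields $\hSemL{\hV} = \hSemL{\hVV}$, establishing the maximality claim. The technical work is concentrated in the pruning/reactivation step for $\noRec{\mV_d}$; everything else is a direct chaining of results from the preceding sections.
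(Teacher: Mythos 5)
Your overall route is the paper's: take a witness monitor, determinise it into a regular monitor, strip recursion with \Cref{lem:remove-rec-complete-monitor}, read off an \HML formula via $\mSyn{-}$ and \Cref{lem:consistent-implies-sound-and-complete}, and close with \Cref{prop:complete-semantic-equivalence}(2). The step that does not survive scrutiny is the ``without loss of generality, $\mV$ is reactive'' preprocessing. Adding summands $\prf{\acta}{\stp}$ is harmless for the choice operator $+$, but for the parallel operators it can change which traces are accepted or rejected: un-blocking one component of a $\paralC$/$\paralD$ composition can allow the \emph{other} component to reach a verdict that the original composition could never emit. Concretely, for $\Act=\sset{a,b}$ let
$\mV \,=\, \prf{\Act}{(\ch{\prf{b}{\yes}}{\prf{a}{\no}})} \,+\, \prf{a}{(\prf{a}{\yes}\paralD\prf{b}{\stp})}$.
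This monitor is sound and complete for $\hNec{\Act}{\hSuf{b}{\hTru}}$, because the second summand deadlocks after its initial $a$ (rule \rtit{mPar} needs both components to fire the same action, and no $\actt$-rule applies) and so contributes no verdicts. Your transformation turns the blocked composition into $(\ch{\prf{a}{\yes}}{\prf{b}{\stp}})\paralD(\ch{\prf{b}{\stp}}{\prf{a}{\stp}})$, which after a second $a$ reaches $\yes\paralD\stp\traS{\actt}\yes$ by \rtit{mVrD2}; the transformed monitor now accepts $aa\tV$ while still rejecting it through the first summand, so it is inconsistent and unsound. Reading your construction instead as adding edges at the \emph{derived} states of the LTS avoids this, but the result is then no longer a term of the parallel-monitor grammar, so the syntax-directed constructions behind \Cref{prop:extended-mon-to-reg-mon} and \Cref{prop:determinization} (which are precisely what need reactivity) no longer apply. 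Neither \rtit{iTer} nor \Cref{lem:ver-persistence} gives you the preservation claim you assert.

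Everything downstream of that point matches the paper: consistency from soundness, \Cref{prop:determinization}, \Cref{prop:complete-semantic-equivalence}(1), \Cref{lem:remove-rec-complete-monitor}; and your observation that a reachable $\stp$ in a deterministic complete monitor contradicts completeness is exactly the content of the paper's auxiliary lemma that deterministic complete monitors are reactive, used together with the fact that $\noRec{-}$ preserves determinism and reactivity. One caution there: ``pruning'' dead branches can itself destroy reactivity, which \Cref{lem:consistent-implies-sound-and-complete} requires, so you should argue that the $\stp$-leaves are unreachable and keep the sums total rather than delete summands. If you want a start of the pipeline that needs no reactivity assumption on the witness monitor at all, the language-theoretic argument in the proof of \Cref{thm:stronger-HML-maximality} --- finitely many minimal good/bad prefixes via K\"{o}nig's lemma, from which a recursion-free regular monitor is built directly --- is the robust replacement for your first step.
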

\begin{proof}
  From the results in \Cref{sec:monitors} and \Cref{lem:remove-rec-complete-monitor}, each complete-monitorable $\hV\in\UHML$ has a recursion-free deterministic monitor \mV that is sound and complete for it.
  By
   \Cref{lem:consistent-implies-sound-and-complete}, $\mV$ is sound and complete for $\mSyn{\mV}$ as well which is in \HML.
  \Cref{prop:complete-semantic-equivalence} thus yields $\hSemL{\hV}=\hSemL{\mSyn{\mV}}$ as required.
  See \Cref{sec:complete-mon-app} for more details.
\end{proof}

  The proof of \Cref{prop:hml-maximal} is constructive.  We are also able to prove (albeit in a non-constructive manner) an even stronger result (\Cref{thm:stronger-HML-maximality}) with respect to complete monitoring for any \emph{arbitrary} logic defined over traces. This increases the importance of the  fragment identified in \Cref{def:complete-fragment-HML} for the linear-time interpretation.
The proof of \Cref{thm:stronger-HML-maximality} can be found in \Cref{sec:complete-mon-app}.

\begin{thm}
  \label{thm:stronger-HML-maximality}
  Let $\mV$ be a monitor from a monitoring system with the following two properties:
  \begin{enumerate}
  	\item
  	verdicts are irrevocable, that is, if $\mV$ accepts (respectively, rejects) a finite trace $\ftV$, then it accepts (respectively, rejects) all its extensions, and
  	\item
  	$\mV$ accepts (respectively, rejects) a trace $\tV$ if, and only if, it accepts (respectively, rejects) some finite prefix $\ftV$ of $\tV$.
  \end{enumerate}
  For any property \hV with a trace interpretation (not necessarily syntactically represented using \UHML), if \mV is sound and complete for \hV then \hV can be expressed via the syntactic fragment \HML of \Cref{def:complete-fragment-HML}. \qed
\end{thm}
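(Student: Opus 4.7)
The plan is to reduce membership in $\hSemL{\hV}$ to inspecting a uniformly bounded prefix of each trace and then to express that decision as a finite disjunction of prefix-constraints in $\HML$. First, I set $L_a = \{\ftV \in \Act^* \mid \mV \text{ accepts } \ftV\}$ and $L_r = \{\ftV \in \Act^* \mid \mV \text{ rejects } \ftV\}$. Hypothesis~(1) makes both sets upward-closed in the prefix order. They are also disjoint: if some $\ftV$ had prefixes in both $L_a$ and $L_r$, then hypothesis~(2) applied to any trace extending $\ftV$ would force $\mV$ to both accept and reject that trace, contradicting the soundness clauses of \Cref{def:soundness-n-completeness}.

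Next, consider the subtree $T \subseteq \Act^*$ of finite traces no prefix of which lies in $L_a \cup L_r$. An infinite branch of $T$ would determine a trace $\tV \in \Act^\omega$ no finite prefix of which lies in $L_a \cup L_r$; by hypothesis~(2), $\mV$ would neither accept nor reject $\tV$, contradicting completeness (whichever of $\tV \in \hSemL{\hV}$ or $\tV \notin \hSemL{\hV}$ holds yields a verdict). Since $\Act$ is finite, $T$ is finitely branching, so K\"onig's Lemma forces $T$ to be finite; let $N$ strictly exceed the depth of $T$. Then every $\ftV \in \Act^N$ has a prefix in $L_a \cup L_r$, and by the disjointness above this prefix lies in exactly one of $L_a, L_r$.

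Define $A_N = \{\ftV \in \Act^N \mid \text{some prefix of } \ftV \text{ lies in } L_a\}$. I then show that $\hSemL{\hV} = \{\ftV \tV' \mid \ftV \in A_N,\ \tV' \in \Act^\omega\}$. The inclusion $\supseteq$ uses upward closure of $L_a$, hypothesis~(2), and soundness. For $\subseteq$, given $\tV \in \hSemL{\hV}$, the length-$N$ prefix $\ftV_N$ of $\tV$ has some prefix in $L_a \cup L_r$ by choice of $N$; a prefix in $L_r$ would, via hypothesis~(2), make $\mV$ reject $\tV$, whence soundness forces $\tV \notin \hSemL{\hV}$, a contradiction; so $\ftV_N \in A_N$.

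Finally, for each $\ftV = a_1 \cdots a_N \in A_N$ put $\phi_\ftV = \hSuf{a_1}{\hSuf{a_2}{\cdots \hSuf{a_N}{\hTru}}}$, whose linear-time semantics is precisely the set of traces with prefix $\ftV$. Then $\hV$ is semantically equivalent to $\bigvee_{\ftV \in A_N} \phi_\ftV$ (read as $\hFls$ when $A_N = \emptyset$), which lies in $\HML$. The principal obstacle is the K\"onig's Lemma step: this is where the finiteness of $\Act$ enters essentially, and it is also what renders the argument non-constructive, since the abstract hypotheses give no effective a priori bound on $N$.
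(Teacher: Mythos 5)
Your proposal is correct. The heart of it coincides with the paper's argument: both proofs consider the (prefix-closed, finitely branching) tree of finite traces on which no verdict has been reached, and both invoke K\"onig's Lemma together with completeness to conclude that this tree is finite --- which is exactly where finiteness of $\Act$ and the non-constructive character enter in both cases. Your disjointness argument for $L_a$ and $L_r$ and your use of irrevocability also mirror the paper's. Where you diverge is in how the finiteness is cashed out. The paper takes the \emph{minimal} elements of $L_a$ and $L_r$, assembles from them a finite recursion-free regular monitor $\sum_{\ftV \in \min L_a}\ftV.\yes + \sum_{\ftV \in \min L_r}\ftV.\no$ verdict-equivalent to $\mV$, and then leans on the already-established monitor-to-formula synthesis (\Cref{lem:consistent-implies-sound-and-complete}) to extract the \HML\ formula. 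You instead extract a \emph{uniform} depth bound $N$, observe that membership in $\hSemL{\hV}$ is decided by the length-$N$ prefix, and write the formula directly as $\bigvee_{\ftV \in A_N}\hSuf{a_1}\cdots\hSuf{a_N}\hTru$. Your route is more self-contained (it bypasses the determinisation and synthesis machinery of \Cref{sec:monitors}) at the cost of producing a formula of size $|\Act|^N$ rather than one proportional to the minimal good/bad prefixes; the paper's route yields the tighter, prefix-minimal representation. Both are sound.
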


\subsection{Tightly-Complete Monitors}
\label{sec:strictly-complete}

The sound and complete monitoring studied in \Cref{sec:complete-monitorability} does not specify \emph{when} a monitor should reach a verdict while it analyses a trace, as illustrated by the following example.

\begin{exmp}\label{exm:non-tightness}
Assume $\Act=\sset{a,b}$ and consider the formula $\hV=\hSuf{a}{\hSuf{a}{\hFls}}$, which is equivalent to $\hFls$.
Following \Cref{def:mon-synt-complete}, the synthesised monitor for $\hV$ is $\mV = \ch{\prf{a}{(\ch{\prf{a}{\no}}{\prf{b}{\no}})}}{\prf{b}{\no}}$.
%
After at most two consecutive actions, \mV will definitely reject, and therefore it correctly rejects all traces.
However, a more \emph{``efficient''} correct monitor for $\hV$ is \no, which rejects immediately. \qedd
\end{exmp}

A finite trace for which every extension violates (\resp satisfies) a property $\hV$ is often called a \emph{bad prefix} (\resp a \emph{good prefix}) for $\hV$~\cite{AlpernS85,PnueliZaks:06:FM,BauerLeuckerSchallhart:10:LandC}; good/bad prefixes provide sufficient finite information for acceptance/rejection.
\begin{exmp} \label{exm:non-tightness-2}
  $\hNec{\ASet}\hTru$ is equivalent to $\hTru$, and thus $\varepsilon$ is a good prefix for it.
  But $\hSyn{\hNec{\ASet}\hTru}$ from \Cref{def:mon-synt-complete} would first need to observe one action before accepting.
  Similarly,
  \hNec{\Act}{\hFls}
  is equivalent to $\hFls$ and  $\varepsilon$ is a valid bad prefix. Yet the synthesised monitor only rejects after observing one action. \qedd
\end{exmp}

Although  the monitors synthesised in \Cref{sec:complete-monitorability} are
complete, there may be a delay from the moment a good/bad prefix is seen to the point when a verdict is reached.
This observation does not affect monitor completeness:  the assurance that the stream of events is \emph{infinite} guarantees that any delay in reporting a verdict will not affect the formula's monitorability.
However,
it may be important for a monitor to report a verdict as soon as it gathers sufficient information to do so.




\begin{defn}\label{def:tight}
  A monitor $\mV$ is \emph{tight} when, for every $\ftr \in \Act^*$, if
 $\mV$ rejects (\resp accepts) $\ftr\tV$ for every $\tV \in \Trc$,
  then $\mV \wtraS{\ftr} \no$ (\resp $\mV \wtraS{\ftr} \yes$).
 \qedd
\end{defn}

%
Although, as \Cref{exm:non-tightness} demonstrates, \Cref{def:mon-synt-complete}
 does not always yield tight monitors we can identify a fragment of \HML\ for which it does.

\begin{defn}\label{def:slim}
 A \emph{slim} formula is defined by the following grammar:
  \begin{align*}
      \hV  ~~~~\bnfdef  ~~~~\hTru  ~~~~\bnfsepp  ~~~~\hFls
             ~~~~ \bnfsepp ~~~~{\textstyle \bigwedge_{\act \in B} \hNec{\act} \hV_\act ~~~~ \bnfsepp ~~~~\bigvee_{\act \in D} \hSuf{\act} \hV_\act} ,
   \end{align*}
    where $B,D \neq \emptyset$,
    $\forall \act \in B. \hV_\act \neq \hTru$,
    $\forall \act \in D. \hV_\act \neq \hFls$,
    either $B \neq \Act$ or $\exists \act \in B. \hV_\act \neq \hFls$, and
    either $D \neq \Act$ or $\exists \act \in D. \hV_\act \neq \hTru$.
    \qedd
\end{defn}

All slim formulae are \HML formulae.
However, the conditions imposed on their syntax exclude redundancies that yield non-tight monitors.
We proceed to show that if $\hV$ is slim, then $\hSyn{\hV}$ is tight. To this end, we prove a lemma  showing the absence of redundancy in slim formulae.

\newcommand{\rewriteL}{\ensuremath{\Rrightarrow_L}}
\begin{figure}[!t]
\begingroup
\allowdisplaybreaks
\begin{align}
    %
    \label{eq:modal-trivial}
    [\Act] \hFls &\rewriteL \hFls
    &\hSuf{\Act} \hTru &\rewriteL \hTru
    &\hSuf{\act} \hFls &\rewriteL \hFls
    &\hNec{\act}\hTru &\rewriteL \hTru\\
    \label{eq:absorb-trivial}
    \hTru \land \hV &\rewriteL \hV
    &\hFls \land \hV &\rewriteL \hFls
    &\hFls \lor \hV &\rewriteL \hV
    &\hTru \lor \hV &\rewriteL \hTru
\end{align}
\begin{align}
  \label{eq:box-and-box}
    \bigwedge_{\act \in A}\hNec{\act}\hV_\act \land \bigwedge_{\act \in B}\hNec{\act}\hVV_\act &\rewriteL
    \bigwedge_{\act \in A\cap B}\hNec{\act} (\hV_\act \land \hVV_\act)
    \land
    \bigwedge_{\act \in A\setminus B}\hNec{\act} \hV_\act
    \land
    \bigwedge_{\act \in B \setminus A}\hNec{\act} \hVV_\act
    \\
    \label{eq:box-or-box}
    \bigwedge_{\act \in A}\hNec{\act}\hV_\act \lor \bigwedge_{\act \in B}\hNec{\act}\hVV_\act &\rewriteL
    \begin{cases}
    \bigwedge_{\act \in A\cap B}\hNec{\act}(\hV_\act \vee \hVV_\act) &\text{ if $A\cap B \neq \emptyset$}\\
    $\hTru$ &\text{ otherwise}
    \end{cases}
     \\
     \label{eq:diamond-or-diamond}
    \bigvee_{\act \in A}\hSuf{\act}\hV_\act \lor \bigvee_{\act \in B}\hSuf{\act}\hVV_\act &\rewriteL \bigvee_{\act \in A \cap B}\hSuf{\act} (\hV_\act \lor \hVV_\act)
    \lor
    \bigvee_{\act \in A \setminus B}\hSuf{\act} \hV_\act
    \lor
    \bigvee_{\act \in B \setminus A}\hSuf{\act} \hVV_\act
    \\
    \label{eq:diamond-and-diamond}
    \bigvee_{\act \in A}\hSuf{\act}\hV_\act \land \bigvee_{\act \in B}\hSuf{\act}\hVV_\act &\rewriteL
    \begin{cases}
    \bigvee_{\act \in A\cap B}\hSuf{\act}(\hV_\act \wedge \hVV_\act) &\text{ if $A\cap B \neq \emptyset$}\\
    $\hFls$ &\text{ otherwise}
    \end{cases}
    \\
    \label{eq:diamond-and-box}
    \bigwedge_{\act \in A}\hNec{\act}\hV_\act \land \bigvee_{\act \in B}\hSuf{\act}\hVV_\act  &\rewriteL \bigvee_{\act \in A \cap B} \hSuf{\act} (\hV_\acta \land \hVV_\acta) \lor \bigvee_{\act \in  B \setminus A} \hSuf{\act} \hVV_\acta
    \\
    \label{eq:diamond-or-box}
    \bigwedge_{\act \in A}\hNec{\act}\hV_\act \lor \bigvee_{\act \in B}\hSuf{\act}\hVV_\act  &\rewriteL \bigwedge_{\act \in A \cap B}\hNec{\act} (\hV_\act \land \hVV_\act) \wedge  \bigwedge_{\act \in A \setminus B}\hNec{\act} \hV_\act
\end{align}
\endgroup
\caption{\HML rewrite rules where $A,B \subseteq \Act$.
$\rewriteL$ is the smallest binary relation on \HML that satisfies the rules above and
is closed with respect to HML contexts.
}
\label{fig:fun-excercise}
\end{figure}

\begin{lem}\label{lem:slim_violations}
If $\hV \in \HML$ is slim and
$\hSemL{\hV} = \emptyset$
(\resp $\hSemL{\hV} = \Trc$), then $\hV = \hFls$ (\resp $\hV = \hTru$). \qed
\end{lem}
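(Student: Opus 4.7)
The plan is to proceed by structural induction on the slim formula $\hV$, exploiting the fact that the grammar in \Cref{def:slim} forces each subformula $\hV_\act$ appearing inside a $\hNec{\act}{\cdot}$ or $\hSuf{\act}{\cdot}$ to be itself slim, so the induction hypothesis applies directly. The base cases $\hV = \hTru$ and $\hV = \hFls$ are immediate from the semantics in \Cref{fig:recHML}. For the inductive cases, since $\hV$ is syntactically neither $\hTru$ nor $\hFls$, the goal becomes: rule out $\hSemL{\hV} = \emptyset$ and $\hSemL{\hV} = \Trc$ by deriving a contradiction in each situation.

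For the conjunction case $\hV = \bigwedge_{\act \in B} \hNec{\act} \hV_\act$, I will argue as follows. If $\hSemL{\hV} = \emptyset$, split on whether $B = \Act$: when $B \neq \Act$, pick $\act' \notin B$ and observe that any trace $\act'\trr$ vacuously satisfies every conjunct, contradicting emptiness; when $B = \Act$, the slimness clause guarantees some $\act \in B$ with $\hV_\act \neq \hFls$, so the induction hypothesis gives $\trr \in \hSemL{\hV_\act}$, and then $\act\trr \in \hSemL{\hV}$. If instead $\hSemL{\hV} = \Trc$, then for each $\act \in B$ every trace of the form $\act\trr$ lies in $\hSemL{\hV}$, which forces $\hSemL{\hV_\act} = \Trc$; by the induction hypothesis $\hV_\act = \hTru$, contradicting the slimness requirement that $\hV_\act \neq \hTru$ for $\act \in B$.

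The disjunction case $\hV = \bigvee_{\act \in D} \hSuf{\act} \hV_\act$ is dual. Assuming $\hSemL{\hV} = \emptyset$, slimness yields $\hV_\act \neq \hFls$ for every $\act \in D$, so by the induction hypothesis each $\hSemL{\hV_\act}$ is nonempty; picking any $\act \in D$ (nonempty by slimness) and a witness $\trr$ gives $\act\trr \in \hSemL{\hSuf{\act}\hV_\act} \subseteq \hSemL{\hV}$. Assuming $\hSemL{\hV} = \Trc$, split again on whether $D = \Act$: when $D \neq \Act$, take $\act \notin D$ and note that no trace starting with $\act$ satisfies any $\hSuf{\act'}\hV_{\act'}$ with $\act' \in D$; when $D = \Act$, slimness supplies some $\act$ with $\hV_\act \neq \hTru$, hence by the induction hypothesis there exists $\trr \notin \hSemL{\hV_\act}$, and then $\act\trr \notin \hSemL{\hV}$.

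No single step is really difficult; the only subtle point is bookkeeping around the two side conditions in \Cref{def:slim} that become relevant precisely when $B = \Act$ or $D = \Act$ (these are exactly the boundary cases where the head-action trick of choosing an unused $\act$ fails, and the $\exists \act.\ \hV_\act \neq \hFls$ (\resp $\neq \hTru$) clause rescues the argument). Once the case split is organised around $B = \Act$ versus $B \neq \Act$ (and dually for $D$), the proof is a mechanical structural induction.
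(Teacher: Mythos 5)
Your proof is correct and follows essentially the same route as the paper's: a structural induction over the slim grammar in which the side conditions of \Cref{def:slim} are invoked precisely in the boundary cases $B = \Act$ and $D = \Act$. The only difference is presentational --- the paper proves just the contrapositive of the $\hSemL{\hV} = \emptyset$ half and leaves the $\hSemL{\hV} = \Trc$ half as the symmetric dual, whereas you spell out both halves explicitly by contradiction.
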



\begin{lem} \label{lem:slim-implies-tight}
If $\hV$ is a slim HML formula, then $m(\hV)$ is tight. 
\end{lem}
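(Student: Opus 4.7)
The plan is to proceed by structural induction on the slim formula $\hV$. The base cases $\hV = \hTru$ and $\hV = \hFls$ are immediate, since $\hSyn{\hTru} = \yes$ and $\hSyn{\hFls} = \no$ trivially satisfy tightness: by verdict persistence each monitor stays at its own verdict, so $\yes \wtraS{\ftr} \yes$ and $\no \wtraS{\ftr} \no$ hold for every $\ftr$, while the opposite clauses are vacuous.

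For the conjunctive inductive case, let $\hV = \bigwedge_{\act \in B}\hNec{\act}\hV_\act$ and $\mV = \hSyn{\hV}$. I would prove the acceptance clause of tightness by its contrapositive: assuming $\mV \not\wtraS{\ftr}\yes$, exhibit some $\tV$ such that $\mV$ does not accept $\ftr\tV$. The argument splits on the shape of $\ftr$. If $\ftr = \varepsilon$, a direct inspection of the synthesis shows that $\mV$ cannot reach $\yes$ by $\tau$-moves alone, because each $\paralC$-factor $\prf{\act}{\hSyn{\hV_\act}} + \uprf{\{\act\}}{\yes}$ is guarded by an external action; so if every trace were accepted, soundness of the synthesis (\Cref{prop:hml-monitorable}) would give $\hSemL{\hV} = \Trc$, and \Cref{lem:slim_violations} combined with slimness would force $\hV = \hTru$, contradicting its conjunctive shape. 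If $\ftr = \acta\ftr'$ with $\acta \notin B$, then each factor transitions on $\acta$ through its $\uprf{\{\act\}}{\yes}$ branch and the absorption rule \rtit{mVrC1} collapses the composition to $\yes$, so $\mV \wtraS{\acta}\yes$ and thus $\mV \wtraS{\ftr}\yes$, contradicting the assumption; this subcase is therefore vacuous. If $\ftr = \acta\ftr'$ with $\acta \in B$, an analogous analysis shows $\mV \wtraS{\acta} \hSyn{\hV_\acta}$ uniquely up to $\tau$-closure via \rtit{mVrC1}, from which $\hSyn{\hV_\acta} \not\wtraS{\ftr'}\yes$. The induction hypothesis applied to the slim subformula $\hV_\acta$ then yields some $\tV$ such that $\hSyn{\hV_\acta}$ does not accept $\ftr'\tV$, and the unzipping lemma (\Cref{lem:unzipping}) together with the determinism of the leading $\acta$-step lifts this into $\mV$ not accepting $\acta\ftr'\tV$.

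The rejection clause of tightness is handled by the dual argument. The $\ftr = \varepsilon$ subcase is ruled out by invoking \Cref{lem:slim_violations} under the slimness condition ``either $B \neq \Act$ or there exists $\act \in B$ with $\hV_\act \neq \hFls$'', which prevents $\hV$ from being equivalent to $\hFls$. The disjunctive inductive case $\hV = \bigvee_{\act \in D} \hSuf{\act}\hV_\act$ is then symmetric, using $\paralD$ and the absorption rules \rtit{mVrD1} and \rtit{mVrD2}, and appealing to the dual slimness conditions on $D$ and the $\hV_\act$'s.

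I expect the main obstacle to be the careful book-keeping around how the parallel composition transitions on a single action: formally justifying that after reading $\acta \in B$ the monitor reaches $\hSyn{\hV_\acta}$ essentially uniquely modulo $\tau$-collapse, and that reading $\acta \notin B$ can only lead to $\yes$. These reductions require enumerating the possible derivations using \rtit{mPar}, \rtit{mSelL}, \rtit{mSelR}, \rtit{mVrC1}, and \rtit{mTauL}, and then using \Cref{lem:unzipping} and \Cref{lem:zipping} to translate monitor-level transitions into the instrumented acceptance statements from \Cref{def:acc-n-rej}.
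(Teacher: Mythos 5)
Your proof is correct and follows essentially the same route as the paper's: both arguments peel off a single action, use \Cref{lem:slim_violations} together with the slimness side-conditions to dispatch the empty-trace and out-of-$B$ (resp.\ out-of-$D$) cases, and reduce to the submonitor $\hSyn{\hV_\acta}$ via the absorption rules of the parallel composition. The remaining differences are presentational --- you state tightness contrapositively and induct on the structure of the slim formula, whereas the paper proves the implication directly by induction on the finite trace and routes the inductive step through soundness and completeness of the synthesis rather than through the operational decomposition of the monitor --- but the two inductions unfold identically.
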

\begin{proof}
  By \Cref{prop:hml-monitorable},  $\tV {\notin} \hmeaning{\hV}_L$ implies that there is a finite prefix $\ftV$  of $\tV$ such that
  $m(\hV)\wtra{\ftV}\no$.
  We  prove by induction on \ftV that
  if $\forall\tV.~
  \rej{\mV,\ftV\tV}$,
  then $\mV \wtraS{\ftr} \no$ (the case for acceptance is symmetric).
  See \Cref{sec:complete-mon-app-tight} for details.
\end{proof}

We can transform every \HML\ formula into an equivalent slim formula.
This transformation is based on a set of rewrite rules of the form $\hV \rewriteL \hVV$, given in \Cref{fig:fun-excercise}, that allows us to iteratively replace the formula on the left-hand side with that on the right-hand side.
%

\begin{lem}
  \label{lem:rewrite-semantics-predicates}
  $\hV \rewriteL \hVV$ implies $\hSemL{\hV} = \hSemL{\hVV}$ and $l(\hV) > l(\hVV)$  \qed
\end{lem}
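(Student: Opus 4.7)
The plan is to prove both conjuncts by induction on the derivation of $\hV \rewriteL \hVV$. The base cases correspond to one of the explicit rewrite schemas in \Cref{fig:fun-excercise}, while the inductive step covers the contextual closure: if $\hV = C[\hV']$ and $\hVV = C[\hVV']$ for some single-hole HML context $C$ and $\hV' \rewriteL \hVV'$, then the inductive hypothesis applies to the pair $(\hV', \hVV')$ and we lift the result up through the context.

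For the base cases, I would treat each rule in turn. The semantic equivalences are immediate consequences of the linear-time clauses in \Cref{fig:recHML} together with elementary set-theoretic identities. For the modal-trivialisation rules in~\eqref{eq:modal-trivial}, the key observation is that $\Trc = \Act^\omega$, so every trace starts with an external action in $\Act$ and the universal (\resp existential) quantification over $\Act$ in the semantics of $\hNec{\Act}{-}$ and $\hSuf{\Act}{-}$ is always witnessed; the absorption laws in~\eqref{eq:absorb-trivial} are standard lattice identities on $\powset{\Trc}$. For the distribution rules~\eqref{eq:box-and-box}--\eqref{eq:diamond-or-box}, I would perform a case analysis on the head action $\acta$ of an arbitrary trace $\acta\trr$, partitioning according to whether $\acta \in A\cap B$, $\acta \in A\setminus B$, $\acta \in B \setminus A$, or $\acta \notin A \cup B$; in each case the membership condition on $\acta\trr$ for the LHS reduces, via the semantics of $\hNec{\act}$ and $\hSuf{\act}$, to the membership condition for the RHS. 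The special cases where $A \cap B = \emptyset$ are singled out in rules~\eqref{eq:box-or-box} and~\eqref{eq:diamond-and-diamond} precisely to avoid vacuous conjunctions/disjunctions on the right-hand side, and I would verify the equivalence with $\hTru$ (\resp $\hFls$) directly from the semantics together with the trivialisation observations already used.

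For the length inequality, I would do direct symbol counting for each base case, using $l(-)$ as defined on strings. The absorption and trivialisation rules strictly shrink the formula by construction. For the distribution rules, the shared modality header $\hNec{\act}$ or $\hSuf{\act}$ is factored out for each $\acta \in A \cap B$, replacing two modal prefixes by one, which compensates for the introduced inner $\land$ or $\lor$ by at least one symbol per shared action; the remaining summands over $A \setminus B$ and $B \setminus A$ are unchanged, so length decreases by a positive amount whenever the rule actually triggers a genuine reorganisation. The inductive step is then routine: both $\hSemL{-}$ and $l(-)$ are compositional in HML contexts, so $\hSemL{C[\hV']} = \hSemL{C[\hVV']}$ by a straightforward induction on $C$ using the semantic clauses of \Cref{fig:recHML}, and $l(C[\hV']) - l(C[\hVV']) = l(\hV') - l(\hVV') > 0$.

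The main obstacle I anticipate is the bookkeeping for the distribution rules: keeping track of the partition $A \cap B$, $A \setminus B$, $B \setminus A$ while ensuring both the semantic side (by exhaustive case analysis on the head action) and the length side (by isolating the savings due to sharing the modality header over $A \cap B$) go through uniformly. A minor but subtle point is the degenerate situation in which the rule's effect is only a reordering; I would argue that in such cases the side conditions of the neighbouring rules, together with the contextual closure, ensure that any application of $\rewriteL$ used in the sequel corresponds to a genuine simplification, matching the intended use of the relation in the subsequent normalisation lemma.
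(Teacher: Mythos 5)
Your proof takes essentially the same route as the paper's (which is folded into the proof of \Cref{prop:substitute-to-slim} in the appendix): a rule-by-rule case analysis deriving the semantic equivalence directly from the linear-time clauses, the length decrease by explicit symbol counting over the partition $A \cap B$, $A\setminus B$, $B\setminus A$, and contextual closure handled by compositionality of $\hSemL{-}$ and $l(-)$. The degenerate situation you flag (a distribution rule such as \Cref{eq:box-and-box} firing with $A \cap B = \emptyset$, where no length is saved) is glossed over in the paper as well, which only works out the representative case of \Cref{eq:diamond-and-box} where the needed $A \neq \emptyset$ guarantees a strict decrease.
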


\begin{prop}[\HML normalisation]\label{prop:substitute-to-slim}
For every formula $\hV \in \HML$, there exists $k \leq l(\hV)$ such that $\hV = \hV_0 \rewriteL \hV_1 \rewriteL \ldots \rewriteL \hV_k=\hVV$ where $\hVV$ is slim and $\hSemL{\hV} = \hSemL{\hVV}$. \qed
\end{prop}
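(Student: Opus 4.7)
The plan is to carry out a strong induction on $l(\hV)$, using the rewrite rules of \Cref{fig:fun-excercise} as a normalisation system. \Cref{lem:rewrite-semantics-predicates} already supplies both semantic preservation and the strict length-decrease of each single step, so the only substantive content to establish is that the rewriting process cannot get stuck outside the slim fragment---equivalently, that every non-slim formula admits some applicable rewrite somewhere inside it.

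I would prove this structural claim by an inner induction on $\hV$, arguing contrapositively that any formula to which no rule of \Cref{fig:fun-excercise} applies in any subcontext must be slim. The cases $\hV\in\{\hTru,\hFls\}$ are immediate. For $\hV=\hNec{\act}{\hV'}$ or $\hV=\hSuf{\act}{\hV'}$, context closure of $\rewriteL$ makes $\hV'$ rewrite-free and thus slim by the inner induction hypothesis; the non-applicability of (\ref{eq:modal-trivial}) then rules out $\hV'=\hTru$ (resp.\ $\hV'=\hFls$), and in the degenerate case $\Act=\{\act\}$ also $\hV'=\hFls$ (resp.\ $\hV'=\hTru$), so $\hV$ fits the slim pattern with $B=\{\act\}$ (resp.\ $D=\{\act\}$). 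For $\hV=\hAnd{\hV_1}{\hV_2}$ or $\hV=\hOr{\hV_1}{\hV_2}$, the inner induction hypothesis makes both operands slim and the failure of (\ref{eq:absorb-trivial}) excludes $\hV_i\in\{\hTru,\hFls\}$; hence each operand takes the form of a big-conjunction of necessities or a big-disjunction of possibilities, and any such pairing triggers one of the merge rules (\ref{eq:box-and-box})--(\ref{eq:diamond-or-box}), contradicting the hypothesis. Thus top-level boolean connectives do not survive normalisation, in agreement with the slim grammar.

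Putting the two inductions together: if $\hV$ is slim, take $k=0$; otherwise the structural claim yields $\hV\rewriteL\hV_1$ with $l(\hV_1)<l(\hV)$ by \Cref{lem:rewrite-semantics-predicates}. The outer induction hypothesis then supplies a chain $\hV_1\rewriteL\hV_2\rewriteL\cdots\rewriteL\hV_k=\hVV$ of length $k'=k-1\le l(\hV_1)\le l(\hV)-1$ with $\hVV$ slim, so the combined chain has length $k=1+k'\le l(\hV)$, and $\hSemL{\hV}=\hSemL{\hVV}$ is preserved throughout by \Cref{lem:rewrite-semantics-predicates}.

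\textbf{Main obstacle.}
The trickiest point is expected to be the boolean-combination case of the structural claim: one must reconcile the big-$\bigwedge/\bigvee$ notation on the left-hand sides of the merge rules with slim operands viewed as binary ASTs (for example, justifying that $\hAnd{\hNec{a}{\hV_a}}{\hNec{b}{\hV_b}}$ really matches the pattern of (\ref{eq:box-and-box}) through the flattening convention of the second remark of \Cref{sec:preliminaries}), and then to verify that whichever merge rule fires is genuinely length-decreasing in the concrete configuration at hand, as demanded by \Cref{lem:rewrite-semantics-predicates}.
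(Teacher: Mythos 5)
Your overall architecture is the same as the paper's: use \Cref{lem:rewrite-semantics-predicates} for soundness and strict length decrease of each step, and argue that a formula admitting no rewrite anywhere must be slim, so the process terminates at a slim formula after at most $l(\hV)$ steps. You are in fact more explicit than the paper about the second ingredient, which the paper dispatches with a one-line ``observation''. However, your inner induction contains a step that fails.

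In the boolean case you claim that once both operands are slim and non-constant, ``any such pairing triggers one of the merge rules (\ref{eq:box-and-box})--(\ref{eq:diamond-or-box}), contradicting the hypothesis'', and you conclude that ``top-level boolean connectives do not survive normalisation''. Both claims are wrong for the homogeneous pairings $\bigwedge_{\act\in A}\hNec{\act}\hV_\act \land \bigwedge_{\act\in B}\hNec{\act}\hVV_\act$ (and dually for $\lor$ of diamond-disjunctions) when $A\cap B=\emptyset$: in that case the right-hand side of \Cref{eq:box-and-box} is, up to the $\bigwedge_{\emptyset}=\hTru$ convention, the very same formula, so no \emph{length-decreasing} rewrite fires there --- which is exactly the verification you defer to in your ``main obstacle'' paragraph, and it fails. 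There is also no contradiction to be had: the combined formula is simply the larger conjunction $\bigwedge_{\act\in A\cup B}\hNec{\act}\chi_\act$, which \Cref{def:slim} explicitly admits as slim (so slim formulae \emph{do} have top-level conjunctions). The correct statement of your structural claim in this sub-case is a disjunction, not a contradiction: either $A\cap B\neq\emptyset$ and \Cref{eq:box-and-box} genuinely shortens the formula, or $A\cap B=\emptyset$ and the flattened conjunction already matches the slim pattern --- except when $A\cup B=\Act$ and every $\chi_\act=\hFls$, where \Cref{eq:modal-trivial} ($\hNec{\Act}\hFls\rewriteL\hFls$) applies instead, and except when some $\chi_\act=\hTru$, which is already excluded because the operands are slim. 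The fix is local and easy, but as written the inner induction proves something false about the slim grammar and would not go through.

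The modal case has the analogous (minor) omission that $\hNec{\ASet}{\hV'}$ for $|\ASet|\ge 2$ must be read through the flattening convention as $\bigwedge_{\act\in\ASet}\hNec{\act}\hV'$ and its side conditions checked the same way; your singleton treatment extends verbatim, so this is only a presentational gap.
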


\begin{exmp} \label{exm:non-tightness-3}
Assume $\Act=\sset{a,b}$ and  consider the \emph{non}-slim \HML formula $\hV = \hAnd{\hSuf{a}{\hSuf{a}{\hFls}}}{\hNec{b}{\hFls}}$.
The synthesised monitor $\hSyn{\hV} = \ch{(\prf{a}{(\ch{\prf{a}{\no}}{\prf{b}{\no}})}}{\prf{b}{\no})} \paralC (\ch{\prf{b}{\no}}{\prf{a}{\no}})$ is \emph{not} tight. However, we can apply the transformations based on the given equivalences to obtain an equivalent slim formula
thus:
\begin{math}
{\hSuf{a}{\hSuf{a}\hFls}} \land {\hNec{b}\hFls} \rewriteL {\hSuf{a}{\hSuf{a}\hFls}}
\rewriteL {\hSuf{a}{\hFls}} \rewriteL \hFls.
\end{math} \qedd
\end{exmp}

\subsection{Partially-Complete Monitors}
\label{sec:partially-complete}



As opposed to the branching-time semantics of \UHML, where only properties that are semantically equivalent to $\hTru$ and $\hFls$ have sound and complete monitors~\cite{FraAI:17:FMSD}, the linear-time semantics permits a far richer class of complete-monitorable properties, namely \HML.
By some measures, however, this monitorable fragment is still quite restrictive.
For example, whereas the property ``\textit{initialise} occurs within the first ten actions'' can be expressed in terms of \HML, the property ``\textit{initialise} eventually occurs''---which can be expressed using least fixpoints---\emph{cannot}.  In fact, although the latter property \emph{cannot} be monitored for in a complete manner, it can be monitored completely for satisfaction.
%
%
In this section, we relax the notion of monitorability to \textit{partial-completeness}, which only requires a monitor to be either violation- or satisfaction-complete.


\begin{defn}
	A formula $\hV \in \UHML$ is \emph{monitorable for satisfaction} (\resp \emph{for violation}) iff there exists a monitor \mV that is a sound and satisfaction-complete (\resp and violation-complete) monitor for \hV. It is \emph{partially-monitorable} when it is monitorable for satisfaction or for violation.
	\qedd
\end{defn}

We can extend these definitions to fragments of \UHML in a similar way to that in \Cref{def:complete-monitorability}.
Here, the trade-off between the guarantees we expect from monitors and the monitorable specifications is clear: for the linear-time interpretation, recursion can be traded for partial-completeness, while no such option exists for branching-time.
We can extend the observations of \Cref{sec:complete-monitorability} to the context of
partial
monitorability.

\begin{prop}\label{prop:vedict-equiv-implies-part-complete-formula}\label{prop:part-complete-semantic-equivalence}
	If \mV is sound and satisfaction-complete (\resp violation-complete) for \hV, then
	\begin{enumerate}
		\item $\mV \mveq \mVV$ implies \mVV is sound and satisfaction-complete (\resp violation-complete) for \hV.
		\item If for all  $\vV \in \{\yes,\no \}$, $\mVV \wtraS{\ftr} \vV$ implies $\mV \wtraS{\ftr} \vV$, then \mVV is sound for \hV.
		\item $\mV \maeq \mVV$ (\resp $\mV \mreq \mVV$) implies \mVV is satisfaction-complete (\resp violation-complete) for \hV.
		\item \mV is sound and satisfaction-complete (\resp violation-complete)
		for $\hV'$ implies $\hSemL{\hV}=\hSemL{\hV'}$.
		\qed
	\end{enumerate}
\end{prop}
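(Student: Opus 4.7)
My plan is to treat the four clauses uniformly by first recording the standard bridge, following from the zipping and unzipping lemmas (\Cref{lem:zipping,lem:unzipping}), that $\acc{\mV,\tV}$ (respectively, $\rej{\mV,\tV}$) holds if and only if some finite prefix $\ftV$ of $\tV$ satisfies $\mV \wtraS{\ftV} \yes$ (respectively, $\mV \wtraS{\ftV} \no$). This observation is already made explicit in the paper just after \Cref{lem:zipping}, and it reduces every acceptance or rejection fact about traces to a verdict-transition fact about finite prefixes. Once it is in hand, each clause is a routine unpacking of the definitions.

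For clause (1), I will observe that verdict equivalence $\mV \mveq \mVV$ is precisely the statement that $\mV$ and $\mVV$ reach the same $\yes$ and $\no$ verdicts after the same finite traces, so the bridge gives $\acc{\mV,\tV} \iff \acc{\mVV,\tV}$ and $\rej{\mV,\tV} \iff \rej{\mVV,\tV}$; soundness and either flavour of partial completeness for $\hV$ then transfer verbatim from $\mV$ to $\mVV$. For clause (2), assuming $\acc{\mVV,\tV}$, I will pick a finite prefix $\ftV$ of $\tV$ with $\mVV \wtraS{\ftV} \yes$, promote this to $\mV \wtraS{\ftV} \yes$ using the hypothesis, and conclude $\tV \in \hSemL{\hV}$ from soundness of $\mV$; the rejection case is analogous, yielding soundness of $\mVV$.

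Clause (3) is the one-sided version of clause (1): in the satisfaction subcase, $\tV \in \hSemL{\hV}$ gives, via satisfaction-completeness of $\mV$, some prefix $\ftV$ with $\mV \wtraS{\ftV} \yes$; the hypothesis $\mV \maeq \mVV$ then yields $\mVV \wtraS{\ftV} \yes$ and hence $\acc{\mVV,\tV}$, as required. The violation subcase uses $\mreq$ in place of $\maeq$. Clause (4) is a two-step chase: in the satisfaction case, if $\mV$ is sound and satisfaction-complete for both $\hV$ and $\hV'$, then $\tV \in \hSemL{\hV}$ forces $\acc{\mV,\tV}$ by completeness for $\hV$, and hence $\tV \in \hSemL{\hV'}$ by soundness for $\hV'$; the reverse inclusion and the violation case are symmetric.

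I do not expect any real obstacle: the only step that requires any care is the initial bridge between trace-level acceptance/rejection and prefix-level verdict transitions, and that is an immediate corollary of \Cref{lem:zipping,lem:unzipping} together with the fact that $\yes$ and $\no$ are persistent (\Cref{lem:ver-persistence}) and distinct from $\stp$, so the second disjunct in the unzipping lemma is ruled out whenever the instrumented system reaches $\yes$ or $\no$, forcing the monitor itself to reach the same verdict along the witnessing prefix.
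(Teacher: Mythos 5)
Your proposal is correct and follows essentially the same route as the paper, which omits an explicit proof of this proposition precisely because it is the clause-by-clause analogue of \Cref{prop:vedict-equiv-implies-complete-formula}, whose appendix proof likewise reduces trace-level acceptance and rejection to prefix-level verdict transitions via \Cref{def:acc-n-rej,lem:unzipping,lem:zipping} and then transfers verdicts across the relevant equivalence. Your explicit remark that \Cref{lem:ver-persistence} and the fact that $\yes,\no\neq\stp$ rule out the premature-termination disjunct of \Cref{lem:unzipping} is exactly the justification the paper leaves implicit, so no gap remains.
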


\begin{exmp}
	\label{ex:not-partially-monitorable}
	Let $\Act = \{a,b,c\}$ and
	$\hV = \max X.(\hNec{b} \hFls \land \hNec{\{a,c\}} X) \lor \min Y.(\hSuf{c}\hTru \lor \hNec{\{a,b\}} Y )$,
	which is satisfied by traces
	of the form $(a+c)^\omega+\bigl((a+b)^* c (a+b+c)^\omega\bigr)$, \ie traces
	where either $b$ does \emph{not} appear, or $c$ \emph{does} appear.
	We show that \hV is not partially-monitorable.
	For if there was some \mV that is sound and satisfaction-complete for \hV, it should accept $a^\omega$;
	this means that \mV must reach \yes\ after analysing $a^k$ for some $k \geq 0$.
	In this case, the trace $a^kb^\omega$, which does not satisfy \hV, must also be accepted by \mV, resulting in a contradiction.
	If, on the other hand, there was some \mV that is sound and violation-complete for \hV, then it should reject $b^\omega$.
	%
	Again, $\mV$ must reach \no\ after $b^k$ for some $k \geq 0$, but $b^kc^\omega$ satisfies \hV.
	Therefore, \hV \emph{cannot}  be partially monitorable.
	\qedd
\end{exmp}


For partial monitorability, we can identify two fragments of \UHML, namely $\ltmuC$, which is monitorable for satisfaction, and $\ltmuS$, which is monitorable for violation.

\begin{defn}[MAX and MIN Fragments of \ltmu] \label{def:minHML-maxHML}
	The greatest-fixed-point and least-fixed point fragments of \UHML are, respectively, defined as:
  \begin{align*}
      \hV,\hVV \in \ltmuS
      &\bnfdef  \hTru &
      &\bnfsepp  \hFls &
      &\bnfsepp \hOr{\hV\,}{\,\hVV}  &
      &\bnfsepp \hAnd{\hV\,}{\,\hVV} &
      &\bnfsepp \hSuf{A}{\hV} &
      &\bnfsepp \hNec{A}{\hV} &
      & \bnfsepp \hMaxXF
      \\
      \hV,\hVV \in \ltmuC
      &\bnfdef  \hTru &
      &\bnfsepp  \hFls &
      &\bnfsepp \hOr{\hV\,}{\,\hVV}  &
      &\bnfsepp \hAnd{\hV\,}{\,\hVV} &
      &\bnfsepp \hSuf{A}{\hV} &
      &\bnfsepp \hNec{A}{\hV} &
      & \bnfsepp \hMinXF \tag*{\qedd}
\end{align*}
\end{defn}

Both $\ltmuS$ and $\ltmuC$ are extensions of \HML.
We can extend the monitor synthesis from \Cref{def:mon-synt-complete} to these fragments by using the recursion that is available for monitors.

\begin{defn}[Monitor Synthesis]\label{def:formula-to-monitor-part}
	The monitor synthesis for $\ltmuS$ and $\ltmuC$ results by simply extending the definition of $\hSyn{-}$ from \Cref{def:mon-synt-complete} with the cases for the respective fixed-point of each fragment:
	$\hSyn{\hMaxXF} =
	\hSyn{\hMinXF} = \rec{x}{\hSyn{\hV}}$ and $\hSyn{\hVarX} = x$.
\qedd
\end{defn}

We observe that the extended monitor synthesis function still produces reactive monitors.
We also show the first important result of this subsection, namely that \Cref{def:formula-to-monitor-part} yields the required witness monitors to prove that the syntactic fragment $\ltmuS \cup \ltmuC$ is partially-monitorable.

\begin{prop}\label{prop:partial-reactive}
	For every $\hV \in \ltmuS \cup \ltmuC$, $\hSyn{\hV}$ is reactive. \qed
\end{prop}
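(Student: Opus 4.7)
The plan is to extend the proof of \Cref{lem:synt-complete-mon-reactive} by structural induction on $\hV$, generalising the statement to handle open subformulae arising from fixpoint unfoldings: for every $\hV \in \ltmuS \cup \ltmuC$ and every substitution $\theta$ sending each free variable of $\hV$ to a reactive monitor, $\hSyn{\hV}\theta$ is reactive.

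The base cases $\hTru, \hFls$ are immediate, since $\yes$ and $\no$ are reactive by rule \rtit{mVer}. The boolean cases $\hAnd,\hOr$ reduce to parallel compositions $\hSyn{\hV_1}\theta \paralG \hSyn{\hV_2}\theta$, whose components are reactive by IH; this suffices because a parallel composition of two reactive monitors is reactive (both sides can match any external action via rule \rtit{mPar}, and verdict combinations are handled by the $\tau$-rules of \Cref{fig:monit-parallel}, which never remove reactivity). The modal cases $\hNec{\ASet}{\hVV}$ and $\hSuf{\ASet}{\hVV}$ are handled exactly as in the proof of \Cref{lem:synt-complete-mon-reactive}: the top-level sum $\ch{\prf{\ASet}{\hSyn{\hVV}\theta}}{\uprf{\overline{\ASet}}{\vV}}$ exhausts all actions of $\Act$, and the bodies reached after a single action are reactive by IH.

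The genuinely new cases are the variable and fixpoint cases. For $\hV = x$, $\hSyn{x}\theta = \theta(x)$, which is reactive by the hypothesis on $\theta$. For $\hV = \rec{x}{\hVV}$ (covering both $\hMinXF{\hVV}$ and $\hMaxXF{\hVV}$), let $\mV \defeq \rec{x}{\hSyn{\hVV}\theta}$. By rule \rtit{mRec}, $\mV \traS{\tau} \hSyn{\hVV}\theta[\mV/x]$, so it suffices to establish that $\hSyn{\hVV}\theta[\mV/x]$ is reactive, because reactivity is preserved under taking $\tau$-predecessors: any reachable state of $\mV$ is either $\mV$ itself or is reachable from the $\tau$-unfolding. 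The reactivity of $\mV$ itself, needed in order to invoke the IH on $\hVV$ with the extended substitution $\theta[x\mapsto \mV]$, is where guardedness is crucial. Since every occurrence of $x$ in $\hVV$ lies under a modality, every occurrence of $\mV$ in $\hSyn{\hVV}\theta[\mV/x]$ lies under an external-action prefix $\prf{\acta}{\cdot}$ inside a modality-derived sum. Hence any reachable state $\mVV'$ that mentions $\mV$ can only be reached after consuming at least one external action, and this action is handled by the ``all-of-$\Act$'' summand synthesised for the enclosing modality; the argument can then be iterated. Formally this is best done by induction on the number of unfoldings (equivalently, on the length of the trace leading to $\mVV'$), using the fact that $\reach{\mV}$ consists entirely of monitors of the form $\hSyn{\hVV'}\theta'[\mV/x]$ for subformulae $\hVV'$ of $\hVV$, each of which is reactive by the inductive hypothesis.

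The main obstacle is the apparent circularity of the fixpoint case: the IH needs $\mV$ to be reactive in order to conclude that $\hSyn{\hVV}\theta[\mV/x]$ is reactive, yet that conclusion is exactly what we want about $\mV$. Guardedness resolves this by ensuring that the self-referential occurrences of $\mV$ appear only after an external action, so the argument can be stratified, either by induction on the length of the trace used to reach a given state, or equivalently by presenting the reachability set coinductively and exhibiting it as a post-fixed point of the reactivity predicate. The remaining verifications --- that parallel composition preserves reactivity and that the $\tau$-driven verdict-consolidation rules of \Cref{fig:monit-parallel} cannot create non-reactive states --- are routine, once reactivity is known for the components.
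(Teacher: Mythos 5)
Your argument is correct in substance, but it takes a noticeably heavier route than the paper's, and the extra machinery is forced on you by your choice of induction. The paper proves this by induction on the guardedness measure $ms(\hV)$ of \Cref{def:formula-measure}: in the fixpoint case one uses the identity $\hSyn{\hVV}[\rec{x}{\hSyn{\hVV}}/x] = \hSyn{\hVV[\max X.\hVV/X]}$, and since $\hV$ is guarded, $\hVV[\max X.\hVV/X]$ is again a \emph{closed} formula of the fragment with $ms(\hVV[\hV/X]) < ms(\hV)$, so the inductive hypothesis applies to it directly. All formulae stay closed, there is no substitution environment, and the circularity you fight never arises. Your structural induction, by contrast, must generalise to open formulae with a substitution $\theta$ into reactive monitors, and then in the fixpoint case the instantiation $\theta(x) = \mV$ is exactly what is not yet available --- as you honestly note. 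Two remarks on your repair. First, the sentence ``each of which is reactive by the inductive hypothesis'' is, read literally, still circular; what actually closes the case is the secondary stratification by the number of external actions needed to reach a state, combined with guardedness (the first action out of any $\tau$-reachable state is always offered by a modality-derived sum covering all of $\Act$, so no occurrence of $\mV$ is consulted before an action is consumed). Once you commit to that stratified argument, the substitution-generalised IH is doing essentially no work and could be dropped. Second, your stratification needs a shape-preservation invariant (every reachable state is a $+$/$\paralC$/$\paralD$/prefix combination of verdicts and synthesised monitors of closed guarded formulae) to restart the argument after each action; this is provable but is precisely the bookkeeping that the $ms$-based induction avoids. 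So: correct, but you would save yourself the open-formula generalisation and the two-level induction by inducting on $ms$ as in \Cref{def:formula-measure} and exploiting that guarded unfolding strictly decreases it.
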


\begin{prop} \label{lem:sound-and-rej-compl}
For every $\hV\in \ltmuS$, $\hSyn{\hV}$ is a sound and violation-complete monitor for $\hV$. For every  $\hV\in \ltmuC$, $\hSyn{\hV}$ is a sound and satisfaction-complete monitor for $\hV$.
\end{prop}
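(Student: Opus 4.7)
The plan is to prove both claims by structural induction on $\hV$, handling the two fragments simultaneously since their syntactic cases other than the fixed-point differ only in which fixpoint operator is allowed. The base cases $\hTru$, $\hFls$, as well as the boolean and modal cases, are essentially identical to those already established for \HML in Proposition~\ref{prop:hml-monitorable}: the monitor composition Lemma~\ref{cor:parallel-and-monitors} lets us decompose rejections and acceptances of $\hSyn{\hV_1} \paralC \hSyn{\hV_2}$ and $\hSyn{\hV_1} \paralD \hSyn{\hV_2}$, and this decomposition matches $\hSemL{\hAndF}$ and $\hSemL{\hOrF}$. The applicability of that lemma is guaranteed by Proposition~\ref{prop:partial-reactive}, which ensures that every synthesised submonitor is reactive. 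The genuinely new work lies in the fixed-point case.

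For the max case, I would use the Knaster--Tarski characterisation
\begin{equation*}
\hSemL{\hMaxXF} \;=\; \bigcap_{n \in \mathbb{N}} \hSemL{\hV^{(n)}},
\end{equation*}
where $\hV^{(0)} = \hTru$ and $\hV^{(n+1)} = \hV[\hV^{(n)}/\hVarX]$; the descending chain suffices because \UHML formulae are guarded and the chain stabilises on traces at a countable ordinal. Hence $\tV \notin \hSemL{\hMaxXF}$ iff $\tV \notin \hSemL{\hV^{(n)}}$ for some finite $n$. An auxiliary lemma, proved by induction on $\hV$, would show that the operational behaviour of $\rec{x}{\hSyn{\hV}}$ after $\tau$-unfolding coincides with $\hSyn{\hV}[\rec{x}{\hSyn{\hV}}/x]$, and more generally that $\hSyn{\hV[\hVV/\hVarX]}$ is verdict-equivalent to $\hSyn{\hV}[\hSyn{\hVV}/x]$. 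Using this, the structural induction hypothesis applied to $\hV^{(n)}$ gives that $\hSyn{\hV^{(n)}}$ is violation-complete for $\hV^{(n)}$, which in turn forces the recursive monitor to reject $\tV$ after finitely many unfoldings. The min case is dual: use $\hSemL{\hMinXF} = \bigcup_n \hSemL{\hV^{(n)}}$ with $\hV^{(0)} = \hFls$ and show that $\rec{x}{\hSyn{\hV}}$ accepts $\tV$ as soon as some $\hSyn{\hV^{(n)}}$ accepts it.

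Soundness in both fixed-point cases is proved contrapositively. If $\rec{x}{\hSyn{\hV}}$ reaches $\no$ (resp.\ $\yes$) along some finite prefix $\ftV$, then by Lemma~\ref{lem:unzipping} and finiteness of the derivation, it does so using only finitely many unfoldings, so the verdict is already reached by $\hSyn{\hV^{(n)}}$ for some $n$; applying the inductive hypothesis and monotonicity of the semantic operator induced by $\hV$ in $\hVarX$ (which holds because $\hVarX$ occurs only positively in \ltmuS and \ltmuC), we obtain $\tV \notin \hSemL{\hMaxXF}$ (resp.\ $\tV \in \hSemL{\hMinXF}$).

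The main obstacle I anticipate is the careful bookkeeping for the fixed-point case: the ``inductive step'' is not over a proper subformula, so I need the auxiliary substitution lemma relating syntactic formula substitution to monitor substitution, and I need to argue that an infinite $\tau$-unfolding of $\rec{x}{\hSyn{\hV}}$ never produces a verdict on its own (guardedness of $\hV$ ensures at least one action prefix between unfoldings, which is what makes the finite-stage argument possible). Once this bridge between the recursive operational behaviour of the monitor and the approximant chain of the fixed-point semantics is in place, the rest of the argument proceeds routinely along the lines of Proposition~\ref{prop:hml-monitorable}.
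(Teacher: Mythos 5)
Your proposal takes a genuinely different route from the paper's, and in its present form it has two gaps that need to be closed. The paper proves soundness by induction on the length of the explicit derivation that reaches the verdict (so the fixed-point case is handled by a single $\tau$-unfolding, after which the derivation is strictly shorter), and it proves completeness \emph{without approximants}: it extends the synthesis to pairs $(\hVV,S)$ of open formulae and variable sets, and shows that the set $T$ of traces \emph{not} rejected by the synthesised monitor is a post-fixpoint of the semantic functional, whence $T \subseteq \hSemL{\hMaxX{\hVV}}$ by Knaster--Tarski. That argument needs only monotonicity and a well-founded induction on subformulae.

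Your approximant-based argument, by contrast, hinges on the identity $\hSemL{\hMaxX{\hV}} = \bigcap_{n\in\mathbb{N}} \hSemL{\hV^{(n)}}$, i.e.\ on the closure ordinal being exactly $\omega$. Your justification --- that the chain ``stabilises at a countable ordinal'' --- does not give you this: you need the functional induced by $\hV$ to be co-continuous (to preserve intersections of descending $\omega$-chains). This happens to be true over infinite traces, because a trace has a unique head and tail, so the modalities and the finite boolean connectives all commute with descending intersections, and the property is preserved under nesting of greatest fixed points; but it is a substantive lemma that must be stated and proved (and it is exactly where linearity is used --- the analogous claim fails over processes). Second, the induction you describe is not well-founded as stated: $\hV^{(n)}$ is not a subformula of $\hMaxX{\hV}$, and when fixed points are nested the approximants still contain fixed-point operators, so you need an explicit measure (say, lexicographic on the number of fixed-point operators and then on the approximation index) together with the substitution lemma $\hSyn{\hV[\hVV/\hVarX]} \mveq \hSyn{\hV}[\hSyn{\hVV}/x]$ that you mention, plus a reactivity argument ensuring that replacing the innermost $\yes$ (resp.\ $\no$) of $\hSyn{\hV^{(n)}}$ by the recursion variable does not block the rejecting (resp.\ accepting) run inside parallel compositions. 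Both gaps are repairable, but the paper's post-fixpoint argument avoids them entirely and is the cleaner route.
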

\begin{proof}
	This requires us to prove soundness and violation/satisfaction-completeness for every $\hV\in \ltmuS$ and $\hV\in \ltmuC$ \resp as stated in \Cref{def:soundness-n-completeness}.  See \Cref{sec:partial-complete-mon-app}.
\end{proof}

As in the case of \Cref{sec:complete-monitorability}, we now turn our attention to the maximality of the syntactic fragment $\ltmuS \cup \ltmuC$ for partial-monitorability.
Particularly, we can define
two formula synthesis functions that produce partially monitorable formulae from monitors:
one maps
monitors to formulae in \ltmuS, and the other one to formulae in \ltmuC.
Depending on the fragment, we then show that if \mV is mapped to \hV, then \mV is sound and violation-complete, or satisfaction-complete \resp for \hV.
Here we only present the synthesis function for \ltmuS; the case for \ltmuC is dual.

\begin{defn}[\ltmuS Formula Synthesis] \label{def:regmon-to-formula}
\begin{align*}
\!\!  \mSyn{\no} &= \hFls 
  &\mSyn{\stp} &= \mSyn{\yes} = \hTru
  & \mSyn{x} &= X
    &\mSyn{\recX{\mV}} &= \hMaxX{\mSyn{\mV}}\\
\!\!
  \mSyn{\esel{\mV}{\mVV}} &= \hAnd{f(\mV)}{\mSyn{\mVV}} 
    &\mSyn{m\paralC n} &= \hAnd{\mSyn{\mV}}{\mSyn{\mVV}} 
      &\mSyn{m\paralD n} &= \hOr{\mSyn{\mV}}{\mSyn{\mVV}} 
	&\mSyn{\prf{\acta}\mV} &= \hNec{\acta}{\mSyn{\mV}}  
\tag*{\qedd}
\end{align*}

\end{defn}

\begin{exmp}
	Let $\mV = \ch{\prf{a}{\prf{b}{\no}}}{\prf{a}{\prf{a}{\yes}}}$.
	Then,
$\mSyn{\mV} = \hAnd{\hNec{a}{\hNec{b}{\hFls}}}{\hNec{a}{\hNec{a}{\hTru}}}$ (which is equivalent to just
$\hNec{a}{\hNec{b}{\hFls}}$).
%
The monitor \mV rejects traces
of the form $ab \tV$ which are \emph{exactly}  all the traces violating \mSyn{\mV}. Thus $\mV$
is
 sound and violation-complete  for \mSyn{\mV}.  \qedd
\end{exmp}

Note that  $\mSyn{\mV} \in \ltmuS$, for any \mV.
However, when we apply the formula synthesis function from \Cref{def:regmon-to-formula} to a consistent monitor \mV to generate a formula \hV, and then apply the monitor synthesis from \Cref{def:formula-to-monitor-part} to
\hV, we will generate a monitor that has similar parts to \mV, but it will be somewhat different due to the asymmetry of the \resp syntheses.
For example, for $\Act = \sset{a,b}$, $\mSyn{\ch{\prf{a}{\no}}{\prf{b}{\yes}}} = \hAnd{\hNec{a}\hFls}{\hNec{b}{\hTru}}$,
and $\hSyn{\hAnd{\hNec{a}\hFls}{\hNec{b}{\hTru}}} = (\ch{\prf{a}{\no}}{\prf{b}{\yes}}) \paralC (\ch{\prf{b}{\yes}}{\prf{a}{\yes}})$.
The following lemma allows us to abstract from these discrepancies, thereby enabling the proof of \Cref{prop:form-synth-partial}.

\begin{lem}\label{lem:reject-same}
$\hSyn{\mSyn{\mV}}$
rejects the same traces as $\mV$. \qed
\end{lem}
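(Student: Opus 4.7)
The plan is to factor the claim through the linear-time semantics of $\mSyn{\mV}$. Since $\mSyn{\mV}\in\ltmuS$ by construction, \Cref{lem:sound-and-rej-compl} ensures that $\hSyn{\mSyn{\mV}}$ is a sound and violation-complete monitor for $\mSyn{\mV}$; equivalently, $\hSyn{\mSyn{\mV}}$ rejects exactly those traces \emph{not} in $\hSemL{\mSyn{\mV}}$. It therefore suffices to establish the following correspondence, call it $(\star)$:
\[
\rej{\mV,\tV} \iff \tV \notin \hSemL{\mSyn{\mV}} \quad \text{for every } \tV \in \Trc.
\]
Combining $(\star)$ with the characterisation above then yields that $\hSyn{\mSyn{\mV}}$ and $\mV$ reject the same set of traces.

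I would prove $(\star)$ by structural induction on $\mV$ (assuming $\mV$ is reactive, as is implicit throughout \Cref{sec:partially-complete} and required by \Cref{cor:parallel-and-monitors} in the parallel cases). The verdict base cases are immediate: $\mSyn{\no}=\hFls$ has empty linear-time semantics, matching the fact that $\no$ rejects every trace, while both $\yes$ and $\stp$ never reject (by verdict persistence, \Cref{lem:ver-persistence}, and because $\stp$ has no outgoing transitions to $\no$), matching $\mSyn{\yes}=\mSyn{\stp}=\hTru$ whose semantics is $\Trc$. The prefix case $\prf{\acta}{\mV'}$ is handled by observing that $\prf{\acta}{\mV'}$ rejects $\tV$ iff $\tV=\acta\tV'$ with $\rej{\mV',\tV'}$, which by the induction hypothesis matches the semantics of $\mSyn{\prf{\acta}{\mV'}}=\hNec{\acta}{\mSyn{\mV'}}$. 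The sum case uses \rtit{mSelL} and \rtit{mSelR} to decompose $\rej{\ch{\mV_1}{\mV_2},\tV}$ as $\rej{\mV_1,\tV}\vee\rej{\mV_2,\tV}$, which mirrors the semantics of $\hAnd{\mSyn{\mV_1}}{\mSyn{\mV_2}}$. The two parallel cases follow the same pattern via \Cref{cor:parallel-and-monitors}: rejection of $\mV_1\paralC\mV_2$ is $\rej{\mV_1,\tV}\vee\rej{\mV_2,\tV}$, matching $\wedge$ on the formula side, and rejection of $\mV_1\paralD\mV_2$ is $\rej{\mV_1,\tV}\wedge\rej{\mV_2,\tV}$, matching $\vee$.

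The hard part will be the recursion case $\mV=\recX{\mV'}$, since a plain structural induction cannot be applied to the open subterm $\mV'$. I would strengthen the inductive statement to open monitors and open formulae, parameterised by a matching pair consisting of a substitution for the recursion variables $x_1,\dots,x_n$ of $\mV$ and a valuation $\sigma$ for the corresponding logical variables $X_1,\dots,X_n$ of $\mSyn{\mV}$, subject to the invariant that for each $i$ the closed monitor substituted for $x_i$ rejects exactly the complement of $\sigma(X_i)$. The recursion case then reduces to comparing two fixed points. On the monitor side, $\recX{\mV'}$ unfolds via \rtit{mRec} to $\mV'\subS{\recX{\mV'}}{x}$, and its rejected traces form the least set closed under the one-step rejection operator induced by $\mV'$ (as each rejection requires only finitely many unfoldings). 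On the formula side, $\hSemL{\hMaxX{\mSyn{\mV'}}}$ is the greatest post-fixpoint of the operator induced by $\mSyn{\mV'}$, so its complement (the set of violating traces) is the least pre-fixpoint of the complementary operator. The strengthened induction hypothesis shows that these two monotone operators agree under complementation on the parameter, so by Knaster--Tarski their corresponding extremal fixed points coincide. This closes the induction and thereby establishes $(\star)$, from which the lemma follows.
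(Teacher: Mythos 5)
Your proposal is correct in its overall strategy, but it takes a genuinely different route from the paper's. The paper's argument is purely syntactic and never touches the semantics of $\mSyn{\mV}$: an auxiliary lemma shows that $\hSyn{\mSyn{\mV}}$ is \emph{literally} the monitor $\textsf{red}(\mV)$ obtained from $\mV$ by three redundancy-introducing rewrites (replace $\stp$ by $\yes$, replace $+$ by $\paralC$, and replace $\prf{\ASet}{\mVV}$ by $\ch{\prf{\ASet}{\mVV}}{\prf{\coASet}{\yes}}$), and each rewrite is then shown to preserve the set of rejected finite traces by a short induction on the trace. You instead factor through $\hSemL{\mSyn{\mV}}$: \Cref{lem:sound-and-rej-compl} identifies the rejections of $\hSyn{\mSyn{\mV}}$ with the complement of $\hSemL{\mSyn{\mV}}$, and your $(\star)$ asserts that $\mV$ rejects exactly that same complement. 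Note that $(\star)$ is precisely the rejection half of \Cref{prop:form-synth-partial}, which the paper \emph{derives from} this lemma; your order of proof is not circular, but it means the recursion case of your induction --- matching the least fixed point of the one-step rejection operator of $\rec{x}{\mV'}$ against the complement of the greatest fixed point for $\hMaxX{\mSyn{\mV'}}$ --- essentially replays the completeness half of the proof of \Cref{lem:sound-and-rej-compl} with the roles of monitor and formula interchanged. That is noticeably more work than the paper's trace induction, but it yields \Cref{prop:form-synth-partial} as an immediate by-product. Two caveats. First, in the $\paralC$ and $\paralD$ cases you appeal to \Cref{cor:parallel-and-monitors}, which requires the \emph{components} to be reactive, and reactivity does not transfer to submonitors; your induction hypothesis therefore has to carry a hereditary reactivity assumption, or the lemma must be read as being about regular monitors, which is how it is actually used (the paper's own proof needs the same restriction, since $\textsf{red}$ likewise fails to preserve rejections on, e.g., $\prf{a}{\yes}\paralC\prf{b}{\no}$). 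Second, in the $+$ case the equivalence $\rej{\ch{\mV_1}{\mV_2},\tV}\iff\rej{\mV_1,\tV}\vee\rej{\mV_2,\tV}$ has an exception at the empty trace, but this is harmless because $(\star)$ quantifies over infinite traces, whose rejection always proceeds through a non-empty finite prefix.
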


\begin{prop}\label{prop:form-synth-partial}
	If \mV is consistent, then \mV is a sound and violation-complete monitor for $\mSyn{\mV}$.
\end{prop}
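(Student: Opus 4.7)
The plan is to reduce the claim to two results already available in the paper: Proposition~\ref{lem:sound-and-rej-compl}, which says that $\hSyn{\hV}$ is sound and violation-complete for any $\hV \in \ltmuS$, and Lemma~\ref{lem:reject-same}, which says that $\hSyn{\mSyn{\mV}}$ rejects exactly the same traces as $\mV$. Since the synthesis $\mSyn{-}$ of Definition~\ref{def:regmon-to-formula} lands in $\ltmuS$, instantiating Proposition~\ref{lem:sound-and-rej-compl} at $\hV = \mSyn{\mV}$ tells us $\hSyn{\mSyn{\mV}}$ is sound and violation-complete for $\mSyn{\mV}$. The plan is to transfer these guarantees from $\hSyn{\mSyn{\mV}}$ back to $\mV$ via Lemma~\ref{lem:reject-same} and then handle the acceptance side using consistency.

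More concretely, I would dispatch the two easy directions first. For soundness with respect to rejection: if $\mV$ rejects $\tV$, then by Lemma~\ref{lem:reject-same} so does $\hSyn{\mSyn{\mV}}$, and soundness of $\hSyn{\mSyn{\mV}}$ gives $\tV \notin \hSemL{\mSyn{\mV}}$. For violation-completeness: if $\tV \notin \hSemL{\mSyn{\mV}}$, then violation-completeness of $\hSyn{\mSyn{\mV}}$ yields that $\hSyn{\mSyn{\mV}}$ rejects $\tV$, and Lemma~\ref{lem:reject-same} transfers this to $\mV$.

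The remaining obligation, and the point I expect to require the most care, is soundness with respect to acceptance: I must show that if $\mV$ accepts $\tV$ then $\tV \in \hSemL{\mSyn{\mV}}$. The strategy here is to rule out that $\mV$ also rejects $\tV$, and then leverage the contrapositive of the (already established) violation-completeness of $\hSyn{\mSyn{\mV}}$ for $\mSyn{\mV}$. So suppose for contradiction that $\mV$ both accepts and rejects $\tV$; then there are finite prefixes $\ftr_1$ and $\ftr_2$ of $\tV$ with $\mV \wtraS{\ftr_1} \yes$ and $\mV \wtraS{\ftr_2} \no$. Since $\ftr_1, \ftr_2$ are both prefixes of the same trace, one is a prefix of the other; verdict persistence (Lemma~\ref{lem:ver-persistence}) then lets us extend the shorter derivation to the length of the longer one, producing a common finite trace $\ftr$ with $\mV \wtraS{\ftr} \yes$ \emph{and} $\mV \wtraS{\ftr} \no$, contradicting consistency of $\mV$ (Definition~\ref{def:consitency-totality}).

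Having ruled this out, $\mV$ does not reject $\tV$; by Lemma~\ref{lem:reject-same} neither does $\hSyn{\mSyn{\mV}}$; and violation-completeness of $\hSyn{\mSyn{\mV}}$ for $\mSyn{\mV}$, read contrapositively, yields $\tV \in \hSemL{\mSyn{\mV}}$, completing the soundness argument. The whole proof is essentially a bookkeeping exercise once the reduction through $\hSyn{\mSyn{\mV}}$ is set up; the only subtle step is the use of consistency together with verdict persistence to move from the ``same finite trace'' condition of Definition~\ref{def:consitency-totality} to conflicting verdicts along an infinite trace.
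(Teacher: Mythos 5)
Your proposal is correct and follows essentially the same route as the paper: it reduces everything to Proposition~\ref{lem:sound-and-rej-compl} and Lemma~\ref{lem:reject-same}, transfers violation-completeness and rejection-soundness from $\hSyn{\mSyn{\mV}}$ to $\mV$, and uses consistency to derive acceptance-soundness from the contrapositive of violation-completeness. The only difference is that you spell out (via verdict persistence and the prefix ordering of $\ftr_1,\ftr_2$) why consistency rules out $\mV$ both accepting and rejecting the same infinite trace, a step the paper leaves implicit.
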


\begin{proof}
	From \Cref{lem:reject-same}, $\hSyn{\mSyn{\mV}}$
	rejects the same traces as $\mV$, and therefore, by \Cref{prop:part-complete-semantic-equivalence,lem:sound-and-rej-compl}, $\mV$ is violation-complete for $\mSyn{\mV}$.
	Since $\mV$ rejects the same traces as $\hSyn{\mSyn{\mV}}$,
	if $\mV$ rejects a trace $\tV$, then $\tV \notin \hSemL{\hV}$.
	Since \mV is consistent, if
	$\mV$ accepts a trace $\tV$, then
	it does not reject $\tV$, and because $\mV$ rejects the same traces as $\hSyn{\mSyn{\mV}}$,
	$\hSyn{\mSyn{\mV}}$ does not reject $\tV$ either.
	Since $\hSyn{\mSyn{\mV}}$ is also violation-complete
	by  \Cref{lem:sound-and-rej-compl}, this yields that
	$\tV \in \hSemL{\hV}$.
	Therefore, \mV is also sound for \hV.
\end{proof}

The following proposition tells us that, up to logical equivalence, \ltmuS is the largest fragment of \ltmu\ that is monitorable for violation.
Dually, \ltmuC is the largest  fragment of \ltmu\ that is monitorable for satisfaction.

\begin{prop}\label{prop:linear-partial-maximality}
If a formula $\hV \in \ltmu$ has a sound and violation-complete
monitor over infinite traces, then it is equivalent to a formula $\hVV \in \ltmuS$ over infinite traces.
\end{prop}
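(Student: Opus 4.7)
The plan is to chain together the formula-synthesis result from monitors (\Cref{prop:form-synth-partial}) with the semantic-uniqueness fact that sound-and-violation-complete monitors pin down a formula up to equivalence (\Cref{prop:part-complete-semantic-equivalence}). Given the hypothesis, we have some monitor $\mV$ (in the monitor calculus of \Cref{sec:monitors}, possibly a parallel monitor) which is sound and violation-complete for $\hV$ over infinite traces. First, I would observe that soundness forces $\mV$ to be consistent: if $\mV \wtraS{\ftV} \yes$ and $\mV \wtraS{\ftV} \no$ for some finite $\ftV$, then by extending $\ftV$ to any infinite trace $\tV = \ftV\trr$ we would have both $\acc{\mV,\tV}$ and $\rej{\mV,\tV}$, contradicting soundness (which forces $\tV$ to be simultaneously in and not in $\hSemL{\hV}$).

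Once consistency is in hand, \Cref{prop:form-synth-partial} applies directly and yields that $\mV$ is a sound and violation-complete monitor for $\mSyn{\mV}$. Inspection of \Cref{def:regmon-to-formula} shows that the image of $\mSyn{-}$ uses only $\hTru$, $\hFls$, conjunction, disjunction, necessity modalities $\hNec{\acta}$, variables, and greatest-fixed-point binders $\hMaxX{\cdot}$, so $\mSyn{\mV} \in \ltmuS$. Now $\mV$ is sound and violation-complete both for $\hV$ and for $\hVV := \mSyn{\mV}$, so \Cref{prop:part-complete-semantic-equivalence}(4) gives $\hSemL{\hV} = \hSemL{\hVV}$, proving the proposition with this $\hVV \in \ltmuS$.

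The only real subtlety—which is not an obstacle but a bookkeeping point worth flagging—is that the monitor $\mV$ coming from the hypothesis is a priori an arbitrary (parallel) monitor from the calculus, and we rely on $\mSyn{-}$ being defined on the whole class (which it is, by \Cref{def:regmon-to-formula}), so there is no need to pass through regularisation or the determinisation results of \Cref{sec:monitors}. In other words, all the hard work has already been done in \Cref{prop:form-synth-partial}; the present proposition is essentially the maximality corollary that reads off from it.
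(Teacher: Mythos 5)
Your proposal is correct and follows essentially the same route as the paper's proof: derive consistency from soundness, apply \Cref{prop:form-synth-partial} to get that $\mV$ is sound and violation-complete for $\mSyn{\mV}\in\ltmuS$, and conclude $\hSemL{\hV}=\hSemL{\mSyn{\mV}}$ via \Cref{prop:part-complete-semantic-equivalence}. The explicit justification that soundness implies consistency is a nice touch that the paper only asserts in passing.
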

\begin{proof}
Let $\mV$ be a sound and violation-complete monitor for $\hV$ and let $\hVV = \mSyn{\mV}\in \ltmuS$ be the witness formula.
%
%
Since $\mV$ is sound for $\hV$, it must be consistent, and by \Cref{prop:form-synth-partial},
\mV is a sound and violation-complete monitor for $\mSyn{\mV}$.
Therefore, by \Cref{prop:part-complete-semantic-equivalence}, $\hSemL{\hV} = \hSemL{\mSyn{\mV}}$.
%
\end{proof}

\begin{rem}
\Cref{thm:stronger-HML-maximality}
demonstrates that \HML can express any property of infinite traces that has a complete monitor in any monitoring system, assuming that verdicts remain irrevocable.
Unfortunately, this result cannot be replicated for partial completeness.
For instance, let $L \subseteq (\Act \setminus \{ c \})^*$ be a non-regular language, where $c \in \Act$ is some
distinguished action,
 and $L_c = \setof{ \ftV c\tV}{\ftV \in L \text{ and } \tV \in \Act^\omega }$.
If $L_c$ could be expressed in
\ltmuC, then there would be a sound and satisfaction-complete monitor for $L_c$, and by a straightforward use of \Cref{prop:monitor2automaton}, we could construct a finite automaton that recognizes $L$, which contradicts the assumption that $L$ is non-regular.
Yet, we could imagine appropriate choices for $L$ and monitoring systems in which $L_c$ is monitorable.
 For instance, suppose that monitors are described using pushdown automata and
let $L$ contain exactly the finite words on $\{0,1\}$ that have the same number of occurrences of $0$ and of $1$.
%
%
\qedd
\end{rem}

\subsection{Tightly-Complete Monitors for Recursion}
\label{sec:strictly-complete-w-recursion}


To synthesise a tight monitor for a formula $\hV$ of \ltmuS (or \ltmuC), one can synthesise a parallel monitor $\mV(\hV)$, then,
using the methods of Subsection \ref{subsec:transformations}, turn $\mV(\hV)$ into a verdict-equivalent deterministic regular monitor, and, finally,
consecutively replace instances of $\sum_{\act \in \Act} \act.\no$ and $\rec x \no$ by $\no$ and instances of $\sum_{\act \in \Act} \act.\yes$ and $\rec x \yes$ by $\yes$. The resulting monitor is tight.

\begin{lem} \label{lem:partially-complete-tight}
Let $\mV$ be a deterministic regular monitor, where $\sum_{\act \in \Act} \act.\no$, $\rec x \no$, $\sum_{\act \in \Act} \act.\yes$, and $\rec x \yes$ do not occur as submonitors. Then, $\mV$ is tight. \qed
\end{lem}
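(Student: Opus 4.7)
The plan is to prove the contrapositive: assuming $\mV \not\wtraS{\ftr} \no$, I construct an infinite trace $\tV \in \Act^\omega$ that $\mV$ does not reject, and the acceptance case is dual. By \Cref{lem:unzipping} together with $\no \neq \stp$, $\mV$ rejects $\ftr\tV$ iff $\mV \wtraS{\ftr'} \no$ for some finite prefix $\ftr'$ of $\ftr\tV$, so the task reduces to exhibiting $\tV$ with $\mV \not\wtraS{\ftr'} \no$ for every finite prefix $\ftr'$ of $\ftr\tV$.

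The first ingredient I need is an auxiliary fact: for any deterministic regular monitor $\mVV$ that does not contain $\rec{x}{\no}$ as a submonitor, $\mVV \wtraS{\varepsilon} \no$ if and only if $\mVV = \no$ (and dually for $\yes$). This holds because in the regular fragment the only source of $\tau$-transitions is \rtit{mRec}, so a $\tau$-chain terminating at $\no$ must consist entirely of recursion unfoldings; a secondary induction on the nesting depth of $\rec$ binders then shows that such a chain reaches $\no$ only if some unfolded binder has body $\no$, which is excluded by hypothesis.

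Using this, I build $\tV = \act_1\act_2\cdots$ inductively, maintaining the invariant that $\mV \wtraS{\ftr\act_1\cdots\act_i} \mVV_i$ with $\mVV_i \not\wtraS{\varepsilon} \no$; by \Cref{lem:det_mon_is_det}, $\mVV_i$ is unique up to $\mveq$. The choice of $\act_{i+1}$ proceeds by case analysis on what $\mVV_i$ looks like after as many $\tau$-steps as needed to expose a verdict or a sum: if no such head-normal form is reachable, then $\mVV_i$ cannot produce any action transition either, so any $\act_{i+1}$ works; if the head-normal form is $\yes$ or $\stp$, verdict persistence again makes any choice safe; if it is $\sum_{\act \in A} \act.\mV_\act$ with $A \subsetneq \Act$, I pick any $\act^* \in \Act \setminus A$, for which $\mVV_i \not\wtraS{\act^*}$; and if $A = \Act$, the exclusion of $\sum_{\act \in \Act} \act.\no$ as a submonitor yields some $\act_0$ with $\mV_{\act_0} \neq \no$ syntactically, whence the auxiliary fact gives $\mV_{\act_0} \not\wtraS{\varepsilon} \no$ and hence $\mV \not\wtraS{\ftr\act_1\cdots\act_i\act_0} \no$, as required.

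The main obstacle will be rigorously justifying the auxiliary fact in the presence of nested recursion: one must verify that the substitution performed by \rtit{mRec} cannot introduce a new $\rec{x}{\no}$ binder (which follows from the fact that every $\rec{x}{M}$ appearing in $\mV$ is itself safe, so substituting it for a bound variable creates no new forbidden binder), and that no subtle combination of unguarded recursions can still manufacture a $\tau$-path to $\no$. Once that is in place, the inductive construction of $\tV$ goes through by routine bookkeeping using \Cref{lem:det_mon_is_det} and \Cref{lem:unzipping}; the acceptance half of the lemma is dual, with $\yes$ and its associated forbidden patterns replacing $\no$ and theirs.
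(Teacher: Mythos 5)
Your argument is correct, and it reaches the same combinatorial crux as the paper's proof --- at a state whose ($\tau$-normal) head is a sum over all of $\Act$, the exclusion of $\sum_{\act\in\Act}\act.\no$ guarantees a branch that does not silently collapse to $\no$, and at a rec the exclusion of $\rec{x}{\no}$ does the analogous job --- but the surrounding machinery differs from the paper's. The paper works in the alternative transition system of \Cref{subsec:transformations} (where $\rec{x}{\mV_x}$ and $x$ both $\tau$-step to the body rather than substituting), so every reachable state is a literal submonitor of $\mV$ and the syntactic exclusions apply to it verbatim; it then does a structural induction on the reached submonitor $\mVV$, showing that $\mVV\neq\no$ implies $\mVV$ fails to reject some trace, with the variable case discharged by the auxiliary lemma of \citeMac{determinization} that $p_x$ must occur along any derivation reaching a bound $x$ (yielding a cyclic witness $\ftV_2^\omega$). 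You instead stay in the substitution-based semantics and build the witness trace action by action, which forces you to carry two extra obligations the paper avoids by construction: the auxiliary fact that a deterministic monitor without $\rec{x}{\no}$ cannot $\tau$-reach $\no$ unless it already is $\no$, and the preservation of both forbidden patterns (and of syntactic determinism) under the unfolding substitution $[\rec{x}{M}/x]$. Both obligations hold and your sketches of why are sound, but they are genuine proof debt; the System-N route buys exactly the elimination of that debt, while your route is more self-contained in that it does not import the cycle lemma or the equivalence of the two transition systems. Your appeal to \Cref{lem:det_mon_is_det} to transfer the invariant across all states reached on the same prefix is the right way to close the nondeterminism gap, and the termination of the construction in the $A\subsetneq\Act$ case (the monitor simply cannot follow the trace further) is handled correctly.
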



We would like to be able to apply a convenient method to process the formula or the monitor, so that right after the monitor synthesis we could produce a tight monitor.
However, as we will see, a more reasonable monitor synthesis function that produces tight monitors is unlikely, as one could use it to solve the satisfiability problem for \ltmuS --- by checking whether a produced monitor for the formula immediately evaluates to $\no$ (or to \yes, for its negation), --- which is PSPACE-complete.

\begin{prop}\label{prop:pspace-hardness}
For $|\Act|\geq 2$, the satisfiability problem for \ltmuS is PSPACE-complete.
\end{prop}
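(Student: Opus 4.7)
The plan is to establish PSPACE membership and PSPACE-hardness separately. For the upper bound, I would appeal to the classical fact that satisfiability of the full linear-time $\mu$-calculus is in PSPACE: one translates $\hV$ into an equivalent alternating automaton on infinite words of size linear in $l(\hV)$ and decides non-emptiness in PSPACE (for instance, via Vardi's tableau-based algorithm). Since $\ltmuS$ is a syntactic sub-logic of the linear-time $\mu$-calculus, its satisfiability inherits the PSPACE upper bound. A sharper, self-contained argument would exploit the absence of least fixpoints in $\ltmuS$: the induced alternating automaton has a trivial (safety) acceptance condition, so non-emptiness can be decided by an on-the-fly lasso search that maintains only a polynomial-size frontier of states.

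For PSPACE-hardness, I would reduce from the satisfiability problem for the $\{X,G\}$-fragment of propositional LTL (with Boolean connectives and negated atoms), which is PSPACE-complete by a standard adaptation of Sistla--Clarke's results. Given $|\Act|\geq 2$, fix two distinguished actions $a,b\in\Act$. For an LTL formula $\phi$ over $k$ atomic propositions $p_1,\ldots,p_k$, I would encode each LTL valuation as a block of $k$ consecutive actions drawn from $\sset{a,b}$, where the $i$-th action of the block is $a$ if $p_i$ is true and $b$ otherwise; an LTL model then corresponds to an infinite trace in $\sset{a,b}^{\omega}$. The translation $\widehat{\cdot}$ is defined inductively: the atom $p_i$ becomes the assertion ``the $i$-th next action is $a$'' (a formula built from $i-1$ nested $\hSuf{\Act}{\cdot}$ modalities followed by $\hSuf{\sset{a}}{\hTru}$); the LTL operator $X$ translates to $k$ nested applications of $\hSuf{\Act}{\cdot}$; $G\psi$ translates to $\hMax{Y}{(\hAnd{\widehat{\psi}}{\beta})}$, where $\beta$ is obtained by $k$-fold nesting of $\hSuf{\Act}{\cdot}$ around $Y$; and Boolean connectives are preserved. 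When $|\Act|>2$, a top-level safety invariant of the form $\hMax{Z}{(\hAnd{\hNec{\Act\setminus\sset{a,b}}{\hFls}}{\hNec{\Act}{Z}})}$ forces every action to lie in $\sset{a,b}$. The translation has size polynomial in $\phi$ and produces a $\ltmuS$ formula that is satisfiable iff $\phi$ is.

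The main obstacle I anticipate is verifying that this encoding faithfully preserves satisfiability. The substantive work is an induction on $\phi$ showing that every satisfying $\omega$-trace of the encoded formula decomposes uniquely into a sequence of $k$-action blocks that corresponds to an LTL model of $\phi$, and that the translations of $X$ and $G$ correctly align with block boundaries starting at position $0$. Once this bijective semantic correspondence is laid out and the usual temporal-logic induction carried through, PSPACE-hardness of LTL$[X,G]$ satisfiability transfers directly to $\ltmuS$, yielding, together with the upper bound, PSPACE-completeness.
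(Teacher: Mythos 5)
Your proposal is correct, and the membership half coincides with the paper's (both just cite Vardi's PSPACE upper bound for linear-time $\mu$-calculus satisfiability, which \ltmuS{} inherits as a syntactic fragment). The hardness half, however, takes a genuinely different route. The paper reduces from the one-variable, diamond-free fragment of the bimodal logic $D \oplus_\subseteq K4$, which is PSPACE-complete off the shelf; it encodes Kripke states via traces of the shape $(a+ba)^\omega$, reads the single proposition off the presence of a $b$, and simulates the two box modalities with greatest fixpoints. You instead reduce from the $\{X,G\}$-fragment of LTL in negation normal form, packing a $k$-atom valuation into a $k$-action block over $\sset{a,b}$ and translating $X$ into $k$-fold $\hSuf{\Act}{\cdot}$ and $G\psi$ into $\hMax{Y}{(\hAnd{\widehat{\psi}}{\hSuf{\Act}^{k}Y})}$; this is polynomial, lands squarely in \ltmuS{} (only greatest fixpoints are needed, and the fixpoint variables are guarded), and the block-alignment induction you flag is routine. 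What your route buys is a more familiar source problem and a reduction whose semantic correspondence is essentially a change of alphabet; what it costs is that the PSPACE-completeness of your exact source fragment --- $X$, $G$, $\wedge$, $\vee$ and \emph{negated atoms only}, with no $F$ or $U$ --- is not literally one of the Sistla--Clarke fragment results (those are stated with full negation, so $F$ and $G$ are interdefinable). It does hold, via the generic polynomial-space Turing-machine encoding applied to the complement (all constraints become invariants under $G$, "never accept" replaces "eventually accept", and PSPACE is closed under complement), but that adaptation is the one external fact you should spell out or cite precisely; the paper's choice of source logic avoids this step at the price of invoking a less standard PSPACE-complete problem.
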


\begin{proof}
Satisfiability for \ltmu (and therefore for \ltmuS as well) is known to be in PSPACE \cite{Vardi88}.
That satisfiability for \ltmuS is PSPACE-hard results from the observation that \ltmuS with at least two actions can encode the $1$-variable, diamond-free fragment of $D \oplus_\subseteq D4$, which is PSPACE-complete \cite{achilleos16}.
The reduction can be found in Appendix \ref{subsec:pspace-hardness}.
\end{proof}

\begin{rem}
For singleton $\Act = \sset{a}$, \ltmu-satisfiability is a lot simpler, as there is only one trace, $a^\omega$.
Therefore, satisfiability for \ltmuS can be reduced to model-checking on $a^\omega$.
A more direct way to solve satisfiability is to reduce the given formula by using the following straightforward rewrite rules:
$\hFls \land \hV \rewriteL \hFls$, $\hFls \lor \hV \rewriteL \hV$, $\hSuf{\act}\hFls \rewriteL \hFls$, $\hNec{\act}\hFls \rewriteL \hFls$, and $\max X.\hFls \rewriteL \hFls$; the cases for $\hTru$ are symmetric.
After applying these formula simplifications, we will either reach one of $\hTru,\hFls$, in which case the answer to satisfiability is obvious, or we will reach a formula $\hV$ without these constants.
In the latter case, we can easily see that $\hSyn{\hV}$ can never reach a verdict, and therefore it will never reject a trace, which, from \Cref{lem:sound-and-rej-compl}, implies that $\hSemL{\hV} = \Trc = \{ \act^\omega \}$. \qedd
\end{rem}

\section{Branching-Time Monitorability}
\label{sec:branchingtime}




Monitorability over branching-time semantics has been examined in \citeMac{AceAFI:17:FSTTCS,AceAFI:18:FOSSACS} and \citeMac{FraAI:17:FMSD}  for various frameworks. In this section we compare the results of \citeMac{FraAI:17:FMSD}, the closest to our setting, with those of \Cref{sec:ltmu-monitorability}.
We begin by revisiting the basic definitions and results for branching-time monitorability. Then, in \Cref{sec:finfinite} and \Cref{sec:mon-finfinite}, we
extend the study of monitorability to a domain that allows \emph{both} finite and infinite traces, and conclude, in \Cref{sec:compare}, by comparing the monitorable fragments in this domain to those in the branching-time
setting.
All omitted proofs are in \Cref{sec:branching-appendix}.

%


\begin{defn}[Branching-time Monitor Soundness and Completeness] \label{def:sound-and-complete-branching}
\quad
\begin{itemize}
  \item A monitor \mV is \emph{sound} for a (closed) formula \hV over processes if, \emph{for all} $p\in \Proc$ of every LTS, \ie a triple
  $\langle \Proc,(\Act\cup\sset{\tau}),\reduc\rangle$:
  \begin{itemize}
  \item  \rej{\mV,p} implies $p\not \in \hSemB{\hV}$;
  \item  \acc{\mV,p} implies $p\in \hSemB{\hV}$.
  \end{itemize}
  \item A monitor \mV is \emph{violation-complete} for a
  formula \hV over processes if \emph{for all} $p\in\Proc$ of every LTS, $p\notin\hSemB{\hV}$ implies  \rej{\mV,p}.
  It is \emph{satisfaction-complete} if $p\in\hSemB{\hV}$ implies \acc{\mV,p}. \qedd
  %
\end{itemize}
\end{defn}

\begin{rem}
  The LTS is often omitted when it is clear from the context.
  As before, a monitor \mV is \emph{complete} for  \hV  if it is  violation- and satisfaction-complete for it.
  A rejection monitor is a monitor without the verdict \yes; an acceptance monitor is one without the verdict \no.  \qedd
\end{rem}

In the branching-time setting, monitors with both {$\yes$} and {$\no$} verdicts are unsound for any formula,
as whenever one trace leads to an acceptance and another to a rejection, one can easily construct a process that can emit both traces.
As a single-verdict (uni-verdict \cite{FraAI:17:FMSD}) monitor can only be either satisfaction- or violation-complete for a formula (except monitors for $\hTru$ and $\hFls$ which can be both), one cannot hope for complete monitors for $\UHML$, and therefore the best one can do is to identify its fragments for which partially complete monitors exist.
These are \SHML and \CHML, defined by the following grammars:

 \begin{defn}[Safety and Cosafety Fragments for Branching-time \recHML]\label{def:shml-chml}
\begin{align*}
\hV,\hVV \in \SHML &::= \hTru ~\mid~ \hFls ~\mid~ [\ASet] \hV ~\mid~ \hV \land \hVV ~\mid~ \max X.\hV ~\mid~ X \\
\hV,\hVV \in \CHML &::= \hTru ~\mid~ \hFls ~\mid~ \hSuf{\ASet} \hV ~\mid~ \hV \lor \hVV ~\mid~ \min X.\hV ~\mid~ X.
\tag*{{\qedd}}
\end{align*}
\end{defn}

%

\begin{thm}[Branching-time Monitorability \cite{FraAI:17:FMSD}]\label{thm:branching-monitorability}
For every $\hV \in \SHML$, there is a regular rejection monitor $\mV$ that is sound and violation-complete for $\hV$.
For every $\hV \in \CHML$, there is a regular acceptance monitor $\mV$ that is sound and satisfaction-complete for $\hV$.
\qed
\end{thm}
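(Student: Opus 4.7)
The plan is to follow the same compositional template used in Section~4 for the linear-time results (Propositions~4.6 and 4.14), but adapted to processes rather than traces. I would define a monitor synthesis function $\hSyn{-}:\SHML\to\Mon$ inductively by
\[
\hSyn{\hTru}=\yes,\quad
\hSyn{\hFls}=\no,\quad
\hSyn{[A]\hV}=\ch{\prf{A}{\hSyn{\hV}}}{\uprf{A}{\yes}},\quad
\hSyn{\hV\land\hVV}=\hSyn{\hV}\paralC\hSyn{\hVV},\quad
\hSyn{\max X.\hV}=\rec{x}{\hSyn{\hV}},\quad
\hSyn{X}=x,
\]
and dually for $\CHML$, using $\paralD$, $\no$ in place of $\yes$ as the ``trivial'' leaf, and a diamond clause $\hSyn{\hSuf{A}{\hV}}=\ch{\prf{A}{\hSyn{\hV}}}{\uprf{A}{\no}}$. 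The $\yes$ leaves in the SHML synthesis are inert for rejection purposes (symmetrically for $\no$ in CHML), so the resulting monitors are effectively rejection/acceptance monitors; this can be made precise by invoking Proposition~3.10 to obtain a verdict-equivalent regular monitor from which the ``wrong'' verdict never fires, giving the claimed regular single-verdict witness.

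The first step after the definition is to check that $\hSyn{\hV}$ is reactive (analogue of Lemma~4.7): each synthesised clause guards every action either by a summand $\prf{A}{\cdot}$ and a complementary sink $\uprf{A}{\cdot}$, and the $\paralC/\paralD$ cases close under reactivity. Then I would prove soundness by structural induction on $\hV\in\SHML$, using the instrumentation Zipping/Unzipping lemmata (Lemmata~3.3 and 3.4) to pull a rejection witness $\sys{\hSyn{\hV}}{p}\wtraS{\ftV}\sys{\no}{p'}$ back to a coordinated pair of behaviours $p\wtraS{\ftV}p'$ and $\hSyn{\hV}\wtraS{\ftV}\no$. The modality and conjunction cases use the branching-time clauses of Figure~1 directly; the $\max X.\hV$ case is handled by unfolding the recursion finitely many times, matched by the finite length of $\ftV$, and appealing to the fact that the set of processes rejected by $\rec{x}{\hSyn{\hV}}$ is a post-fixpoint of the operator induced by $\hV$, whence contained in $\Proc\setminus\hSemB{\max X.\hV}$.

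Violation-completeness is the direction I expect to be the main obstacle, specifically at the fixpoint case. For the recursion-free fragment it is a routine induction: if $p\notin\hSemB{\hAnd{\hV_1}{\hV_2}}$, some conjunct fails and \Cref{cor:parallel-and-monitors} lets one conjunctive sub-monitor carry the rejection; if $p\notin\hSemB{\hNec{A}{\hV}}$, there is a weak derivative $p\wtraS{\act}p'$ with $\act\in A$ and $p'\notin\hSemB{\hV}$, and the synthesised monitor can follow $\act$ and continue by the IH. For $\max X.\hV$, I would use the greatest-fixpoint characterisation: $p\notin\hSemB{\max X.\hV}$ iff there is \emph{no} post-fixpoint containing $p$; equivalently, every approximant from above eventually excludes $p$, so there is a finite unfolding depth $k$ after which $p$ is rejected by the unrolled recursion-free formula. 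That unfolding is exactly simulated by $k$ $\tau$-steps of $\rec{x}{\hSyn{\hV}}$ via rule \rtit{mRec}, after which the recursion-free argument applies.

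The CHML half is fully dual: swap $\wedge/\vee$, $[A]/\langle A\rangle$, $\paralC/\paralD$, $\max/\min$, $\no/\yes$, and use the least-fixpoint characterisation (finite witnesses for satisfaction). Finally, regularity of the witness monitor follows by Proposition~3.8 for the guarded recursive part together with Proposition~3.10 for eliminating the parallel operators, yielding a regular rejection (respectively acceptance) monitor that is sound and violation-complete (respectively satisfaction-complete) for $\hV$, as required.
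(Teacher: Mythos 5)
First, the concrete gap. Your box clause $\hSyn{\hNec{\ASet}{\hV}}=\ch{\prf{\ASet}{\hSyn{\hV}}}{\uprf{\ASet}{\yes}}$ is the linear-time synthesis verbatim, and over processes it is \emph{unsound}: take $\Act=\sset{a,b}$ and $\hV=\hNec{a}{\hFls}\in\SHML$, so the synthesised monitor is $\ch{\prf{a}{\no}}{\prf{b}{\yes}}$, and let $p$ have exactly the two transitions $p\traS{a}\nil$ and $p\traS{b}\nil$. Then $p\notin\hSemB{\hV}$, yet the monitor accepts $p$ along the trace $b$ — precisely the branching-time failure mode the paper warns about (a process emitting one trace that leads to $\yes$ and another that leads to $\no$). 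Your proposed repair — pass via \Cref{prop:extended-mon-to-reg-mon} to a \emph{verdict-equivalent} regular monitor ``from which the wrong verdict never fires'' — cannot work, because verdict equivalence (\Cref{def:verdict-eq}) by definition preserves exactly the finite traces on which $\yes$ is reached; any verdict-equivalent monitor still accepts $b$ and hence still accepts $p$. The correct move is a \emph{rejection-equivalent} but not verdict-equivalent transformation: swap every $\yes$ for $\stp$ (exactly what the paper does in the proof of \Cref{prop:finfinite-maximality}), or simply never introduce $\yes$ into the \SHML synthesis in the first place. With that correction the rest of your skeleton is sound: rejection of a conjunction does decompose componentwise in branching time (the paper's caveat concerns \emph{satisfaction} of conjunctions and rejection of disjunctions, neither of which arises here), the box case follows the offending weak derivative via zipping/unzipping, and the finite-approximant argument for $\max X.\hV$ goes through because the operators of \SHML commute with intersections of decreasing chains, so a process outside the greatest fixpoint is excluded at some finite unfolding depth.

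On the route: note that this paper does not prove \Cref{thm:branching-monitorability} at all — it imports it from \citeMac{FraAI:17:FMSD} — so the comparison is with the construction there. That construction synthesises a single-verdict \emph{regular} monitor directly, using the choice operator $+$ rather than $\paralC$ for conjunction; this is legitimate for a rejection monitor because a violation of $\hV_1\wedge\hV_2$ is a violation of some conjunct, and $+$ lets the monitor commit nondeterministically to the violated one. Your detour through parallel monitors and then \Cref{prop:extended-mon-to-reg-mon} also yields the required regular witness (at an exponential blow-up in size), but only if the $\yes$-verdicts are eliminated \emph{before} any appeal to verdict equivalence, since after that point the unsound acceptances are baked in.
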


\begin{thm}[Maximality of \SHML and \CHML \cite{FraAI:17:FMSD}]\label{thm:branching-max}
For every regular rejection monitor $\mV$, there is a formula $\hV \in \SHML$, such that $\mV$ is sound and violation-complete  for $\hV$.
For every regular acceptance monitor $\mV$, there is a formula $\hV \in \CHML$, such that $\mV$ is sound and satisfaction-complete for $\hV$.
\qed
\end{thm}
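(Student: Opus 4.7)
The plan is to construct, for each regular rejection monitor $\mV$, a witness formula $\mSyn{\mV} \in \SHML$ for which $\mV$ is sound and violation-complete, and then handle acceptance monitors dually. I would define $\mSyn{-}$ by structural recursion on monitors: $\mSyn{\no} = \hFls$, $\mSyn{\stp} = \hTru$, $\mSyn{x} = X$, $\mSyn{\prf{\acta}{\mV}} = \hNec{\acta}{\mSyn{\mV}}$, $\mSyn{\ch{\mV_1}{\mV_2}} = \hAnd{\mSyn{\mV_1}}{\mSyn{\mV_2}}$, and $\mSyn{\rec{x}{\mV}} = \hMaxX{\mSyn{\mV}}$. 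A syntactic induction confirms that the image lies in $\SHML$, with monitor variables occurring positively.

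The central claim to establish is that $\rej{\mV, p}$ if and only if $p \notin \hSemB{\mSyn{\mV}}$. Soundness is proven by induction on the length of an instrumented derivation $\sys{\mV}{p} \wtraS{\ftV} \sys{\no}{p'}$, using \Cref{lem:unzipping} to separate the process and monitor contributions and then matching each monitor step to the corresponding clause of the branching-time semantics: a prefix step realises the existential witness for $\hNec{\acta}{-}$ failing, a choice step identifies a failing conjunct, and a \rtit{mRec} step corresponds to an unfolding of $\hMaxX{-}$. Violation-completeness is proven by structural induction on $\mV$; the prefix and choice cases fall out directly from the branching-time semantics, and \Cref{lem:zipping} reassembles the synchronous monitor and process traces into the required instrumented rejection.

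The principal obstacle is the recursive case. To handle $\rec{x}{\mV}$ I would strengthen the inductive statement to be parametric in a valuation $\rho$, proving the equivalence $\rej{\mV, p}$ iff $p \notin \hSemB{\mSyn{\mV}, \rho}$ whenever $\rho$ maps each free monitor variable to the rejection set of its bound monitor. For soundness, every finite rejection derivation through $\rec{x}{\mV}$ invokes \rtit{mRec} only finitely many times, which places the rejected process outside some finite descending approximant of the monotone operator $F(S) = \hSemB{\mSyn{\mV}, \rho[X \mapsto S]}$, hence outside its greatest fixpoint $\hSemB{\hMaxX{\mSyn{\mV}}, \rho}$. For violation-completeness, a process excluded from $\hSemB{\hMaxX{\mSyn{\mV}}, \rho}$ is eliminated at a finite descending stage of $F$; the inductive hypothesis applied to the substituted body $\mV\subS{\rec{x}{\mV}}{x}$ then extracts a finite rejection trace, which is zipped with matching process transitions via \Cref{lem:zipping} to produce the operational rejection.

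The dual construction for acceptance monitors defines $\mSyn{\yes} = \hTru$, $\mSyn{\stp} = \hFls$, $\mSyn{\prf{\acta}{\mV}} = \hSuf{\acta}{\mSyn{\mV}}$, $\mSyn{\ch{\mV_1}{\mV_2}} = \hOr{\mSyn{\mV_1}}{\mSyn{\mV_2}}$, and $\mSyn{\rec{x}{\mV}} = \hMinX{\mSyn{\mV}}$, yielding a formula in $\CHML$; sound and satisfaction-completeness follow by the symmetric argument, with the least fixpoint characterised by an ascending approximation playing the role of the greatest fixpoint.
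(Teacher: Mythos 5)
First, a point of reference: the paper does not prove this theorem itself --- it is imported from \citeMac{FraAI:17:FMSD} and stated with a closing \textsf{qed}, so there is no in-paper argument to compare against. Your construction is nonetheless the standard one: it coincides (on regular rejection monitors) with the formula synthesis $\mSyn{-}$ of \Cref{def:regmon-to-formula}, and your treatment of soundness and of the prefix and choice cases of violation-completeness, via \Cref{lem:unzipping} and \Cref{lem:zipping}, is correct. Mapping $\stp$ to $\hTru$ and $+$ to $\wedge$ gives exactly the right behaviour, since $\rej{\ch{\mV_1}{\mV_2},p}$ holds iff one of the summands rejects $p$.

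The gap is in the completeness direction of the recursion case. You assert that a process outside $\hSemB{\hMaxX{\mSyn{\mV}},\rho}$ ``is eliminated at a finite descending stage of $F$''. For an arbitrary monotone operator on $\powset{\Proc}$ this is false: the greatest fixpoint need not equal $\bigcap_{n<\omega}F^{n}(\Proc)$ unless $F$ is co-continuous. It happens to hold here because $\hNec{\acta}{-}$ and $\wedge$ distribute over arbitrary intersections, but that needs to be stated and proved; as written the step is unjustified. Moreover, even granting it, you then apply ``the inductive hypothesis to the substituted body $\mV\subS{\rec{x}{\mV}}{x}$'', which is not structurally smaller than $\rec{x}{\mV}$, so the induction as set up is not well-founded; repairing it along your lines requires a secondary induction on the approximant index with careful bookkeeping of the valuation at each unfolding. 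The cleaner route --- the one this paper itself uses for the analogous linear-time completeness result, \Cref{lem:sound-and-rej-compl} --- avoids approximants entirely: let $S$ be the set of processes \emph{not} rejected by $\rec{x}{\mV}$, apply the inductive hypothesis to the structurally smaller body $\mV$ under the valuation $\rho[X\mapsto S]$ to show $S\subseteq F(S)$, and conclude $S\subseteq\hSemB{\hMaxX{\mSyn{\mV}},\rho}$ because the greatest fixpoint is the union of all post-fixpoints; contrapositively, every process violating the formula is rejected. With that substitution your argument goes through, and the dual case for acceptance monitors and $\CHML$ is symmetric as you say.
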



One can identify two key differences between the linear-time and the branching-time semantics introduced in \Cref{sec:preliminaries}.
The first and most characteristic difference is that for branching-time semantics, where formulae are interpreted over processes,
a process is allowed to emit more than one trace.
In other words, a process may exhibit different behaviour each time it runs, and therefore, a trace does not give the whole picture of its possible executions.
By contrast, for linear-time semantics, if one observes an action or a finite trace, then there is no possibility that another one could have been exhibited instead.
This allows for constructs such as parallel monitors to monitor for conjunctions and disjunctions at the same time:
simply decompose the formula as the monitor synthesis function directs in \Cref{def:mon-synt-complete,def:formula-to-monitor-part}, and let each monitor component examine the trace until a conclusion is reached.
For branching-time semantics, this method does not help to monitor a conjunction for satisfaction or a disjunction for rejection, as \citeMac{FraAI:17:FMSD} demonstrates.

\begin{exmp}
Consider $\hV = \hNec{a}\hFls \lor \hNec{b}\hFls \notin \SHML$. In contrast to the linear-time setting, $\hV$ is \emph{not} monitorable for violation under a branching-time interpretation.
For assume, towards a contradiction, that there
is a rejection monitor for $\hV$.
Assume an LTS with a process $p$ that has two transitions, $p \traS{a} \nil$ and $p \traS{b} \nil$.
Then, $p \notin \hSemB{\hV}$ and $p$ can produce three possible traces: $\varepsilon, a, b$.
If a monitor rejected one of these, say $a$, then it would reject $p$, but also
 process $q_a$ that has exactly one transition, $q_a \traS{a} \nil$.
 But we observe that
 $q_a\in\hSemB{\hV}$,
meaning that the monitor would \emph{not} be sound for \hV.
%
The formula $\hOr{\hNec{a}\hFls}{\hNec{b}\hFls}$ is however monitorable in a linear-time setting (\Cref{def:complete-fragment-HML,def:minHML-maxHML}).
\qedd
\end{exmp}


The second difference
is that, in the linear-time semantics, formulae are only interpreted over \emph{infinite} traces while, in branching-time semantics, a trace is allowed to end.
Unlike the first difference, this one is not inherent to the linear- versus branching-time distinction, but it is one we have lifted from standard LTL-style semantics~\cite{Vardi88,BRADFIELDS:01:HandbookMu}.
Therefore, as a first step to reconcile the two semantics, we focus on this less essential difference
for our logic.

\subsection{The Finfinite Domain}\label{sec:finfinite}
We introduce an alternative linear-time semantics for our logic, where formulae are interpreted over traces that are allowed to be either finite or infinite.
For convenience, we call these kinds of traces \emph{finfinite} and the resulting semantics \emph{finfinite} linear-time semantics, or just finfinite semantics.
(A semantics akin to ours for a linear-time temporal logic may be found in, for instance,~\citeMac{Schneider1997}. 
\citeMac{Falcone2012} define linear-time properties over finite and infinite traces, but do not consider a specific logic.)
The finfinite semantics, \hSemF{-}, is presented in  \Cref{fig:finfinite-recHML}.
The set of finfinite traces is $\fTrc = \Trc \cup \Act^*$ and we use $\fftV,\fftVV \in \fTrc$ (\resp $\FSet \subseteq \fTrc$) to range over (\resp sets of) finfinite traces.

\begin{figure}[!h]
     \[\begin{array}{rlrl}
      \hSemF{\hTru,\sigma}  & \!\!\!\deftxt   \fTrc
      &
      \hSemF{\hFls,\sigma}  & \!\!\!\deftxt   \emptyset
      \\
      \hSemF{\hOr{\hV_1}{\hV_2},\sigma} & \!\!\!\deftxt   \hSemF{\hV_1,\sigma} \cup \hSemF{\hV_2,\sigma}
      \qquad\qquad
       &
      \hSemF{\hAnd{\hV_1}{\hV_2},\sigma} & \!\!\!\deftxt   \hSemF{\hV_1,\sigma} \cap \hSemF{\hV_2,\sigma}
      \\
      \hSemF{\hSuf{\ASet}{\hV},\sigma}  &
      \multicolumn{3}{l}{
           \!\!\!\deftxt \sset{\fftV \;|\; \exists \fftVV \cdot \exists \acta\in \ASet \cdot\fftV=\act\fftVV \;\text{ and }\; \fftVV \in \hSemF{\hV,\sigma}
          }
      }
    \\
    \hSemF{\hNec{\ASet}{\hV},\sigma}  &
    \multicolumn{3}{l}{
       \!\!\!\deftxt \sset{\fftV \;|\;\forall \fftVV \cdot \forall \acta \in \ASet \cdot  \tr=\acta\fftVV \;\text{ implies }\;\fftVV \in \hSemF{\hV,\sigma}
       }
     }

    \\
    \hSemF{\hMin{\!\hVarX}{\hV},\sigma} & \!\!\!\deftxt
    \bigcap \sset{\FSet \;|\;  \hSemF{\hV,\sigma[\hVarX\mapsto \FSet]} \subseteq \FSet\ }

    \\
    \hSemF{\hMax{\!\hVarX}{\hV},\sigma} & \!\!\!\deftxt
    \bigcup \sset{\FSet \;|\;  \FSet \subseteq \hSemF{\hV,\sigma[\hVarX\mapsto \FSet]}\ }
    \quad\;
    &
    \hSemF{\hVarX,\sigma} & \!\!\!\deftxt   \sigma(\hVarX)
    \end{array}
    \]
  \caption{Finfinite Linear-Time Semantics}
  \label{fig:finfinite-recHML}
\end{figure}


\begin{rem}
  For \ltmu, \hSuf{\Act}{\hV} and \hNec{\Act}{\hV} can be seen as the strong and weak next operators, $X \hV$ and $\overline{X}\hV$ from LTL~\cite{Clarke1999Book}.
  In this same setting,
  \hNec{\ASet}{\hV} may be seen as shorthand for \hOr{ \hSuf{\coASet}{\hTru}}{\hSuf{\ASet}{\hV}}.
  However the encoding does \emph{not} work for the finfinite interpretation of \Cref{fig:finfinite-recHML}.
  \qedd
\end{rem}

The two linear-time semantics for \recHML still correspond in some sense; see \Cref{lem:infinite-to-finfinite-semantics}.
In particular, formula equivalence over finfinite traces implies equivalence over infinite traces.

\begin{lem}\label{lem:infinite-to-finfinite-semantics}
For all $ \hV\in \UHML$, $\hSemF{\hV}\cap \Trc = \hSemL{\hV}$.
\qed
\end{lem}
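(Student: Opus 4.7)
The plan is to prove the statement by structural induction on $\hV$, but since the logic has free variables, I will first strengthen the claim so that the induction goes through for open formulae. Specifically, say that two valuations $\sigma_F \colon \LVars \to \powset{\fTrc}$ and $\sigma_L \colon \LVars \to \powset{\Trc}$ are \emph{compatible} if $\sigma_F(\hVarX) \cap \Trc = \sigma_L(\hVarX)$ for every $\hVarX$. I will then prove that for every $\hV \in \UHML$ and every compatible pair $(\sigma_F,\sigma_L)$,
\[
\hSemF{\hV,\sigma_F} \cap \Trc \;=\; \hSemL{\hV,\sigma_L}.
\]
Taking $\sigma_F$ and $\sigma_L$ to be the empty valuations (compatible vacuously) recovers the lemma for closed formulae.

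The base cases $\hTru$, $\hFls$, $\hVarX$ are immediate from the definitions and the compatibility assumption. The propositional cases $\hV_1 \lor \hV_2$ and $\hV_1 \land \hV_2$ follow because intersection with $\Trc$ distributes over $\cup$ and $\cap$, combined with the inductive hypothesis. For the modal cases, the key observation is that if $\fftV = \acta\fftVV$ and $\fftV \in \Trc$, then $\fftVV \in \Trc$ as well; so for $\hSuf{\ASet}{\hV}$ we can rewrite
\[
\hSemF{\hSuf{\ASet}{\hV},\sigma_F} \cap \Trc = \{ \acta\trr \mid \acta \in \ASet,\; \trr \in \hSemF{\hV,\sigma_F} \cap \Trc\},
\]
which by induction equals $\{\acta\trr \mid \acta \in \ASet,\; \trr \in \hSemL{\hV,\sigma_L}\} = \hSemL{\hSuf{\ASet}{\hV},\sigma_L}$. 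The necessity case is analogous.

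The main obstacle is the fixpoint case, where one needs to argue that intersection with $\Trc$ commutes with the fixpoint construction. I plan to handle this via Knaster--Tarski approximants. For $\hMin{\hVarX}{\hV}$, let $f \colon \powset{\fTrc} \to \powset{\fTrc}$ be $F \mapsto \hSemF{\hV,\sigma_F[\hVarX\mapsto F]}$ and let $g \colon \powset{\Trc} \to \powset{\Trc}$ be $T \mapsto \hSemL{\hV,\sigma_L[\hVarX\mapsto T]}$; both are monotone. The inductive hypothesis, applied to the compatible pair $(\sigma_F[\hVarX\mapsto F],\; \sigma_L[\hVarX\mapsto F \cap \Trc])$, yields the crucial commutation identity $f(F) \cap \Trc = g(F \cap \Trc)$ for every $F \subseteq \fTrc$. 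Define the ordinal approximants $f^{0} = \emptyset$, $f^{\alpha+1} = f(f^{\alpha})$, $f^{\lambda} = \bigcup_{\alpha<\lambda} f^{\alpha}$, and similarly $g^{\alpha}$. A transfinite induction using the commutation identity and the fact that arbitrary unions commute with intersection gives $f^{\alpha} \cap \Trc = g^{\alpha}$ at every ordinal; since the least fixpoints are reached at some ordinal $\alpha$, we conclude $\hSemF{\hMin{\hVarX}{\hV},\sigma_F} \cap \Trc = \hSemL{\hMin{\hVarX}{\hV},\sigma_L}$. The case $\hMax{\hVarX}{\hV}$ is entirely dual, starting the approximation from $\fTrc$ (resp.\ $\Trc$) and taking intersections at limits; the same commutation identity makes the step go through.
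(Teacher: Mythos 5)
Your proof is correct and follows essentially the same route as the paper, which simply restricts the valuation to $\Trc$ (your compatibility condition) and proceeds by induction on $\hV$. The paper leaves the fixpoint cases implicit; your ordinal-approximant argument is a valid way to fill in that detail (a direct argument on pre-/post-fixpoints using the same commutation identity $f(F)\cap\Trc = g(F\cap\Trc)$ would also work).
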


%


We consider the same monitoring systems of regular and parallel monitors that were introduced in  \Cref{sec:monitors}.
However, what it means for $\mV$ to monitor for $\hV$ depends on the semantics that we use for the formulae: the definition used in  \Cref{sec:ltmu-monitorability} is therefore not sufficient for the finfinite domain.

\begin{defn}[Finfinite Linear-time Monitor Soundness and Completeness] \label{def:sound-and-complete-finfinite}
\quad
\begin{itemize}
  \item A monitor \mV is \emph{sound} for a (closed) formula \hV over finfinite traces if, \emph{for all} $\fftV\in \fTrc$:
  \begin{itemize}
  \item  \rej{\mV,\fftV} implies $\fftV\not \in \hSemF{\hV}$;
  \item  \acc{\mV,\fftV} implies $\fftV\in \hSemF{\hV}$.
  \end{itemize}
  \item A monitor \mV is \emph{violation-complete} for a formula \hV over finfinite traces if \emph{for all} $\fftV\in \fTrc$, $\fftV\not\in\hSemF{\hV}$ implies \rej{\mV,\fftV}.
  It is \emph{satisfaction-complete} if $\fftV\in \hSemF{\hV}$ implies \acc{\mV,\fftV}.
  %
  It is \emph{complete} for a formula \hV over finfinite traces if it is \emph{both} violation- and satisfaction complete for it. \qedd
\end{itemize}
\end{defn}


Monitorability of formulae and logics can be adjusted to finfinite traces analogously.

\subsection{Monitorability over Finfinite Traces}\label{sec:mon-finfinite}

We now identify the complete- and partial-monitorable fragments of $\UHML$ over finfinite traces. Our first observation is that under finfinite semantics, there are no complete-monitorable formulae, except the ones equivalent to $\hTru$ or $\hFls$.

\begin{lem}\label{lem:finfinite-trivial-mon}
 If $\mV$ is sound and complete for $\hV$ over finfinite traces, then $\hSemF{\hV} = \fTrc$ or $\hSemF{\hV} = \emptyset$.
 \qed
\end{lem}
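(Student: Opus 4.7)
The plan is to exploit the fact that the empty trace $\varepsilon$ belongs to $\fTrc$ (since $\varepsilon \in \Act^* \subseteq \fTrc$) but not to $\Trc$, and that acceptance/rejection of $\varepsilon$ can only happen via internal $\tau$-moves of the monitor before any action is consumed. This forces the monitor to commit to a single verdict that then propagates to every finfinite trace.

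First I would split on whether $\varepsilon \in \hSemF{\hV}$ or not. Suppose $\varepsilon \in \hSemF{\hV}$. By satisfaction-completeness of $\mV$ for $\hV$ over finfinite traces, $\mV$ accepts $\varepsilon$, which (via the natural extension of \Cref{def:acc-n-rej} to finfinite traces, and rules \rtit{iAsyM} and \rtit{iAsyP} of \Cref{fig:monit-instr}) means there is some process $p$ with $\sys{\mV}{p} \wtraS{} \sys{\yes}{p'}$ for some $p'$, and hence $\mV \wreduc \yes$ through $\tau$-transitions alone. Now pick an arbitrary $\fftV \in \fTrc$ and an arbitrary process $q$: using \rtit{iAsyM} we can still fire the same internal sequence to obtain $\sys{\mV}{q} \wreduc \sys{\yes}{q}$, i.e.\ $\acc{\mV, q, \varepsilon}$. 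Taking the decomposition $\fftV = \varepsilon \cdot \fftV$, this witnesses that $\mV$ accepts $\fftV$. By soundness, $\fftV \in \hSemF{\hV}$, and since $\fftV$ was arbitrary we conclude $\hSemF{\hV} = \fTrc$.

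The symmetric case $\varepsilon \notin \hSemF{\hV}$ is entirely dual: violation-completeness forces $\mV \wreduc \no$, and the same argument yields rejection of every finfinite trace, so by soundness $\hSemF{\hV} = \emptyset$. The only slightly delicate point, and the sole obstacle worth mentioning, is justifying the extension of the acceptance/rejection definition to finfinite traces (the paper's \Cref{def:acc-n-rej} is written for infinite $\tV$); but this is immediate because both notions are witnessed by a \emph{finite} prefix on which the monitor reaches its verdict, and $\varepsilon$ is a legitimate such prefix. Verdict persistence (\Cref{lem:ver-persistence}) is implicitly used to ensure that once $\yes$ or $\no$ is reached, the verdict is never revoked, so the choice of $q$ and the choice of decomposition are unconstrained.
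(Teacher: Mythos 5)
Your proof is correct and follows essentially the same route as the paper's: completeness forces the monitor to reach a verdict on the empty trace $\varepsilon$, verdict irrevocability propagates that verdict to every finfinite trace (each being an extension of $\varepsilon$), and soundness then yields $\hSemF{\hV} = \fTrc$ or $\hSemF{\hV} = \emptyset$. The extra care you take in justifying acceptance of $\varepsilon$ via $\tau$-moves and the unzipping lemma is a sound elaboration of what the paper leaves implicit.
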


\begin{rem}
\Cref{lem:finfinite-trivial-mon} holds regardless of the considered logic: due to verdict-persistence (\Cref{lem:ver-persistence}), a logical fragment that is complete-monitorable over finfinite traces must be trivial for any logic interpreted over finfinite traces. \qedd
\end{rem}

The concept of tightness, as defined in \Cref{def:tight}, does not apply for the finfinite interpretation since there is no guarantee that a finfinite trace will have a continuation.
A definition of tightness
might stipulate that a rejection-monitor is tight for a formula when it is guaranteed to reject any finite trace as long as the trace and
all of its (finfinite) continuations violate the formula (\ie bad prefixes).
However, this notion of tightness is implied by partial completeness.



\begin{exmp}\label{ex:one-mod}
In contrast to the infinite trace semantics,
$\hSuf{a}\hTru$ is not monitorable for violation under finfinite semantics.
%
For assume towards a contradiction that $\mV$ is a monitor that is sound and violation-complete for $\hSuf{a}\hTru$.
Then, $\mV$ must reject the empty trace, $\varepsilon$, and thus all of its extensions, including $a \in \hSemF{\hSuf{a}\hTru}$, making $\mV$ unsound.
Similarly, $[\act]\hFls$ is not monitorable for satisfaction.
\qedd
\end{exmp}

Our next goal is to characterize the expressive power of monitors in finfinite semantics. To this end, we identify the following fragments of \UHML.
Only one type of modality is kept in each of these fragments.
This is because, as observed in \Cref{ex:one-mod}, the two modalities are not mutually expressive and even simple formulae using them are not monitorable for violation or satisfaction.
%
\begin{defn}
\begin{align*}
\hV,\hVV \in \ftmuS &::= \hTru &&\mid &&\hFls &&\mid &&[\ASet] \hV &&\mid &&\hV \lor \hVV &&\mid &&\hV \land \hVV &&\mid &&\max X.\hV &&\mid &&X, \text{ and} \\
\hV,\hVV \in \ftmuC &::= \hTru &&\mid &&\hFls &&\mid &&\hSuf{\ASet} \hV &&\mid &&\hV \lor \hVV &&\mid &&\hV \land \hVV &&\mid &&\min X.\hV &&\mid &&X.
\tag*{\qedd}
\end{align*}
\end{defn}

The next lemma formalises the property that formulae in $\ftmuS$ denote prefix-closed sets of (finfinite) traces whereas formulae  in $\ftmuC$ denote suffix-closed sets of traces.

\begin{lem}\label{lem:finfinite-suffix-closed}
For all $\ftV \in \Act^*$ and $\fftV \in \fTrc$, $(i)$ if $\hV \in \ftmuS$ and $\ftV\fftV \in \hSemF{\hV}$, then $\ftV \in \hSemF{\hV}$;
$(ii)$ if $\hV \in \ftmuC$ and $\ftV \in \hSemF{\hV}$, then $\ftV\fftV \in \hSemF{\hV}$.
\qed
\end{lem}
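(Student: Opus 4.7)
The plan is to prove both parts by structural induction on $\hV$, after strengthening the statements so that open formulae can be handled smoothly. For (i), I will show that whenever $\sigma$ maps every variable free in $\hV \in \ftmuS$ to a prefix-closed subset of $\fTrc$ (i.e.\ one satisfying (i)), then $\hSemF{\hV, \sigma}$ is itself prefix-closed; dually, for (ii) I will show that if $\sigma$ maps every free variable to a set closed under extension by any $\fftV \in \fTrc$ (``extension-closed''), then $\hSemF{\hV, \sigma}$ is extension-closed. The lemma then follows by instantiating $\sigma$ arbitrarily at closed formulae.

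Base cases $\hTru, \hFls, X$ are immediate, and the boolean cases use the fact that unions and intersections preserve both closure properties. For $[\ASet]\hV$, if $\ftV\fftV \in \hSemF{[\ASet]\hV, \sigma}$ then either $\ftV = \varepsilon$ (trivially in the set), or $\ftV = \act\ftV'$ with $\act \notin \ASet$ (again trivial), or $\ftV = \act\ftV'$ with $\act \in \ASet$, in which case $\ftV'\fftV \in \hSemF{\hV, \sigma}$ and the induction hypothesis yields $\ftV' \in \hSemF{\hV, \sigma}$, so $\ftV \in \hSemF{[\ASet]\hV, \sigma}$. For $\hSuf{\ASet}\hV$, if $\ftV \in \hSemF{\hSuf{\ASet}\hV, \sigma}$ then $\ftV = \act\ftV'$ with $\act \in \ASet$ and $\ftV' \in \hSemF{\hV, \sigma}$, and the induction hypothesis gives $\ftV'\fftV \in \hSemF{\hV, \sigma}$ for every $\fftV$, hence $\ftV\fftV = \act(\ftV'\fftV) \in \hSemF{\hSuf{\ASet}\hV, \sigma}$.

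The main obstacle will be the fixpoint cases, which I plan to dispatch via Knaster--Tarski style closure arguments using the monotonicity of the functional $\mathcal{F}(\FSet) := \hSemF{\hV, \sigma[X\mapsto\FSet]}$. For $\max X.\hV \in \ftmuS$, let $\FSet^{\ast} = \hSemF{\max X.\hV, \sigma}$ and let $\overline{\FSet^{\ast}}$ denote its prefix-closure. Since $\FSet^{\ast}$ is a fixed point, monotonicity gives $\FSet^{\ast} = \mathcal{F}(\FSet^{\ast}) \subseteq \mathcal{F}(\overline{\FSet^{\ast}})$, and the structural IH applied with the prefix-closed valuation $\sigma[X\mapsto\overline{\FSet^{\ast}}]$ shows that $\mathcal{F}(\overline{\FSet^{\ast}})$ is prefix-closed; being a prefix-closed superset of $\FSet^{\ast}$ it must also contain $\overline{\FSet^{\ast}}$, so $\overline{\FSet^{\ast}} \subseteq \mathcal{F}(\overline{\FSet^{\ast}})$ is a post-fixpoint, and by maximality of $\FSet^{\ast}$ we conclude $\overline{\FSet^{\ast}} = \FSet^{\ast}$. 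Dually, for $\min X.\hV \in \ftmuC$, I will take $\FSet^{-}$ to be the \emph{largest} extension-closed subset of $\FSet^{\ast} = \hSemF{\min X.\hV, \sigma}$ (which exists because arbitrary unions of extension-closed sets are extension-closed): monotonicity gives $\mathcal{F}(\FSet^{-}) \subseteq \FSet^{\ast}$, the IH yields that $\mathcal{F}(\FSet^{-})$ is extension-closed, and together these force $\mathcal{F}(\FSet^{-}) \subseteq \FSet^{-}$, so $\FSet^{-}$ is a pre-fixpoint and minimality of $\FSet^{\ast}$ forces $\FSet^{\ast} \subseteq \FSet^{-}$. The subtle point to verify in the dual argument is that the ``largest extension-closed subset'' construction really yields an extension-closed set, which reduces to a short direct check using associativity of concatenation.
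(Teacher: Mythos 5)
Your proposal is correct, but it takes a genuinely different route from the paper's proof. The paper argues by induction on the combined measure $ms(\hV) + |\ftV|$ (where $ms$ counts the distance from the root of the formula to the nearest modality or constant), treats only closed formulae, and dispatches the fixpoint case by unfolding: $\hSemF{\max X.\hVV} = \hSemF{\hVV[\max X.\hVV/X]}$, with the standing guardedness assumption guaranteeing that $ms$ strictly decreases under unfolding while $|\ftV|$ decreases whenever a modality is crossed. Your argument instead strengthens the statement to open formulae under prefix-closed (resp.\ extension-closed) valuations and handles the fixpoints by Knaster--Tarski closure arguments: showing that the prefix-closure of the greatest fixpoint is itself a post-fixpoint, and that the largest extension-closed subset of the least fixpoint is a pre-fixpoint. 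Both fixpoint arguments check out (including the small verification that the union of extension-closed sets is extension-closed, and that the relevant closure/interior operations interact correctly with monotonicity of $\mathcal{F}$), and the modal and boolean cases agree with the paper's in substance. What your approach buys is independence from the guardedness assumption and from any syntactic measure --- it is a purely lattice-theoretic argument that would survive in an unguarded presentation of the logic; what the paper's approach buys is brevity, since the measure $ms$ and guardedness are already set up and the unfolding identity makes the fixpoint case a one-liner.
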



Interestingly, for \ftmuS\ and \ftmuC\ over finfinite traces, we can use the same monitor synthesis function that we used to generate monitors for \maxHML and \minHML  over infinite traces.

\begin{prop} \label[lem]{lem:sound-and-compl-finfinite}
For every $\hV\in \ftmuS$, $\hSyn{\hV}$ is sound and violation-complete for $\hV$ over finfinite traces.
For every $\hV\in \ftmuC$, $\hSyn{\hV}$ is sound and satisfaction-complete for $\hV$ over finfinite traces. \qed
\end{prop}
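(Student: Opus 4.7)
The approach is to split the proof into an infinite-trace part, which follows from existing results, and a finite-trace part, which requires new work. Since $\ftmuS \subseteq \ltmuS$ and $\ftmuC \subseteq \ltmuC$ (the only difference is that one modality is removed in each case), and since \Cref{lem:infinite-to-finfinite-semantics} gives $\hSemF{\hV}\cap\Trc = \hSemL{\hV}$, soundness and partial completeness on all $\tV \in \Trc$ follow immediately from \Cref{lem:sound-and-rej-compl}. The remaining task is to establish, for every $\ftV \in \Act^*$, the corresponding statement on \emph{finite} traces; the \ftmuC argument is dual to that for \ftmuS, so I focus on the latter.

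I would prove the finite-trace case by structural induction on $\hV \in \ftmuS$. The constant cases $\hTru$ and $\hFls$ are immediate from $\hSyn{\hTru}=\yes$, $\hSyn{\hFls}=\no$ and the definition of $\hSemF{-}$. The Boolean cases invoke \Cref{cor:parallel-and-monitors}, for which I first establish that $\hSyn{\hV}$ is reactive on \ftmuS by a straightforward induction along the lines of \Cref{prop:partial-reactive}; with reactivity in hand, the corollary matches conjunctive (resp.\ disjunctive) monitor decomposition with intersection (resp.\ union) in the semantics of $\wedge$ (resp.\ $\vee$). The modal case $\hNec{\ASet}\hV'$ is verified by cases on $\ftV$: if $\ftV=\varepsilon$ both sides agree vacuously; if $\ftV=\acta\ftV'$ with $\acta\notin\ASet$, the monitor takes the $\uprf{\ASet}{\yes}$ branch and $\acta\ftV' \in \hSemF{\hNec{\ASet}\hV'}$; if $\acta\in\ASet$, the monitor transitions to $\hSyn{\hV'}$ and the induction hypothesis on $\ftV'$ and $\hV'$ discharges both soundness and violation-completeness.

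The main obstacle will be the recursion case $\hMaxY{\hV'}$. I plan to use a Knaster--Tarski-style finite approximation, defining recursion-free unfoldings $\hV^{(0)} = \hTru$ and $\hV^{(n+1)} = \hV'[Y \mapsto \hV^{(n)}]$, so that $\hSemF{\hMaxY{\hV'}} = \bigcap_{n} \hSemF{\hV^{(n)}}$. Operationally, $\rec{x}{\hSyn{\hV'}}$ can only apply the mRec rule finitely many times along any finite monitored computation ending in $\no$, so any rejection of a finite trace $\ftV$ by $\rec{x}{\hSyn{\hV'}}$ corresponds to a rejection by $\hSyn{\hV^{(n)}}$ for some $n$; conversely, violation-completeness follows because if $\ftV \notin \hSemF{\hMaxY{\hV'}}$ then some approximant $\hV^{(n)}$ is violated by $\ftV$, and the structurally inductive cases ensure $\hSyn{\hV^{(n)}}$ rejects $\ftV$, which a bisimulation-style argument on the syntactic unfoldings lifts back to $\rec{x}{\hSyn{\hV'}}$. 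The \ftmuC case is dispatched symmetrically using $\hV^{(0)} = \hFls$, least-fixpoint approximation from below, disjunctive parallelism, and acceptance in place of rejection.
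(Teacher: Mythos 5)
Your overall decomposition (reduce the infinite-trace part to \Cref{lem:sound-and-rej-compl} via \Cref{lem:infinite-to-finfinite-semantics}, then do fresh work on finite traces) matches the paper, and your constant, Boolean and modal cases are fine. The gap is in the fixpoint case, which is where the real content of the finite-trace argument lies. First, your approximants $\hV^{(n+1)} = \hV'[Y \mapsto \hV^{(n)}]$ are \emph{not} recursion-free when $\hV'$ contains nested fixpoints (e.g.\ $\max X.[\acta](\max Z.[\actb]Z \wedge X)$), and $\hV^{(n)}$ is in no sense structurally smaller than $\hMaxX{\hV'}$ — it is typically much larger — so the step ``the structurally inductive cases ensure $\hSyn{\hV^{(n)}}$ rejects $\ftV$'' is not licensed by your induction hypothesis; as stated the induction is not well-founded. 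Second, the identity $\hSemF{\hMaxX{\hV'}} = \bigcap_n \hSemF{\hV^{(n)}}$ (convergence of the greatest fixpoint at $\omega$) is a genuine co-continuity claim that you assert without proof, and the correspondence between the approximant monitors $\hSyn{\hV^{(n)}}$ (where the variable is replaced by $\yes$) and the actual synthesised monitor $\rec{x}{\hSyn{\hV'}}$ (whose unfolding substitutes $\rec{x}{\hSyn{\hV'}}$ for $x$, not $\yes$) is exactly the delicate part you wave away as ``a bisimulation-style argument''; it needs guardedness and a bound relating the number of unfoldings to the length of $\ftV$.

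The paper avoids all of this with a lexicographic induction on $(|\fftV|,\, ms(\hV))$, where $ms$ measures the distance from the root of the formula to the first modality or constant. The case $\hV = \hMaxX{\hVV}$ is handled by a \emph{single} unfolding: $\hV$ is equivalent to $\hVV[\hV/X]$, and by guardedness $ms(\hVV[\hV/X]) < ms(\hV)$, while $\hSyn{\hVV[\hV/X]} = \hSyn{\hVV}[\hSyn{\hV}/x]$ is reached from $\rec{x}{\hSyn{\hVV}}$ in one $\tau$-step; the modal case strictly decreases the trace length and resets $ms$. The empty-trace base case is also handled uniformly rather than per connective: by prefix-closedness of $\ftmuS$ denotations (\Cref{lem:finfinite-suffix-closed}), $\varepsilon \notin \hSemF{\hV}$ forces $\hV$ to be equivalent to $\hFls$, hence propositionally inconsistent, hence $\hSyn{\hV} \wtraS{} \no$. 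You should either adopt this measure-based unfolding or, if you want to keep the approximant route, state and prove the co-continuity lemma, generalise your induction hypothesis to open formulae under environments, and supply the lemma relating $\rec{x}{\hSyn{\hV'}}$ to $\hSyn{\hV^{(n)}}$ on traces of bounded length.
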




To facilitate our comparisons between the finfinite and the branching-time interpretations of \recHML, we define the notion of trace-processes.

\begin{defn}
Process $p$ is a \emph{trace-process} when $p \traS{\actu} q$ and $p \traS{\actu'} q'$ implies $\actu=\actu'$, $q = q'$ and $q$ is a trace-process.
A (trace) process $p$ \emph{represents} a finfinite $\fftV$ when
$p \wtraS{\ftV}$ iff $\ftV$ is a prefix of $\fftV$.
\qedd
\end{defn}

For a trace $\fftV$, we can assume the existence of a trace-process $p_\fftV$ that represents $\fftV$:
 one can construct such a trace-process $p_\fftV$ whereby its states are
 all the prefixes of $\fftV$ and its transitions are
 those
 of the form $\ftV \traS{\act} \ftV\act$, where  $\ftV$ and $\ftV\act$ are prefixes of $\fftV$.

 \begin{rem}
 	We note that, unlike for monitors, we have \emph{not} assumed any specific syntax for processes, which
 	can
 	come from an arbitrary LTS.
 	This makes it possible to represent every finfinite trace, even one without a finite representation, by a process.
 	\qedd
 \end{rem}

\begin{exmp}
A process representing $a b$ is the three-state process $p$, with \emph{just} the transitions  $p \traS{a} p'$ and $p' \traS{b} \nil$.
A process representing $a^\omega$ is $q$ that has exactly one transition, $q \traS{a} q$.
\qedd
\end{exmp}

\Cref{lem:branching-finf-match} shows that, for \UHML, (finfinite) traces and trace-processes are different descriptions of the same model.


\begin{lem}\label{lem:branching-finf-match}
If $p$ represents $\fftV$, then
$\fftV \in \hSemF{\hV}$\ iff\ $p \in \hSemB{\hV}$.
\qed
\end{lem}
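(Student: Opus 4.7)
The plan is to prove the lemma by structural induction on $\hV$, strengthened to handle open formulae under \emph{compatible} valuations. I will call a pair of valuations $\rho$ (branching-time, ranging over subsets of $\Proc$) and $\sigma$ (finfinite, ranging over subsets of $\fTrc$) compatible on $X$ when, for every trace-process $q$ and finfinite $\fftV'$ such that $q$ represents $\fftV'$, the equivalence $q \in \rho(X) \iff \fftV' \in \sigma(X)$ holds. Compatibility on a set of variables is compatibility on each of them. The strengthened claim then reads: for every $\hV \in \UHML$, every $\rho,\sigma$ compatible on the free variables of $\hV$, and every trace-process $p$ representing $\fftV$, one has $p \in \hSemB{\hV,\rho} \iff \fftV \in \hSemF{\hV,\sigma}$. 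The lemma is the closed-formula instance.

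The constants $\hTru,\hFls$ and the variable case follow directly from the definitions and from compatibility. The Boolean connectives are immediate from the induction hypothesis, since $\cap,\cup$ commute with the biconditional. For the existential modality $\hSuf{\ASet}{\hV}$ I will use the key structural fact about trace-processes representing $\fftV$: $p \wtraS{\acta} q$ holds for some $q$ iff $\fftV = \acta\fftVV$ for some $\fftVV$, and in that case any such target $q$ itself represents $\fftVV$. Hence the existential condition in $\hSemB{\cdot}$ reduces to checking the unique successor, which matches the existential condition in $\hSemF{\cdot}$ via the induction hypothesis on $\hV$. The universal $\hNec{\ASet}{\hV}$ is dual, with the added remark that both semantics are vacuously satisfied when $\fftV = \varepsilon$ (no external continuations on either side).

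The main obstacle, and the only nontrivial step, is the fixpoint case $\hMin{X}{\hV}$ (and, dually, $\hMax{X}{\hV}$). The difficulty is that the semantic clauses quantify over \emph{all} $\PSet \subseteq \Proc$ and \emph{all} $\FSet \subseteq \fTrc$, whereas compatibility only constrains the two semantics on trace-processes. To bridge the gap, I will use the Knaster--Tarski ordinal approximation:
\begin{align*}
\hSemB{\hV,\rho}^{0} &= \emptyset, &
\hSemB{\hV,\rho}^{\alpha+1} &= \hSemB{\hV,\rho[X \mapsto \hSemB{\hV,\rho}^{\alpha}]}, &
\hSemB{\hV,\rho}^{\lambda} &= \bigcup_{\beta<\lambda}\hSemB{\hV,\rho}^{\beta},
\end{align*}
and symmetrically for $\hSemF{-}$, with $\hSemB{\hMin{X}{\hV},\rho} = \bigcup_\alpha \hSemB{\hV,\rho}^\alpha$ and likewise for $\hSemF{-}$. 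By transfinite induction on $\alpha$ I will show that the updated valuations $\rho[X \mapsto \hSemB{\hV,\rho}^\alpha]$ and $\sigma[X \mapsto \hSemF{\hV,\sigma}^\alpha]$ remain compatible: the successor step invokes the outer structural induction hypothesis applied to $\hV$, and the limit step follows because compatibility is evidently preserved under taking unions on both sides. This yields $p \in \hSemB{\hV,\rho}^\alpha \iff \fftV \in \hSemF{\hV,\sigma}^\alpha$ for all $\alpha$, and hence the equivalence at the least fixpoint. The greatest fixpoint is handled symmetrically, starting the approximation from $\Proc$ and $\fTrc$ respectively and taking intersections at successor and limit stages.
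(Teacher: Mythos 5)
Your proof is correct, and its skeleton---structural induction on $\hV$, using the determinism of trace-processes to collapse the modal cases---matches the paper's. Where you genuinely diverge is in the handling of valuations and, above all, of the fixpoints. The paper proves the two implications separately, each time translating one valuation into the other ($\sigma_B(X)$ collects the trace-processes representing members of $\sigma(X)$, and $\rho_L(X)$ the traces represented by members of $\rho(X)$); it then treats $\max X.\hVV$ by converting a post-fixpoint witness $T$ into a set of processes $T'$, and $\min X.\hVV$ by a more delicate argument that restricts an arbitrary pre-fixpoint to its trace-processes and appeals to monotonicity. You instead keep both valuations and relate them by a symmetric compatibility predicate, then unwind both fixpoints via transfinite Knaster--Tarski approximants, checking that compatibility is preserved at successor stages (by the outer induction hypothesis) and at limits (unions and intersections preserve compatibility pointwise). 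Your route buys uniformity---one argument covers both directions and both fixpoints---at the price of ordinal iteration and of (implicitly) invoking monotonicity of the semantic functionals to identify each fixpoint with the limit of its approximant chain; the paper's route stays inside the defining intersection/union over pre-/post-fixpoints but pays with four asymmetric cases. Two small points you should make explicit: the monotonicity just mentioned (harmless here, since the logic is negation-free), and the fact that the $\acta$-derivative of a trace-process is not literally unique (there may be several along a $\tau$-chain) but determinism forces them all to represent the same residual trace, which is what actually lets the existential and universal modalities collapse as you claim.
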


Coincidentally, all formulae that are monitorable for violation or satisfaction over a finfinite semantics are equivalent to $\SHML$ or $\CHML$ formulae \resp from \Cref{def:shml-chml}.
Since $\ftmuS$ and $\ftmuC$  syntactically subsume $\SHML$ and $\CHML$ \resp they are maximally monitorable fragments of $\UHML$ when interpreted over finfinite traces.



\begin{prop}\label{prop:finfinite-maximality}
If $\hV \in \UHML$ has a
sound and violation-complete (\resp satisfaction-complete)
reactive parallel
monitor over finfinite traces, then there is some
$\hVV \in \SHML$ (\resp $\hVV \in \CHML$) that is equivalent to $\hV$ over finfinite traces.
\end{prop}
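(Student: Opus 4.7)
The plan is to reduce this maximality statement to the analogous branching-time maximality result (Theorem~\ref{thm:branching-max}), using the trace-process correspondence of Lemma~\ref{lem:branching-finf-match}. I will focus on the violation-complete case; the satisfaction-complete case for $\CHML$ is entirely dual.

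\emph{Step 1 (regularization).} Let $\mV$ be a reactive parallel monitor that is sound and violation-complete for $\hV$ over finfinite traces. By Proposition~\ref{prop:extended-mon-to-reg-mon} applied to $\mV$, I obtain a regular single-verdict monitor $\mV_R$ with $\mV_R \mreq \mV$. Since rejection-equivalence is a statement about finite-prefix $\no$-reachability, and rejection of any finfinite trace depends only on a $\no$-transition along a finite prefix, $\mV_R$ rejects exactly the finfinite traces that $\mV$ does, and is therefore itself sound and violation-complete for $\hV$ over finfinite traces (soundness for acceptance is vacuous as $\mV_R$ carries no $\yes$ verdict). In particular $\mV_R$ is a regular rejection monitor in the sense of Section~\ref{sec:branchingtime}.

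\emph{Step 2 (branching-time formula).} By Theorem~\ref{thm:branching-max} applied to $\mV_R$, there exists $\hVV \in \SHML$ such that $\mV_R$ is sound and violation-complete for $\hVV$ under branching-time semantics. My goal is to show $\hSemF{\hV} = \hSemF{\hVV}$.

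\emph{Step 3 (bridging).} Fix an arbitrary finfinite trace $\fftV$ and its trace-process $p_\fftV$. Using Lemmas~\ref{lem:unzipping} and~\ref{lem:zipping} together with verdict persistence (Lemma~\ref{lem:ver-persistence}), I would establish the key equivalence: $\mV_R$ rejects $p_\fftV$ in the branching-time sense of Definition~\ref{def:acc-n-rej} iff some prefix $\ftr$ of $\fftV$ satisfies $\mV_R \wtraS{\ftr} \no$, iff $\mV_R$ rejects $\fftV$ viewed as a finfinite trace (using the natural extension of Definition~\ref{def:acc-n-rej} to finfinite traces). Chaining this with the branching-time soundness and violation-completeness of $\mV_R$ for $\hVV$, the finfinite soundness and violation-completeness of $\mV_R$ for $\hV$, and Lemma~\ref{lem:branching-finf-match} at $p_\fftV$ relating $\hSemB{\hVV}$ and $\hSemF{\hVV}$, one obtains the desired equivalence $\fftV \in \hSemF{\hV} \iff \fftV \in \hSemF{\hVV}$.

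The main obstacle is the bridging step: making precise the correspondence between $\mV_R$ rejecting the trace-process $p_\fftV$ and $\mV_R$ rejecting $\fftV$. The forward direction uses the unzipping lemma: if $\sys{\mV_R}{p_\fftV} \wtraS{\ftr} \sys{\no}{p'}$, then by verdict persistence the monitor must have genuinely transitioned $\mV_R \wtraS{\ftr} \no$ rather than being prematurely aborted to $\stp$, and $p_\fftV \wtraS{\ftr} p'$ forces $\ftr$ to be a prefix of $\fftV$. The reverse direction is an instance of the zipping lemma, since any prefix $\ftr$ of $\fftV$ satisfies $p_\fftV \wtraS{\ftr} p'$ for some $p'$. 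Once this correspondence is in hand, the maximality statement follows at once, and the $\CHML$ case is obtained by exchanging the roles of $\yes$ and $\no$ and using the dual parts of Theorem~\ref{thm:branching-max} and Proposition~\ref{prop:extended-mon-to-reg-mon}.
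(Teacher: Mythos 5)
Your proposal is correct and follows essentially the same route as the paper's proof: regularize the parallel monitor via \Cref{prop:extended-mon-to-reg-mon}, pass to a single-verdict rejection monitor, invoke the branching-time maximality result (\Cref{thm:branching-max}) to obtain an \SHML formula, and bridge the two semantics through trace-processes and \Cref{lem:branching-finf-match}. Your Step 3 merely spells out, via zipping/unzipping and verdict persistence, the correspondence between rejecting $p_\fftV$ and rejecting $\fftV$ that the paper states more informally.
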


\begin{proof}
Let $\mV$ be a sound and violation-complete
reactive parallel monitor for $\hV$ over finfinite traces.
By  \Cref{prop:extended-mon-to-reg-mon}, there is a regular monitor $\mVV$ that is verdict-equivalent to $\mV$,
so it is also sound and violation-complete for $\hV$ over finfinite traces.
We can then obtain a single-verdict monitor $\mVV'$ from $\mVV$ that is rejection equivalent to it by swapping any \yes\ with \stp.
$\mVV'$ is thus still sound and violation-complete for $\hV$ over finfinite traces.
From \Cref{thm:branching-max}
there is a formula $\hVV \in \SHML$, such that $\mVV$ is sound and violation-complete for
$\hVV$ over all LTSs, including the LTS of trace-processes.
Since $\mVV$ is sound and violation-complete for $\hVV$ on trace processes,
$p_\fftV \in \hSemB{\hVV}$ is equivalent to claiming that $\mVV$ does not reject any trace that $p_\fftV$ can produce.
However, this is equivalent to saying that $\mVV$ does not reject $\fftV$ which, by violation-completeness, is equivalent to $\fftV \in \hSemF{\hV}$.
By \Cref{lem:branching-finf-match},
$\fftV \in \hSemF{\hVV}$ iff $p_\fftV \in \hSemB{\hVV}$,
and the proof is complete.
The case for a satisfaction-complete monitor is similar.
\end{proof}

\subsection{Monitorable Formulae Across Semantics}\label{sec:compare}

So far, we have identified a different pair of partial-monitorable syntactic fragments for each of the three semantics that we have presented in this paper.
However, as the reader may suspect from \Cref{prop:finfinite-maximality}, we may be able to further restrict the syntax that we allow for our formulae, and still be able to express all monitorable formulae, and therefore, an identified maximally monitorable fragment of \UHML may be equally expressive as a syntactic fragment of its own.

Here we show that \emph{for each of the semantics that we have presented,}
\ie over infinite traces, finfinite traces, and processes,
$\SHML$ and $\CHML$ are equally expressive
as the corresponding identified partially monitorable fragment. That is to say, $\SHML$ is as expressive as $\ftmuS$ over finfinite traces and as expressive as $\maxHML$ over infinite traces --- and
dually, $\CHML$ is as expressive as $\ftmuC$ over finfinite traces and as expressive as $\minHML$ over infinite traces.

\begin{prop}\label{prop:finfinite-max}
If $\hV \in \ftmuS$ (\resp $\hV \in \ftmuC$), 
then there is some
$\hVV \in \SHML$ (\resp $\hVV \in \CHML$) that is equivalent to $\hV$ over finfinite traces.
\qed
\end{prop}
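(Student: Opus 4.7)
The plan is to bootstrap from the monitorability results already established in the paper: given a formula in $\ftmuS$ (resp. $\ftmuC$), show that the existing monitor synthesis produces a reactive parallel monitor that is sound and violation-complete (resp. satisfaction-complete) for it over finfinite traces, and then invoke \Cref{prop:finfinite-maximality} to obtain a semantically equivalent $\SHML$ (resp. $\CHML$) formula.

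First I would note the obvious syntactic inclusion $\ftmuS \subseteq \ltmuS$ (and $\ftmuC \subseteq \ltmuC$): the $\ftmuS$ grammar is exactly $\ltmuS$ restricted to formulae that use only the box modality $[\ASet]$, with all other constructs (truth, falsity, disjunction, conjunction, greatest fixpoints, variables) also present in $\ltmuS$. Consequently, for $\hV \in \ftmuS$ the monitor synthesis $\hSyn{\hV}$ from \Cref{def:formula-to-monitor-part} is well defined, and by \Cref{prop:partial-reactive} it produces a reactive parallel monitor. Reactivity is a property of the monitor itself and does not depend on the chosen semantics of the formula.

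Next, by \Cref{lem:sound-and-compl-finfinite}, $\hSyn{\hV}$ is sound and violation-complete for $\hV$ over finfinite traces. We now have exactly the hypotheses of \Cref{prop:finfinite-maximality}: a reactive parallel monitor that is sound and violation-complete for $\hV$ over finfinite traces. Applying that proposition yields some $\hVV \in \SHML$ such that $\hSemF{\hV} = \hSemF{\hVV}$, which is the desired conclusion. The argument for $\ftmuC$ is entirely symmetric, replacing violation-completeness with satisfaction-completeness, $\ltmuS$ with $\ltmuC$, and $\SHML$ with $\CHML$ throughout.

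There is essentially no obstacle here: the real content has already been packaged into \Cref{prop:partial-reactive}, \Cref{lem:sound-and-compl-finfinite}, and \Cref{prop:finfinite-maximality}. The only thing to verify, which is immediate from inspecting the grammars, is that the extended monitor synthesis $\hSyn{-}$ applies uniformly to the $\ftmuS$ and $\ftmuC$ fragments and that the reactivity and partial-completeness lemmas transfer verbatim since their proofs make no use of formulae outside $\ftmuS$ (resp. $\ftmuC$). Thus the proposition reduces to a two-line composition of prior results.
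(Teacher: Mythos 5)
Your proof is correct and follows essentially the same route as the paper: obtain a sound and violation-complete (\resp satisfaction-complete) monitor over finfinite traces via \Cref{lem:sound-and-compl-finfinite} and then invoke \Cref{prop:finfinite-maximality}. The only addition is your explicit check of reactivity via \Cref{prop:partial-reactive}, which the paper leaves implicit but which is indeed needed to meet the hypotheses of \Cref{prop:finfinite-maximality}.
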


\begin{prop}\label{prop:infinite-max}
If $\hV \in \maxHML$ (\resp $\hV \in \minHML$), 
then there is some
$\hVV \in \SHML$ (\resp $\hVV \in \CHML$) that is equivalent to $\hV$ over infinite traces.
\qed
\end{prop}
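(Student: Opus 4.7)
The plan is to bridge the two fragments by using a single monitor that simultaneously witnesses sound and violation-complete (respectively, satisfaction-complete) monitoring in both the linear-time and the branching-time settings. I treat only the $\maxHML/\SHML$ case explicitly; the $\minHML/\CHML$ case is completely dual. Let $\hV \in \maxHML$. First, I would take $\mV = \hSyn{\hV}$, the reactive parallel monitor produced by \Cref{def:formula-to-monitor-part}, which by \Cref{lem:sound-and-rej-compl} is sound and violation-complete for $\hV$ over infinite traces. Next, I would apply \Cref{prop:extended-mon-to-reg-mon} to obtain a regular monitor $\mV'$ that is verdict-equivalent to $\mV$; $\mV'$ is still sound and violation-complete for $\hV$ over infinite traces. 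Finally, syntactically replacing every occurrence of $\yes$ in $\mV'$ by $\stp$ yields a regular rejection monitor $\mV''$ that rejects exactly the traces $\mV'$ rejects. Since its acceptance clause is now vacuous and its rejection behaviour is unchanged, $\mV''$ remains sound and violation-complete for $\hV$ over infinite traces.

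Now I would apply the maximality result for branching time, \Cref{thm:branching-max}, to $\mV''$: there exists $\hVV \in \SHML$ such that $\mV''$ is sound and violation-complete for $\hVV$ over all processes. The remaining task is to show that $\hSemL{\hV} = \hSemL{\hVV}$. Fix an infinite trace $\tV \in \Trc$ and let $p_{\tV}$ denote the trace-process representing $\tV$ (as in \Cref{sec:mon-finfinite}). I would argue the following chain of equivalences:
\begin{align*}
 \tV \in \hSemL{\hV}
 \ \Longleftrightarrow\ & \mV'' \text{ does not reject } \tV
 \ \Longleftrightarrow\ \mV'' \text{ does not reject } p_{\tV}
 \\
 \ \Longleftrightarrow\ & p_{\tV} \in \hSemB{\hVV}
 \ \Longleftrightarrow\ \tV \in \hSemF{\hVV}
 \ \Longleftrightarrow\ \tV \in \hSemL{\hVV}.
\end{align*}
The first equivalence is soundness and violation-completeness of $\mV''$ for $\hV$ over infinite traces; the second uses that $p_{\tV}$ produces exactly the single trace $\tV$, so monitor rejection on $p_{\tV}$ coincides with rejection on $\tV$; the third is soundness and violation-completeness of $\mV''$ for $\hVV$ over processes; the fourth is \Cref{lem:branching-finf-match} applied to the trace-process $p_{\tV}$; and the last one follows from \Cref{lem:infinite-to-finfinite-semantics} since $\tV \in \Trc$.

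The dual argument for $\minHML$ and $\CHML$ proceeds identically, replacing ``violation-complete'' by ``satisfaction-complete'', swapping the roles of $\no$ and $\yes$ in the reduction to a single-verdict acceptance monitor, and applying the satisfaction-complete clause of \Cref{thm:branching-max}. The main subtlety I anticipate is step three of the first paragraph: one has to check that swapping $\yes$ for $\stp$ does preserve reactivity and regularity and still yields a sound, violation-complete monitor for $\hV$ over infinite traces. The proof is straightforward once one notes that \Cref{lem:ver-persistence} and the definition of rejection depend only on the transitions into $\no$, which are unaffected. With this observation in place, the rest of the argument is a clean composition of \Cref{prop:extended-mon-to-reg-mon}, \Cref{thm:branching-max}, and the two bridging lemmas \Cref{lem:branching-finf-match} and \Cref{lem:infinite-to-finfinite-semantics}.
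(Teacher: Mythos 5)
Your proof is correct, but it takes a different route from the one the paper gives for this proposition. The paper's displayed proof is syntactic-first: it observes that over infinite traces $\hSuf{\ASet}{\hV}$ is equivalent to $\hAnd{\hNec{\coASet}{\hFls}}{\hNec{\ASet}{\hV}}$, uses this rewriting to turn a \maxHML formula into a \ftmuS formula with the same $\hSemL{-}$ denotation, and then invokes the finfinite maximality results (\Cref{prop:finfinite-max,prop:finfinite-maximality}) together with \Cref{lem:infinite-to-finfinite-semantics}. That detour through the finfinite domain is forced there because $\hSyn{\hV}$ for a diamond-containing $\hV$ is violation-complete only over infinite traces, not finfinite ones, so the rewriting must eliminate the diamonds first. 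You instead stay in the infinite-trace world throughout: synthesise the monitor, regularise it, strip it to a single-verdict rejection monitor, apply \Cref{thm:branching-max}, and transfer the semantics back through trace-processes via \Cref{lem:branching-finf-match,lem:infinite-to-finfinite-semantics}. This is precisely the content of the auxiliary lemma the paper states in the appendix (\Cref{prop:infinite-maximality}) but does not actually invoke in its proof of this proposition; combined with \Cref{lem:sound-and-rej-compl} it yields the result directly. Your route is more uniform with \Cref{prop:linear-partial-maximality} and \Cref{prop:finfinite-maximality} and avoids the ad hoc rewriting; the paper's route buys an explicit syntactic translation of \maxHML into \ftmuS and reuses the finfinite result as a black box. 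The point you flag as a subtlety---that the $\yes\mapsto\stp$ substitution cannot create new $\no$-reaching runs and leaves the existing ones untouched, so soundness and violation-completeness survive by \Cref{prop:part-complete-semantic-equivalence}---is exactly the move the paper itself makes inside the proof of \Cref{prop:finfinite-maximality}, so it is safe.
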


The proofs of both of these propositions proceed by considering a sound and partially complete monitor for a formula in \ftmuS, \maxHML or their duals, and using the formula synthesis to find an \SHML formula that is equivalent to the original formula
on finfinite and infinite traces respectively.
The full proofs can be found in \Cref{sec:branching-appendix}.



\smallskip

The import of \Cref{prop:finfinite-max,prop:infinite-max} is that, in settings where \Act is finite, logical fragment $\SHML\cup\CHML$ can be used to \emph{syntactically} characterise the class of monitorable properties (for sound and partial-completeness) for all three interpretations (\ie traces, finfinite traces and processes).
In spite of this felicitous (and somewhat surprising) result, one should nevertheless stress that their interpretation is still \emph{semantically different}.
In fact, the synthesised monitors presented here in \Cref{def:mon-synt-complete,def:formula-to-monitor-part} yield \emph{behaviourally different} monitors to those obtained by the synthesis in \citeMac{FraAI:17:FMSD}.
Moreover, they can \emph{not} be used interchangeably: \Cref{def:mon-synt-complete,def:formula-to-monitor-part} produce multi-verdict monitors, even when applied to the syntactic fragment $\SHML\cup\CHML$, which makes them immediately unsound for a branching-time interpretation.
In \Cref{prop:shml-and-forall-traces}, we can however show that the monitors synthesised by the procedure of \citeMac{FraAI:17:FMSD} for the \SHML fragment qualify also as correct monitors for the finfinite interpretation of the logic.
This means that the tools developed in \citeMac{Attard:17:Book} and \citeMac{Attard:16:RV}, which are based on the branching-time synthesis of \citeMac{FraAI:17:FMSD}, can be used out of the box to monitor for finfinite properties.


\begin{prop}\label{prop:shml-and-forall-traces}
For a process $p$ and a formula $\hV\in\SHML$, the following are equivalent:
$(i)$ $p\in \hSemB{\hV}$  
and
$(ii)$ If $p$ produces a finfinite trace $\fftV$, then $\fftV \in \hSemF{\hV}$ 
.
\qed
\end{prop}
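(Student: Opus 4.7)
The plan is to generalize to open formulae and then argue by structural induction on $\hV$. Call valuations $\rho : \LVars \to \powset{\Proc}$ and $\sigma : \LVars \to \powset{\fTrc}$ \emph{compatible} when, for every variable $Y$ and process $q$, we have $q \in \rho(Y)$ iff every finfinite trace produced by $q$ lies in $\sigma(Y)$. The claim to prove by induction is that, for each $\hV \in \SHML$ and each pair of compatible $\rho,\sigma$, a process $p$ belongs to $\hSemB{\hV,\rho}$ if and only if every finfinite trace produced by $p$ lies in $\hSemF{\hV,\sigma}$. The original proposition then drops out by specialising to closed $\hV$, for which the valuations are irrelevant and trivially compatible.

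The base cases are direct: both semantics of $\hTru$ are the full universe, those of $\hFls$ are empty (and every process produces $\varepsilon \notin \emptyset$, so both sides of the biconditional fail); the variable case is exactly the compatibility clause. Conjunction is routine, because $\land$ is intersection in both semantics and the quantifier ``every finfinite trace of $p$'' distributes over intersection. For the necessity case $\hNec{A}{\hVV}$, the forward direction observes that if $p$ produces $a\fftVV$ with $a \in A$, then some $q$ with $p \wtraS{a} q$ produces $\fftVV$; as $p \in \hSemB{\hNec{A}{\hVV},\rho}$ forces $q \in \hSemB{\hVV,\rho}$, the inductive hypothesis gives $\fftVV \in \hSemF{\hVV,\sigma}$ and hence $a\fftVV \in \hSemF{\hNec{A}{\hVV},\sigma}$. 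The backward direction is dual: whenever $p \wtraS{a} q$ with $a \in A$ and $q$ produces a finfinite $\fftVV$, the trace $a\fftVV$ is produced by $p$, so it lies in $\hSemF{\hNec{A}{\hVV},\sigma}$, forcing $\fftVV \in \hSemF{\hVV,\sigma}$; since this holds for every finfinite trace of $q$, the inductive hypothesis yields $q \in \hSemB{\hVV,\rho}$.

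The main obstacle is the greatest fixed point $\max X.\hVV$. My plan is a transfinite Knaster--Tarski approximation from above: set $P_0 = \Proc$, $T_0 = \fTrc$, $P_{\alpha+1} = \hSemB{\hVV,\rho[X \mapsto P_\alpha]}$, $T_{\alpha+1} = \hSemF{\hVV,\sigma[X \mapsto T_\alpha]}$, and take intersections at limit ordinals. Since every variable of $\hVV$ occurs positively in $\SHML$, both functionals are monotone, the chains are descending, and they stabilise at some ordinal $\gamma$ at which $P_\gamma = \hSemB{\max X.\hVV,\rho}$ and $T_\gamma = \hSemF{\max X.\hVV,\sigma}$. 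A transfinite induction on $\alpha$ shows that $\rho[X \mapsto P_\alpha]$ and $\sigma[X \mapsto T_\alpha]$ remain compatible throughout: the successor step is the outer inductive hypothesis applied to the proper subformula $\hVV$, while at a limit $\lambda$ the universal quantifiers over $\alpha < \lambda$ and over the finfinite traces of $p$ commute, so $p \in \bigcap_{\alpha<\lambda} P_\alpha$ holds iff every finfinite trace of $p$ lies in $\bigcap_{\alpha<\lambda} T_\alpha$. Reading off compatibility at stage $\gamma$ closes the fixed-point case and completes the induction.
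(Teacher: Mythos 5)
Your proof is correct, but it takes a genuinely different route from the paper's. The paper derives the proposition from its monitor machinery: it invokes the existence of a sound and violation-complete regular monitor for every $\SHML$ formula (\Cref{thm:branching-monitorability}) together with the zipping/unzipping lemmas to prove a subsumption lemma (if $p\in\hSemB{\hV}$ and $p$ produces every finfinite trace of $p'$, then $p'\in\hSemB{\hV}$), and then combines this with the correspondence between trace-processes and finfinite traces (\Cref{lem:branching-finf-match}). You instead give a purely semantic structural induction, strengthened to open formulae via the ``compatible valuations'' invariant and closing the $\hMax{X}{\hVV}$ case by a transfinite Knaster--Tarski iteration from the top. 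The key points all check out: the absence of disjunction and of $\hSuf{\ASet}{}$ in $\SHML$ is exactly what lets the universal quantifier over produced traces distribute through $\wedge$ and through the box modality, the absence of negation gives the monotonicity needed for the descending approximants to converge to the greatest fixed point, and the limit step is a legitimate interchange of two universal quantifiers. What each approach buys: the paper's argument is short because it reuses already-established results, but it inherits a dependency on the whole monitor-synthesis development (and, as written, its second direction leans on the subsumption lemma somewhat loosely); your argument is longer and needs the transfinite apparatus for the fixed point, but it is self-contained, monitor-free, and in fact reproves the content of \Cref{lem:branching-finf-match} and \Cref{lem:subsume} for $\SHML$ as a by-product, so it could serve as an independent consistency check on that part of the development.
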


\section{Conclusion}
\label{sec:conclusions}



We have presented a systematic study of the monitorability of \UHML, a highly expressive specification logic: we have developed results relating to its linear-time interpretation and established correspondences with previous monitorability results for the branching-time interpretation of the logic.
This allows us to use existing RV tools (developed for branching-time) to monitor linear-time \UHML properties.
To our knowledge, this is the first study of monitorability that spans across the  linear-time/branching-time spectrum.
Moreover, although monitorability has been studied extensively for linear-time specifications, we are unaware of any maximality results such as those presented in
 \Cref{prop:hml-maximal,thm:stronger-HML-maximality,prop:linear-partial-maximality,prop:finfinite-maximality,prop:finfinite-max,prop:infinite-max}.

Concretely, in \Cref{sec:monitors}, we introduce parallel monitors and we gave a way to construct a deterministic regular monitor (introduced in \citeMac{AceAFI:2017:CIAA} and \citeMac{FraAI:17:FMSD}) from a parallel one, establishing that the two monitoring frameworks are equivalent with respect to the properties they can monitor.
%
%
%
In \Cref{sec:ltmu-monitorability}, we give a natural monitor synthesis from
three fragments of \UHML
to parallel monitors, and establish that the resulting monitors satisfy the requirement of
soundness and  a version of the requirement for completeness.
For complete monitors, we identify the requirement of tightness and show how one can satisfy it.
In \Cref{sec:branchingtime}, we see how these findings apply in the intermediate finfinite setting, and
we establish that \SHML has the same expressive power as the respective maximal monitorable fragments of \UHML
in the finfinite and infinite-trace settings.

\paragraph{Multiple Ways to Monitor}
These results show that there is more than one way to
monitor for a property \hV that is monitorable for violation.
If \hV is already in $\SHML$, or if we want to make the effort to write the property as an \SHML formula, we can use the monitor synthesis in \citeMac{FraAI:17:FMSD} to synthesise a single-verdict, sound and violation-complete regular monitor for \hV that will work in all (infinite-trace, finfinite, and branching-time) semantics.
Alternatively, if we are interested in the linear-time domain (for either infinite or finfinite traces), we can synthesise a parallel monitor with the synthesis function from \Cref{def:formula-to-monitor-part}, hoping that the possibly dual-verdict monitor may occasionally report the satisfaction of the formula, providing us with more information.
In the latter case, we may choose to deploy the parallel monitor as is, or use the construction from \Cref{prop:extended-mon-to-reg-mon} to obtain a verdict-equivalent regular monitor.
An advantage of using the parallel monitor is that
it can be  significantly more concise than a regular monitor, at least at the early stages of the computation.
An advantage of using a regular monitor is that it is guaranteed to be finite state (\Cref{prop:reg-mon-fin-state}).
Furthermore, regular monitors can be determinized and then minimized (see \Cref{prop:determinization} and \citeMac{determinization}), making their implementation more straightforward.
Therefore, one can think of \ltmuS as a high-level specification language for properties that are monitorable for violation in the linear-time setting.
From \ltmuS, we can generate parallel monitors that can then be compiled into (deterministic, minimized) regular monitors that can be implemented and deployed to monitor the system.
On the other hand, \SHML can be thought of as a lower-level language that is closer to regular monitors and can allow for better fine-tuning of the monitor's behaviour, and avoids the cost of constructing a regular monitor.
%
%
%
%

\paragraph{Future Work}
We are interested in a detailed taxonomy and comparison of different notions of monitorability, and this work is a first step in that direction. Additionally, in \citeMac{AceAFI:18:FOSSACS}, the authors examine how the set of monitorable properties can be extended by encoding additional information into the trace that describes a system execution. Noticeably, their framework allows for the interaction of multiple verification methods, and this is an approach we would like to explore for our own framework.

\label{sec:related}

\paragraph{Related Work on Runtime Verification}

RV has been
applied
in the computer-aided
verification of complex programs and models written in a variety of
high-level languages. For example, RV has been
used
in the
verification of properties written in an extension of PSL and SVA over
SystemC models in~\citeMac{TabakovRV12}
(but see \citeMac{PnueliZaks:06:FM} and references in~\citeMac{TabakovRV12} for earlier work on monitor synthesis for PSL).
Like we do in this paper, Tabakov \etal
argue for the algorithmic generation of ``correct''
monitors from properties. However, their focus
is on an experimental study of
monitor-generation
procedures
that
offer the best performance in terms of runtime overhead at simulation
time. In order to do so, they employ the CHIMP tool~\cite{DuttaVT13} to
generate monitors (represented as DFAs) from LTL properties using a
number of workflows that take into account various options regarding
state minimization, alphabet representation, alphabet minimization and
the representation of the transition function of the monitor.

\paragraph{Diagnosability}

It is worth mentioning here work on \emph{diagnosability}, \eg  \citeMac{sampath1995diagnosability,bertrand:hal-01088117}.
Diagnosability is a similar notion to the one of monitorability.
What is different is that, for diagnosability one knows a model of the system,
and then, by observing the visible events of a system run,
infers whether an unobservable fault event has occurred during this run.
A further goal is to diagnose the kind of fault event that has occurred.
Typically, the detection and diagnosis of fault events is performed by a diagnoser, which is synthesised from the model of the system.
Although RV and diagnosability appear, at first glance, to work in different ways,  one can view diagnosability as the runtime monitoring of a set of trace-properties (the occurrence of different types of fault events), using
information about the system's branching structure, in a framework that considers unobservable events --- as in \citeMac{FraAI:17:FMSD,AceAFI:17:FSTTCS}.
We feel that there is significant potential in addressing the two areas in a more unified manner. This is an interesting avenue for future research.

\paragraph{Related Work on Specification logics}

$\UHML$ is a
multi-modal
variant of the $\mu$-calculus that
is interpreted over
edge-labelled LTSs rather than node-labelled ones. The distinction is mainly a question of presentation; how to go between the two types of models is discussed by
\citeMac{DeNicola90}.
The $\mu$-calculus itself is a logic which subsumes CTL, CTL*, LTL, as well as more exotic variations thereof. Its links to automata theory are well established~\cite{wilke2001} and can be used in the implementation of verification tools.  This makes the $\mu$-calculus well suited for foundational research on verification, even though logics with more intuitive syntax may appeal to practitioners.
$\UHML$ over traces
is similar to the
linear-time $\mu$-calculus.
The main difference is that in the linear-time $\mu$-calculus, which is usually interpreted over infinite traces, it is common to have only one successor-modality: the difference between $\hNec{\act}$ and $\hSuf{\act}$ only manifests itself over finite traces. Here we have chosen to keep the two modalities, to enable the syntactic comparison between branching-time and linear-time monitorability. From an implementation point of view, $\UHML$ formulae, like those in the linear time $\mu$-calculus, can be represented by weak automata~\cite{lange05weak}, which benefit from lower-complexity decision procedures than the more general parity automata, which are necessary to capture the expressiveness of the $\mu$-calculus in a branching-time setting.
Note, however, that, as shown in~\citeMac{MarkeyS2006}, the $\mu$-calculus model-checking problem over paths of the form $s^\omega$ is, surprisingly, as hard as the general model-checking problem for that logic.

In the context of RV, many-valued logics
\cite{barringer2004rule,d2005lola,drusinsky2003monitoring,BauerLeuckerSchallhart:10:LandC} have also emerged as a way to reconcile the infinitary semantics of, for example, LTL specifications with the finite observations of a monitor. Our concept of monitor can itself also be understood as a logic with three-valued semantics, consisting of accepted traces, rejected traces and traces on which the monitor remains indecisive. Conversely, these many-valued logics can also be seen as describing monitor behaviour, albeit without an operational semantics as in our case.
 Our parallel monitors are reminiscent of alternating automata. The use of alternating automata for RV is not new:
 \citeMac{finkbeiner2004checking} propose this for the verification of their finite-trace semantics for LTL. The main difference in their approach is that their semantics is not suffix closed: for whether ``infinitely often $a$'' holds in a finite trace according to their semantics will depend on whether $a$ holds in the last position. In contrast, our verdicts are irrevocable, so a sound monitor for ``infinitely often $a$'' in our setting
 will never reach a verdict.

\paragraph{Related Work on Monitorability}

The question of exactly which specifications can be verified at runtime is very natural in the RV context. It is perhaps surprising that there is no consensus on what exactly it means for a specification to be monitorable.

The class $\Pi^0_1$ of the arithmetic hierarchy --- the class of co-recursively enumerable safety properties --- was proposed as the set of monitorable properties by
\citeMac{viswanathan2004foundations}. It seems that our notion of partial monitorability matches well with this classical definition. In this sense, partial monitorability could be seen as an operational account of Viswanathan and Kim's monitorability.
On the other hand,
\citeMac{PnueliZaks:06:FM} and
\citeMac{bauer2011runtime}
 propose a definition of monitorability that includes more properties: roughly, they call a property monitorable if every prefix has a finite continuation of which either all infinite continuations are in the property, or none is.
This means that a monitor, although it does not necessarily ever reach a verdict, can never give up hope of reaching a verdict.
%
Our definitions of monitorability appear to be stronger. For example, specifications such as ``never $\mathtt{error}$ and eventually $\mathtt{success}$'' is monitorable according to
\citeMac{PnueliZaks:06:FM} and
\citeMac{bauer2011runtime} but not according to our notions of monitorability
and partial monitorability.

Diekert and Leucker have studied monitorability in a topological setting  in~\citeMac{Diekert2014}, where they show that all $\omega$-regular languages that  are deterministic and co-deterministic are monitorable. Using their topological framework, they also establish that some deterministic liveness properties, such as  ``infinitely many a's'', cannot be written as a countable union of monitorable languages.
\citeMac{DiekertMW15}
discuss monitor constructions for deterministic $\omega$-regular languages. They isolate a collection of  deterministic $\omega$-regular languages that properly includes all the languages that are deterministic and codeterministic, and for which one can construct accepting monitors.
These classical definitions of monitorability are independent of how a monitor might be implemented. Conversely, implementations of LTL monitors~\cite{giannakopoulou2001runtime,havelund2002synthesizing} do not seem to refer to the concept of monitorability at all.
In line with previous work~\cite{FraAI:17:FMSD}, our operational approach bridges this gap by defining \textit{what} can be monitored explicitly in terms of \textit{how} specifications are monitored.
%


\begin{acks}                            
  The authors thank Orna Kupferman and Moshe Vardi for clarifying remarks about guarded formulae and pointers to the literature.
  We thank the anonymous reviewers for their thorough reading of our paper and their insightful comments.

  This research was partially supported by the  projects  ``TheoFoMon: Theoretical Foundations for Monitorability'' (grant number: ~\grantnum{Icelandic Research Fund}{163406-051}) and ``Epistemic Logic for Distributed Runtime Monitoring'' (grant number: ~\grantnum{Icelandic Research Fund}{184940-051}) of the Icelandic Research Fund, by the BMBF project ``Aramis II'' (project number:~\grantnum{BMBF}{01IS160253}) and the EPSRC project ``Solving parity games in theory and practice'' (project number:~\grantnum{EPSRC}{EP/P020909/1}).


\end{acks}

\bibliography{refs}

\newpage
%
\appendix
\part*{Appendix}


\section{Properties of Monitors}
We first present the omitted proof from \Cref{sec:monitors}

\subsection{Regular Monitor Properties}
\label{sec:monitors-appendix}

\Cref{def:mon-state} attempts to characterise the set of reachable states for a monitor.
\Cref{def:mon-measure} maps every monitor to a finite positive integer, which can be used to put an upperbound on the size of our state-space approximation of \Cref{def:mon-state}; see \Cref{lem:state-space-charact-bounded}.

\begin{defn}[Monitor State Space Characterisation] \label{def:mon-state} \quad
  \begin{align*}
    \states{\mV} & \deftxt
    \begin{cases}
      \sset{\mV} & \text{ if }\mV = \vV \text{ or }\mV = x\\
      \sset{\mV} \cup \states{\mVV} & \text{ if } \mV = \prf{\acta}{\mVV}\\
      \sset{\mV} \cup  \statesSkip{\mV_1} \cup \statesSkip{\mV_2} & \text{ if } \mV = \ch{\mV_1}{\mV_2}\\
      \states{\mVV}\subS{\rec{x}{\mVV}}{x} & \text{ if } \mV = \rec{x}{\mVV}
    \end{cases}\\
    \statesSkip{\mV} & \deftxt
    \begin{cases}
      \sset{\mV} & \text{ if }\mV = \vV \text{ or }\mV = x\\
       \states{\mVV} & \text{ if } \mV = \prf{\acta}{\mVV}\\
        \statesSkip{\mV_1} \cup \statesSkip{\mV_2} & \text{ if } \mV = \ch{\mV_1}{\mV_2}\\
      \states{\mVV}\subS{\rec{x}{\mVV}}{x} & \text{ if } \mV = \rec{x}{\mVV}
    \end{cases}\\
  \end{align*}
\end{defn}

\begin{defn}[Skip Reachability] \label{def:mon-skip-reach}
\begin{math}
  \reachSkip{\mV} \deftxt \setof{\reach{\mVV}}{ \mV \traS{\actu} \mVV}
\end{math}
\end{defn}

\begin{defn}[Monitor Measure] \label{def:mon-measure} \quad
  \begin{align*}
    \size{\mV} & \deftxt
    \begin{cases}
      1 & \text{ if }\mV = \vV \text{ or }\mV = x\\
      1 + \size{\mVV} & \text{ if } \mV = \prf{\acta}{\mVV}\\
      1 + \size{\mV_1} + \size{\mV_2} & \text{ if } \mV = \ch{\mV_1}{\mV_2}\\
      \size{\mVV} & \text{ if } \mV = \rec{x}{\mVV}
    \end{cases}
  \end{align*}
\end{defn}

\begin{lem} \label{lem:state-space-charact-bounded}
  \begin{math}
    \forall \mV\in\REMon \cdot |\states{\mV}| \leq \size{\mV}
  \end{math}
\end{lem}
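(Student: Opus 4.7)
The plan is to proceed by structural induction on $\mV$, but since the defining clauses of $\states{\cdot}$ for the choice constructor $\ch{\mV_1}{\mV_2}$ refer to $\statesSkip{\mV_1}$ and $\statesSkip{\mV_2}$, a direct induction on $|\states{\mV}| \leq \size{\mV}$ alone will not go through. I will therefore strengthen the induction hypothesis and prove the conjunction
\[
  |\states{\mV}| \leq \size{\mV} \qquad\text{and}\qquad |\statesSkip{\mV}| \leq \size{\mV}
\]
simultaneously for every $\mV \in \REMon$.

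For the base cases $\mV = \vV$ and $\mV = x$, both $\states{\mV}$ and $\statesSkip{\mV}$ equal the singleton $\sset{\mV}$ by \Cref{def:mon-state}, while $\size{\mV} = 1$ by \Cref{def:mon-measure}, so both bounds hold with equality. For $\mV = \prf{\acta}{\mVV}$, we use $|\states{\mV}| = |\sset{\mV} \cup \states{\mVV}| \leq 1 + |\states{\mVV}|$ together with the inductive hypothesis to obtain $|\states{\mV}| \leq 1 + \size{\mVV} = \size{\mV}$, and $|\statesSkip{\mV}| = |\states{\mVV}| \leq \size{\mVV} \leq \size{\mV}$. For $\mV = \ch{\mV_1}{\mV_2}$ we similarly bound $|\states{\mV}| \leq 1 + |\statesSkip{\mV_1}| + |\statesSkip{\mV_2}|$ and apply the (second conjunct of the) IH to each summand, yielding $|\states{\mV}| \leq 1 + \size{\mV_1} + \size{\mV_2} = \size{\mV}$; the bound on $|\statesSkip{\mV}|$ follows the same way but without the $+1$.

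The only remaining, and slightly delicate, case is recursion $\mV = \rec{x}{\mVV}$. Here $\states{\mV} = \statesSkip{\mV} = \states{\mVV}\subS{\rec{x}{\mVV}}{x}$; since applying the substitution $\subS{\rec{x}{\mVV}}{x}$ pointwise to a set of monitors cannot increase its cardinality, we have $|\states{\mV}| \leq |\states{\mVV}|$. By the inductive hypothesis applied to the strict subterm $\mVV$, this is at most $\size{\mVV}$, which by \Cref{def:mon-measure} equals $\size{\rec{x}{\mVV}} = \size{\mV}$, closing both conjuncts. The main conceptual obstacle is simply the need for the joint induction: once one notices that the choice clause of $\states{\cdot}$ unfolds through $\statesSkip{\cdot}$ rather than $\states{\cdot}$ on its immediate subterms, the rest is straightforward arithmetic on the definitions; \Cref{prop:reg-mon-fin-state} then follows immediately because $\reach{\mV} \subseteq \states{\mV}$ (which is the content of a companion lemma used to justify \Cref{def:mon-state} as an over-approximation of the reachable states).
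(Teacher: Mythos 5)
Your proof is correct and is simply a fully worked-out version of the structural induction that the paper itself invokes (the paper gives no details beyond ``by structural induction on $\mV$''); strengthening the induction hypothesis to bound $|\statesSkip{\mV}|$ simultaneously is the natural way to make the choice case go through, and your treatment of the substitution in the recursion case (a pointwise image cannot increase cardinality) is exactly right.
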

\begin{proof}
  By structural induction on \mV.
\end{proof}

\Cref{lem:state-space-charact-reach} shows that the state-space approximation of \Cref{def:mon-state} characterises precisely the actual state-space of a monitor.  It however relies on a few technical lemmata.


\begin{lem} \label{lem:open-vs-closed-mon-single-reduction}
   For all (possibly open) $\mV \in \REMon$:
  \begin{enumerate}
    \item $\mV \subS{\mVV}{x} \traS{\actu} \mV'\ $ implies $\ (\exists \mV'' \cdot \mV \traS{\actu} \mV'' \text{ and } \mV''\subS{\mVV}{x} = \mV')$ or  $(x \text{ is a summand of } \mV \text { and } \mVV \traS{\actu} \mV')$.
    \item $\mV \traS{\actu} \mV'\ $ implies $\ (\exists \mV'' \cdot \mV \subS{\mVV}{x} \traS{\actu} \mV'' \text{ and } \mV'\subS{\mVV}{x} = \mV'')$
  \end{enumerate}
\end{lem}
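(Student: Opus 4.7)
The plan is to prove both parts simultaneously by structural induction on $\mV$, since the two items are the forward and backward direction of the same substitution-preservation property, and in each case of the induction the same rule of \Cref{fig:monit-instr} applies on both sides. I would assume, without loss of generality, that the bound variables of $\mV$ are chosen so as to avoid capture of the free variables of $\mVV$ (standard renaming up to alpha-equivalence).

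For the base cases, when $\mV = \vV$, the only transitions on either side are the self-loops from rule \rtit{mVer}, and since $\vV\subS{\mVV}{x} = \vV$ both items hold with $\mV'' = \vV$. When $\mV = y$ is a variable, either $y = x$ or $y \neq x$. If $y \neq x$, then $\mV\subS{\mVV}{x} = y$ has no transitions, so item~1 is vacuous, and $\mV = y$ itself has no transitions, so item~2 is vacuous. If $y = x$, then $\mV\subS{\mVV}{x} = \mVV$, and for item~1 any transition of $\mVV$ falls under the second disjunct since $x$ is (trivially) a summand of $x$; for item~2 there is again nothing to show. For $\mV = \prf{\acta}{\mV_1}$, the only rule applicable is \rtit{mAct}, and the result follows directly from $\prf{\acta}{\mV_1}\subS{\mVV}{x} = \prf{\acta}{\mV_1\subS{\mVV}{x}}$.

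The inductive cases are $\mV = \ch{\mV_1}{\mV_2}$ and $\mV = \rec{y}{\mV_1}$. For the sum, the transition must come from either $\mV_1\subS{\mVV}{x}$ or $\mV_2\subS{\mVV}{x}$ via \rtit{mSelL}/\rtit{mSelR}; applying the induction hypothesis to the corresponding $\mV_i$ yields either a residual $\mV_i''$ (which lifts via the same rule to give $\mV \traS{\actu} \mV_i''$), or the fact that $x$ is a summand of $\mV_i$ together with $\mVV \traS{\actu} \mV'$ — in which case $x$ is also a summand of $\ch{\mV_1}{\mV_2}$, so the second disjunct of item~1 is preserved. Item~2 for the sum is the straightforward converse. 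For $\mV = \rec{y}{\mV_1}$ with $y \neq x$, the only transition is by \rtit{mRec}: $\rec{y}{\mV_1}\subS{\mVV}{x} = \rec{y}{\mV_1\subS{\mVV}{x}} \traS{\tau} (\mV_1\subS{\mVV}{x})\subS{\rec{y}{\mV_1\subS{\mVV}{x}}}{y}$, and in parallel $\rec{y}{\mV_1} \traS{\tau} \mV_1\subS{\rec{y}{\mV_1}}{y}$.

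The main obstacle will be the recursion case, which reduces to the standard substitution lemma $(\mV_1\subS{\rec{y}{\mV_1}}{y})\subS{\mVV}{x} = (\mV_1\subS{\mVV}{x})\subS{\rec{y}{\mV_1\subS{\mVV}{x}}}{y}$. This is a routine fact once $y \neq x$ and $y$ does not occur free in $\mVV$ (guaranteed by the alpha-renaming convention), and it can be itself proved by a short structural induction on $\mV_1$; its verification is the only place where the two substitutions interact non-trivially. With this lemma in hand, both item~1 and item~2 in the recursion case follow by matching the targets of the two \rtit{mRec} transitions. All other cases reduce to direct applications of the induction hypothesis.
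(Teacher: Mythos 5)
Your proposal is correct and follows essentially the same route as the paper's proof: a structural induction on $\mV$ in which the sum case propagates the ``$x$ is a summand'' disjunct and the recursion case reduces to the substitution-commutation identity $(\mV_1\subS{\rec{y}{\mV_1}}{y})\subS{\mVV}{x} = (\mV_1\subS{\mVV}{x})\subS{\rec{y}{(\mV_1\subS{\mVV}{x})}}{y}$, which the paper likewise invokes at exactly that point. Your explicit mention of the capture-avoiding alpha-renaming convention and of the auxiliary induction needed for that identity only makes explicit what the paper leaves implicit.
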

\begin{proof}
  Both clauses are proved by structural induction on \mV. The main cases for the first clause are:
  \begin{description}
    \item[Case \mV = y:] Since  $y \subS{\mVV}{x} \traS{\actu} \mV'$, it must be the case that $y=x$ (hence a summand of \mV) and $y \subS{\mVV}{x} = \mVV$ from which we obtain $\mVV \traS{\actu} \mV'$ as required.
    \item[Case \mV = \ch{\mV_1}{\mV_2}:] We have $\mV \subS{\mVV}{x} = \ch{(\mV_1 \subS{\mVV}{x})}{(\mV_2 \subS{\mVV}{x})}$, meaning that the transition was inferred using \rtit{eSel}.
    Without loss of generality, assume that $\mV_1 \subS{\mVV}{x} \traS{\actu} \mV'$.
    By the I.H. we obtain the following subcases:
    \begin{itemize}
      \item Either $(\exists \mV'' \cdot \mV_1 \traS{\actu} \mV'' \text{ and } \mV''\subS{\mVV}{x} = \mV')$.
      By \rtit{eSel} we deduce $\ch{\mV_1}{\mV_2} \traS{\actu} \mV''$ as required.
      \item Or $x$ is a summand of $\mV_1$ and $\mVV \traS{\actu} \mV'$, which is precisely the required result since $x$ is then also a summand of $\ch{\mV_1}{\mV_2}$.
    \end{itemize}
    \item[Case \mV = \rec{y}{\mV_1}:]  By \rtit{mRec}, we have $\mV \subS{\mVV}{x} = \rec{y}{(\mV_1\subS{\mVV}{x})} \traS{\actt} \mV_1\subS{\mVV}{x}\subS{\rec{y}{(\mV_1\subS{\mVV}{x})}}{y}$.
    Again, by \rtit{mRec}, the required transition is
    $\rec{y}{\mV_1} \traS{\actt} \mV_1\subS{\rec{y}{\mV_1}}{y}$, since we can infer the equality
    $\mV_1\subS{\rec{y}{\mV_1}}{y}\subS{\mVV}{x} = \mV_1\subS{\mVV}{x}\subS{\rec{y}{(\mV_1\subS{\mVV}{x})}}{y}$.
  \end{description}
  For the second clause, the main cases are:
  \begin{description}
    \item[Case \mV= y:] The implication holds trivially since variables do not transition.
    \item[Case \mV = \ch{\mV_1}{\mV_2}:] By \rtit{mSel}, we have either $\mV_1 \traS{\actu} \mV'$ or $\mV_2 \traS{\actu} \mV'$.
    Without loss of generality, pick $\mV_1 \traS{\actu} \mV'$.
    By the I.H. we have $\mV_1\subS{\mVV}{x} \traS{\actu} \mV''$ for some $\mV''$ where
    $\mV'\subS{\mVV}{x} = \mV''$.
    Again, using \rtit{mSel} we deduce $\mV\subS{\mVV}{x} = \ch{(\mV_1 \subS{\mVV}{x})}{(\mV_2 \subS{\mVV}{x})} \traS{\actu} \mV''$ as required.
    \item[Case \mV = \rec{y}{\mV_1}:]
    By \rtit{mRec}, we have $\rec{y}{\mV_1} \traS{\actt} \mV_1\subS{\rec{y}{\mV_1}}{y}$.
     The required transition for $\mV\subS{\mVV}{x} = \rec{y}{(\mV_1\subS{\mVV}{x})}$ is again obtained by the rule \rtit{mRec} in the form of
      $\rec{y}{(\mV_1\subS{\mVV}{x})} \traS{\actt} (\mV_1\subS{\mVV}{x}) \subS{\rec{y}{(\mV_1\subS{\mVV}{x})}}{y}$, since
     $(\mV_1\subS{\mVV}{x}) \subS{\rec{y}{(\mV_1\subS{\mVV}{x})}}{y} = (\mV_1\subS{\rec{y}{\mV_1}}{y})\subS{\mVV}{x}$. \qedhere
  \end{description}
\end{proof}

\begin{lem}
  \label{lem:open-vs-closed-mon-multi-reduction}
 For all (possibly open) $\mV \in \REMon$:
\begin{enumerate}
  \item $\mV \subS{\mVV}{x} \etraS{\etV} \mV'$ implies that
  \begin{itemize}
    \item Either $\exists \mV'' \cdot \mV \etraS{\etV} \mV'' \text{ and } \mV''\subS{\mVV}{x} = \mV'$
    \item Or $\exists \etV_1,\etV_2,\mV'' \cdot \etV=\etV_{1}\etV_{2}$   and $\etV_2 \neq \epsilon$  and  $\mV\etraS{\etV}\mV''$ where $x$  is a summand of \mV  and $\mVV\etraS{\etV_2} \mV'$.
  \end{itemize}
  \item $\mV \etraS{\etV} \mV'$ implies $\exists \mV'' \cdot (\mV \subS{\mVV}{x} \etraS{\etV} \mV'' \text{ and } \mV'\subS{\mVV}{x} = \mV'')$
\end{enumerate}
\end{lem}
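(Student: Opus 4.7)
Both clauses will be proved by induction on the length of the explicit trace $\etV$, leveraging the single-step counterparts established in the preceding lemma. The two directions require slightly different bookkeeping, so I will sketch them separately.

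For clause (2), the base case $\etV = \epsilon$ is immediate by taking $\mV'' = \mV\subS{\mVV}{x}$. For the inductive step, decompose the derivation as $\mV \etraS{\etV_0} \mV_0 \traS{\actu} \mV'$ with $\etV = \etV_0\actu$; by the induction hypothesis, $\mV\subS{\mVV}{x} \etraS{\etV_0} \mV_0\subS{\mVV}{x}$, and applying clause (2) of the single-step lemma to the final transition yields $\mV_0\subS{\mVV}{x} \traS{\actu} \mV'\subS{\mVV}{x}$, giving the desired $\etV$-labelled derivation.

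For clause (1), the base case is again trivial: take the first disjunct with $\mV'' = \mV$. For the inductive step, decompose $\mV\subS{\mVV}{x} \etraS{\etV_0} \mV_1 \traS{\actu} \mV'$ and apply the induction hypothesis to the $\etV_0$ prefix. If the IH returns the \emph{second} disjunct, the reduction has already ``switched into'' $\mVV$, so we simply extend the tail derivation in $\mVV$ by the final $\actu$-step and keep the second disjunct (with $\etV_2$ extended by $\actu$). If instead the IH returns the \emph{first} disjunct, there exists $\mV_0$ such that $\mV \etraS{\etV_0} \mV_0$ and $\mV_0\subS{\mVV}{x} = \mV_1$; apply clause (1) of the single-step lemma to the final transition $\mV_0\subS{\mVV}{x} \traS{\actu} \mV'$, which yields two subcases: either $\mV_0 \traS{\actu} \mV_0'$ with $\mV_0'\subS{\mVV}{x} = \mV'$ (giving the first disjunct of the conclusion), or $x$ is a summand of $\mV_0$ and $\mVV \traS{\actu} \mV'$ (giving the second disjunct with $\etV_1 = \etV_0$ and $\etV_2 = \actu$).

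The main subtlety will be managing the case where the substitution-witness variable $x$ becomes active mid-trace: once a derivation from $\mV\subS{\mVV}{x}$ has entered a copy of $\mVV$ via the summand condition, further transitions must be tracked inside $\mVV$ rather than inside $\mV$, which is exactly what the second disjunct records with the split $\etV = \etV_1\etV_2$. Carefully verifying that, in the ``switch'' case, the whole residual trace $\etV_2$ (now including the newly appended $\actu$) is produced from $\mVV$ alone, will be the one place where one must double-check that no further interaction with the surrounding $\mV$-context can occur; this is guaranteed because after the single-step switch the resulting state is $\mV'$, which contains no free occurrences of $x$ in the relevant context.
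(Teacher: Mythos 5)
Your proposal is correct and follows essentially the same route as the paper: both clauses by induction on the length of $\etV$, with the inductive step discharged by the corresponding clause of the single-step substitution lemma, and with the second disjunct of clause (1) tracking the point at which the derivation switches into $\mVV$. The only (immaterial) difference is that you peel the last action off $\etV$ and apply the induction hypothesis to the prefix, whereas the paper peels the first action and applies the single-step lemma before the induction hypothesis; the resulting case analyses are mirror images of one another.
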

\begin{proof}
  The proof for the second clause is by a straightforward induction on the structure of \etV, where the inductive step relies  \Cref{lem:open-vs-closed-mon-single-reduction}(2).

  The proof for the first clause is also by induction on \etV, but it is slightly more involved.
  \begin{description}
    \item[Case $\etV=\epsilon$:]
    Immediate, since $\mV'=\mV\subS{\mVV}{x}$ and  $\mV \etraS{\epsilon} \mV$.
    \item[Case $\etV=\actu\etVV$:]
    We thus have
    \begin{equation}
      \label{eq:open-vs-closed-mon-multi-reduction}
      \mV \subS{\mVV}{x} \traS{\actu} \mV'' \etraS{\etVV} \mV' \text{ for some intermediary monitor } \mV''.
    \end{equation}
    By $\mV \subS{\mVV}{x} \traS{\actu} \mV''$ and \Cref{lem:open-vs-closed-mon-single-reduction}(1) we have to consider either of two cases:
    \begin{enumerate}
      \item Either there exists some $\mV'''$ such that $\mV \traS{\actu} \mV''' $ and $\mV'''\subS{\mVV}{x} = \mV''$.
      By $\mV''=\mV'''\subS{\mVV}{x} \etraS{\etVV} \mV'$ from \Cref{eq:open-vs-closed-mon-multi-reduction} and the I.H. we have two possibilities:
      \begin{enumerate}
        \item Either there exists some $\mV''''$ such that $\mV''' \etraS{\etVV} \mV'''' $ where $\mV''''\subS{\mVV}{x} = \mV'$.  By prefixing this with $\mV \traS{\actu} \mV''' $  gives us $\mV \etraS{\etV}\mV''''$ as required.
        \item Or $\etVV=\etVV_1 \etVV_2$ for some $\etVV_1 $ and $\etVV_2\neq \epsilon$ where $\mV''' \etraS{\etVV_1}\mV''''$ for some $\mV''''$ with a summand $x$ and $\mVV \etraS{\etVV_2} \mV'$.
        Again, by contacting $\mV \traS{\actu} \mV''' $ with  $\mV''' \etraS{\etVV_1}\mV''''$  as $\mV' \etraS{\actu\etVV_1}\mV''''$ gives us the result required.
      \end{enumerate}
      \item Or $x$ is a summand of $\mV$ and  $\mVV \traS{\actu} \mV''$.  Using $\mV'' \etraS{\etVV} \mV'$ of \Cref{eq:open-vs-closed-mon-multi-reduction}, this would satisfy the second clause with $\etV_1=\epsilon$ and $\etV_2=\actu\etVV$  since $\mV \etraS{\epsilon} \mV$. \qedhere
    \end{enumerate}
  \end{description}
\end{proof}

We prove the required property for \emph{closed} monitors.  Note that closed monitors are closed \wrt transitions.

\begin{lem}  \label{lem:state-space-charact-reach}
  \begin{math}
    \forall \mV \in \REMon \cdot \fv(\mV){=}\emptyset \text{ implies } \reach{\mV} = \states{\mV} \text{ and } \reachSkip{\mV} = \statesSkip{\mV}
  \end{math}
\end{lem}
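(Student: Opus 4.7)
The plan is to prove both equalities $\reach{\mV} = \states{\mV}$ and $\reachSkip{\mV} = \statesSkip{\mV}$ simultaneously by structural induction on $\mV$, relying on the substitution lemmas (\Cref{lem:open-vs-closed-mon-single-reduction,lem:open-vs-closed-mon-multi-reduction}) to handle the recursion case.  The base cases are straightforward: $\mV = x$ is excluded by closedness, and for $\mV = \vV$ rule \rtit{mVer} gives $\vV \traS{\acta} \vV$ for every $\acta$, so $\reach{\vV} = \{\vV\} = \states{\vV}$, and trivially $\reachSkip{\vV} = \{\vV\} = \statesSkip{\vV}$.

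For $\mV = \prf{\acta}{\mVV}$, by \rtit{mAct} the only one-step derivative is $\mVV$ itself, so $\reach{\mV} = \{\mV\} \cup \reach{\mVV}$ and $\reachSkip{\mV} = \reach{\mVV}$; the induction hypothesis applied to $\mVV$ (also closed) then matches the clauses for $\states{\prf{\acta}{\mVV}}$ and $\statesSkip{\prf{\acta}{\mVV}}$ exactly.  For $\mV = \ch{\mV_1}{\mV_2}$, rules \rtit{mSelL} and \rtit{mSelR} partition one-step derivatives into those coming from $\mV_1$ and those from $\mV_2$, so $\reachSkip{\mV} = \reachSkip{\mV_1}\cup\reachSkip{\mV_2}$, and $\reach{\mV} = \{\mV\}\cup\reachSkip{\mV_1}\cup\reachSkip{\mV_2}$; applying the induction hypothesis to both summands (both closed) closes these cases against the definitions.

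The main obstacle is the recursion case $\mV = \rec{x}{\mVV}$.  The only initial transition is $\mV \traS{\actt} \mVV\subS{\rec{x}{\mVV}}{x}$ by \rtit{mRec}, but $\mVV$ is typically open, so the structural induction hypothesis does not apply to it directly.  The key observation is that $\mVV\subS{\rec{x}{\mVV}}{x}$ is closed and, by \Cref{lem:open-vs-closed-mon-multi-reduction}(1), any trace $\mVV\subS{\rec{x}{\mVV}}{x}\etraS{\etV}\mV'$ either has the form $\mV' = \mVV''\subS{\rec{x}{\mVV}}{x}$ for some $\mVV''$ with $\mVV \etraS{\etV} \mVV''$ (so the variable $x$ is not traversed), or it factors as a trace $\mVV \etraS{\etV_1}\mVV''$ to a monitor containing $x$ as a summand, followed by $\mV\etraS{\etV_2}\mV'$, i.e.\ a new unfolding of the recursion.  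Thus every reachable state has the form $\mVV''\subS{\rec{x}{\mVV}}{x}$ for some $\mVV'' \in \states{\mVV}$, matching $\states{\rec{x}{\mVV}} = \states{\mVV}\subS{\rec{x}{\mVV}}{x}$.  Converting ``$\mVV'' \in \states{\mVV}$'' into reachability requires an auxiliary claim, extending the induction hypothesis to open bodies $\mVV$, that $\states{\mVV}$ agrees with the set of monitors reachable from $\mVV$ modulo the closing substitution $\subS{\rec{x}{\mVV}}{x}$; this is proved by a straightforward structural sub-induction on $\mVV$, using \Cref{lem:open-vs-closed-mon-multi-reduction}(2) for the forward direction and (1) for the backward direction.

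For the forward inclusion $\states{\mV}\subseteq\reach{\mV}$, each syntactic state added by the defining clauses is witnessed by an explicit reduction sequence: the prefix, choice, and verdict cases are immediate; for recursion, any $\mVV'\subS{\rec{x}{\mVV}}{x}$ with $\mVV' \in \states{\mVV}$ is reached by first doing $\mV\traS{\actt}\mVV\subS{\rec{x}{\mVV}}{x}$ and then applying \Cref{lem:open-vs-closed-mon-multi-reduction}(2) to lift the sub-inductive reachability of $\mVV'$ from $\mVV$ to reachability of $\mVV'\subS{\rec{x}{\mVV}}{x}$ from $\mVV\subS{\rec{x}{\mVV}}{x}$.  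Finally, $\reachSkip{\mV} = \statesSkip{\mV}$ for $\mV = \rec{x}{\mVV}$ follows because every transition from $\mV$ is a $\tau$-transition to $\mVV\subS{\rec{x}{\mVV}}{x}$, whose reachable set we have just characterised.
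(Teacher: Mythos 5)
Your proof is correct and follows essentially the same route as the paper's: structural induction on $\mV$, with the recursion case discharged via the substitution lemmas (\Cref{lem:open-vs-closed-mon-single-reduction,lem:open-vs-closed-mon-multi-reduction}) to show $\reach{\mVV\subS{\rec{x}{\mVV}}{x}} = \reach{\mVV}\subS{\rec{x}{\mVV}}{x}$. Your explicit auxiliary claim for open bodies $\mVV$ just makes precise a step the paper performs implicitly when it applies the induction hypothesis to the (open) body of the recursion.
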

\begin{proof}
  By structural induction on \mV:
  \begin{description}
    \item[Case \mV = \vV:]
      It follows from \Cref{lem:ver-persistence}.
    \item[Case \mV = x:]
      Immediate since $\fv(\mV)\neq\emptyset$.
    \item[Case \mV = \prf{\acta}{\mVV}:]
      By \Cref{def:mon-reach} and \rtit{mAct} from \Cref{fig:monit-instr}, $\reach{\prf{\acta}{\mVV}} = \sset{\prf{\acta}{\mVV}} {\cup} \reach{\mVV}$.
      By the I.H. we have $\reach{\mVV} = \states{\mVV}$, and thus
      $\sset{\prf{\acta}{\mVV}} {\cup}\, \reach{\mVV} = \sset{\prf{\acta}{\mVV}} {\cup}\, \states{\mVV} = \states{\prf{\acta}{\mVV}}$ (by \Cref{def:mon-state}).
      For the second property, we know that $\reachSkip{\prf{\acta}{\mVV}} = \reach{\mVV}$ by \Cref{def:mon-skip-reach}, and that
      $\statesSkip{\prf{\acta}{\mVV}} = \states{\mVV}$ by \Cref{def:mon-state}.
      By the I.H. we then obtain $\reach{\mVV} = \states{\mVV}$ as required.
    \item[Case \mV = \ch{\mV_1}{\mV_2}:] For the first property,  we
    deduce that
    \begin{math}
      \reach{\ch{\mV_1}{\mV_2}} = \sset{\ch{\mV_1}{\mV_2}} \cup \reachSkip{\mV_1} \cup  \reachSkip{\mV_2}
    \end{math}
     by \Cref{def:mon-reach} and \rtit{mSel} from \Cref{fig:monit-instr}.
    By \Cref{def:mon-state}, we akso know that $\states{\ch{\mV_1}{\mV_2}} = \statesSkip{\mV_1} \cup \statesSkip{\mV_2}$.
    The rest of the proof uses the I.H. to obtain the required result, in analogous fashion to the previous case.
    For the second property we need to show that $\reachSkip{\ch{\mV_1}{\mV_2}} = \statesSkip{\ch{\mV_1}{\mV_2}}$.
    By \rtit{mSel} from \Cref{fig:monit-instr} we know that $\reachSkip{\ch{\mV_1}{\mV_2}} = \reachSkip{\mV_1} \cup \reachSkip{\mV_1}$.
    By the I.H., we also know that for $i \in 1..2$ $\reachSkip{\mV_i} = \statesSkip{\mV_i}$.
    The required result thus follows since, by  \Cref{def:mon-state} we have $\statesSkip{\ch{\mV_1}{\mV_2}} = \statesSkip{\mV_1} \cup \statesSkip{\mV_1}$.
    \item[Case \mV = \rec{x}{\mVV}:]
    By \Cref{def:mon-reach} and \rtit{mRec} from \Cref{fig:monit-instr} we have
    $\reach{\rec{x}{\mVV}} = \sset{\rec{x}{\mVV}} \cup \reach{\mVV\subS{\rec{x}{\mVV}}{x}}$.
    By the I.H. we also know that $\reach{\mVV} = \states{\mVV}$ which means that $(\reach{\mVV}\subS{\rec{x}{\mVV}}{x}) = (\states{\mVV}\subS{\rec{x}{\mVV}}{x})$ .
    From \Cref{lem:open-vs-closed-mon-multi-reduction} we deduce
    $\reach{\mVV\subS{\rec{x}{\mVV}}{x}} = \reach{\mVV}\subS{\rec{x}{\mVV}}{x}$ thus
    $\sset{\rec{x}{\mVV}} \cup \reach{\mVV\subS{\rec{x}{\mVV}}{x}} = \sset{\rec{x}{\mVV}} \cup \states{\mVV}\subS{\rec{x}{\mVV}}{x} = \states{\rec{x}{\mVV}}$ by \Cref{def:mon-state}, as required.
    For the second property, we know, by \rtit{mRec} and \Cref{def:mon-skip-reach}  that $\reachSkip{\rec{x}{\mVV}} = \reach{\mVV\subS{\rec{x}{\mVV}}{x}}$.
    By \Cref{def:mon-state}, we also know that  $\statesSkip{\rec{x}{\mVV}} = \states{\mVV}\subS{\rec{x}{\mVV}}{x}$.
    Recall that by \Cref{lem:open-vs-closed-mon-multi-reduction} we already established that $\reach{\mVV\subS{\rec{x}{\mVV}}{x}} = \reach{\mVV}\subS{\rec{x}{\mVV}}{x}$.
    Now by the I.H. we know that $\reach{\mVV} = \states{\mVV}$, from which we obtain $\reach{\mVV}\subS{\rec{x}{\mVV}}{x} = \states{\mVV}\subS{\rec{x}{\mVV}}{x}$, which is the result required.
    \qedhere
   \end{description}
\end{proof}

\subsection{Reactive Parallel Monitors}
\label{sec:parallel-properties-appendix}

\begin{lem}[Monitor Combinators] \label{lem:mon-combinators} \  
\begin{enumerate}
  \item If $\mV_1$ and $\mV_2$ are reactive, then $\mV_1 \paralC \mV_2 \wtraS{\ftr} \no$ iff $\mV_1 \wtraS{\ftr} \no$ or  $\mV_2 \wtraS{\ftr} \no$.
  \item 
  $\mV_1 \paralC \mV_2 \wtraS{\ftr} \yes$ iff $\mV_1 \wtraS{\ftr} \yes$ and  $\mV_2 \wtraS{\ftr} \yes$.
  \item If $\mV_1$ and $\mV_2$ are reactive, then $\mV_1 \paralD \mV_2 \wtraS{\ftr} \yes$ iff $\mV_1 \wtraS{\ftr} \yes$ or  $\mV_2 \wtraS{\ftr} \yes$.
  \item 
  $\mV_1 \paralD \mV_2 \wtraS{\ftr} \no$ iff $\mV_1 \wtraS{\ftr} \no$ and  $\mV_2 \wtraS{\ftr} \no$.
  \item If $\ftr \neq \varepsilon$ or $m_1,m_2 \neq v$, then $\esel{\mV_1}{\mV_2} \wtraS{\ftr} \vV$ iff $\mV_1 \wtraS{\ftr} \vV$ or  $\mV_2 \wtraS{\ftr} \vV$.
\end{enumerate}
\end{lem}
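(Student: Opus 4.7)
The plan is to prove the five items by case analysis on the transition rules of \Cref{fig:monit-parallel}, grouping by whether the verdict must be reached by both submonitors (items~2 and~4) or just one (items~1 and~3), with item~5 handled separately. Items~2 and~4 are mutually dual, as are items~1 and~3, so I would prove one from each pair and then dualise. Reactivity is unnecessary for items~2 and~4, but essential for items~1 and~3.

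For item~2, the forward direction proceeds by induction on the length of the derivation $\mV_1 \paralC \mV_2 \wtraS{\ftr} \yes$. Since rule \rtit{mVrC2} makes $\no$ absorbing under $\paralC$ and the only way for the parallel composition to collapse to the single verdict $\yes$ is by iteratively peeling off $\yes$ components via \rtit{mVrC1}, a case analysis on the last rule applied shows that both $\mV_1$ and $\mV_2$ must individually have reached $\yes$ after consuming the same sequence of external actions (their synchronisation being forced by rule \rtit{mPar}). The converse direction is an ``unzipping'' of the two derivations $\mV_1 \wtraS{\ftr} \yes$ and $\mV_2 \wtraS{\ftr} \yes$: interleave them inside $\mV_1 \paralC \mV_2$ using \rtit{mPar}, \rtit{mTauL} and \rtit{mTauR}, concluding with \rtit{mVrC1}. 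Item~4 is dual.

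For item~1, the forward direction identifies the earliest use of rule \rtit{mVrC2} in the derivation: at that moment one submonitor, say $\mV_1$, has reached $\no$ after consuming a prefix $\ftr_1$ of $\ftr$, and the remaining suffix is absorbed by \rtit{mVer}, yielding $\mV_1 \wtraS{\ftr} \no$. The backward direction is the delicate case. Suppose $\mV_1 \wtraS{\ftr} \no$; we must simulate this derivation inside $\mV_1 \paralC \mV_2$, but every external action $\act_i$ exhibited by $\mV_1$ must be matched synchronously by $\mV_2$ via \rtit{mPar}. This is exactly where reactivity of $\mV_2$ enters: at each synchronisation point $\mV_2$ sits in a state of $\reach{\mV_2}$ which, being reactive, admits a weak $\act_i$-transition. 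Since reactivity is hereditary along reachability (by \Cref{def:reactive-mon}), the matching is sustained along the whole trace. Once $\mV_1$ reaches $\no$, rule \rtit{mVrC2} collapses the pair to $\no$ and rule \rtit{mVer} consumes any remaining suffix of $\ftr$. Item~3 is dual, swapping $\paralC$ for $\paralD$, $\no$ for $\yes$, and \rtit{mVrC2} for \rtit{mVrD2}.

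Item~5 follows from rules \rtit{mSelL} and \rtit{mSelR}, which commit the sum $\ch{\mV_1}{\mV_2}$ to one summand after a single move; the remaining transitions of $\ftr$ then occur inside that summand alone, and conversely any derivation $\mV_i \wtraS{\ftr} \vV$ lifts to the sum by prefixing its first transition with \rtit{mSelL} or \rtit{mSelR}. The side condition ($\ftr \neq \varepsilon$ or $\mV_1, \mV_2 \neq \vV$) rules out the degenerate case $\ch{\vV}{\vV} \wtraS{\varepsilon} \vV$, which fails because verdicts have no $\tau$-transitions. \textbf{The main obstacle} throughout is the backward direction of items~1 and~3, where the non-verdict-producing side must be driven along the trace in lock-step with the verdict-producing side; the proof hinges on reactivity being preserved under reachability, which is precisely what makes the synchronisation via \rtit{mPar} sustainable for every external action of $\ftr$.
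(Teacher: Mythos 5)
Your overall strategy coincides with the paper's: the forward directions are proved by induction on the (explicit) derivation with a case analysis on the monitor rules, the converse of items~2 and~4 by interleaving the two component derivations, and the converse of items~1 and~3 by simulating the verdict-producing component inside the composition, using reactivity of the other component (and its preservation under reachability) to sustain the \rtit{mPar} synchronisation at every external action --- this is exactly the paper's argument, including the observation that reactivity is dispensable for items~2 and~4. The one concrete weak spot is the forward direction of item~1: you anchor the argument on ``the earliest use of \rtit{mVrC2}'', but a derivation $\mV_1 \paralC \mV_2 \wtraS{\ftr} \no$ need not use \rtit{mVrC2} at all. If at some point one component reaches $\yes$, rule \rtit{mVrC1} dissolves the parallel structure ($\yes \paralC \mVV \traS{\tau} \mVV$) and the surviving component then proceeds to $\no$ on its own; the conclusion still holds (that survivor witnesses the disjunction), but your case analysis as stated does not cover this branch. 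The paper's induction on the explicit trace handles it explicitly alongside the \rtit{mVrC2} case. With that case added your proof is complete; everything else, including your reading of the side condition in item~5 (which indeed only fails for $\ftr=\varepsilon$ with a verdict summand, since $\ch{\vV}{\mVV}$ has no $\tau$-move to $\vV$), is sound.
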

\begin{proof}
	%
	%
	We first prove the ``only if'' directions for the five statements.
	If $\mV_1 \paralC \mV_2 \wtraS{\ftr} \no$, then there is an explicit trace $\etr$ that agrees with $\ftr$ on the external actions, such that $\mV_1 \paralC \mV_2 \traS{\etr} \no$.
	We prove by induction on $\etr$ that
	$\mV_1 \wtraS{\ftr} \no$ or $\mV_2 \wtraS{\ftr} \no$.

	The base case is $\etr = \varepsilon$, which is a contradiction, because it implies that $\mV_1 \paralC \mV_2 = \no$.
	For the inductive step, let $\etr = \mu \etrr$ and $\mV_1 \paralC \mV_2 \traS{\mu} \mV \traS{\etrr} \no$.
	We distinguish the following cases:
	\begin{description}
		\item[Case  $\mu \in \Act$:] Then, $\mV = \mV_1' \paralC \mV_2'$, where $\mV_1 \traS{\mu} \mV_1'$ and $\mV_2 \traS{\mu} \mV_2'$, and rule \rtit{mPar} was used, so
		the
		claim follows 
		by
		the inductive hypothesis applied to $\mV \traS{\etrr}\no$.
		\item[Case  $\mu = \tau$ and $\mV = \mV_1' \paralC \mV_2'$, where $\mV_1 \traS{\mu} \mV_1'$ or $\mV_2 \traS{\mu} \mV_2'$]
		(that is,
		rule \rtit{mTauL} or rule \rtit{mTauR} was used). Then, again,
		the
		claim follows by
		the inductive hypothesis.
		\item[Otherwise:] The only possibilities are that $\mu = \tau$ and rule \rtit{mVrC1} or \rtit{mVrC2} were used, so  respectively,
		$\mV = \mV_1$ or $\mV = \mV_2$, so
		$\mV_1 \traS{\etrr} \no$ or $\mV_2 \traS{\etrr} \no$;
		or $\mV = \no$ and either $\mV_1 = \no$ or $\mV_2 = \no$. In all cases, we have that $\mV_1 \wtraS{\ftr} \no$ or $\mV_1 \wtraS{\ftr} \no$.
	\end{description}
	The ``only if'' directions for the other cases are proven in a similar way.

	We now prove the ``if'' directions of the statements of the lemma, and specifically we prove that if $\mV_1 \wtraS{\ftr} \no$ and $\mV_2$ is reactive, then $\mV_1 \paralC \mV_2 \wtraS{\ftr} \no$; the remaining cases are analogous.
	Assume that $\mV_1 \wtraS{\ftr} \no$.
	Then, there is an explicit trace $\etr$ that agrees with $\ftr$ on the external actions, such that $\mV_1 \traS{\etr} \no$ --- fix $\etr$ to be the shortest such explicit trace for $\mV_1$ and $\ftr$.
	We proceed by induction on the length of $\etr$.
	\begin{description}
		\item[Case  $\etr = \varepsilon$:] Then $\mV_1 = \no$ and $\no \paralC \mV_2 \traS{\tau} \no$.
		\item[Case  $\etr = \tau \etrr$:] Then, $\mV_1 \traS{\tau} \mV_1' \traS{\etrr} \no$. By rule \rtit{mTauL},
		$\mV_1 \paralC \mV_2 \traS{\tau} \mV_1' \paralC \mV_2$, and by the inductive hypothesis, $\mV_1' \paralC \mV_2 \traS{\ftr} \no$.
		\item[Case  $\etr = \act \etrr$:] Then, $\mV_1 \traS{\act} \mV_1' \traS{\ftr'} \no$. Since $\mV_2$ is reactive, there is a reactive $\mV_2'$, such that $\mV_2 \wtraS{\act} \mV_2'$, and by successive applications of rule \rtit{mTauR} and then rule \rtit{mPar}, and the inductive hypothesis,
		$\mV_1 \paralC \mV_2 \wtraS{\act} \mV_1' \paralC \mV_2' \wtraS{\ftr} \no$.
	\end{description}

	Statements (2), (3), and (4) are proven similarly. For (5), we observe that if $\mV_1 + \mV_2 \wtra{\ftr} v$, then there is an explicit trace $\etr$ that agrees with $\ftr$ on the external actions, such that $\mV_1 + \mV_2 \traS{\etr} v$, and since $\mV_1+\mV_2 \neq v$, $\etr = \mu\etrr$; therefore, there is some $\mV$ such that $\mV_1 + \mV_2 \traS{\mu} \mV \traS{\etrr} v$. According to the monitor rules, for some  $i \in \{1,2\}$, $\mV_i \traS{\mu} \mV$, and thus, $\mV_i \wtraS{\ftr} v$.
	For the other direction, if, say, $\mV_1 \wtraS{\ftr} v$, then there is an explicit trace $\etr$, such that $\mV_1 \traS{\etr} v$; since $m_1 \neq v$ or $\ftr \neq \varepsilon$, we see that $\etr = \mu \etrr$, and the remaining argument is as above.
\end{proof}

\begin{rem}
	\Cref{lem:mon-combinators} indirectly describes three different kinds of non-determinism for reactive parallel monitors.
	Operator $\paralD$ can be thought of as an existential monitor choice, as
	$\mV_1 \paralD \mV_2$ will accept (\resp reject) iff either (\resp both) of its components accepts (\resp reject). Dually, $\paralC$ can be thought of as a universal choice. The operator $+$ is a different choice that favours neither acceptance nor rejection, but generates either verdict, as long as one of its component monitors can reach it. \qedd
\end{rem}

\begin{rem}
	\Cref{ex:reactiveness} indicates that the assumption that $\mV_1$ and $\mV_2$ are reactive are needed in statements (1) and (3) of the above lemma.
	However,
	as the following lemma
	demonstrates, that assumption
	is only necessary to prove one (the ``if'') direction of
	statements
	(1) and (3) in \Cref{lem:mon-combinators}.
\end{rem}

\begin{lem}\label{lem:mon-combinators2} \ 
	\begin{enumerate}
		\item If $\mV_1 \paralC \mV_2 \traS{\etr} \no$, then $\mV_1 \traS{\etrr} \no$ or  $\mV_2 \traS{\etrr} \no$, where $\etrr$ agrees with $\etr$ on the external actions and is strictly shorter than $\etr$.
		\item If $\mV_1 \paralC \mV_2 \traS{\etr} \yes$, then $\mV_1 \traS{\etrr} \yes$ and  $\mV_2 \traS{\etrr'} \yes$, where $\etrr,\etrr'$ agree with $\etr$ on the external actions and are strictly shorter than $\etr$.
		\item If $\mV_1 \paralD \mV_2 \traS{\etr} \yes$, then $\mV_1 \traS{\etrr} \yes$ or  $\mV_2 \traS{\etrr} \yes$, where $\etrr$ agrees with $\etr$ on the external actions and is strictly shorter than $\etr$.
		\item If $\mV_1 \paralD \mV_2 \traS{\etr} \no$, then $\mV_1 \traS{\etrr} \no$ and  $\mV_2 \traS{\etrr'} \no$, where $\etrr,\etrr'$ agree with $\etr$ on the external actions and are strictly shorter than $\etr$.
	\end{enumerate}
\end{lem}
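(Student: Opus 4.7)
The plan is to prove all four statements simultaneously by induction on $|\etr|$, with case analysis on the rule deriving the first step of $\mV_1 \paralG \mV_2 \traS{\etr} \vV$. The base case $\etr = \varepsilon$ is vacuous because $\mV_1 \paralG \mV_2$ is syntactically distinct from every verdict. For the inductive step I would decompose $\etr = \mu \etr'$ with $\mV_1 \paralG \mV_2 \traS{\mu} \mV \traS{\etr'} \vV$ and classify the initial transition according to the rules in \Cref{fig:monit-parallel}.

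For rule \rtit{mPar} (so $\mu \in \Act$ and $\mV = \mV_1' \paralG \mV_2'$ with $\mV_i \traS{\mu} \mV_i'$), applying the inductive hypothesis to $\mV_1' \paralG \mV_2' \traS{\etr'} \vV$ yields an explicit trace $\etrr'$ strictly shorter than $\etr'$ and agreeing with $\etr'$ on external actions; prepending $\mu$ then gives a witness strictly shorter than $\etr$. Rules \rtit{mTauL} and \rtit{mTauR} are analogous: only one component steps, so the IH witness referring to the stepping component is extended by $\tau$ (and remains strictly shorter than $\etr$), while a witness referring to the non-stepping component is reused verbatim (already strictly shorter than $\etr$).

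The real case analysis concerns the verdict-absorption rules. For statement (1), rule \rtit{mVrC1} forces $\mV_1 = \yes$ and $\mV = \mV_2$, so $\mV_2 \traS{\etr'} \no$ is already the required witness with $\etrr = \etr'$. Rule \rtit{mVrC2} forces $\mV_1 = \no = \mV$; since verdicts have no $\tau$-rule in \Cref{fig:monit-instr}, $\no \traS{\etr'} \no$ forces $\etr'$ to consist only of external actions, and hence $\mV_1 = \no \traS{\etr'} \no$ by repeated applications of \rtit{mVer}. Rule \rtit{mVrE} is impossible because the resulting $\stp$ cannot subsequently reach $\yes$ or $\no$ by verdict persistence (\Cref{lem:ver-persistence}). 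For the conjunctive statements (2) and (4), verdict-absorption steps reaching the ``wrong'' verdict are excluded by verdict persistence; those reaching the matching verdict are handled as for (1), giving a direct witness for one component and, for the other (which already equals the target verdict), an \rtit{mVer}-based witness of length $|\etr'|$. Statements (3) and (4) for $\paralD$ are entirely symmetric via \rtit{mVrD1} and \rtit{mVrD2}.

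The main subtlety I expect is propagating the strictness of the shortening, but this is handled uniformly: the IH delivers $|\etrr'| < |\etr'|$, so prepending a single symbol preserves the strict inequality $|\etrr| \leq 1 + |\etrr'| < 1 + |\etr'| = |\etr|$, and in the verdict-absorption cases the witness $\etrr = \etr'$ is already shorter than $\etr$ by exactly the initial $\tau$ consumed by the rule.
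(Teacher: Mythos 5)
Your proof is correct and follows essentially the same route as the paper's: the paper proves this lemma by rerunning the induction on the explicit trace from the ``only if'' direction of \Cref{lem:mon-combinators}, with a case analysis on the rule deriving the first transition, and simply observes that the witnesses produced for the submonitors are strictly shorter than $\etr$. Your more explicit handling of the verdict-absorption cases (including the \rtit{mVer}-based witnesses for the component already at the target verdict, and the exclusion of \rtit{mVrE} and the wrong-verdict absorptions via \Cref{lem:ver-persistence}) matches what the paper leaves implicit.
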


\begin{proof}
	We can use the same induction as for the ``only if'' direction of the proof of Lemma \ref{lem:mon-combinators}, noticing that the explicit traces for the submonitors are shorter than $\etr$.
\end{proof}


In the technical developments that follow, we will require the following version of statements (1) and (3) of \Cref{lem:mon-combinators,lem:mon-combinators2}.

\begin{lem}\label{lem:mon-combinators-follow-along} 
	\
	\begin{enumerate}
		\item If $\ftr$ is minimal such that $\mV \paralC \mVV \wtraS{\ftr} \no$, then
		there are $q_1,q_2$, such that $\mV \wtraS{\ftr}q_1$ and $\mVV \wtraS{\ftr} q_2$, and $q_1 = \no$ or $q_2 = \no$.
		\item If $\ftr$ is minimal such that $\mV \paralD \mVV \wtraS{\ftr} \yes$, then
		there are $q_1,q_2$, such that $\mV \wtraS{\ftr}q_1$ and $\mVV \wtraS{\ftr} q_2$, and $q_1 = \yes$ or $q_2 = \yes$.
	\end{enumerate}
\end{lem}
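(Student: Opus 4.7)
My plan is to refine the induction behind \Cref{lem:mon-combinators2}(1), strengthening its conclusion from ``some shorter finite trace'' to ``along the same $\ftr$'' by using the minimality hypothesis. I first establish an auxiliary claim by induction on the length of an explicit derivation $\mV \paralC \mVV \traS{\etr} \no$: there exist a prefix $\ftr'$ of $\filter{\etr}$ and monitors $\mV', \mVV'$ such that $\mV \wtraS{\ftr'} \mV'$ and $\mVV \wtraS{\ftr'} \mVV'$, with $\mV' = \no$ or $\mVV' = \no$. The case analysis on the first transition rule is routine: \rtit{mPar}, \rtit{mTauL}, and \rtit{mTauR} each leave a residual parallel composition to which the inductive hypothesis applies after prepending the consumed action; \rtit{mVrC2} forces $\mV = \no$, so $\ftr' = \varepsilon$ works immediately; in \rtit{mVrC1} we have $\mV = \yes$ and subsequently $\mVV \traS{\etrr} \no$, so $\mVV \wtraS{\filter{\etrr}} \no$ pairs with $\mV \wtraS{\filter{\etrr}} \yes$ obtained by repeated applications of \rtit{mVer}; and \rtit{mVrE} is vacuous because verdict persistence (\Cref{lem:ver-persistence}) prevents $\stp$ from ever reaching $\no$.

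The main obstacle is then invoking minimality to rule out the proper prefix case: the auxiliary claim on its own only gives transitions of each submonitor, not of the composite, so I need to rebuild the composite transition. Given the witnesses $\ftr' \leq \ftr$ and, say, $\mV \wtraS{\ftr'} \no$ together with $\mVV \wtraS{\ftr'} \mVV'$, I would reconstruct an explicit derivation of $\mV \paralC \mVV$ ending in $\no$ whose filter is exactly $\ftr'$ by synchronisation: interleave the $\tau$-blocks of the explicit trace witnessing $\mV \wtraS{\ftr'} \no$ via \rtit{mTauL} with those witnessing $\mVV \wtraS{\ftr'} \mVV'$ via \rtit{mTauR}, and synchronise each external action using \rtit{mPar}, arriving at $\mV \paralC \mVV \wtraS{\ftr'} \no \paralC \mVV'$; finally, fire \rtit{mVrC2} to $\tau$-reduce to $\no$. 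This yields $\mV \paralC \mVV \wtraS{\ftr'} \no$ with $\ftr' \leq \ftr$, and minimality of $\ftr$ forces $\ftr' = \ftr$, which is exactly the desired conclusion. Statement (2) is proved by the entirely dual argument: replace $\paralC$ with $\paralD$, swap the roles of $\yes$ and $\no$, and use \rtit{mVrD1} and \rtit{mVrD2} in place of \rtit{mVrC1} and \rtit{mVrC2}.
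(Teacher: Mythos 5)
Your proof is correct, and it rests on the same skeleton as the paper's --- an induction over an explicit derivation $\mV \paralC \mVV \traS{\etr} \no$ with a case analysis on the rule used for the first step --- but it deploys the minimality hypothesis differently. The paper threads minimality through the induction itself: in the \rtit{mVrC2} case it observes $\mV \paralC \mVV \wtraS{} \no$ and concludes that either $\ftr = \varepsilon$ or $\ftr$ is not minimal, a contradiction; this implicitly relies on minimality being inherited by the residual composition in the \rtit{mPar}/\rtit{mTauL}/\rtit{mTauR} cases (which does hold, since an early rejection of the residual lifts to an early rejection of the original composite). You instead prove an unconditional statement --- the conclusion holds for \emph{some} prefix $\ftr'$ of $\filter{\etr}$ --- and invoke minimality only at the end, which obliges you to re-zip the two component runs into a composite run $\mV \paralC \mVV \wtraS{\ftr'} \no$ via your interleaving of \rtit{mTauL}/\rtit{mTauR} blocks with \rtit{mPar} synchronisations followed by \rtit{mVrC2}. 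That reconstruction is sound precisely because you hold explicit witnesses that \emph{both} components can follow $\ftr'$, so no reactivity assumption is needed (this is the pitfall the paper's remark after the lemma warns about). The trade-off: your induction hypothesis is simpler and needs no side condition, at the price of an extra zipping argument; the paper's version is shorter but leans on the unstated downward preservation of minimality. Your treatment of the remaining cases (\rtit{mVrC1} by letting $\yes$ idle via \rtit{mVer}, \rtit{mVrE} vacuous by \Cref{lem:ver-persistence}) matches the paper's.
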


\begin{proof}
	We prove the first part of the lemma, as the second one is similar.
	Since $\mV \paralC \mVV \wtraS{\ftr} \no$, there must be an external trace $\etr$ that agrees with $\ftr$ on the external actions, so that
	$\mV \paralC \mVV \traS{\etr} \no$.
	We can use a similar induction on $\etr$ as for the first direction of the proof for Lemma \ref{lem:mon-combinators}.
	The base case is $\etr = \tau^k$, which is
	immediate,
	because, by Lemma \ref{lem:mon-combinators2},
	it implies that
	$\mV \wtraS{} \no$ or $\mVV \wtraS{} \no$.
	For the inductive step, let $\ftr \neq \varepsilon$, $\etr = \mu \etr'$, and $\mV \paralC \mVV \traS{\mu} q \traS{\etr'} \no$.
	We distinguish the following cases:
	\begin{description}
		\item[Case $\mu \in \Act$] (that is, rule \rtit{mPar} was used): Then,
		$q = \mV' \paralC \mVV'$, where $\mV \traS{\mu} \mV'$ and $\mVV \traS{\mu} \mVV'$, and
		the induction is complete by the inductive hypothesis.
		\item[Case $\mu = \tau$ and $q = \mV' \paralC \mVV'$, where $\mV \traS{\mu} \mV'$ or $\mVV \traS{\mu} \mVV'$] (that is, one of the rules \rtit{mTauL} and \rtit{mTauR} was used): Then, again, the induction is complete by the inductive hypothesis.
		\item[Case $\mu = \tau$ and rule \rtit{mVrC1}
		was used:]
		Then, without loss of generality,
		$\mV \paralC \mVV \traS{\tau} \mV \traS{\etr'} \no$ and $\mVV = \yes$,
		in which case we have that $\mV \wtraS{\ftr}\no$ and $\mVV = \yes \traS{\ftr} \yes$.

		\item[Case $\mu = \tau$ and  rule
		\rtit{mVrC2} was used:]
		Then, without loss of generality,
		$\mV \paralC \mVV \traS{\tau} \mV = \no$, which is a contradiction, because either $\ftr = \varepsilon$ or it is not minimal, which violates our assumptions.
		%
		\qedhere
	\end{description}
\end{proof}

\begin{rem}
	We remark that although \Cref{lem:mon-combinators-follow-along} seems to be an immediate consequence of \Cref{lem:mon-combinators,lem:mon-combinators2}, this is not the case.
	Notice that \Cref{lem:mon-combinators-follow-along} asserts that \emph{both} components are able to follow the finite trace $\ftV$, and this is the reason the minimality of $\ftr$ is important. Otherwise, a counterexample would be $\no \paralC \acta.\yes \wtraS{\actb} \no$, as $\acta.\yes$ cannot transition with a $\actb$.
\end{rem}

\subsection{An Equivalence of Two Monitoring Systems}
\label{sec:appendix-equivalent-rules}



To prove \Cref{prop:monitor2automaton}, we assume a different set of rules for parallel and regular monitors. These rules are the ones that result by replacing \rtit{mRec} with  the following rules:
\begin{mathpar}
	\inference[\rtit{mRecF}]{}{\rec{x}{\mV_x} \traSS{\tau} \mV_x}
	\and
	\inference[\rtit{mRecB}]{}{x \traSS{\tau} \mV_x}
\end{mathpar}
Here, we assume that for every monitor variable $x$, there is a unique monitor $p_x = \rec x \mV_x$ of $\mV$ such that $x$ appears in $\mV_x$. Therefore, the rules above are well-defined.
By substituting rule \rtit{mRec} by \rtit{mRecF} and \rtit{mRecB}, we get an equivalent monitoring system, where reactive monitors remain reactive. This is partly shown in \citeMac{determinization} for regular monitors and 
here
we prove these claims in the context of parallel monitors. Thus, in the rest of this section we assume that the rules above are used.

%
We call System O the system of rules given in Table \ref{fig:monit-instr}, while
System N
is
the result of replacing rule \rtit{mRec} by the rules \rtit{mRecF} and \rtit{mRecB}. 
The reader is encouraged to read \citeMac{determinization} for a discussion of the two systems.

For System N, we assume
the fixed mappings $x \mapsto p_x$ and $x \mapsto m_x$, such that $p_x = \rec x m_x$ and
$p_x$ is the only monitor of the form $\rec x m$ that we allow.
%
%
Derivations $\traS{}$ and $\wtraS{}$ are defined as before,
but the resulting relations are called $\traS{}_O$ and $\wtraS{}_O$, and $\traS{}_N$ and $\wtraS{}_N$, respectively for systems O and N.
We prove that systems O and N are equivalent. That is, for any monitor $\mV$, finite trace $\ftr$, and verdict $\vV$,
$$\mV \wtraS{\ftr}_O \vV \ \text{ if and only if }\ \mV \wtraS{\ftr}_N \vV.$$


\begin{lem}
	For every $x$, $p_x$ is simple.
\end{lem}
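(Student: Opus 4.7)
The plan is to derive the simplicity of $p_x$ directly from the global uniqueness assumption imposed on $\mV$ when System N is applied. Recall the standing hypothesis that is explicitly stated just before the lemma (and already used in the proof of \Cref{prop:monitor2automaton}): for every monitor variable $y$ that appears in $\mV$, there is a \emph{unique} submonitor of the form $\rec{y}{m_y}$ of $\mV$ such that $y$ occurs in $m_y$. This hypothesis is exactly the global form of the syntactic discipline that ``simple'' is meant to capture, so the lemma should follow with essentially no induction, simply by restricting the property from $\mV$ to its submonitor $p_x$.

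First I would make the intended definition of \emph{simple} explicit: a monitor $n$ is simple when, for every variable $y$ that is bound somewhere inside $n$, there is a unique submonitor $\rec{y}{m_y}$ of $n$ containing the defining occurrence of $y$, so that the mappings $y \mapsto p_y$ and $y \mapsto m_y$ used by rules \rtit{mRecF} and \rtit{mRecB} are well defined when restricted to the bound variables of $n$. This is precisely the property we imposed on $\mV$ as a whole.

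Next I would observe that $p_x = \rec{x}{m_x}$ is a subterm of $\mV$, so every syntactic occurrence of a variable $y$ within $p_x$ is also a syntactic occurrence of $y$ within $\mV$. By the uniqueness hypothesis, there is exactly one submonitor of $\mV$ of the form $\rec{y}{m_y}$ binding $y$. If $y$ occurs bound in $p_x$, then, by standard scoping conventions for recursion operators, its unique binder must lie inside $p_x$ as well (otherwise the occurrence of $y$ inside $p_x$ would fall outside the scope of its unique binder, contradicting well-scopedness of $\mV$). Therefore the restriction of $y \mapsto p_y$ to the bound variables of $p_x$ is still uniquely defined, which is exactly the simplicity condition for $p_x$.

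The only real obstacle is pinning down the definition of \emph{simple} and the scoping conventions for variables in $\mV$; once those are in place the statement is almost a tautology, because the hypothesis on $\mV$ is the global version of the same property. No induction on the structure of $p_x$ is needed, and no properties of the transition rules are used — the lemma is purely a syntactic bookkeeping result that justifies the well-definedness of the mappings used by \rtit{mRecF} and \rtit{mRecB} when we descend into the submonitors generated during the equivalence proof between systems O and N.
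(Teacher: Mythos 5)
Your proposal is correct and matches the paper's approach: the paper dispatches this lemma with ``Immediate from the definition,'' relying on exactly the observation you spell out, namely that the standing uniqueness assumption on binders in $\mV$ restricts to any submonitor $p_x$. Your version merely makes explicit the definition of \emph{simple} (which the paper leaves implicit, importing it from the cited work on determinization) and the scoping argument behind the word ``immediate.''
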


\begin{proof}
	Immediate from the definition.
\end{proof}

\begin{lem}
	If $x$ is a free variable in $p_y$, then $p_y$ is inside the scope of $p_x$.
\end{lem}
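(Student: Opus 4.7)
My proof plan is as follows. Suppose that $x$ is a free variable in $p_y = \rec{y}{m_y}$, and consider both $p_y$ and $p_x = \rec{x}{m_x}$ as submonitors of the ambient monitor $\mV$ (which, by the standing convention in the paper, is closed). The first observation to make is that $x \neq y$: if we had $x = y$, then the outer binder $\rec{y}{(-)}$ at the top of $p_y$ would capture every occurrence of $x$ inside $p_y$, contradicting the hypothesis that $x$ is free in $p_y$. Hence $x$ occurs free in the body $m_y$ as well, and so some position within $p_y$ carries a free occurrence of $x$ when $p_y$ is viewed in isolation.

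Next, I would invoke the hypothesis that $\mV$ is closed together with the uniqueness assumption introduced just before the lemma, namely that for every variable $x$ appearing in $\mV$ there is a unique submonitor of $\mV$ of the form $\rec{x}{\mVV}$ in which $x$ occurs. Closedness of $\mV$ forces the free occurrence of $x$ inside $p_y$, regarded now as a position inside $\mV$, to be captured by some enclosing $\rec{x}{(-)}$-binder along the path from the root of $\mV$ down to that occurrence. By the uniqueness assumption, the only candidate submonitor that can furnish such a binder is $p_x$ itself. Consequently, on the path in the abstract syntax tree of $\mV$ from the root down to the free occurrence of $x$ inside $p_y$, one must first traverse $p_x$ and then $p_y$, so $p_y$ occurs as a subterm of the body $m_x$ of $p_x$, which is precisely what it means for $p_y$ to lie in the scope of $p_x$.

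I do not foresee any serious obstacle here; the lemma is essentially an immediate bookkeeping consequence of closedness of $\mV$ together with the uniqueness of recursion binders. If a fully formal treatment were required, one could proceed by structural induction on the path in $\mV$ from its root to the offending free occurrence of $x$ inside $p_y$, maintaining as invariant the set of recursion binders whose scope has already been entered; the inductive step simply observes that no binder other than $p_x$ can discharge the freeness of $x$ at the leaf.
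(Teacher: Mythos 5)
Your argument is correct and is exactly the reasoning the paper leaves implicit: its entire proof is ``An immediate observation,'' and your elaboration---closedness of the ambient monitor forces the free occurrence of $x$ in $p_y$ to be captured by some $\rec{x}{(-)}$ binder, uniqueness identifies that binder as $p_x$, and the binder cannot sit inside $p_y$ (else $x$ would not be free in $p_y$), so $p_x$ must properly enclose $p_y$---is the right way to spell it out. No gaps worth flagging.
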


\begin{proof}
	An immediate observation.
\end{proof}

There is an ordering
$\leq$ of monitor variables: $x \leq y$ iff $p_y$ is in the scope of $p_x$.
We note that if we only consider a finite number of variables (say, the ones that appear in a specific monitor), then $\leq$ is  a well-order.
This ordering allows us to define when a submonitor can substitute a variable for its corresponding recursive formula.
We recursively define when $\mVV$ is an unfolding of $r$: $\mVV = r$; or $\mVV = \mVV'[p_x/x]$, where $\mVV'$ is an unfolding of $r$ and $x$ is $\leq$-minimal among the variables that occur free in $\mVV'$.

\begin{lem}\label{lem:unfolding-is-bisim}
If $\mVV$ is an unfolding of $r$,
	then
	\begin{enumerate}
		\item $\mVV = \vV$ if and only if $r = \vV$;
		\item
		for every $\act \in \Act$ and $\mVV'$,
			if $\mVV \traS{\act}_N \mVV'$, then there some $r'$, such that $r \wtraS{\act}_N r'$ and $\mVV'$ is an unfolding of $r'$;
		\item
		for every $\act \in \Act$ and $\mVV'$,
			if $r \traS{\act}_N r'$, then there some $\mVV'$, such that $\mVV \traS{\act}_N \mVV'$ and $\mVV'$ is an unfolding of $r'$;
		\item
			if $\mVV \wtraS{}_N \mVV'$, then there some $r'$, such that $r \wtraS{}_N r'$ and $\mVV'$ is an unfolding of $r'$;
		\item
			if $r \wtraS{}_N r'$, then there some $\mVV'$, such that $\mVV \wtraS{}_N \mVV'$ and $\mVV'$ is an unfolding of $r'$.
	\end{enumerate}
\end{lem}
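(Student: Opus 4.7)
The plan is to prove all five items simultaneously by induction on the definition of ``$\mVV$ is an unfolding of $r$''. The base case $\mVV = r$ is immediate: items (1)--(5) hold trivially, with the witness $r'$ required by items (2)--(5) taken to be the target of the given transition from $r$ itself.

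For the inductive case, suppose $\mVV = \mVV_0[p_x/x]$, where $\mVV_0$ is an unfolding of $r$ and $x$ is $\leq$-minimal among the free variables of $\mVV_0$. The key observation driving the argument is that, in System N, rule \rtit{mRecF} applied to $p_x$ and rule \rtit{mRecB} applied to a free occurrence of $x$ produce the \emph{same} one-step $\tau$-transition, namely to $\mVV_x$. Substituting $p_x$ for $x$ in $\mVV_0$ therefore leaves the labelled transition structure essentially unchanged. I will formalise this via a sub-lemma, proved by straightforward structural induction on $\mVV_0$ (equivalently, by induction on the derivation of the transition): for every $\mu \in \Act \cup \{\tau\}$, $\mVV_0[p_x/x] \traS{\mu}_N \mVV''$ if and only if there is some $\mVV_1$ with $\mVV_0 \traS{\mu}_N \mVV_1$ and $\mVV'' = \mVV_1[p_x/x]$ (with the substitution being vacuous if $x$ no longer occurs free in $\mVV_1$). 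Combining this sub-lemma with the inductive hypothesis for $\mVV_0$, specialised to $\mu \in \Act$, yields items (2) and (3); specialising it to $\mu = \tau$ and iterating gives items (4) and (5). Item (1) is handled separately: since $x$ is free in $\mVV_0$, neither $\mVV_0$ nor $\mVV_0[p_x/x]$ can be a verdict, and by the inductive hypothesis $r$ is not a verdict either.

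The main obstacle I anticipate is the bookkeeping required to confirm that $\mVV_1[p_x/x]$ actually qualifies as an unfolding of the $r'$ supplied by the inductive hypothesis. The definition of unfolding demands that each substitution step target a $\leq$-minimal free variable, so one must check that $x$, if it is still free in $\mVV_1$, remains $\leq$-minimal among the free variables of $\mVV_1$. This will rely on the $\leq$-minimality of $x$ in $\mVV_0$ together with the observation that any new free variable arising in $\mVV_1$ comes from the unfolding of some $y$ that was already free in $\mVV_0$; since $x \leq y$ by assumption, the body $\mVV_y$ introduces only variables that are $\leq$-larger than $x$. In the degenerate case where $x$ is not free in $\mVV_1$ at all, the substitution is vacuous and $\mVV_1[p_x/x] = \mVV_1$ is already an unfolding of $r'$ by the inductive hypothesis; otherwise, a single additional unfolding step using $x$ delivers the required conclusion.
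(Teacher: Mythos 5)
Your proof follows essentially the same route as the paper's: both reduce, by an outer induction on the number of substitutions, to the single-substitution case and then establish by an inner structural induction that the transitions of $\mVV_0$ and of $\mVV_0[p_x/x]$ match up, with targets again related by unfolding. The paper folds the inner induction into a direct case analysis on the monitor being substituted into, whereas you package it as a standalone ``substitution commutes with transitions'' sub-lemma; that is a presentational difference only. Your explicit attention to the $\leq$-minimality bookkeeping is, if anything, more careful than the paper's.

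One caveat: the sub-lemma, as you state it, fails in precisely the case that makes this setup delicate. Take $\mVV_0 = x$, so $\mVV_0[p_x/x] = p_x$. By \rtit{mRecF} we have $p_x \traS{\tau}_N \mV_x$, and by \rtit{mRecB} we have $x \traS{\tau}_N \mV_x$: both sides step to the \emph{same} monitor $\mV_x$. Since $x$ occurs free in $\mV_x$ by assumption, the target here is $\mVV_1 = \mV_x$ and \emph{not} $\mVV_1[p_x/x]$, and the substitution is not vacuous, so your parenthetical escape clause does not apply; the ``if and only if'' is therefore false as written. The same situation arises whenever the fired transition comes from an occurrence of $x$ (respectively, its substituted copy $p_x$). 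The overall argument survives, because in this case the target of the transition from $\mVV_0[p_x/x]$ coincides literally with the target of the transition from $\mVV_0$, which is already an unfolding of the $r'$ supplied by the inductive hypothesis. But the sub-lemma must be weakened, e.g.\ to ``the target equals $\mVV_1[p_x/x]$ or equals $\mVV_1$ itself'', or more robustly to ``the target is an unfolding of $\mVV_1$'' --- which is in effect what the paper's case-by-case inner induction establishes, its $r = x$ case being exactly the one your statement misses.
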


\begin{proof}
The proof is by induction on the number of substitutions required to construct $\mVV$ from $r$.
	The base case of $\mVV = r$ is trivial.
	To complete the inductive step, it suffices to prove
	the lemma for the case of
	$\mVV = r[p_x/x]$.
	This is done by induction on the structure of $r$.
	\begin{itemize}
	\item
	If $r$ is a verdict $v$ or a variable $y \neq x$, then $\mVV = r$ and we are done. 
	\item
	If $r = x$, then
	$\mVV = p_x$.
		Since $x$ can transition exactly to $p_x$ with a $\tau$,
		$r \centernot{\traS{\act}} r'$
		and
		if $\mVV \wtraS{\act} \mVV'$, then $r= x \traS{\tau} p_x \traS{\act} \mVV'$.
		Similarly,
		if
		$r \wtraS{} r'$, then either $r = r'$, so we can have $\mVV = \mVV'$, or $r= x \traS{\tau} p_x \wtraS{} r'$, in which case $\mVV \wtraS{} r'$;
		if $\mVV \wtraS{} \mVV'$, then $r = x \traS{\tau} p_x = \mVV \wtraS{} \mVV'$.
	\item
	If $r = \act.r''$, then $\mVV = \act.\mVV''$ and $\mVV'' = r''[p_x/x]$.
	The only possible transitions for $\mVV$ and $r$ are then $\mVV \traS{\act} \mVV''$ and $r \traS{\act} r''$, respectively.
	Therefore, $\mVV \traS{\act} \mVV'$ iff $\mVV'' = \mVV'$ and $r \traS{\act} r'$ iff $r'' = r'$;
	$\mVV \wtraS{} \mVV'$ iff $\mVV = \mVV'$ and $r \wtraS{} r'$ iff $r = r'$.
	\item
	If $r = \rec x r''$, then $\mVV = r[r/x] = r$, as $r = p_x$ and $x$ is bound in $r$; therefore, this case is complete.
	\item
	If $r = \rec y r''$ for $y \neq x$, then $\mVV = \rec y \mVV''$ and $\mVV'' = r''[p_x/x]$.
	The only possible strong transitions for $\mVV$ and $r$ are then $\mVV \traS{\tau} \mVV''$ and $r \traS{\tau} r''$, respectively.
	Then, $\mVV \centernot{\traS{\act}} \mVV'$ and $r'' \centernot{\traS{\act}} r'$.
	If $\mVV \wtraS{} \mVV'$, then either $\mVV = \mVV'$ and we are done, or $\mVV'' \wtraS{} \mVV'$, so $r \traS{\tau} r'' \wtraS{} r'$; the case for $r \wtraS{} r'$ is symmetric.
	\item
	If $r = r_1 + r_2$, then $\mVV = n_1 + n_2$, where $\mVV_i = r_i[p_x/x]$ for $i \in \{1,2\}$.
	Then, if $\mVV \traS{\act} \mVV'$, then $\mVV_1 \traS{\act} \mVV'$ or $\mVV_2 \traS{\act} \mVV'$, so  $r_1 \wtraS{\act} r'$ or  $r_2 \wtraS{\act} r'$, implying  $r \wtraS{\act} r'$ and we are done by the inductive hypothesis;
	the remaining cases are similar.
	\item
	If $r = r_1 \paralC r_2$, then  $\mVV = n_1 \paralC n_2$, where
	$\mVV_i = r_i[p_x/x]$, for $i \in \{1,2\}$.
	The remaining argument is similar to the above.
	\item
	If $r = r_1 \paralD r_2$, then  $\mVV = n_1 \paralD n_2$, where
	$\mVV_i = r_i[p_x/x]$, for $i \in \{1,2\}$.
	 The remaining argument is similar to the above.
	 \qedhere
	\end{itemize}
\end{proof}

As Lemma \ref{lem:unfolding-is-bisim} demonstrates, the unfolding relation is a kind of bisimulation for System N --- although we do not define such a notion here.
It is not hard to see that this relation is reflexive and transitive.

\begin{cor}\label{cor:unfolding_bisim_ind}
	If $\mVV$ is an unfolding of $r$,
	then for every finite trace $\ftr$,
	\begin{enumerate}
		\item
			if $\mVV \wtraS{\ftr}_N \mVV'$, then $r \wtraS{\ftr}_N r'$, where $\mVV'$ is an unfolding of $r'$.
		\item
		Furthermore,
			if $r \wtraS{\ftr}_N r'$, then $\mVV \wtraS{\ftr}_N \mVV'$, where $\mVV'$ is an unfolding of $r'$.
	\end{enumerate}
\end{cor}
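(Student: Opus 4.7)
The plan is to prove both parts of the corollary simultaneously by induction on the length of $\ftr$, using \Cref{lem:unfolding-is-bisim} as the single-step machinery. Since the two directions are completely symmetric (the lemma supplies matching clauses (2)--(3) for single external actions and (4)--(5) for weak $\tau$-closure), I will spell out only the first direction; the second is handled by swapping the roles of $\mVV/r$ and applying the dual clauses.

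For the base case $\ftr = \varepsilon$, the hypothesis $\mVV \wtraS{\varepsilon}_N \mVV'$ means $\mVV \wtraS{}_N \mVV'$, and the conclusion follows directly from clause (4) of \Cref{lem:unfolding-is-bisim}, which yields some $r'$ with $r \wtraS{}_N r'$ and $\mVV'$ an unfolding of $r'$.

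For the inductive step, suppose the result holds for all finite traces of length $k$, and let $\ftr = \act \ftr''$ with $|\ftr''| = k$. Unpacking the definition of $\wtraS{\ftr}_N$, the assumption $\mVV \wtraS{\ftr}_N \mVV'$ decomposes as
\[
\mVV \;\wtraS{}_N\; \mVV_1 \;\traS{\act}_N\; \mVV_2 \;\wtraS{\ftr''}_N\; \mVV'
\]
for some intermediate monitors $\mVV_1, \mVV_2$. Applying clause (4) of \Cref{lem:unfolding-is-bisim} to $\mVV \wtraS{}_N \mVV_1$ produces $r_1$ with $r \wtraS{}_N r_1$ and $\mVV_1$ an unfolding of $r_1$. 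Applying clause (2) to $\mVV_1 \traS{\act}_N \mVV_2$ with $\mVV_1$ an unfolding of $r_1$ produces $r_2$ with $r_1 \wtraS{\act}_N r_2$ and $\mVV_2$ an unfolding of $r_2$. Finally, the inductive hypothesis applied to $\mVV_2 \wtraS{\ftr''}_N \mVV'$ (with $\mVV_2$ an unfolding of $r_2$) produces $r'$ with $r_2 \wtraS{\ftr''}_N r'$ and $\mVV'$ an unfolding of $r'$. Concatenating,
\[
r \;\wtraS{}_N\; r_1 \;\wtraS{\act}_N\; r_2 \;\wtraS{\ftr''}_N\; r',
\]
so $r \wtraS{\ftr}_N r'$, as required.

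There is no real obstacle here; the whole content is bookkeeping about how weak trace transitions decompose into single transitions and how the unfolding relation is preserved by each piece, both facts already furnished by \Cref{lem:unfolding-is-bisim}. The only thing requiring a little care is reminding the reader that the unfolding relation is transitive and reflexive (so that pasting the weak transitions back together gives a single weak $\ftr$-transition), and that parts (2) and (4) of the lemma together supply everything needed to walk through each external action and each $\tau$-closure segment in a $\wtraS{\ftr}_N$ derivation.
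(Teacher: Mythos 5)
Your proof is correct and takes essentially the same route as the paper, which simply states ``by straightforward induction on $\ftr$'' and leaves the bookkeeping with \Cref{lem:unfolding-is-bisim} implicit; your write-up just makes that induction explicit. (One tiny quibble: transitivity of the unfolding relation is not actually needed for pasting the weak transitions together --- that concatenation concerns only the transition relation --- but this does not affect correctness.)
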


\begin{proof}
By straightforward induction on \ftr.
\end{proof}

\begin{lem}\label{lem:NOclosed}
	For every closed monitor $\mVV$,
	\begin{enumerate}
	\item
	$\mVV \traS{\act}_O \mVV'$ if and only if $\mVV \traS{\act}_N \mVV'$;
	\item if
	$\mVV \traS{\tau}_O \mVV'$, then $\mVV \traS{\tau}_N \mVV''$, where $\mVV'$ is an unfolding of $\mVV''$;
	\item if
	$\mVV \traS{\tau}_N \mVV''$, then $\mVV \traS{\tau}_O \mVV'$, where $\mVV'$ is an unfolding of $\mVV''$.
	\end{enumerate}
\end{lem}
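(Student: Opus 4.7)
For part (1), I will observe that systems O and N differ only in the rules \rtit{mRec}, \rtit{mRecF}, and \rtit{mRecB}, all of which generate $\tau$-transitions. Hence every derivation of an external-action transition $\mVV \traS{\act}_O \mVV'$ or $\mVV \traS{\act}_N \mVV'$ (with $\act \in \Act$) uses only the shared rules \rtit{mAct}, \rtit{mSelL}, \rtit{mSelR}, \rtit{mPar}, and \rtit{mVer}. A routine induction on the structure of the derivation then shows that $\traS{\act}_O$ and $\traS{\act}_N$ coincide; the closedness hypothesis is not even needed here.

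For parts (2) and (3), I will proceed by structural induction on the closed monitor $\mVV$, splitting on the root rule of the derivation. The pivotal case is $\mVV = \rec{x}{\mV_x} = p_x$: in O, \rtit{mRec} fires and gives $\mVV \traS{\tau}_O \mV_x[p_x/x]$; in N, \rtit{mRecF} fires and gives $\mVV \traS{\tau}_N \mV_x$. I must check that $\mV_x[p_x/x]$ is an unfolding of $\mV_x$. Since $\mV_x$ is an unfolding of itself by the base clause of the definition, and since closedness of $\mVV$ forces the only free variable of $\mV_x$ to be $x$ itself, the variable $x$ is trivially $\leq$-minimal among the free variables of $\mV_x$, so the inductive clause of the unfolding definition applies. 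The dual case $\mVV = x$ needed for the \rtit{mRecB} direction in system N is vacuous, since a closed $\mVV$ cannot be a bare variable. For the axiomatic rules \rtit{mVrC1}, \rtit{mVrC2}, \rtit{mVrD1}, \rtit{mVrD2}, and \rtit{mVrE}, the image monitor coincides in the two systems, so picking $\mVV'' = \mVV'$ and invoking the base clause of unfolding suffices.

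The remaining cases --- $\mV_1 + \mV_2$ (via \rtit{mSelL}, \rtit{mSelR}) and $\mV_1 \paralG \mV_2$ (via \rtit{mTauL}, \rtit{mTauR}) --- follow by applying the inductive hypothesis to the closed subterm carrying the $\tau$-transition and rebuilding the transition in the other system via the same non-recursive rule. For \rtit{mSelL}/\rtit{mSelR} the image is just the transitioned subterm, so the inductive hypothesis directly gives the unfolding statement. For \rtit{mTauL}/\rtit{mTauR} the image is $\mV_1' \paralG \mV_2$, and one must verify that this is an unfolding of $\mV_1'' \paralG \mV_2$, where by the inductive hypothesis $\mV_1'$ is an unfolding of $\mV_1''$. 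To close this gap I will prove a small auxiliary lemma: if $\mV_2$ is closed and $\mV'$ is an unfolding of $\mV''$, then $\mV' \paralG \mV_2$ is an unfolding of $\mV'' \paralG \mV_2$, and similarly for $+$. This goes by induction on the derivation that $\mV'$ is an unfolding of $\mV''$: in the inductive step $\mV' = \mV^*[p_x/x]$, closedness of $\mV_2$ lets us commute the substitution with the combinator, $\mV' \paralG \mV_2 = (\mV^* \paralG \mV_2)[p_x/x]$, and also guarantees that the free variables of $\mV^* \paralG \mV_2$ coincide with those of $\mV^*$, so that the $\leq$-minimality of $x$ is preserved. The main obstacle of the whole proof is precisely this context-preservation lemma, and the associated book-keeping that the $\leq$-minimality condition survives being lifted out of a combinator; once this is in place, everything else is routine case analysis driven by the inductive hypothesis.
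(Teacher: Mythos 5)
Your proof is correct and fills in exactly what the paper dismisses as ``Immediate from the rules'': since the two systems differ only in the $\tau$-generating recursion rules, external-action transitions coincide outright, and the $\tau$-cases reduce to checking the recursion rule (where $\mV_x[p_x/x]$ is a one-step unfolding of $\mV_x$) and propagating through the shared contextual rules. Your identification of the one genuinely non-immediate point --- that unfolding must be shown to be preserved under a closed parallel context for the \rtit{mTauL}/\rtit{mTauR} cases, via the substitution-commutation and $\leq$-minimality argument --- is accurate and handled correctly.
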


\begin{proof}
Immediate from the rules.
\end{proof}

\begin{cor}\label{cor:NOmatching-steps}
	For every  $\mVV, r$,
	where $\mVV$ is closed and an unfolding of $r$,
	\begin{enumerate}
	\item
	if
	$\mVV \traS{\act}_O \mVV'$, then $r \wtraS{\act}_N r'$, where $\mVV'$ is an unfolding of $r'$;
	\item
	if
	$r \traS{\act}_N r'$, then $\mVV \traS{\act}_O \mVV'$ where $\mVV'$ is an unfolding of $r'$;
	\item if
	$\mVV \traS{\tau}_O \mVV'$, then $r \wtraS{}_N r'$, where $\mVV'$ is an unfolding of $r'$;
	\item if
	$r \traS{\tau}_N r'$, then $\mVV \wtraS{}_O \mVV'$, where $\mVV'$ is an unfolding of $\mVV''$.
	\end{enumerate}
\end{cor}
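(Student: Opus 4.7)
The four parts are all obtained by composing Lemma~\ref{lem:NOclosed} with Lemma~\ref{lem:unfolding-is-bisim} (plus Corollary~\ref{cor:unfolding_bisim_ind} for weak derivations). Throughout, transitivity of the unfolding relation (substitutions compose) is used silently.

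\textbf{Parts 1 and 2} concern external actions and are immediate. For part~1, Lemma~\ref{lem:NOclosed}(1) rewrites the hypothesis $\mVV \traS{\act}_O \mVV'$ as $\mVV \traS{\act}_N \mVV'$, and then Lemma~\ref{lem:unfolding-is-bisim}(2) turns this into an N-derivation $r \wtraS{\act}_N r'$ with $\mVV'$ an unfolding of $r'$. For part~2, the argument is symmetric: Lemma~\ref{lem:unfolding-is-bisim}(3) produces $\mVV \traS{\act}_N \mVV'$ from $r \traS{\act}_N r'$, with $\mVV'$ an unfolding of $r'$, and Lemma~\ref{lem:NOclosed}(1) converts the step to System~O.

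\textbf{Part 3} handles $\tau$-transitions in System~O. From $\mVV \traS{\tau}_O \mVV'$, Lemma~\ref{lem:NOclosed}(2) gives $\mVV \traS{\tau}_N \mVV''$ with $\mVV'$ an unfolding of $\mVV''$; viewing this as a one-step weak derivation $\mVV \wtraS{}_N \mVV''$ and applying Lemma~\ref{lem:unfolding-is-bisim}(4) yields $r \wtraS{}_N r'$ with $\mVV''$ an unfolding of $r'$. Since unfolding is transitive, $\mVV'$ is also an unfolding of $r'$, as required.

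\textbf{Part 4} is the main obstacle. The asymmetry between the systems is that a single N-step $r \traS{\tau}_N r'$ can be an instance of \rtit{mRecB} ($x \traS{\tau}_N p_x$), which has no analogue in~O, so there is no hope of matching N-steps with O-steps one for one. The plan is to first apply Lemma~\ref{lem:unfolding-is-bisim}(5) to the weak transition $r \wtraS{}_N r'$ to obtain $\mVV \wtraS{}_N \mVV'$ with $\mVV'$ an unfolding of $r'$, and then to establish the following auxiliary claim: for every closed monitor $\mVV$, if $\mVV \wtraS{}_N \mVV''$ then there exists a closed $\mVV'$ with $\mVV \wtraS{}_O \mVV'$ and $\mVV'$ an unfolding of $\mVV''$. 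The claim is proved by induction on the length of the N-derivation, with a case analysis on the first $\tau$-step: a structural or parallel-reconfiguration step is simulated by the same rule in~O; an \rtit{mRecF}-step $\rec x m \traS{\tau}_N m$ is simulated by the O-step $\rec x m \traS{\tau}_O m[\rec x m / x]$, restoring closedness and producing an unfolding; an \rtit{mRecB}-step $x \traS{\tau}_N p_x$ requires no O-move, and the resulting O-state remains a (further) unfolding of the N-state since the variable $x$ has already been substituted by $p_x$ in the closed monitor on the O-side. Combining this auxiliary simulation with the output of Lemma~\ref{lem:unfolding-is-bisim}(5) and transitivity of unfolding completes the proof.
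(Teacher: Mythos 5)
Your proposal is correct and follows essentially the same route as the paper, whose entire proof of this corollary is the one-line remark that it is ``a consequence of Lemmata \ref{lem:unfolding-is-bisim} and \ref{lem:NOclosed}''; parts 1--3 are exactly the compositions you describe. Your treatment of part 4 usefully spells out the iteration that the paper leaves implicit (a single N-step on $r$ becomes a multi-step weak N-derivation on the closed unfolding, which must then be replayed in System O by alternating \Cref{lem:NOclosed}(3) with the unfolding bisimulation, maintaining the invariant that the O-side state is a closed unfolding of the current N-side state), and this is consistent with the paper's intent rather than a departure from it.
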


\begin{proof}
A consequence of Lemmata \ref{lem:unfolding-is-bisim} and \ref{lem:NOclosed}.
\end{proof}

\begin{lem}
	For every closed monitor $\mV$, $\mV \wtraS{\ftr}_O v$ if and only if $\mV \wtraS{\ftr}_N v$.
\end{lem}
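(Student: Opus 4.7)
The plan is to lift \Cref{cor:NOmatching-steps} from single transitions to arbitrary weak traces, and then specialise to the case where the source is a closed monitor (which is trivially an unfolding of itself) and the target is a verdict (where the unfolding relation becomes the identity by \Cref{lem:unfolding-is-bisim}(1)).

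First I would prove the following bridging lemma: for every closed $\mV$ that is an unfolding of $r$ and every finite trace $\ftr \in \Act^*$,
\[
  \mV \wtraS{\ftr}_O \mV' \iff \exists r' \text{ such that } r \wtraS{\ftr}_N r' \text{ and } \mV' \text{ is an unfolding of } r',
\]
and symmetrically for derivations starting in $r$. The proof is by induction on the length of the explicit trace witnessing $\wtraS{\ftr}$. In the base case, the target agrees with the source, and reflexivity of the unfolding relation suffices. For the inductive step, decompose the derivation into a single small step followed by a shorter derivation. A single $\act$-step or $\tau$-step in System O is matched, by \Cref{cor:NOmatching-steps}, by (at most) a single step in System N to some $r''$ of which the intermediate monitor is an unfolding; then apply the induction hypothesis to the remainder. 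The other direction is symmetric, using clauses (2) and (4) of \Cref{cor:NOmatching-steps}. Since closedness is preserved by transitions in System O (and the unfolding image of any reachable state in System N remains a closed monitor, because the only variables that can ever be exposed are ones already bound by some $p_x$ occurring in $\mV$), the closedness hypothesis propagates through the induction and the application of \Cref{cor:NOmatching-steps} remains legitimate at every step.

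To conclude, apply the bridging lemma with $r \deftxt \mV$, which is trivially an unfolding of itself, and with $\mV' \deftxt \vV$. Suppose $\mV \wtraS{\ftr}_O \vV$. The lemma gives some $r'$ with $\mV \wtraS{\ftr}_N r'$ such that $\vV$ is an unfolding of $r'$; by \Cref{lem:unfolding-is-bisim}(1), $r' = \vV$, so $\mV \wtraS{\ftr}_N \vV$. Conversely, if $\mV \wtraS{\ftr}_N \vV$, the lemma yields $\mV'$ with $\mV \wtraS{\ftr}_O \mV'$ and $\mV'$ an unfolding of $\vV$; again by \Cref{lem:unfolding-is-bisim}(1), $\mV' = \vV$, whence $\mV \wtraS{\ftr}_O \vV$.

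The main obstacle is the careful alignment of $\tau$-steps: System O unfolds a $\rec{x}{m}$ in one step, while System N takes two steps (first to the body, later to $p_x$ when the variable is encountered). This mismatch is exactly what the unfolding relation is designed to absorb, which is why \Cref{cor:NOmatching-steps} already states matching up to unfoldings rather than exact equality; the only real work is to ensure that this matching composes cleanly along traces, which is handled by the inductive argument above, and to check that the unfolding hypothesis is preserved when passing from the first transition to the remainder of the derivation, which is guaranteed by the transitivity of the unfolding relation noted immediately after \Cref{lem:unfolding-is-bisim}.
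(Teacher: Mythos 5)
Your proposal is correct and follows essentially the same route as the paper: both generalise the statement to monitors related by the unfolding relation, induct on the explicit trace witnessing the weak derivation, match individual steps via \Cref{cor:NOmatching-steps}, and close the argument with \Cref{lem:unfolding-is-bisim}(1) to identify the target with the verdict. Your version merely makes explicit the invariant (the target remains an unfolding of the System-N state) that the paper leaves implicit in its "it is not hard to prove" step.
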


\begin{proof}
Specifically, we prove that the more general claim that if $\mV$ is an unfolding of $r$, then $\mV \wtraS{\ftr}_O v$ if and only if $r \wtraS{\ftr}_N v$.
We prove each direction separately.
If $\mV \wtraS{\ftr}_O v$, then there is an explicit trace $\etr$ that agrees with $\ftr$ on the external actions, such that
$\mV \traS{\etr}_O v$. Using induction on $\etr$, the first part of Lemma \ref{lem:unfolding-is-bisim}, and Corollary \ref{cor:NOmatching-steps}, it is not hard to prove that
$r \wtraS{\ftr} v$.
The other direction is similar.
\end{proof}

\begin{cor}
If $\mV$ is reactive for System O, then it is also reactive for System N.
\end{cor}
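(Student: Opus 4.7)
The plan is to exploit the bisimulation-like correspondence between Systems O and N that is mediated by the \emph{unfolding} relation, which has been carefully set up in the preceding lemmas. Reactivity is a condition about every reachable state, so I need (i) a way to go from an N-reachable state to some O-reachable state with a matching structure, and (ii) a way to lift a weak O-transition with an external action back into N once I am at such a matching O-state.

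First, I would prove the key auxiliary claim: for every $r \in \reach_N(\mV)$ there exists $\mV' \in \reach_O(\mV)$ such that $\mV'$ is an unfolding of $r$. This goes by induction on the length of the derivation $\mV \etraS{\etV}_N r$. The base case ($\etV = \varepsilon$) is immediate, since $\mV$ is trivially an unfolding of itself. For the inductive step, apply items (2) and (4) of Corollary~\ref{cor:NOmatching-steps} to the last transition in the derivation: a step in N of some label $\mu$ from $r_0$ (which has an O-reachable unfolding $\mV_0$ by the inductive hypothesis) can be matched by a weak O-transition from $\mV_0$ to some $\mV'$, preserving the unfolding relationship. Note that closedness is preserved along N-transitions (since $x \traS{\tau}_N p_x$ lands on the closed $p_x = \rec x m_x$), so the corollary is always applicable.

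With this auxiliary claim in hand, the reactivity argument is short. Given arbitrary $r \in \reach_N(\mV)$ and $\act \in \Act$, pick an unfolding $\mV' \in \reach_O(\mV)$. By reactivity of $\mV$ in O, there exists $\mV''$ with $\mV' \wtraS{\act}_O \mV''$. I then lift this weak O-transition to N by applying Corollary~\ref{cor:NOmatching-steps} step by step along the derivation $\mV' (\traS{\tau}_O)^* \traS{\act}_O (\traS{\tau}_O)^* \mV''$: each O-step is matched by a (possibly weak) N-step starting from the current unfolded state, yielding some $r'$ with $r \wtraS{\act}_N r'$. This shows $\mV$ is reactive in N.

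The main obstacle I anticipate is simply the bookkeeping for lifting Corollary~\ref{cor:NOmatching-steps} from a single step to the composite weak transition $\wtraS{\act}_O$, i.e.\ a straightforward induction on the length of the underlying explicit trace, using repeatedly that the unfolding relation is preserved at every step. Everything else follows directly from the structural properties of unfolding already established in Lemma~\ref{lem:unfolding-is-bisim} and Corollaries~\ref{cor:unfolding_bisim_ind} and~\ref{cor:NOmatching-steps}.
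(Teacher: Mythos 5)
Your proposal is correct and follows essentially the same route as the paper: an induction on the length of the derivation reaching a state in $\reach{\mV}$ to maintain the unfolding correspondence between N-reachable and O-reachable states, followed by an application of Corollary~\ref{cor:NOmatching-steps} to transfer the weak $\act$-transition guaranteed by reactivity in System O back into System N. (One small nit: it is not closedness of the N-reachable states that matters — those may be open — but closedness of their O-side unfoldings, which holds since O-transitions from a closed monitor preserve closedness.)
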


\begin{proof}
From Corollary \ref{cor:NOmatching-steps} and straightforward induction on the number of transitions to reach a monitor in $\reach{\mV}$.
\end{proof}

\subsection{Monitor Transformations}
\label{sec:transformations-appendix}

To prove Proposition \ref{prop:monitor2automaton}, we use the following lemmata.

\begin{lem}{\label{lem:paral-preserves-reactivicity}} \
	\begin{itemize}
		\item
		If $\mV = \mVV \paralD \mVV'$ and $\mV \wtraS{\act}$, then either $\mV \wtraS{} \no$, or $\mVV \wtraS{\act}$ and $\mVV' \wtraS{\act}$.
		\item
		If $\mV = \mVV \paralC \mVV'$ and $\mV \wtraS{\act}$, then either $\mV \wtraS{} \yes$, or  $\mVV \wtraS{\act}$ and $\mVV' \wtraS{\act}$.
	\end{itemize}
\end{lem}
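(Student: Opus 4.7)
The plan is to prove the $\paralD$ case; the $\paralC$ case is entirely dual, with the roles of $\yes$ and $\no$ and of the rules \rtit{mVrD1}/\rtit{mVrD2} versus \rtit{mVrC1}/\rtit{mVrC2} exchanged. Assume $\mV = \mVV \paralD \mVV'$ and $\mV \wtraS{\act}$, so by definition of the weak transition there exist $\mV_1, \mV_2$ with $\mV \wreduc \mV_1 \traS{\act} \mV_2$. I would induct on the length of the initial $\tau$-reduction $\mV \wreduc \mV_1$, classifying each step by which rule of \Cref{fig:monit-parallel} is applied (one of \rtit{mTauL}, \rtit{mTauR}, \rtit{mVrD1}, \rtit{mVrD2}, or \rtit{mVrE}).

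The central invariant is that so long as only the contextual rules \rtit{mTauL} and \rtit{mTauR} have fired, the intermediate state $\mV_i$ retains the shape $\mVV_i \paralD \mVV_i'$ with $\mVV \wreduc \mVV_i$ and $\mVV' \wreduc \mVV_i'$, by a straightforward sub-induction on $i$. If this invariant persists up to the $\act$-step, then the only rule capable of firing an external action on a $\paralD$-composition is \rtit{mPar}, which forces $\mVV_1 \traS{\act}$ and $\mVV_1' \traS{\act}$ simultaneously; chaining with the $\tau$-reductions yields the second disjunct $\mVV \wtraS{\act}$ and $\mVV' \wtraS{\act}$. If instead a collapse rule fires first, I would treat the cases as follows: \rtit{mVrE} forces both sub-monitors to reduce to $\stp$, after which every action is absorbed via \rtit{mVer}, so the second disjunct still holds; \rtit{mVrD1} reduces $\mV$ through $\no \paralD \mVV_i' \traS{\tau} \mVV_i'$, establishing $\mVV \wreduc \no$, and a recursive application of the lemma to the residual $\mVV_i'$ gives $\mV \wreduc \no$ in the favourable sub-case where $\mVV'$ also eventually reduces to $\no$, and the second disjunct otherwise.

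The main obstacle is the \rtit{mVrD2} collapse, in which one component reduces to $\yes$ and the overall term reduces to $\yes$ rather than to $\no$. To reconcile this with the stated first disjunct $\mV \wreduc \no$, one must exploit (a) that $\yes$ transitions with every action via \rtit{mVer}, and (b) that any $\act$-transition available after absorption can be pulled back through the pre-absorption state to an application of \rtit{mPar} that forces both components to transition with $\act$, so that the second disjunct $\mVV \wtraS{\act}$ and $\mVV' \wtraS{\act}$ is recovered in place of the literal $\mV \wreduc \no$. Formalising this pull-back carefully — and ruling out the pathology in which the absorbed component is the only source of the $\act$-transition while the other fails to weakly transition with $\act$ — is the technically delicate part of the argument, and the step where a strengthened inductive invariant (tracking which component supplied each step of the reduction) appears to be needed to conclude the proof as stated.
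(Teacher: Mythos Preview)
Your overall strategy --- induction on the number of leading $\tau$-steps before the $\act$-step, with a case split on the rule producing the first $\tau$ --- is exactly the paper's approach, and your treatment of the base case (\rtit{mPar}), of \rtit{mTauL}/\rtit{mTauR}, and of \rtit{mVrE} is correct. You are also right that \rtit{mVrD2} is the real obstacle, but the pull-back you propose cannot work. Take $\mV = \yes \paralD \prf{b}{\stp}$ with $\act \neq b$: then $\mV \traS{\tau} \yes \traS{\act} \yes$ via \rtit{mVrD2} and \rtit{mVer}, so $\mV \wtraS{\act}$; yet $\prf{b}{\stp}$ has no weak $\act$-transition, and $\mV$ cannot reach $\no$ by $\tau$-steps. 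Hence the lemma \emph{as literally stated} is false, and no strengthened invariant will save it.

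The explanation is that the paper's statement contains a typo: the verdicts in the two bullets are swapped. The intended first bullet reads ``either $\mV \wtraS{} \yes$, or \ldots'' and the intended second reads ``either $\mV \wtraS{} \no$, or \ldots''. Two pieces of internal evidence confirm this. First, the paper's own proof (which, despite saying ``first case'', uses the $\paralC$ rules \rtit{mVrC1}/\rtit{mVrC2}) concludes $\mV \wtraS{} \no$ in the \rtit{mVrC2} case --- the right disjunct for $\paralC$ under the corrected statement. Second, the sole invocation of the lemma (Claim~3 in the proof of \Cref{prop:monitor2automaton}) applies it to a $\paralD$-composition and reads off the conclusion ``$\mVV_1 \paralD \mVV_2 \wtraS{} \yes$''. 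With the corrected statement, the \rtit{mVrD2} case is immediate (it gives the first disjunct directly), and \rtit{mVrD1} is handled without any recursive appeal to the lemma: one component is $\no$ and hence can do $\act$ via \rtit{mVer}, while the other component is the residual $\mV'$ and inherits the remainder of the derivation, so the second disjunct holds.
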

\begin{proof}
	We prove the first case, as the second one is similar.
	If $\mV \wtraS{\act} q$, then we can assume that $\mV (\traS{\tau})^k\traS{\act} q$. We prove the claim by induction on $k$.
	The case for $k = 0$ is immediate from rule \rtit{mPar}.
	If $\mV \traS{\tau} \mV' (\traS{\tau})^k\traS{\act} q$, then one of the following rules was used:
	\begin{description}
		\item[\rtit{mTauL} or \rtit{mTauR}] In this case, we are done by the inductive hypothesis.
		\item[\rtit{mVrE}] In this case,
		$\mVV = \mVV' = \stp \wtraS{\act} \stp$.
		\item[\rtit{mVrC1}] In this case, without loss of generality, $\mVV = \yes \wtraS{\act} \yes$ and $\mVV' = \mV' \wtraS{\act} q$.
		\item[\rtit{mVrC2}] In this case, without loss of generality, $\mV' = \no$ and therefore, $\mV \wtraS{} \no$.
		\qedhere
	\end{description}
\end{proof}

\begin{defn}
	We can define that $\mVV$ is an \emph{immediate submonitor} of $\mV$ recursively: $\mV$ is a immediate submonitor of $\mV$, and
	the immediate submonitors of $\mV$ are also immediate submonitors of $\act.\mV$, $\rec x \mV$, $\mV + \mVV$, and $\mVV + \mV$.
\end{defn}

\begin{lem}{\label{lem:transition-to-reactive}}
	Let $\mVV$ be an immediate submonitor of a reactive monitor $\mV$.
	Then, for every $\mVV'$ for which $\mVV \reduc \mVV'$, $\mVV'$ is reactive.
\end{lem}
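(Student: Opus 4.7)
The plan is to prove this by induction on the derivation of ``$\mVV$ is an immediate submonitor of $\mV$'', working in System N of \Cref{sec:appendix-equivalent-rules}. I would strengthen the conclusion to the following claim: whenever $\mVV$ is an immediate submonitor of $\mV$ and $\mVV \reduc \mVV'$, the weak transition $\mV \wreduc \mVV'$ can be realised by a \emph{non-empty} sequence of transitions. From this strengthened claim, reactivity of $\mVV'$ follows immediately: $\mVV' \in \reach{\mV}$, and reactivity of $\mV$ propagates to every reachable state, since $\reach{\mVV'} \subseteq \reach{\mV}$.

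The base case $\mVV = \mV$ is immediate, as $\mV \reduc \mVV'$ is itself a one-step non-empty sequence. The inductive step considers the four ways that the immediate submonitor derivation can be extended: wrapping $\mV_0$ with a prefix, a recursion binder, or a sum context on either side. In each case, the induction hypothesis provides a non-empty path $\mV_0 \wreduc \mVV'$. For $\mV = \act.\mV_0$, I prepend $\mV \traS{\act} \mV_0$; for $\mV = \rec x \mV_0$, I prepend the $\tau$-transition $\mV \traS{\tau} \mV_0$ furnished by rule \rtit{mRecF} of System N. In both cases, the lengthened path is still non-empty.

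The sum cases $\mV = \mV_0 + \mVVV$ and $\mV = \mVVV + \mV_0$ are the delicate step, and this is exactly where the strengthened invariant becomes essential. Because the induction hypothesis guarantees that $\mV_0 \wreduc \mVV'$ has at least one transition, it can be split as $\mV_0 \reduc m_1 \wreduc \mVV'$; the first step lifts to $\mV \reduc m_1$ via \rtit{mSelL} (resp.\ \rtit{mSelR}), and splicing in the remainder yields a non-empty path $\mV \wreduc \mVV'$. Had the induction hypothesis permitted a zero-length path (i.e.\ the degenerate case $\mV_0 = \mVV'$), this lifting would break, because a sum node has no autonomous transition to its own summand; witness, for example, $\mV_0 = \rec x\, x$ with $\mVV = x \reduc \mV_0$, where reaching $\mV_0$ from $\mV_0 + \mVVV$ requires two successive transitions via \rtit{mRecF} and \rtit{mRecB}. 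The non-emptiness invariant is precisely what sidesteps this obstacle, and explains why System N (with its direct $\rec x \mV_0 \traS{\tau} \mV_0$) gives a cleaner proof than System O, where the recursion case would force a substitution in the body of the fixpoint. Once the strengthened claim is established, the lemma's statement follows at once.
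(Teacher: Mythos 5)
Your proof is correct and follows essentially the same route as the paper's: both argue by induction over the immediate-submonitor derivation and use \rtit{mAct}, \rtit{mRecF} (of System N) and \rtit{mSelL}/\rtit{mSelR} to push reachability of $\mVV'$ up to the reactive monitor $\mV$ — the paper by lifting the single step $\mVV \reduc \mVV'$ to the innermost wrapper and re-invoking the inductive hypothesis, you by carrying a strengthened ``non-empty path from $\mV$'' invariant, which is the same idea with different bookkeeping and correctly isolates why the sum case is the delicate one. One small notational point: the paper reserves $\wreduc$ for $\tau^{*}$-sequences, so your invariant should be stated as $\mV \etraS{\etV} \mVV'$ for some non-empty explicit trace $\etV$ (i.e., membership in $\reach{\mV}$ witnessed by a non-empty derivation), since the path you construct may contain external actions.
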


\begin{proof}
	The proof is by induction on $l(\mV) - l(\mVV)$ and the base case is $\mV = \mVV$, which is trivial.
	If $\mV' = \acta.\mVV$ is an immediate submonitor of $\mV$,
	then
	$\mV' \traS{\acta} \mVV$, so
	by the inductive hypothesis, $\mVV$ is reactive and $\mVV' \in \reach{\mVV}$, so $\mVV'$ is also reactive.
	The case is similar for $\mV' = \rec x \mVV \traS{\tau} \mVV$: by the inductive hypothesis, $\mVV$ is reactive 
	and $\mVV' \in \reach{\mVV}$, so $\mVV'$ is also reactive.
	If $\mV' = \mVV + \mV''$ or $\mV' = \mV'' + \mVV$, and $\mV'$ is an immediate submonitor of $\mV$, then,
	$\mVV \reduc \mVV'$ implies that $\mV' \reduc \mVV'$
	and
	we are done by the inductive hypothesis.
	%
	%
\end{proof}

\begin{propbis}{\ref{prop:monitor2automaton}}
	For every reactive parallel monitor $\mV$, there is an alternating automaton that
	accepts $L_a(\mV)$
	and one that
	accepts $L_r(\mV)$.
\end{propbis}
\begin{proof}
	For completeness of exposition, we describe here, as well, the
	process of constructing an alternating automaton that accepts $L_a(\mV)$
	--- the case for $L_r(\mV)$ is similar.
	%
	The automaton for $\mV$ is $A_\mV = (Q,\Act,\mV,\delta,F)$, where
	\begin{itemize}
		\item $Q$ is the set of submonitors of $\mV$;
		\item $F = \{ \mVV \in Q \mid n \text{ accepts } \varepsilon \} $;
		\item Let for every $S \subseteq Q$, $\delta_0 (q,\act) (S) = 1$ iff $q \in F$;  $\delta$ is the closure of $\delta_0$ under the following conditions.
		For every $S \subseteq Q$:
		\begin{itemize}
			\item if $\mVV \in S$, then $\delta (\act.n,\act) (S) = 1$; 
			\item if $\delta (\mVV,\act) (S) = 1$ or $\delta (\mVV',\act) (S) = 1$, then $\delta (\mVV + \mVV',\act) (S) = 1$;
			\item if $\delta (\mVV,\act) (S) = 1$ or $\delta (\mVV',\act) (S) = 1$,
			and  
			$\mVV \wtraS{\act}$ 
			and 
			$\mVV' \wtraS{\act}$,
			then $\delta (\mVV \paralD \mVV',\act) (S) = 1$;
			\item if $\delta (\mVV,\act) (S) = 1$ and $\delta (\mVV',\act) (S) = 1$, then $\delta (\mVV \paralC \mVV',\act) (S) = 1$;
			\item if $\delta (\mV_x,\act) (S) = 1$, then $\delta (p_x,\act) (S) = \delta (x,\act) (S) = 1$.
		\end{itemize}
	\end{itemize}
	
	We consider the parallel extension of a set $S$ of monitors in $Q$, which is the smallest set $S^+$ such that $S \subseteq S^+$, and if $\mVV, \mVV'\in S^+$, then $\mVV \paralC \mVV' \in S^+$, and if $\mVV \in S^+$, then $\mVV \paralD \mVV', \mVV' \paralD \mVV \in S^+$.
	We prove the following claims:
	
	\paragraph{Claim 1:}
	If $\delta(\mVV,\act)(S) = 1$, then $\exists q \in (S \cup F)^+. \mVV \wtraS{\act} q$.
	By induction on the closure conditions for $\delta$.
	The base case is that $\delta_0(\mVV,\act) = 1$, which implies that $\mVV \wtraS{} \yes \traS{\act} \yes \in F$.
	The remaining cases are straightforward.

	\paragraph{Claim 2:} If $\delta^*(\mVV,\ftr)(F) = 1$, then $\delta^*(\mVV,\ftr \ftrr)(F) = 1$.
	We observe that $\delta^*(\mVV,\ftr)(S)$ is monotone with respect to $S$
	(\ie if $S \subseteq S'$ and $\delta^*(\mVV,\ftr)(S) = 1$, then  $\delta^*(\mVV,\ftr)(S') = 1$).
	The claim follows by a
	straightforward induction on $\ftr$.
	
	\paragraph{Claim 3:}
	If $\mVV_1 \wtraS{\ftr} \yes$ and $\mVV_1 \paralD \mVV_2$ is reactive, then $\mVV_1 \paralD \mVV_2 \wtraS{\ftr} \yes$. The proof is by induction on $\ftr$.
	If $\ftr = \varepsilon$, then $\mVV_1 \wtraS{} \yes$, which implies that $\mVV_1 \paralD \mVV_2 \wtraS{\ftr} \yes \paralD \mVV_2 \traS{\tau} \yes$.
	If $\ftr = \act \ftrr$, then $\mVV_1 \wtraS{\act} \mVV' \wtraS{\ftrr} \yes$.
	Since $\mVV_1 \paralD \mVV_2$ is reactive,
	there is some $q$ such that $\mVV_1 \paralD \mVV_2 \wtraS{\act} q$.
	Therefore, by Lemma \ref{lem:paral-preserves-reactivicity}, either  $\mVV_1 \paralD \mVV_2 \wtraS{} \yes$, or there is some $q'$, such that $\mVV_2 \wtraS{\act} q'$. Therefore,  $\mVV_1 \paralD \mVV_2 \wtraS{\act}  \mVV_1' \paralD q'$, and we are done by the inductive hypothesis.
	
	\paragraph{Claim 4:} If every $\mVV \in S$ accepts $\ftr$, then every reactive $\mVV \in S^+$ accepts $\ftr$.
	The proof is by induction on the construction of $\mVV$ from monitors in $S$.
	If $\mVV \in S$, then by our assumptions, $\mVV$ accepts $\ftr$.
	If $\mVV = \mVV_1 \paralC \mVV_2$ where $\mVV_1, \mVV_2 \in S^+$, then by the inductive hypothesis, $\mVV_1,\mVV_2 \wtraS{\ftr} \yes$;
	using the rules for parallel monitors and induction on $\ftr$, we can complete the proof.
	If $\mVV = \mVV_1 \paralD \mVV_2$ where $\mVV_1 \in S^+$, then by the inductive hypothesis, $\mVV_1 \wtraS{\ftr} \yes$. Then, the proof is complete by Claim 3.

	Using the above claims, we now prove that
	$\mV$ accepts $\ftr$ if and only if
	$\delta^*(\mV,\ftr) (F) = 1$.
	%
	We prove each implication separately
	for the submonitors of $\mV$.
	
	\paragraph{We first prove that if $\mVV$ accepts $\ftr$, then $\delta^*(\mVV,\ftr) (F) = 1$.}
	By Claim 2, we can assume that $\ftr$ is minimal such that $\mVV \wtraS{\ftr} \yes$.
	If $\mVV$ accepts $\ftr$, then there is an explicit trace $\etr$ that agrees with $\ftr$ on the external actions, such that $\mVV \traS{\etr} \yes$.
	Thus, we prove that for every explicit trace $\etr$,  if $\ftr$ is a finite trace that agrees with $\etr$ on the external actions, $\mVV \traS{\etr} \yes$, and $\ftr$ is minimal such that $\mVV \wtraS{\ftr} \yes$, then $\delta^*(\mVV,\ftr) (F) = 1$.
	We prove this claim by induction on $\etr$:
	\begin{description}
		\item[Case $\etr \in \{\tau\}^*$:] Then $\mVV$ accepts $\varepsilon$, so by the definition of $F$, $\delta^*(\mVV,\ftr) (F) = 1$.
		\item[Case $\etr \notin \{\tau\}^*$ and $\mVV = \mVV_1 \paralC \mVV_2$:]
		Let $\ftr = \act \ftrr$.
		Since $\mVV\traS{\etr} \yes$, by Lemma \ref{lem:mon-combinators2}, there are explicit traces $\etr_1,\etr_2$ that agree with $\etr$ (and with $\ftr$) on the external actions and are strictly shorter than $\etr$, such that $\mVV_1 \traS{\etr_1} \yes$ and $\mVV_2 \traS{\etr_2} \yes$.
		By the inductive hypothesis, $\delta^*(\mVV_1,\ftr)(F) = \delta^*(\mVV_2,\ftr)(F) = 1$.
		Let $S = \{ \mVV'' \mid \delta^*(\mVV'',\ftrr)(F) = 1 \}$; 
		by the definition of $\delta^*$, 
		\
		$\delta(\mVV_1,\act)(S) = \delta(\mVV_2,\act)(S) = \delta^*(\mVV_1,\ftr)(F) = 1$,
		and thus,
		%
		by the closure properties of $\delta$, 
		we have that 
		$\delta^*(\mVV,\ftr)(F) = \delta(\mVV,\act)(S) = 1$.
		\item[Case $\etr \notin \{\tau\}^*$ and $\mVV = \mVV_1 \paralD \mVV_2$, ]
		let $\ftr = \act \ftrr$.
		Since $\mVV\traS{\etr} \yes$, by Lemma \ref{lem:mon-combinators2}, (without loss of generality) there is an explicit trace $\etr_1$ that agrees with $\etr$ (and with $\ftr$) on the external actions and is strictly shorter than $\etr$, such that $\mVV_1 \traS{\etr_1} \yes$.
		Therefore, 
		$\mVV_1 \wtraS{\act}$, and by the minimality of $\ftr$ and Lemma \ref{lem:mon-combinators-follow-along},
		$\mVV_2 \wtraS{\act}$.
		By the inductive hypothesis, $\delta^*(\mVV_1,\ftr)(F) = 1$.
		Let $S = \{ \mVV'' \mid \delta^*(\mVV'',\ftrr)(F) = 1 \}$; by the definition of $\delta^*$, we have that $\delta(\mVV_1,\act)(S) =  \delta^*(\mVV_1,\ftr)(F) = 1$.
		%
		Thus, by the closure properties of $\delta$, we can conclude that $\delta^*(\mVV,\ftr)(F) = \delta(\mVV,\act)(S) = 1$.
		\item[Case $\etr = \tau \etrr $ and $\ftr = \act \ftrr$:] Then
		$\mVV \traS{\tau}\mVV'$ for some $\mVV'\traS{\etrr} \yes$ that agrees with $\ftr$ on the external actions.
		By the inductive hypothesis, $\delta^*(\mVV',\ftr) (F) = 1$.
		We prove by induction on the derivation of $\mVV \traS{\tau}\mVV'$ that $\delta^*(\mVV,\ftr) (F) = 1$:
		\begin{description}
			\item[Case $\mVV \traS{\tau}\mVV'$ was derived by rule \rtit{mRecF} or \rtit{mRecB}:]
			Then either $\mVV = p_x$ and $\mVV = \mV_x$, or $\mVV = x$ and $\mVV' = p_x$. Let $S = \{ \mVV'' \mid \delta^*(\mVV'',\ftrr)(F) = 1 \}$. Since $\delta^*(\mVV',\ftr) (F) = 1$, by the definition of $\delta^*$,
			we have that 
			$\delta(\mVV',\act)(S) = \delta^*(\mVV',\ftr)(F) = 1,  $
			and therefore, by the closure conditions of $\delta$,
			we can conclude that
			$\delta^*(\mVV,\ftr)(F) = \delta(\mVV,\act)(S) = 1  .$
			\item[Case $\mVV \traS{\tau}\mVV'$ was derived by  \rtit{mTauL}, \rtit{mVrE}, \rtit{mVrC1}, \rtit{mVrC2}, \rtit{mVrD1}, or \rtit{mVrD2}:] Then we are in the case of $\mVV = \mVV_1 \paralG \mVV_2$, which was handled above.
			\item[Case $\mVV \traS{\tau}\mVV'$ was derived by rule \rtit{mSelL} or \rtit{mSelR}:] Then
			$\mVV = \mVV_1 + \mVV_2$ and $\mVV_1 \traS{\tau} \mVV'$ or $\mVV_2 \traS{\tau} \mVV'$. By the inductive hypothesis  (for the derivation of $\mVV \traS{\tau} \mVV'$), $\delta^*(\mVV_1,\ftr)(F) = 1$ or $\delta^*(\mVV_2,\ftr)(F) = 1$, and similarly to the previous cases, from the closure conditions of $\delta$, $\delta^*(\mVV,\ftr)(F) = 1$.
		\end{description}
		\item[Final case $\etr = \act \etrr$ and $\ftr = \act \ftrr$,] where $\ftrr$ agrees with $\etrr$ on the external actions: Then
		$\mVV \traS{\act}\mVV'$ for some $\mVV' \traS{\ftrr} \yes$.
		By the inductive hypothesis, $\delta^*(\mVV',\ftrr)(F) = 1$.
		By the definition of $\delta^*$, we have that $\delta^*(\mVV,\act \ftrr)(F) = \delta(\mVV,\act) (S)$, where $S = \{ q\in Q \mid \delta^*(q,\ftrr)(F) = 1 \}$.
		We observe that $\mVV' \in S$.
		We now prove, by induction on the derivation of $\mVV \traS{\act} \mVV'$ from the rules of \Cref{fig:monit-instr,fig:monit-parallel}, that $\delta^*(\mVV,\act \ftrr)(F) = 1$, or, equivalently,  that $\delta(\mVV,\act)(S) = 1$.
		\begin{description}
			\item[The base case is that $\mVV \traS{\act} \mVV'$ is produced by rule \rtit{mAct} or \rtit{mVerd}:] Then, $\mVV = \mVV' = \yes$ or $\mVV = \act.\mVV'$.
			If  $\mVV = \yes$, then $\mVV \in F$, and by the definition of $\delta_0$, 
			we have that 
			$\delta(\mVV,\act)(S) = 1$. If $\mVV = \act. \mVV'$, then, since $\mVV' \in S$, by the first closure condition, 
			we infer that 
			$\delta(\mVV,\act)(S) = 1$.
			\item[Case $\mVV \traS{\act} \mVV'$ is produced by rule \rtit{mSeL} or \rtit{mSeR}:] Then the argument is similar to the analogous case for $\mVV \traS{\tau} \mVV'$ above.
			\item[Case $\mVV \traS{\act} \mVV'$ is produced by rule \rtit{mPar}:] Then $\mVV = \mVV_1 \paralG \mVV_2$, which has been handled above.
		\end{description}
	\end{description}
	
	\paragraph{For the other direction,}
	we prove that for every
	immediate submonitor $\mVV$ of a reactive submonitor $\mV'$,
	if
	$\delta^*(\mVV,\ftr) (F) = 1$, then $\mVV$ accepts $\ftr$.
	Since $\mV$ is reactive, this is enough to complete the proof.
	%
	%
	We
	prove that $\mVV$ accepts $\ftr$,
	by induction on $\ftr$.
	\begin{description}
		\item[Case $\delta^*(\mVV,\varepsilon)(F) = 1$:] Then $\mVV \in F$, and thus, $\mVV$ accepts $\varepsilon$.
		\item[Case $\delta^*(\mVV,\act \ftrr)(F) = 1$:] Then $\delta(\mVV,\act)(S) = 1$, where $S = \{ \mVV' \mid \delta^*(\mVV',\ftrr)(F) = 1 \}$.
		Therefore, either $\delta_0(\mVV,\act)(S) = 1$, or $\delta(\mVV,\act)(S) = 1$ can be derived from the closure conditions for $\delta$; therefore, we can use induction on this derivation of $\delta(\mVV,\act)(S) = 1$. We observe that, from the inductive hypothesis, for every $\mVV' \in S$, $\mVV'$ accepts $\ftrr$.
		By Claim 4, for every reactive $\mVV' \in S^+$, $\mVV'$ accepts $\ftrr$.
		\begin{description}
			\item[The base case is $\delta_0(\mVV,\act)(S) = 1$:] In this case $\mVV \in F$, and thus, $\mVV$ accepts $\varepsilon$ and all its extensions, including $\ftr$.
			\item[Case $\mVV = \act.\mVV'$, where $\mVV' \in S$:] Then $\mVV \traS{\act} \mVV'$ and $\mVV'$ accepts $\ftrr$;
			by Lemma \ref{lem:transition-to-reactive}, $\mVV'$ is reactive, 
			therefore, by the inductive hypothesis, $\mVV'$ accepts $\ftrr$, and so $\mVV$ accepts $\ftr$.
			\item[Case $\mVV = \mVV_1 + \mVV_2$, where $\delta(\mVV_1,\act)(S) = 1$ or $\delta(\mVV_2,\act)(S) = 1$,]
			then
			$\mVV_1,\mVV_2$ are also immediate submonitors of
			$\mV'$, so by the inductive hypothesis,
			$\mVV_1 \wtraS{\act\ftrr} \yes$ or $\mVV_2 \wtraS{\act\ftrr} \yes$, and by Lemma \ref{lem:mon-combinators}, $\mVV \wtraS{\act\ftrr} \yes$.
			
			\item[Case $\mVV = x$ or $\mVV = p_x$ and $\delta(\mV_x,\act)(S) = 1$:]
			Then in either case,
			since $x \traS{\tau} p_x \traS{\tau} \mV_x $, by Lemma \ref{lem:transition-to-reactive},
			$\mV_x$ is reactive, and therefore
			by the inductive hypothesis, $\mV_x \wtraS{\act\ftrr} \yes$, but
			$x \traS{\tau} p_x \traS{\tau} \mV_x \wtraS{\act\ftrr} \yes$,
			and the proof is thus complete.
			
			\item[Case $\mVV = \mV_1\paralD \mV_2$, where $\delta(\mVV_1,\act)(S) = 1$ or $\delta(\mVV_2,\act)(S) = 1$,
			and 
			$\mV_1 \wtraS{\act}$ and $\mV_2 \wtraS{\act}$:] Then
			by Claim 1 and rule \rtit{mPar}, there is some $\mV_1\paralD \mV_2 \wtraS{\act} \mV_1'\paralD \mV_2'$, where $\mV_1' \in S^+$ or $\mV_2' \in S^+$ --- therefore, also $\mV_1'\paralD \mV_2' \in S^+$.
			By Lemma \ref{lem:transition-to-reactive}, $\mV_1'\paralD \mV_2'$ is reactive.
			Hence, from the observation above about $S$ and $S^+$, $\mV_1'\paralD \mV_2'$ accepts $\ftrr$, and thus $\mV_1\paralD \mV_2$ accepts $\ftr$.
			
			\item[The case for $\mVV = \mV_1\paralD \mV_2$] is similar.
			\qedhere
		\end{description}
	\end{description}
	%
	%
\end{proof}

\section{Linear-Time Monitorability}
\label{sec:monitorability-appendix}
We now present the omitted proofs of \Cref{sec:ltmu-monitorability}.

\subsection{Complete Monitoring}
\label{sec:complete-mon-app}

\begin{propbis}{\ref{prop:vedict-equiv-implies-complete-formula}}
  If \mV is sound and complete for \hV then
  \begin{enumerate}
    \item $\mV \mveq \mVV$ implies \mVV is sound and complete for \hV.
    \item \mV is a sound and complete monitor for $\hV'$ implies $\hSemL{\hV}=\hSemL{\hV'}$
  \end{enumerate}
\end{propbis}
\begin{proof}
  For the first clause, we need to prove soundness and completeness for \mVV.

  For \emph{soundness}, \Cref{def:soundness-n-completeness}, assume \rej{\mVV,\tV}, \ie $\exists\; p,\ftV \cdot \rej{\mVV,p,\ftV}$ and \ftV\ is a prefix of \tV.
  By \Cref{def:acc-n-rej} and \Cref{lem:unzipping},
  it follows
  that $\mVV \wtraS{\ftV} \no$.
  By $\mV \mveq \mVV$ we know that $\mV \wtraS{\ftV} \no$ which, in turn, implies that  \rej{\mV,\tV}.
  Since \mV is sound (and complete) for \hV, it must be the case that $\tV \not\in\hSem{\hV}$, which is the result we want.
  The case for \acc{\mVV,\tV} is analogous.

  The argument for \emph{completeness}, \Cref{def:soundness-n-completeness}, is similar.
  Pick a trace $\tV \in \hSem{\hV}$.
  Since \mV is complete for \hV,
  we prove
  that $\acc{\mV,\tV}$.
  Using the fact that $\mV \mveq \mVV$, \Cref{def:acc-n-rej} and \Cref{lem:unzipping}, we can then deduce that $\acc{\mVV,\tV}$ which is the required result.
  The case for  $\tV \not\in \hSem{\hV}$ is analogous.

  For the second clause,  pick a $\tV\in\hSemL{\hV}$ without loss of generality.
  By completeness, \Cref{def:soundness-n-completeness}, $\acc{\mV,\tV}$, and by soundness, \Cref{def:soundness-n-completeness}, $\tV\in\hSemL{\hV'}$.
\end{proof}

We now present the proof demonstrating the complete-monitorability of the syntactic fragment \HML from \Cref{def:complete-fragment-HML}.
To prove \Cref{prop:hml-monitorable}, we
first show that all the synthesised monitors \hSyn{\hV} are \emph{reactive}, as defined in \Cref{def:reactive-mon}.

\begin{lemmabiss}{\ref{lem:synt-complete-mon-reactive}}
  For all $\hV \in \HML$, \hSyn{\hV} is reactive.
\end{lemmabiss}
\begin{proof}
The proof proceeds by structural induction on \hV.
The cases for \hTru,\hFls,\hSuf{\ASet}{\hVV} and  \hNec{\ASet}{\hVV} are immediate.
For the case of \hAnd{\hVV_1}{\hVV_2}, we know from \Cref{def:mon-synt-complete} that  $\hSyn{\hAnd{\hVV_1}{\hVV_2}} = \hSyn{\hVV_1} \paralC \hSyn{\hVV_2}$.
By the inductive hypothesis we know that both \hSyn{\hVV_1} and \hSyn{\hVV_2} are reactive and, by \rtit{mPar} of \Cref{fig:monit-parallel}, it follows that $\hSyn{\hVV_1} \paralC \hSyn{\hVV_2}$ is reactive as well.
The case for  \hOr{\hVV_1}{\hVV_2} is analogous.
\end{proof}

\begin{propbis}{\ref{prop:hml-monitorable}}
For all $\hV \in \HML$, \hSyn{\hV} is a sound and complete monitor for \hV.
\end{propbis}
\begin{proof}
For \emph{soundness}, \Cref{def:soundness-n-completeness}, we need to show that
$(i)$ \rej{\hSyn{\hV},\tV}   implies $\tV\not \in \hSem{\hV}$ and
$(ii)$ \acc{\hSyn{\hV},\tV} implies $\tV\in \hSem{\hV}$.
We proceed by structural induction on \hV, and the main cases are:
\begin{description}
  \item[Case \hAndF and \hOrF:]
  By \Cref{def:mon-synt-complete} we know that $\hSyn{\hAndF} = \hSyn{\hV_1} \paralC \hSyn{\hV_2}$.
  If  \rej{\hSyn{\hV_1} \paralC \hSyn{\hV_2},\tV}, by \Cref{def:acc-n-rej},  there exist $p,\ftV$ such that \ftV\ is a prefix of \tV\ and \rej{\hSyn{\hV_1} \paralC \hSyn{\hV_2},p,\ftV}.
  Using \Cref{def:acc-n-rej} and \Cref{lem:unzipping} we know that $\sys{(\hSyn{\hV_1} \paralC \hSyn{\hV_2})}{p} \wtraS{\ftV} \sys{\no}{p'}$ for some $p'$.
  By \Cref{lem:mon-combinators}, this implies that either $\sys{\hSyn{\hV_1}}{p} \wtraS{\ftV} \sys{\no}{p'}$   or $\sys{\hSyn{\hV_2}}{p} \wtraS{\ftV} \sys{\no}{p'}$, which means that
  either \rej{\hSyn{\hV_1},\tV} or \rej{\hSyn{\hV_2},\tV}.
  By the I.H., we deduce that either $\tV \not \in \hSem{\hV_1}$ or $\tV \not \in \hSem{\hV_2}$ which is enough to conclude that $\tr \not \in \hSem{\hAndF}$.
  If \acc{\hSyn{\hV_1} \paralC \hSyn{\hV_2},\tV}, then by \Cref{def:acc-n-rej}, \Cref{lem:unzipping,lem:mon-combinators} and the I.H. we obtain
  $\tV \in \hSem{\hV_1} $ and  $\tV \in \hSem{\hV_2}$, and therefore we conclude $\tr \in \hSem{\hAndF}$.
  The case for \hOrF is analogous.
  \item[Case \hNec{\ASet}{\hV} and \hSuf{\ASet}{\hV}:]  In the case of \hNec{\ASet}{\hV}, by \Cref{def:mon-synt-complete} we know that
   $\hSyn{\hNec{\ASet}{\hV}}= \ch{\prf{\ASet}{\hSyn{\hV}}}{\uprf{\ASet}{\yes}}$.
  If \rej{\ch{\prf{\ASet}{\hSyn{\hV}}}{\uprf{\ASet}{\yes}},\tV} then there exist $p,\ftV$ such that \ftV\ is a prefix of \tV\ and \rej{\ch{\prf{\ASet}{\hSyn{\hV}}}{\uprf{\ASet}{\yes}},p,\ftV}.
  From \Cref{def:acc-n-rej,lem:unzipping} we know that  $\ch{\prf{\ASet}{\hSyn{\hV}}}{\uprf{\ASet}{\yes}} \wtraS{\ftV} \no$ and,
   from the structure of the monitor and \Cref{lem:mon-combinators}, it must be the case that $\prf{\ASet}{\hSyn{\hV}} \wtraS{\ftV} \no$.
   This means that $\ftV = \acta\ftVV$ where $\acta \in \ASet$ and $\hSyn{\hV} \wtraS{\ftVV} \no$, which in turn implies that $\tV=\acta\tVV$ and \rej{\hSyn{\hV},\tVV}.
   By the I.H., \rej{\hSyn{\hV},\tVV} implies that $\tVV \not\in \hSem{\hV}$ which suffices to conclude that $\tV=\acta\tVV \not\in \hSem{\hNec{\ASet}{\hV}}$.
    If  \acc{\ch{\prf{\ASet}{\hSyn{\hV}}}{\uprf{\ASet}{\yes}},\tV}, then by \Cref{def:acc-n-rej}, \Cref{lem:unzipping,lem:mon-combinators} we know that either
    $\prf{\ASet}{\hSyn{\hV}} \wtraS{\ftV} \yes$  or $\uprf{\ASet}{\yes} \wtraS{\ftV} \yes$ for some \ftV\ is a prefix of \tV.
    In the latter case, we deduce that $\tV=\acta\tVV$ for some $\acta,\tVV$ where $\acta \not\in \ASet$, which trivially implies that $\tV \in \hSem{\hNec{\ASet}{\hV}}$.
    In the former case, we know that  $\tV=\acta\tVV$ for some $\acta,\tVV$ where $\acta \in \ASet$ and \acc{\hSyn{\hV}, \tVV}  which, by the I.H., implies that $\tVV \in \hSem{\hV}$ and hence $\tV \in \hSem{\hNec{\ASet}{\hV}}$.
    The case for $\hSuf{\ASet}{\hV}$ is similar.
\end{description}

For completeness, \Cref{def:soundness-n-completeness}, we need to show that
$(i)$ \emph{violation-completeness}, \ie $\tV\not \in \hSem{\hV}$  implies \rej{\hSyn{\hV},\tV} and
$(ii)$ \emph{satisfaction-completeness}, \ie $\tV\in \hSem{\hV}$  implies \acc{\hSyn{\hV},\tV}.
Again, we proceed by structural induction on \hV, and the main cases are:
\begin{description}
  \item[Case \hAndF and \hOrF:]
  If $\tV\not \in \hSem{\hAndF}$, then $\tV \not \in \hSem{\hV_1}$ or $\tV \not \in \hSem{\hV_2}$.
  Without loss of generality, assume the former, \ie $\tV \not \in \hSem{\hV_1}$.
  By the I.H.,  we have \rej{\hSyn{\hV_1},\tV} which, by \Cref{def:acc-n-rej,lem:unzipping} means that there exists
  a prefix \ftV of \tV such that $\hSyn{\hV_1} \wtraS{\ftV} \no$.
  Since, $\hSyn{\hAndF} = \hSyn{\hV_1} \paralC \hSyn{\hV_2}$,  by \Cref{lem:mon-combinators}, we conclude that $\rej{\hSyn{\hAndF},\tV}$.
  The proof for $\tV \in \hSem{\hAndF}$ follows a similar structure, using the fact that both $\tV \in \hSem{\hV_1}$ or $\tV \in \hSem{\hV_2}$.
  The case for \hOrF is analogous.
  \item[Case \hNec{\ASet}{\hV} and \hSuf{\ASet}{\hV}:]
    If $\tV\not \in \hSem{\hNec{\ASet}{\hV}}$ then, by \Cref{fig:recHML}, it must be the case that $\tV=\acta\tVV$ for some $\acta \in \ASet$ and $\tVV\not \in \hSem{\hV}$.
    By the I.H., we know that $\rej{\hSyn{\hV},\tVV}$,
    from which one can then conclude that \rej{\ch{\prf{\ASet}{\hSyn{\hV}}}{\uprf{\ASet}{\yes}}, \tV} where
    $\hSyn{\hNec{\ASet}{\hV}}=\ch{\prf{\ASet}{\hSyn{\hV}}}{\uprf{\ASet}{\yes}}$, via \Cref{lem:unzipping,lem:mon-combinators}.
     If $\tV \in \hSem{\hNec{\ASet}{\hV}}$ then, by \Cref{fig:recHML}, it must be one of two cases.
     Either $\tV = \acta \tVV$ where $\acta \not\in \ASet$, which implies that $\acc{\hSyn{\hNec{\ASet}{\hV}},\tV}$, since $\hSyn{\hNec{\ASet}{\hV}} = \ch{\prf{\ASet}{\hSyn{\hV}}}{\uprf{\ASet}{\yes}}$.
    Else $\tV = \acta \tVV$ where $\acta \in \ASet$ and $\tVV \in \hSem{\hV}$.
    By I.H., we deduce that $\acc{\hSyn{\hV},\tVV}$, and by \Cref{lem:unzipping,lem:mon-combinators} we are able to construct an acceptance computation for \ch{\prf{\ASet}{\hSyn{\hV}}}{\uprf{\ASet}{\yes}}, hence \acc{\hSyn{\hNec{\ASet}{\hV}},\tV}.
    The case for $\hSuf{\ASet}{\hV}$ is similar.
    \qedhere
\end{description}
\end{proof}

\medskip

We now procede to give the proof for the maximality of \HML from \Cref{def:complete-fragment-HML}.
The following are technical lemmata leading up to \Cref{lem:remove-rec-complete-monitor}.

\begin{lem} \label{lem:remove-rec-regular-monitor-deterministic} For each $\mV\in\RMon$: \quad
  \begin{enumerate}
    \item If \mV is a syntactically deterministic monitor, then $\noRec{\mV}$ is also
   syntactically
    deterministic.
    \item If \mV is a reactive and syntactically deterministic monitor, then $\noRec{\mV}$ is also
   reactive.
  \end{enumerate}
\end{lem}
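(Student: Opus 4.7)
The plan is to exploit the fact that $\noRec{-}$ preserves the prefix/sum skeleton of a regular monitor while stripping every $\rec{x}{-}$ binder and sending every free occurrence of a bound variable to $\stp$; thus it introduces no new sums and keeps the head symbol of every prefix.

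For part (1) I would proceed by structural induction on $\mV$. The interesting case is $\mV = \ch{\mV_1}{\mV_2}$: syntactic determinism of $\mV$ (Definition~3.10) forces every multi-summand sum occurring in $\mV$ to have the form $\sum_{\act \in A}\act.m_\act$. Because $\noRec{\prf{a}{n}} = \prf{a}{\noRec{n}}$ preserves prefix shape, any such sum is transformed by $\noRec{-}$ into $\sum_{\act \in A}\act.\noRec{m_\act}$, which is still of the required form. The remaining cases (verdicts, variables, prefixes, $\rec$) are immediate from \Cref{def:no-rec-mon} and create no new sums.

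For part (2) I would establish a one-step correspondence between transitions of $\noRec{\mV}$ and weak transitions of $\mV$. Note that $\noRec{\mV}$ contains no $\rec$, so its LTS has no $\tau$-steps and $\reach{\noRec{\mV}}$ is obtained purely from strong external transitions. The key claim to prove, by induction on the derivation of a strong transition $\noRec{\mV}\traS{\act}\mVV'$, is that either $\mVV'$ is a verdict, or there is some $\mV' \in \reach{\mV}$ such that the set of external actions enabled at $\mV'$ is contained in the set enabled at $\mVV'$. Given this, reactivity of $\noRec{\mV}$ follows: for $\mVV' \in \reach{\noRec{\mV}}$ and $\act \in \Act$, if $\mVV'$ is a verdict then $\mVV' \traS{\act} \mVV'$ by rule \rtit{mVer}; otherwise the matching reachable state $\mV'$ of $\mV$ satisfies $\mV' \wtraS{\act}$ by reactivity of $\mV$, whence $\mVV'\wtraS{\act}$ by the correspondence.

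The main obstacle will be formalising this correspondence cleanly, since $\noRec{-}$ does \emph{not} commute with the substitution $n[\rec{x}{n}/x]$ triggered by rule \rtit{mRec}: at positions in $\mV$ where unfolding plants a recursive copy, the corresponding position in $\noRec{\mV}$ contains $\stp$. Crucially, $\stp$ is a verdict and is reactive on its own, so any transition demanded by the matching argument is available. I would encapsulate this by an auxiliary lemma stating that for every $\mV \in \RMon$ with $\fv(\mV) \subseteq \{x\}$, the external actions enabled along the LTS of $\noRec{\mV}$ weakly dominate those enabled along the LTS of $\mV[\rec{x}{\mV}/x]$, proved by induction on the structure of $\mV$ and using part (1) plus syntactic determinism to guarantee that every non-verdict sum reachable in $\mV[\rec{x}{\mV}/x]$ must range over all of $\Act$.
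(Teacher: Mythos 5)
Your proposal is correct and follows essentially the same route as the paper, which proves this lemma by a (terse) structural induction on $\mV$: part (1) is the easy half, and your treatment of part (2) --- relating each reachable state of $\noRec{\mV}$ either to a verdict (in particular the image $\stp$ of a variable occurrence, which is trivially reactive) or to a reachable state of $\mV$ whose weakly enabled actions it dominates --- is exactly the right way to handle the failure of $\noRec{-}$ to commute with the unfolding substitution $\mVV\subS{\rec{x}{\mVV}}{x}$. One small imprecision: the claim that every reachable non-verdict sum ranges over all of $\Act$ requires the reactivity of $\mV$ and not syntactic determinism alone, but that hypothesis is of course available in part (2).
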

\begin{proof}
 By structural induction on \mV.
\end{proof}

\begin{lem}\label{lem:deterministic-sound-complete-formula-rec}
  Suppose that the syntactically deterministic monitor \rec{x}{\mV} is sound and complete for some formula \hV and that $\rec{x}{\mV} \wtraS{\ftV} \vV$ for some finite trace \ftV.   Then $\mV \wtraS{\ftV} \vV$.
\end{lem}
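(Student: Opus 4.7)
The plan is to apply the substitution-decomposition lemma (\Cref{lem:open-vs-closed-mon-multi-reduction}) to the transition obtained by unfolding the recursion. Since $\rec{x}{\mV}$'s only initial move is the $\tau$-unfolding to $\mV\subS{\rec{x}{\mV}}{x}$, the hypothesis rewrites as $\mV\subS{\rec{x}{\mV}}{x} \wtraS{\ftV} \vV$. Clause~(1) of that lemma will yield two cases: either $\mV$ itself reaches some $\mV''$ with $\mV''\subS{\rec{x}{\mV}}{x} = \vV$---and since $\vV$ is an atom, this forces $\mV'' = \vV$, so that $\mV \wtraS{\ftV} \vV$ directly---or there is a prefix $\ftV_1$ of $\ftV$ and a state $\mV''$ with $\mV \wtraS{\ftV_1} \mV''$ such that $x$ is a summand of $\mV''$. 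Syntactic determinism precludes $x$ from being a summand of a non-trivial sum (whose summands would have to be of the form $\acta.m_{\acta}$), so $\mV'' = x$ and therefore $\mV \wtraS{\ftV_1} x$.

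The remainder of the proof will rule out this second case by deriving a contradiction with soundness and completeness. From $\mV \wtraS{\ftV_1} x$ and $x\subS{\rec{x}{\mV}}{x} = \rec{x}{\mV}$, Clause~(2) of the same lemma gives the cycle $\rec{x}{\mV} \wtraS{\ftV_1} \rec{x}{\mV}$. If $\ftV_1 = \epsilon$, a short structural analysis (enabled by syntactic determinism, which forbids every $\tau$-transition that is not a $\rec$-unfolding) forces $\mV$ to be a nest of recursion operators ending in $x$, whence $\rec{x}{\mV}$ performs only $\tau$-loops and never reaches any verdict, violating completeness. If $\ftV_1 \neq \epsilon$, completeness applied to the infinite trace $\ftV_1^\omega$, together with the cycle and \Cref{lem:det_mon_is_det}, will force $\rec{x}{\mV}$ to be verdict-equivalent to a single verdict $\vV'$: iterating the cycle far enough to extend the witnessing prefix to $\ftV_1^k$ and invoking verdict persistence yields $\rec{x}{\mV} \wtraS{\ftV_1^k} \vV'$ alongside $\rec{x}{\mV} \wtraS{\ftV_1^k} \rec{x}{\mV}$, and determinism concludes. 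In particular $\rec{x}{\mV} \wtraS{\epsilon} \vV'$ via $\tau$-steps only; re-applying Clause~(1) of \Cref{lem:open-vs-closed-mon-multi-reduction} to the resulting $\mV\subS{\rec{x}{\mV}}{x} \wtraS{\epsilon} \vV'$---whose degenerate alternative collapses to the already-handled $\ftV_1=\epsilon$ sub-case---delivers $\mV \wtraS{\epsilon} \vV'$. Combining this with $\mV \wtraS{\ftV_1} x$ through verdict persistence and \Cref{lem:det_mon_is_det} gives $x \mveq \vV'$, which is impossible since $x$ reaches no verdict whereas $\vV'$ is one.

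The main obstacle is precisely the $\ftV_1 \neq \epsilon$ sub-case: the existence of the cycle alone does not contradict completeness, since the monitor could still reach verdicts along other $\tau$-paths. The argument must therefore invoke determinism twice---once to collapse $\rec{x}{\mV}$ itself to a constant-verdict monitor, and once to transport that equivalence back through the substitution to the open monitor $\mV$---in order to expose the inconsistency with $\mV \wtraS{\ftV_1} x$.
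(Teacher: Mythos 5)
Your proposal is correct and follows essentially the same route as the paper's proof: the same application of \Cref{lem:open-vs-closed-mon-multi-reduction} to the unfolded monitor, the same case split on whether $\mV$ reaches the verdict directly or cycles back through $x$, and the same use of the trace $\ftV_1^\omega$ together with completeness, verdict persistence and \Cref{lem:det_mon_is_det} to refute the cycle. The only (cosmetic) difference is in the endgame: the paper applies \Cref{lem:strong-determinism3} directly to $\rec{x}{\mV}$ to force $x$ to be a verdict, whereas you transport the verdict equivalence back through the substitution to conclude $x \mveq \vV'$ — both yield the same contradiction.
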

\begin{proof}
  From $\rec{x}{\mV} \traS{\actt} \mV\subS{\rec{x}{\mV}}{x} \wtraS{\ftV} \vV$ and \Cref{lem:open-vs-closed-mon-multi-reduction} we have the following two cases to consider.
  \begin{enumerate}
    \item Assume that there exists some $\mV'$ such that $\mV \wtraS{\ftV} \mV'$ and $ \mV'\subS{\rec{x}{\mV}}{x} = \vV$.
    This immediately
    yields the claim,
    since  $\mV'\subS{\rec{x}{\mV}}{x} = \vV$ can only hold if $\mV' =\vV$.
    \item Assume that there exist $\ftV_1,\ftV_2$ and $\mV'$ where
    $\ftV=\ftV_{1}\ftV_{2}$  (and $\ftV_2 \neq \epsilon$)  and   $\mV\wtraS{\ftV_1}\mV'=x$ (because \rec{x}{\mV} is syntactically deterministic)
    and $\rec{x}{\mV}\wtraS{\ftV_2} \vV$.
    Stated otherwise, we have
    \begin{equation} \label{eq:deterministic-sound-complete-formula-rec}
      \rec{x}{\mV}\wtraS{\ftV_1} x\subS{\rec{x}{\mV}}{x} = \rec{x}{\mV} \wtraS{\ftV_2} \vV.
    \end{equation}
    %
We show that we can reach a contradiction, and therefore this case cannot occur.
    %

    If $\ftV_1 = \varepsilon$, then for some $k>0$
    $\rec{x}{\mV} (\traS{\tau})^k \rec{x}{\mV}$, and therefore for all $k>0$,
    $\rec{x}{\mV} (\traS{\tau})^k $. By \Cref{lem:determinism-and-taus}, we then have that for all $\mVV$, if $\mV \wtraS{} \mVV$ then $\forall \act \cdot \mVV \centernot{\traS{\act}}$, and therefore
    $\ftV_2 = \varepsilon$, which is a contradiction.
    Therefore, $\ftV_1$ must be non-empty.

    Consider the trace $\ftV_1^\omega$.
    We must  have either $\ftV_1^\omega \in \hSemL{\hV}$ or $\ftV_1^\omega \not\in \hSemL{\hV}$; without loss of generality, assume the former.
    %
%
    Since $\rec{x}{\mV}$ is sound and complete for $\hV$,
    $\rec{x}{\mV} \etraS{\etV} \yes$ for some $\etV\neq \varepsilon$ where $\filter{\etV}$ is a prefix of $\ftV_1$.
    By \Cref{lem:ver-persistence}, this yields that $\rec{x}{\mV}\wtraS{\ftV_1} \yes$, and by \Cref{lem:det_mon_is_det}, it must be the case that $\rec{x}{\mV} \mveq \yes$, and therefore $\rec{x}{\mV} \wtraS{} \yes$,
%
    and by \Cref{lem:strong-determinism3}, $x = \yes$, which is also a contradiction.
%
\qedhere
  \end{enumerate}
\end{proof}

\begin{lemmabiss}{\ref{lem:remove-rec-complete-monitor}}
  If \mV is a syntactically deterministic monitor that is sound and complete \hV, then $\noRec{\mV}$ is also a sound and complete monitor for \hV.
\end{lemmabiss}
\begin{proof}
  Using \Cref{prop:vedict-equiv-implies-complete-formula}, the required result follows if we can show that $\mV\mveq\noRec{\mV}$.

  In one direction, we have to show that, for $\vV \in \sset{\yes,\no}$,  $\noRec{\mV}\wtraS{\ftV}\vV$  implies $\mV\wtraS{\ftV}\vV$.
  We proceed by structural induction on the string \etV\ where $\noRec{\mV}\etraS{\etV}\vV$ and $\filter{\etV}=\ftV$.
  \begin{description}
    \item[Case $\etV=\varepsilon$:]
    $\noRec{\mV} = \vV$ which implies that $\mV=\rec{x_1}\cdots \rec{x_n}{\vV}$ by \Cref{def:no-rec-mon}.
    We therefore obtain $\mV\etraS{{\actt^n}}\vV$, since $\filter{{\actt^n}}=\filter{\varepsilon}=\varepsilon$.
    \item[Case $\etV=\actu\etV'$:]
    By \Cref{lem:remove-rec-regular-monitor-deterministic}(1) and \Cref{def:no-rec-mon}, since \noRec{\mV} is a syntactically deterministic regular monitor and it does not contain any recursion, $\actu=\acta$ for some $\acta\in\Act$.
    Thus we know that $\ftV=\acta\ftVV$ for some \ftVV.
    There are three subcases to consider.
    \begin{description}
      \item[Case $\noRec{\mV}=\no$ or $\noRec{\mV}=\yes$:] The proof is analogous to that of the base case.
      \item[Case $\noRec{\mV}=\stp$:] By \Cref{lem:ver-persistence}, this would contradict $\noRec{\mV}\etraS{\etV}\vV$ for $\vV \in \sset{\yes,\no}$.
      \item[Case $\noRec{\mV}=\sum_{\acta \in \ASet} \prf{\acta}{\mV_\acta}$:]
      From the structure of the monitor \noRec{\mV}, we deduce that
      \begin{align}
        \label{eq:norec-0}
        & \noRec{\mV} \traS{\acta} \mV_\acta \etraS{\etV'}\vV \quad \text{where}\quad \filter{\etV'}=\ftVV.
      \end{align}
      Again, from the structure the monitor \noRec{\mV} and the fact that $\mV$ is syntactically deterministic (\Cref{def:determinism}), we use \Cref{def:no-rec-mon} to conclude that
      \begin{align}
        \label{eq:norec-1}
        & \mV = \textstyle \rec{x_1}\cdots \rec{x_n}{\sum_{\acta \in \ASet} \prf{\acta}{\mVV_\acta}}\\
        \label{eq:norec-2}
        &\text{where, for every }  \acta\in\ASet \text{ we have } \mV_\acta = \noRec{\mVV_\acta}
        .
      \end{align}
      Now, from \Cref{eq:norec-1},  we can derive $\mV \etraS{{\actt^n \acta}} \mVV_\acta$ where, clearly, $\filter{\actt^n \acta} = \acta$.
      By \Cref{eq:norec-2},  $\mV_\acta \etraS{\etV'}\vV$ of \Cref{eq:norec-0} and the inductive hypothesis we obtain that $\mVV_\acta \wtraS{\ftVV}\vV$.
      %
      Thus, we deduce that $\mV \wtraS{\acta\ftVV} \vV$ as required.
    \end{description}
  \end{description}

    For the other direction, we have to show that, for $\vV \in \sset{\yes,\no}$, $\mV\wtraS{\ftV}\vV$  implies $\noRec{\mV}\wtraS{\ftV}\vV$.
   Again, we proceed by structural induction on \etV\ where $\mV\etraS{\etV}\vV$ and $\filter{\etV}=\ftV$.
   \begin{description}
     \item[Case $\etV=\varepsilon$:]
     Trivially true, since $\mV = \vV$ implies that $\noRec{\mV}=\vV$ by \Cref{def:no-rec-mon}.
     \item[Case $\etV=\actu\etV'$:]
     We have two subcases to consider:
     \begin{description}
       \item[Case $\actu=\actt$:]
       By \Cref{def:determinism} we know that $\mV = \rec{x}{\mVV}$ for some $\mVV$.
       %
       By \Cref{lem:deterministic-sound-complete-formula-rec}, we deduce that  $\mVV\wtraS{\ftV}\vV$.
       By the inductive hypothesis we obtain $\noRec{\mVV} \wtraS{\ftV} \vV$.
       The required results follows by \Cref{def:no-rec-mon}, since $\noRec{\mV}=\noRec{\rec{x}{\mVV}} = \noRec{\mVV}$, and thus
       $\noRec{\mV} \wtraS{\ftV} \vV$.
       \item[Case $\actu=\acta$:]
       By \Cref{def:determinism} we deduce that $\mV = \sum_{\acta \in \ASet} \prf{\acta}{\mV_\acta}$ where
       $\mV_\acta \etraS{\etV'} \vV$.
       By \Cref{def:no-rec-mon} we know that
       $\noRec{\mV} = \sum_{\acta \in \ASet} \prf{\acta}{\noRec{\mV_\acta}}$, where we can derive  $\sum_{\acta \in \ASet} \prf{\acta}{\noRec{\mV_\acta}} \traS{\acta} \noRec{\mV_\acta}$.
         The required result follows from $\mV_\acta \etraS{\etV'} \vV$ and the inductive hypothesis, from which we obtain  $\noRec{\mV_\acta} \wtraS{\ftVV} \vV$ where $\filter{\etV'}=\ftVV$ and $\acta\ftVV=\ftV$.
       \qedhere
     \end{description}
  \end{description}
\end{proof}

The following lemmata relate to properties of the formula synthesis function of \Cref{def:formula-synt-complete}.


\begin{cor}\label{cor:complete-formula-synt-implies-HML}
  For any $\mV\in\FMon$, $\mSyn{\mV} \in \HML$ \qed
\end{cor}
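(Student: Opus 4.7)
The plan is to establish the corollary by straightforward structural induction on the monitor $\mV \in \FMon$, following the grammar in \Cref{def:finite-determ-monitors}.

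For the base cases, $\mV = \yes$ and $\mV = \no$: by \Cref{def:formula-synt-complete}, $\mSyn{\yes} = \hTru$ and $\mSyn{\no} = \hFls$, and both $\hTru$ and $\hFls$ are directly generated by the grammar of \HML in \Cref{def:complete-fragment-HML}. For the inductive case $\mV = \sum_{\acta \in \ASet} \prf{\acta}{\mV_\acta}$, the monitor recursion-free deterministic grammar guarantees that each summand $\mV_\acta$ is itself in \FMon, so by the inductive hypothesis $\mSyn{\mV_\acta} \in \HML$ for every $\acta \in \ASet$. Then $\mSyn{\mV} = \bigwedge_{\acta \in \ASet} \hNec{\acta}{\mSyn{\mV_\acta}}$ by \Cref{def:formula-synt-complete}, and since \HML is closed under both the necessity modality $\hNec{\acta}{-}$ and finite conjunction (as noted in the remark following \Cref{ex:lin-vs-bra-rechml}), this conjunction lies in \HML as required.

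No obstacle is expected here: the synthesis function is designed precisely so that it produces only conjunctions of necessity-prefixed subformulae together with $\hTru$ and $\hFls$, all of which are constructs of the recursion-free fragment \HML. The only point worth noting explicitly is that the big conjunction $\bigwedge_{\acta \in \ASet}$ is legitimate within \HML, which is justified by the finiteness of $\ASet \subseteq \Act$ (we assume $\Act$ is finite, as per the remark in \Cref{sec:preliminaries}) together with the associativity/commutativity of $\wedge$. The proof itself is a few lines and requires no technical machinery beyond unfolding the definitions.
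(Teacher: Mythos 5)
Your proof is correct and matches what the paper intends: the corollary is stated with no explicit proof precisely because it follows by the routine structural induction you spell out, unfolding \Cref{def:formula-synt-complete} against the grammars of \Cref{def:finite-determ-monitors,def:complete-fragment-HML}. Your remark that the finite (possibly empty) conjunction is well-defined in \HML via the notational conventions for $\bigwedge$ is the only point of substance, and you handle it correctly.
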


\begin{lemmabiss}{\ref{lem:consistent-implies-sound-and-complete}}
  Any \emph{reactive} monitor  $\mV\in\FMon$  is a sound and complete monitor for \mSyn{\mV}.
\end{lemmabiss}
\begin{proof}
  We treat soudness and completeness spearately and proceed by induction on the structure of \mV.  The main case is when $\mV=\sum_{\acta\in\Act} \prf{\acta}{\mV_\acta}$ where $\mSyn{\mV}=\bigwedge_{\acta\in \Act} \hNec{\acta}{\mSyn{\mV_\acta}}$.
  \begin{description}
    \item[Soundness:]
    Pick a \tV.
    From the structure of the monitor \mV, \rej{\mV,\tV} implies that $\tV = \acta\tVV$ for some \acta\ and \tVV\ where \rej{\mV_\acta,\tVV}.
    By the inductive hypothesis we know that $\tVV \not\in\hSemL{\mSyn{\mV_\acta}}$ which implies that  \tV\ violates $\mSyn{\mV}$ since $\tV=\acta\tVV \not\in \hSemL{\bigwedge_{\acta\in \Act} \hNec{\acta}{\mSyn{\mV_\acta}}} = \hSemL{\mSyn{\mV}}$.
    The argument for \acc{\mV,\tV} is analogous where we note that, since $\tV = \acta\tVV$, any subformula \hNec{\actb}{\mSyn{\mV_\actb}} where $\actb\neq\acta$ is satisfied trivially by \tV.
    \item[Completeness:]
    Pick a \tV; it must be of the form $\tV=\acta\tVV$
    If $\tV\not\in \hSemL{\bigwedge_{\acta\in \Act} \hNec{\acta}{\mSyn{\mV_\acta}}}$  then it must be because $\tVV\not\in\hSemL{\mSyn{\mV_\acta}}$.
    By the inductive hypothesis, we obtain that \rej{\mV_\acta,\tVV} which in turn implies that
    \rej{\sum_{\acta\in\Act} \prf{\acta}{\mV_\acta}, \acta\tV}.
    The case for
    $\tV\in \hSemL{\bigwedge_{\acta\in \Act} \hNec{\acta}{\mSyn{\mV_\acta}}}$ is analogous.
    \qedhere
  \end{description}
\end{proof}

\Cref{prop:hml-maximal} also makes use of the following technical lemma stating that a deterministic monitor that is complete \wrt some formula must necessarily be reactive.

\begin{lem}\label{lem:complete-implies-reactive}
  If \mV is a
  deterministic complete monitor for some formula \hV, then it must be reactive.
\end{lem}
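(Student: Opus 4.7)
I would prove this by contradiction. Suppose $\mV$ is deterministic and complete for $\hV$ but \emph{not} reactive. Then, by \Cref{def:reactive-mon}, there exist $\mVV \in \reach{\mV}$ and $\acta \in \Act$ such that $\mVV \centernot{\wtraS{\acta}}$. Fix a finite trace $\ftV$ with $\mV \wtraS{\ftV} \mVV$, and pick any infinite trace of the form $\tV = \ftV\,\acta\,\tVV$ (possible since $\Act$ is non-empty). By completeness of $\mV$ for $\hV$, either $\tV \in \hSemL{\hV}$ and $\mV$ accepts $\tV$, or $\tV \notin \hSemL{\hV}$ and $\mV$ rejects $\tV$. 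In both cases, $\mV \wtraS{\ftV'} v$ for some prefix $\ftV'$ of $\tV$ and some verdict $v \in \{\yes,\no\}$.

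I would then case-split on the length of $\ftV'$. If $|\ftV'| \leq |\ftV|$, then verdict persistence (\Cref{lem:ver-persistence}), combined with the fact that verdicts self-loop on every action by rule \rtit{mVer}, promotes the derivation to $\mV \wtraS{\ftV} v$. Together with $\mV \wtraS{\ftV} \mVV$, determinism (\Cref{lem:det_mon_is_det}) forces $\mVV \mveq v$. Since $v \wtraS{\acta} v$ trivially holds, the verdict-equivalence definition (\Cref{def:verdict-eq}) yields $\mVV \wtraS{\acta} v$, contradicting $\mVV \centernot{\wtraS{\acta}}$.

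The remaining case is $\ftV' = \ftV\,\acta\,\ftV''$ for some $\ftV''$. Splitting the weak derivation at the $\ftV$-boundary produces some intermediate monitor $\mV'$ with $\mV \wtraS{\ftV} \mV' \wtraS{\acta\,\ftV''} v$. Applying \Cref{lem:det_mon_is_det} to $\mV \wtraS{\ftV} \mVV$ and $\mV \wtraS{\ftV} \mV'$ gives $\mV' \mveq \mVV$; then by \Cref{def:verdict-eq} the trailing verdict-reaching derivation transfers from $\mV'$ to $\mVV$, so $\mVV \wtraS{\acta\,\ftV''} v$, which once again forces $\mVV$ to perform a weak $\acta$-transition, contradicting the assumption.

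The only mildly non-routine step is the splitting of the weak derivation in the second case, which is immediate by unfolding $\mV \wtraS{\ftV\,\acta\,\ftV''} v$ into an explicit trace and chopping it after its $\ftV$-portion. Everything else is a direct application of the already-established apparatus --- rule \rtit{mVer}, \Cref{lem:ver-persistence}, \Cref{lem:det_mon_is_det}, and \Cref{def:verdict-eq} --- and the argument goes through without needing any new lemmas.
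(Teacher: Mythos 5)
Your proof is correct and, modulo being phrased as a contradiction rather than a direct derivation of $\mVV \wtraS{\acta}$, it is essentially the paper's own argument: both extend the reaching trace $\ftV$ by $\acta$, invoke completeness to obtain a verdict on some prefix, and case-split on whether that prefix ends before or after the $\acta$, closing each case with \Cref{lem:ver-persistence} and \Cref{lem:det_mon_is_det}. No gaps.
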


\begin{proof}
	Let $\mVV \in \reach{\mV}$ and let $\act \in \Act$. By \Cref{def:reactive-mon}, it suffices to prove that $\mVV \wtraS{\act}$.
	Since, $\mVV \in \reach{\mV}$, there must be some $\ftV \in \Act^*$, such that
	$\mV \wtraS{\ftV} \mVV$.
	Let $\tV = \ftV \act \tVV$ for some $\tVV$.
	Since $\mV$ is complete for \hV, there is a verdict $\vV$ and a finite prefix $\ftVV$ of $\tV$, such that
	$\mV \wtraS{\ftVV} \vV$.
	If $\ftVV$ is a prefix of $\ftV$, then by \Cref{lem:ver-persistence},
	$\mV \wtraS{\ftV} \vV$, and therefore by \Cref{lem:det_mon_is_det}, $\mVV \wtraS{} \vV \traS{\act} \vV$.
	If $\ftV\act$ is a prefix of $\ftVV$, then $\ftVV = \ftV \act \ftV'$ for some $\ftV'$ and for some $\mV'$,
	$\mV \traS{\ftV} \mV' \traS{\act\ftV'} \vV$. Therefore, by \Cref{lem:det_mon_is_det},
	$\mVV \mveq \mV'$, so $\mVV \traS{\act\ftV'} \vV$, yielding that $\mVV \wtraS{\act}$.
\end{proof}

We are now in a position to give a proof of maximality for \HML. We actually give two proofs: the first one is constructive as reported in \Cref{sec:complete-monitorability}, whereas the other one is stronger (albeit non-constructive) and shows that this expressivity result holds for any logic, not just \recHML.

\begin{propbiss}{\ref{prop:hml-maximal}}{Maximality for \HML}
  For any $\hV\in\UHML$ if \hV is complete-monitorable, then there exists $\hVV\in\HML$ such that $\hSemL{\hV}=\hSemL{\hVV}$.
\end{propbiss}
\begin{proof}
  Pick any  $\hV\in\UHML$ that is complete-monitorable.
  By \Cref{def:complete-monitorability}, there exists a monitor \mV that is sound and complete for \hV.
  By \Cref{prop:extended-mon-to-reg-mon} and \Cref{prop:determinization}, there exists a syntactically deterministic regular monitor $\mV'$ that is verdict-equivalent to \mV.
  By \Cref{prop:vedict-equiv-implies-complete-formula}, monitor  $\mV'$ is also sound and complete for \hV.
  Moreover, by \Cref{lem:complete-implies-reactive}, monitor  $\mV'$ must also be reactive.
  By \Cref{lem:remove-rec-complete-monitor} and \Cref{lem:remove-rec-regular-monitor-deterministic},
  there exists a reactive \emph{recursion-free} deterministic regular monitor $\mV''\in\FMon$ that is verdict-equivalent to $\mV'$.
  Again, by \Cref{prop:vedict-equiv-implies-complete-formula}, monitor  $\mV''$ is also sound and complete for \hV.

  Now, by
   \Cref{lem:consistent-implies-sound-and-complete}, $\mV''$ is sound and complete for $\mSyn{\mV''}$ as well.
  By \Cref{cor:complete-formula-synt-implies-HML}, we know that $\mSyn{\mV''} \in \HML$.
  Thus, by \Cref{prop:complete-semantic-equivalence} we conclude that $\hSemL{\hV}=\hSemL{\mSyn{\mV''}}$ as required.
\end{proof}

\begin{rem}
	The proof of  \Cref{prop:hml-maximal} is constructive.  We are able to prove (albeit in a non-constructive manner) an even stronger result with respect to complete monitoring for an arbitrary logic that is defined over traces. This increases the importance of the  logic identified in \Cref{def:complete-fragment-HML} with the linear-time interpretation of \Cref{fig:recHML}. \qedd
\end{rem}

\begin{thmbiss}{\ref{thm:stronger-HML-maximality}}  
	Let $\mV$ be a monitor from a monitoring system with the following two properties: 
	\begin{enumerate} 
		\item 
	verdicts are irrevocable, that is, if $\mV$ accepts (respectively, rejects) a finite trace $\ftV$, then it accepts (respectively, rejects) all its extensions, and 
	\item 
	$\mV$ accepts (respectively, rejects) a trace $\tV$ if, and only if, it accepts (respectively, rejects) some finite prefix $\ftV$ of $\tV$.
\end{enumerate}
	For any property \hV with a trace interpretation (not necessarily syntactically represented using \UHML), if \mV is sound and complete for \hV then \hV can be expressed via the syntactic fragment \HML of \Cref{def:complete-fragment-HML}.
\end{thmbiss}

\begin{proof}
   A monitor, irrespective of its syntactic structure, is a computational entity that reaches a verdict after a finite sequence of observations/actions: at  this point verdicts are \emph{irrevocable} meaning that further actions observed would not change the status of the monitor.

   Let $\mV$ be such a monitor
   that
   is sound and complete for $\hV$.
   Let $L_a(\mV) = \{ \ftV \in \Act^* \mid \mV$ accepts $\ftV \}$ and $L_r(\mV) = \{ \ftV \in \Act^* \mid \mV$ rejects $\ftV \}$.
   Due to soundness, $L_a(\mV) \cap L_r(\mV) = \emptyset$.
   Now, let
   \begin{align*}
   \min L_a(\mV) &= \{ \ftV \in L_a(\mV) \mid \forall \ftVV,\ftVV' \in \Act^*. (\ftVV\ftVV' = \ftV \Rightarrow \ftVV \notin L_a(\mV) \text{ or } \ftVV = \ftV) \} \text{, and}\\
   \min L_r(\mV) &= \{ \ftV \in L_r(\mV) \mid \forall \ftVV,\ftVV' \in \Act^*. (\ftVV\ftVV' = \ftV \Rightarrow \ftVV \notin L_r(\mV) \text{ or } \ftVV = \ftV) \}.
   \end{align*}
   We observe that both $L_a(\mV)$ and $L_r(\mV)$ are suffix-closed, meaning that if a finite trace is in $L_a(\mV)$ or $L_r(\mV)$, then so are all of its finite extensions.
   Therefore,
   $L_a(\mV) = \{ \ftV\ftVV \in \Act^* \mid \ftV \in L_a(\mV) \}$ and
   $L_r(\mV) = \{ \ftV\ftVV \in \Act^* \mid \ftV \in L_r(\mV) \}$.
   We can define $\ftV.\mVV$ recursively thus: $\varepsilon.\mVV = \mVV$ and $\act\ftV.\mVV = \act.(\ftV.\mVV)$.
   If both $\min L_a(\mV)$ and $\min L_r(\mV)$ are finite, then we can define regular monitor
   \[
    \mVV = \sum_{\ftV \in \min L_a(\mV)} \ftV.\yes + \sum_{\ftV \in \min L_r(\mV)} \ftV.\no,
   \]
   and it is not hard to see that $\mVV$ accepts and rejects exactly the same traces as $\mV$:
   if $\mV$ rejects $\tV$, then it rejects a finite prefix $\ftV$ of $\tV$,
   so $\ftV \in L_r(\mV)$, and therefore $\ftV = \ftVV\ftVV'$ for some $\ftVV \in \min L_r(\mV)$, which is then rejected by $\mVV$.
   The other direction and the case for acceptance are similar.

   Therefore, it suffices to prove that $\min L_a(\mV) \cup \min L_r(\mV)$ is finite.
   If $\varepsilon \in \min L_a(\mV) \cup \min L_r(\mV)$, then we can immediately see that $\min L_a(\mV) \cup \min L_r(\mV) = \{\varepsilon\}$, which is a finite set.
   Otherwise, let $L' = \{ \ftV \mid \exists \act \in \Act.~\ftV\act \in \min L_a(\mV) \cup \min L_r(\mV) \}$.
   We can observe that $\mV$ can neither accept nor reject  $\ftV \in L'$, because otherwise, without loss of generality, $\ftV \in L_a(\mV)$, so if $\ftV \act \in \min L_a(\mV)$, then $\ftV\act$ is not minimal in $L_a(\mV)$ and we have a contradiction, while if $\ftV \act \in \min L_r(\mV)$, then $\ftV\act \in L_a(\mV) \cap L_r(\mV)$, which is also a contradiction.
   Therefore $L' \subseteq L$, where $L \subseteq \Act^*$ is the set of finite traces that $\mV$ does not accept or reject.
   Therefore, it suffices to prove that $L$ is finite, which we proceed to do.

   To reach a contradiction, we assume that $L$ is infinite. Let $T = (L,\reduc,\Act)$ be
   the tree-LTS where for every $\act \in \Act$ and all $\ftV,\ftVV$, $\ftV \traS{\act} \ftVV$ if and only if $\ftVV = \ftV\act$.
   As we have established above, $\mV$ does not accept or reject $\varepsilon$, therefore $\varepsilon \in L$.
   It is not hard to see that for every $\ftV,\ftVV \in \Act^*$, if $\varepsilon \traS{\ftV} \ftVV$, then $\ftV = \ftVV$, by easy induction on $\ftV$.
   Similarly, if $\ftV \in L$, then $\varepsilon \traS{\ftV} \ftV$.
   Since $\Act$ is finite, $T$ is finitely-brancing, and therefore, by K\"{o}nig's Lemma, since $L$ is infinite, it must be the case that there is an infinite path (trace) $\tV$ in $T$ from $\varepsilon$.
   For every finite prefix $\ftV$ of $\tV$, $\varepsilon \traS{\ftV} \ftV$, and therefore, $\ftV \in L$.
   Therefore, $\mV$ neither accepts nor rejects $\tV$, which is a contradiction, because $\mV$ is complete for $\hV$.
%
  %
  %
  %
  %
  %
  %
  %
  %
  %
  %
  %
%
%
\end{proof}

\begin{lem}\label{lem:determinism-and-taus}
	For any deterministic monitor $\mV$, if $\mV \traS{\tau} \mV_1$ and $\mV \traS{\mu} \mV_2$, then $\mu = \tau$ and $\mV_1 = \mV_2$.
\end{lem}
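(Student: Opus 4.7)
The plan is to analyse the top-level syntactic form of a deterministic monitor that admits a $\tau$-transition and to argue that only one transition rule can fire, namely mRec, whose conclusion is uniquely determined.

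I would begin by inspecting the monitor transition rules to identify which can produce $\mV \traS{\tau} \mV_1$. Rule mAct only yields transitions labelled by external actions. Rule mVer applies when $\mV$ is a verdict, but, since its conclusion label $\acta$ ranges over $\Act$, it only produces external-action transitions, not $\tau$-transitions. Rule mRec yields the $\tau$-transition $\rec{x}{\mVV} \traS{\tau} \mVV\subS{\rec{x}{\mVV}}{x}$. The remaining possibility is via the contextual rules mSelL or mSelR, which would force $\mV$ to be a sum $\mVV_1 + \mVV_2$. However, by \Cref{def:determinism}, such a deterministic sum must have the shape $\sum_{\act \in A} \act.m_\act$ with $|A| \geq 2$, and each summand $\act.m_\act$ is a prefix whose only transition (via mAct) is the external-action transition labelled by $\act$; therefore no summand can $\tau$-transition, ruling out this case.

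Hence $\mV$ must have the form $\rec{x}{\mVV}$, and applying mRec gives $\mV \traS{\tau} \mVV\subS{\mV}{x}$, so $\mV_1 = \mVV\subS{\mV}{x}$. Now for any transition $\mV \traS{\mu} \mV_2$, the top-level form $\rec{x}{\mVV}$ admits no rule other than mRec, which fixes $\mu = \tau$ and $\mV_2 = \mVV\subS{\mV}{x} = \mV_1$, completing the proof. The only care required is in verifying the exhaustiveness of the rule case analysis and in confirming that syntactic determinism precisely excludes the alternative sum case; indeed, a non-deterministic monitor such as $\rec{x}{\mV_0} + \act.\mV_0'$ would exhibit both a $\tau$ and a competing $\act$ transition, showing that the determinism hypothesis is essential to the statement.
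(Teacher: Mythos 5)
Your proof is correct and follows essentially the same route as the paper's (one-line) argument: determinism forces a $\tau$-transitioning monitor to have the form $\rec{x}{\mVV}$, whose only transition is the one given by rule \rtit{mRec}. You simply spell out the rule-by-rule case analysis that the paper leaves implicit, which is fine.
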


\begin{proof}
	From   \Cref{def:determinism}, if $\mV \traS{\tau}$, then $\mV = \rec x \mVV$ for some $\mVV$, and thus the only transition \mV can perform is $\mV\traS{\tau} \mVV[\mV/X]$.
\end{proof}

\begin{cor}\label{lem:strong-determinism}
	If \mV is deterministic, and $\mV \etraS{\etV} \mV_1$ and $\mV \etraS{\etV} \mV_2$, then $\mV_1 = \mV_2$.
	\qedd
\end{cor}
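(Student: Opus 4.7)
The plan is to prove Corollary \ref{lem:strong-determinism} by induction on the length of the explicit trace $\etV$, leveraging Lemma \ref{lem:determinism-and-taus} as the key one-step determinism result.

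The base case, $\etV = \varepsilon$, is immediate: the only possible derivation of $\mV \etraS{\varepsilon} \mV_i$ forces $\mV_i = \mV$ for each $i$.

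For the inductive step, I would write $\etV = \mu \etV'$ and decompose each transition as $\mV \traS{\mu} \mV_i' \etraS{\etV'} \mV_i$ for $i \in \{1,2\}$. The heart of the argument is to show $\mV_1' = \mV_2'$; then one may apply the induction hypothesis to the shared continuation $\mV_1' \etraS{\etV'} \mV_1$ and $\mV_1' \etraS{\etV'} \mV_2$. When $\mu = \tau$, Lemma \ref{lem:determinism-and-taus} gives $\mV_1' = \mV_2'$ directly. When $\mu = \act \in \Act$, Lemma \ref{lem:determinism-and-taus} (applied contrapositively) first rules out $\mV$ having any $\tau$-transition, so by \Cref{def:determinism} and inspection of the rules of \Cref{fig:monit-instr}, $\mV$ must be a verdict $\vV$, a single prefix $\act.\mV''$, or a syntactically deterministic sum $\sum_{b \in A} b.m_b$ with distinct labels; in each case the $\act$-successor is unique, so $\mV_1' = \mV_2'$.

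For the induction to go through, I need to ensure that $\mV_1'$ is again deterministic, so the induction hypothesis applies. I would establish this as a small preliminary lemma: if $\mV$ is syntactically deterministic and $\mV \traS{\mu} \mV'$, then $\mV'$ is syntactically deterministic. The cases for verdicts, prefixes, and sums are trivial since the derivative is a subterm of $\mV$ and \Cref{def:determinism} is preserved by taking subterms. The subtle case is \rtit{mRec}, where $\mV = \rec{x}{\mVV}$ and $\mV' = \mVV[\rec{x}{\mVV}/x]$: here I would observe that the syntactic constraint in \Cref{def:determinism} concerns only the shape of sums, and since the substitution replaces a variable (which is not a summand being split) with a deterministic monitor whose only sums are themselves of the required form, the resulting term still has every sum of at least two summands of the shape $\sum_{\act \in A} \act.m_\act$.

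The main obstacle I anticipate is this preservation-under-transition lemma, and specifically the recursion case — care is needed to verify that substitution cannot create a ``bad'' sum, i.e., one that does not match the deterministic pattern. Once that is nailed down, the rest of the argument is a clean induction combining Lemma \ref{lem:determinism-and-taus} with case analysis on the syntactic form of $\mV$.
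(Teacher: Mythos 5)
Your proposal is correct and follows the route the paper intends: the paper states this result as an unproved corollary of Lemma~\ref{lem:determinism-and-taus}, and your induction on the length of $\etV$ is the natural elaboration of that. You rightly notice that the cited lemma only settles the $\tau$-step case, and you correctly discharge the two obligations the paper leaves implicit --- uniqueness of $\act$-successors for external actions (via \Cref{def:determinism}, after ruling out $\tau$-transitions and the variable/recursion cases) and preservation of syntactic determinism under derivatives, including the unfolding case, where your observation that a bound variable can never itself be a summand of a multi-summand sum in a syntactically deterministic monitor is exactly the point that makes the substitution harmless.
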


\begin{cor}\label{lem:strong-determinism2}
	If \mV is deterministic and $\mV \wtraS{} \vV$, and $\mV \wtraS{} \mVV \traS{\act}$, then $\mVV = \vV$.
	\qedd
\end{cor}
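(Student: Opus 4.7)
The plan is to exploit the fact that, for a deterministic monitor, the sequence of $\tau$-transitions from a given state is rigidly determined, forming a linear (and often short) chain, and that both $\vV$ and $\mVV$ must lie at the very end of this chain. Concretely, since $\mV \wtraS{} \vV$ and $\mV \wtraS{} \mVV$ unfold as $\mV (\traS{\tau})^{n_1} \vV$ and $\mV (\traS{\tau})^{n_2} \mVV$ for some $n_1, n_2 \geq 0$, I would show that $n_1 = n_2$ and that the two resulting states coincide.

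First, I would establish a uniqueness lemma for $\tau$-chains from a deterministic monitor: if $\mV (\traS{\tau})^n q$ and $\mV (\traS{\tau})^n q'$, then $q = q'$. This follows by a straightforward induction on $n$ from \Cref{lem:determinism-and-taus}, noting that $\tau$-successors of a deterministic monitor are themselves (syntactically) deterministic: by \Cref{def:determinism}, a $\tau$-transition is only available via rule \rtit{mRec} from some $\rec{x}{\mVV}$, and the resulting $\mVV\subS{\rec{x}{\mVV}}{x}$ substitutes a deterministic monitor for a variable inside a deterministic one, which preserves the property that every sum of at least two summands has the prescribed form.

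Next I would observe that neither $\vV$ nor $\mVV$ admits a $\tau$-transition. For $\vV$, the only rule applicable to a verdict is \rtit{mVer}, which produces external-action self-loops only. For $\mVV$, we have $\mVV \traS{\act}$ with $\act \in \Act$; if $\mVV$ also had some $\tau$-transition, then \Cref{lem:determinism-and-taus} applied to $\mVV$ (which is deterministic by the preservation argument above) would force that $\act$-transition to itself be a $\tau$-transition, a contradiction. Assuming without loss of generality $n_1 \leq n_2$, the uniqueness lemma identifies $\vV$ with the $n_1$-th $\tau$-successor of $\mV$, after which $\vV (\traS{\tau})^{n_2 - n_1} \mVV$; since $\vV$ has no $\tau$-transition, we must have $n_2 = n_1$, and hence $\mVV = \vV$.

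The main obstacle, such as it is, lies in justifying the iterative use of \Cref{lem:determinism-and-taus}, which implicitly requires determinism to be preserved along $\tau$-transitions. This is essentially a routine syntactic check on \Cref{def:determinism}, so the proof is short; the key conceptual ingredient is simply the observation that deterministic monitors evolve along a unique $\tau$-path, and that both endpoints described by the hypothesis are forced to sit at the first (and only) position on that path where no further $\tau$-step is possible.
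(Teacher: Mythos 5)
Your proof is correct and follows essentially the same route as the paper's: decompose both weak transitions into $\tau$-chains, use determinism (the paper cites \Cref{lem:strong-determinism} where you re-derive its $\tau^n$ instance from \Cref{lem:determinism-and-taus}) to align the chains, and rule out further $\tau$-steps from $\vV$ and from $\mVV$ via rule \rtit{mVer} and \Cref{lem:determinism-and-taus} respectively, forcing the two chain lengths to coincide. Your explicit remark that syntactic determinism is preserved along $\tau$-transitions is a point the paper leaves implicit (it too applies \Cref{lem:determinism-and-taus} to $\mVV$), so making it is a small but welcome tightening rather than a deviation.
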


\begin{proof}
	Let $k,k' \geq 0$ be such that
	$\mV (\traS{\tau})^k \vV$ and $\mV (\traS{\tau})^{k'} \mVV.$
	If $k'<k$, then by \Cref{lem:strong-determinism},
	$$\mV \  (\traS{\tau})^{k'} \  \mVV \   (\traS{\tau})^{k-k'} \  \vV.$$
	But then, $\mVV \traS{\tau}$ because $k-k'\geq 1$ and $\mVV \traS{\act}$ by our assumptions, and by \Cref{lem:determinism-and-taus}, $\tau = \act$, which is a contradiction.
	Therefore, $k \leq k'$, and by \Cref{lem:strong-determinism},
	$\mV\ (\traS{\tau})^k \ \vV \ (\traS{\tau})^{k'-k} \ \mVV$, and by \Cref{lem:ver-persistence}, $\mVV = \vV$.
\end{proof}

\begin{cor}\label{lem:strong-determinism3}
	If \mV is deterministic and $\mV \wtraS{} \vV$, and $\mV \wtraS{\ftV} \mVV$, where $\ftV \neq \varepsilon$, then $\mVV = \vV$.
	\qedd
\end{cor}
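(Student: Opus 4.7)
The plan is to derive this corollary directly from Corollary~\ref{lem:strong-determinism2} together with verdict persistence (Lemma~\ref{lem:ver-persistence}). The intuition is that once a deterministic monitor has the ability to silently evolve into the verdict $\vV$, any externally-observable transition must first commit to $\vV$ (by Corollary~\ref{lem:strong-determinism2}), and from that point verdicts are absorbing.

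First I would decompose the hypothesis $\mV \wtraS{\ftV} \mVV$ using the assumption $\ftV \neq \varepsilon$. Write $\ftV = \act \ftV'$ with $\act \in \Act$ and $\ftV' \in \Act^*$. By the definition of weak transition, there exist $\mV_1, \mV_2$ such that
\[
  \mV \wtraS{} \mV_1 \traS{\act} \mV_2 \wtraS{\ftV'} \mVV.
\]
Since $\mV_1 \traS{\act}$ and we already have $\mV \wtraS{} \vV$, Corollary~\ref{lem:strong-determinism2} applies and yields $\mV_1 = \vV$.

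Next I would use the verdict transition rule \rtit{mVer}, which gives $\vV \traS{\act} \vV$, together with determinism: since $\mV_1 = \vV$ and $\mV_1 \traS{\act} \mV_2$, and deterministic monitors have unique $\act$-successors (by Lemma~\ref{lem:det_mon_is_det} applied at length-one traces, or directly from Corollary~\ref{lem:strong-determinism}), we get $\mV_2 = \vV$. Finally, from $\vV \wtraS{\ftV'} \mVV$, verdict persistence (Lemma~\ref{lem:ver-persistence}) forces $\mVV = \vV$, which is what we want.

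There is no real obstacle here; the work was already done in Corollary~\ref{lem:strong-determinism2}. The only mild subtlety is making sure that the step from $\vV \traS{\act} \mV_2$ to $\mV_2 = \vV$ is justified without circularity: this is immediate from \rtit{mVer} and the determinism of $\vV$ (which has $\vV$ as its unique $\act$-derivative), so it does not need to invoke Corollary~\ref{lem:strong-determinism2} again. Everything else is a straightforward unfolding of the weak-transition notation.
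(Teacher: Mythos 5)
Your proof is correct and follows exactly the route the paper intends: the paper's own proof is just the one-line remark that the corollary is a consequence of Corollary~\ref{lem:strong-determinism2} and verdict persistence (Lemma~\ref{lem:ver-persistence}), and your argument is a faithful unfolding of that, splitting off the first action of $\ftV$, applying Corollary~\ref{lem:strong-determinism2} to conclude the pre-action state is $\vV$, and then letting persistence absorb the rest. The only (harmless) redundancy is the intermediate step establishing $\mV_2 = \vV$ separately, since verdict persistence already carries $\vV$ through the entire remaining transition sequence.
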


\begin{proof}
	A consequence of \Cref{lem:strong-determinism2,lem:ver-persistence}.
\end{proof}


\subsection{The PSPACE-hardness of \ltmuS}
\label{subsec:pspace-hardness}

Here we prove that satisfiability for \ltmuS is PSPACE-hard. 
The reduction that we use is from the one-variable, diamond-free fragment of $D \oplus_\subseteq K4$, 
which is PSPACE-complete \cite{achilleos16}. 

$D \oplus_\subseteq K4$ is a modal logic with two modalities, $[1]$ and $[2]$, based on a serial transition relation 
$\traS{1}$ (\ie $\forall x \exists y.x \traS{1} y$) and a  transitive transition relation $\traS{2}$ (\ie $\forall x,y,z.(x \traS{2} y \traS{2} z \Rightarrow x \traS{2} z)$), such that $\traS{1} \subseteq \traS{2}$.
Given a set of propositional variables $\textsf{Prop}$,
$D \oplus_\subseteq K4$ is interpreted over Kripke structures of the form $(W,(\tra{\act})_{\act \in \{1,2\}},V)$, where $W$ is a non-empty set of states, the transition relations satisfy the above-mentioned properties, and $V: W \to 2^{\textsf{Prop}}$ maps states to sets of propositional variables.
Here, we focus on the one-variable, diamond-free fragment of $D \oplus_\subseteq K4$, and therefore $\textsf{Prop} = \{p\}$ and the syntax of the fragment is
given by the following grammar:
\[ \varphi, \psi ::= p ~~~\mid~~~ \neg p ~~~\mid~~~ \varphi \vee \psi ~~~\mid~~~ \varphi \land \psi ~~~\mid~~~ [1]\varphi ~~~\mid~~~ [2] \varphi  .\]
The semantics of $D \oplus_\subseteq K4$ is defined in terms of a satisfaction relation $\models$, 
where $M,w \models \varphi$ means that $\varphi$ is satisfied at state $w$ of $M$, 
in a similar way to the branching-time semantics $\hSemB{-}$ for \recHML, with the additional condition that $ M,w \models p $ if and only if $p \in V(w)$. 
Constants $\hTru,\hFls$ can be either included to the syntax or constructed as $p \vee \neg p$ and $p \land \neg p$, respectively.

Let $\acta,\actb \in \Act$, where $\acta \neq \actb$.
We can define the mapping from formulae without diamond modalities and that use only one propositional variable $p$ that maps $\hV$ to
$$trans(\hV) = \max X.(\hSuf{\acta}X \vee \hSuf{\actb}{\hSuf{\acta}X}) \land \hV^t , $$ where
$\hV^t$ is such that $p^t = \hSuf{\actb}\hTru$, it commutes with the boolean operators, and 
$$([1] \hVV)^t = \max X.([\actb] X \land [\acta]\hVV^t) \qquad \text{ and } \qquad 
  ([2] \hVV)^t = \max X.([\actb] X \land [\acta] X \land [\acta]\hVV^t).$$
The construction of $trans(\hV)$ ensures that it can only be satisfied by traces of the form $(\acta + \actb\acta)^\omega$.
In such a trace, a following $\actb$ action marks the satisfaction of propositional variable $p$, so states-transitions are 
represented by $\acta$ actions. As such, in the translation above, we use greatest fixed points (least fixed points would have worked too) to allow the modalities for $1$ to skip any occurrences of $\actb$ and only be affected by the occurrences of $\acta$.
The transition relation for $2$ can simply be the transitive closure of the one for $1$, and therefore in the translation, the $2$ modalities are allowed to skip any finite prefix and activate right after any $\acta$ occurrence.

\begin{lem}\label{lem:reduction-pspace}
For every formula $\hV$ from the one-variable, diamond-free fragment of $D \oplus_\subseteq K4$, 
$\hV$ is satisfiable if and only if $trans(\hV)$ is satisfiable over \Trc.
\end{lem}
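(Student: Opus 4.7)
The plan is to set up a semantic bridge between ``trace-shaped'' Kripke models and infinite traces over $\sset{\acta,\actb}$, and then to argue that, because the fragment under consideration contains only boxes, satisfiability can always be witnessed by such trace-shaped models.

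First, I would fix the shape of admissible traces and the associated Kripke structures. Call a trace \emph{admissible} if it belongs to $(\acta + \actb\acta)^\omega$. A direct inspection shows that any trace satisfying the first conjunct of $trans(\hV)$, namely $\max X.(\hSuf{\acta}X \vee \hSuf{\actb}{\hSuf{\acta}X})$, is admissible: greatest-fixpoint unfolding forces the trace to be partitioned into blocks that are either $\acta$ or $\actb\acta$. Given an admissible trace $\tV$, I would define its blocks $B_0,B_1,B_2,\ldots\in\sset{\acta,\actb\acta}$, take $W=\mathbb{N}$ as states, and let $V(i)=\sset{p}$ iff $B_i=\actb\acta$. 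For the relations, set $i\traS{1}j$ iff $j=i+1$, and $i\traS{2}j$ iff $i<j$. The frame conditions of $D\oplus_\subseteq K4$ (seriality of $\traS{1}$, transitivity of $\traS{2}$, and $\traS{1}\subseteq\traS{2}$) are immediate. Call this structure $M_\tV$.

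The central technical step is a \emph{translation lemma}: for every diamond-free, one-variable formula $\hVV$, every admissible trace $\tV$, and every $i\in\mathbb{N}$, the shifted tail $\tV_{\geq i}$ (the suffix of $\tV$ starting from the beginning of block $B_i$) satisfies $\hVV^t$ under $\hSemL{-}$ iff $M_\tV, i\models\hVV$. The proof proceeds by structural induction on $\hVV$. The Boolean and propositional cases are routine; for $p$ note that $p\in V(i)$ iff $B_i=\actb\acta$ iff $\tV_{\geq i}\in\hSemL{\hSuf{\actb}\hTru}$. The non-trivial cases are $[1]\hVV'$ and $[2]\hVV'$: here I would analyse the greatest-fixpoint unfolding of $([1]\hVV')^t$ and $([2]\hVV')^t$ along an admissible trace. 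The rule $\hNec{\actb}X$ lets the fixpoint ``skip'' the optional $\actb$ at the head of a block, while $\hNec{\acta}\hVV'^t$ forces $\hVV'^t$ at the head of the \emph{next} block, which is precisely the next $\traS{1}$-successor; for $[2]$, the additional conjunct $\hNec{\acta}X$ permits re-entering the greatest fixpoint after any $\acta$, so that $\hVV'^t$ is enforced at the head of every later block, which is exactly the set of $\traS{2}$-successors of $i$.

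With the translation lemma in place, the right-to-left direction is immediate: if some $\tV\in\hSemL{trans(\hV)}$ exists, then $\tV$ is admissible and $M_\tV,0\models\hV$. For the left-to-right direction I would use the fact that the fragment is diamond-free and hence behaviour is preserved by taking substructures (an auxiliary lemma proved by straightforward structural induction: if $M',w$ is a substructure of $M,w$ satisfying the frame conditions, then $M,w\models\hVV$ implies $M',w\models\hVV$). Given $M,w_0\models\hV$, use seriality of $\traS{1}$ to pick an infinite path $w_0\traS{1}w_1\traS{1}w_2\traS{1}\cdots$ in $M$ and restrict $M$ to this path, keeping $\traS{2}$ as $\traS{2}\cap(W'\times W')$; the frame conditions survive (seriality and $\traS{1}\subseteq\traS{2}$ are obvious, transitivity is inherited). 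In this linear restriction, $\hV$ still holds at $w_0$ by the preservation lemma, and the path can be encoded as an admissible trace $\tV$ by writing $\acta$ if $p\notin V(w_i)$ and $\actb\acta$ otherwise. The translation lemma then gives $\tV\in\hSemL{trans(\hV)}$.

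The main obstacle I anticipate is the inductive case for $[2]$ in the translation lemma: one must show that the greatest fixpoint $\max X.([\actb]X\land[\acta]X\land[\acta]\hVV'^t)$ along an admissible trace enforces $\hVV'^t$ exactly at the starting positions of \emph{every} block after the current one, and nothing more. This requires a careful unfolding argument showing that the $X$-branch can be taken repeatedly to reach any block, while the $[\acta]\hVV'^t$ branch can be taken once to check $\hVV'$ at the chosen block; the symmetric direction, that traces satisfying the fixpoint really do satisfy $\hVV'^t$ at every later block head, uses the co-inductive characterisation of the greatest fixpoint via post-fixpoints in \Cref{fig:recHML}.
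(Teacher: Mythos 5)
Your proposal is correct and follows essentially the same route as the paper's proof: both construct a linear Kripke structure $M_\tV$ from an admissible trace (the paper uses prefixes not ending in $\actb$ where you use block indices, an inessential difference) and a trace from a chosen $\traS{1}$-path in a model, and both hinge on the same translation lemma, proved by structural induction with the fixpoint unfoldings of $([1]\hVV)^t$ and $([2]\hVV)^t$ as the key cases. The only cosmetic divergence is in the left-to-right direction, where you factor the argument through an explicit substructure restriction plus a preservation lemma for box-only formulae, whereas the paper folds that preservation step directly into its induction on the formula evaluated along the fixed path; the underlying reason both work is the same (diamond-freeness means satisfaction survives discarding successors).
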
	
\begin{proof}
Given a trace $\tV \in (\acta + \actb\acta)^\omega$,
let $M_\tV = (W,(\tra{\act})_{\act \in \{1,2\}},V)$ be a Kripke structure, where 
$W$ is the set of finite prefixes of $\tV$ that do not end with $\actb$, 
$\ftr \traS{1} \ftrr$ iff $\ftrr = \ftr\acta $ or $\ftrr =  \ftr\actb\acta$, and 
$V(p) = \{ \ftr \in W \mid 
\ftr \actb \acta \in W
\}$.

Given a Kripke structure $M = (W,(\tra{\act})_{\act \in \{1,2\}},V)$ and state $w \in W$, fix a path $w_0w_1w_2 \cdots$ in $M$, where $w = w_0$; $\tV_{M,w} = \ftr^{W,w}_0\ftr^{W,w}_1\ftr^{W,w}_2\cdots$, where if $w_i \in V(p)$, then $\ftr^{W,w}_i = \actb\acta$, and $\ftr^{W,w}_i = \acta$ otherwise.

It is not too hard to observe that the following hold for all $\varphi$ of $D \oplus_\subseteq K4$, $\tV$, $M$, and $w$:
\begin{enumerate}
\item $\tV \in \hSemL{trans(\hV)}$ if and only if $\tV$ is of the form $(\acta + \actb\acta)^\omega$ and $\tV \in \hSemL{\hV^t}$, by the definition of $trans(\hV)$;
\item $M,w \models p$ if and only if $\tV_{M,w} \in \hSemL{\hSuf{\actb}\hTru}$, by the construction of $\tV_{M,w}$;
\item $\tV \in  (\acta + \actb\acta)^\omega \cap \hSemL{\hSuf{\actb}\hTru}$  if and only if $M_\tV,\varepsilon \models p$;
\item $M,w \models [1] \hVV$ if and only if $\tV_{M,w} \in \hSemL{\max X.([\actb] X \land [\act]\hVV^t)}$;
\item $\tV \in  (\acta + \actb\acta)^\omega \cap \hSemL{\max X.([\actb] X \land [\act]\hVV^t)}$  if and only if $M_\tV,\varepsilon \models [1]\hVV$;
\item $M,w \models [2] \hVV$ if and only if $\tV_{M,w} \in \hSemL{\max X.([\actb] X \land [\acta] X \land [\acta]\hVV^t)}$;
\item $\tV \in  (\acta + \actb\acta)^\omega \cap \hSemL{\max X.([\actb] X \land [\acta] X \land [\acta]\hVV^t)}$ if and only if $M_\tV,\varepsilon \models [2]\hVV$.
\end{enumerate}
From these observations, it is not hard to conclude, by induction on $\hV$, that for every $\hV$ of $D \oplus_\subseteq K4$, if $M,w \models \varphi$, then $\tV_{M,w} \in \hSemL{trans(\varphi)}$.
Furthermore,  if $\tV \in \hSemL{trans(\varphi)}$, then by the first observation, $\tV \in (\acta + \acta\actb)^\omega$ and 
$\tV \in \hSemL{\hV^t}$.
Therefore, it suffices to prove that for all subformulae $\hVV$ of $\hV$ and $\ftr \trr = \tr$, where $\ftr$ does not end with $\actb$,
if $\trr \in \hSemL{\hVV^t}$, then $M, \ftr \models \hVV$.
This can be done by induction on $\hVV$, using the observations above.
\end{proof}


Then, the PSPACE-hardness of \ltmuS follows as a corollary of Lemma \ref{lem:reduction-pspace}.

\subsection{Tight Complete Monitoring}
\label{sec:complete-mon-app-tight}

In the following, we use the notations $\bigodot_{\act \in A} \mV_\act$ and $\mV_1 \paralG \cdots \paralG \mV_i  \paralG \cdots \paralG \mV_k$  to denote a combination of
monitors
using the parallel operator $\paralG$
%
since the particular way the monitors are combined does not matter.
Furthermore, since we are dealing with reactive monitors---and, as a consequence of \Cref{prop:monitor2automaton}, the parallel operators are associative with respect to verdict-equivalence---any way we combine the monitors with $\paralG$ will reach the same verdict for the same (finite) trace.

\begin{lemmabiss}{\ref{lem:slim_violations}}
If $\hV \in \HML$ is slim and
$\hSemL{\hV} = \emptyset$
(\resp $\hSemL{\hV} = \Trc$), then $\hV = \hFls$ (\resp $\hV = \hTru$).
\end{lemmabiss}
\begin{proof}
We prove the contrapositive statement, that if $\hV \neq \hFls$, then there is some $\tV \in \hSemL{\hV}$.
The proof is by induction on $\hV$. The case for $\hTru$
is
immediate.
If $\hV \equiv \bigwedge_{\act \in A} \hNec{\act}\hV_\act$, then we have two cases.
\begin{description}
\item [Case $A = \Act$:]
Then there must be some $\act \in A$, such that $\hV_\act \neq \hFls$. By the inductive hypothesis, there is some $\tV \in \hSemL{\hV_\act}$, and therefore, $\act \tV \in \hSemL{\hV}$, which completes the proof.
\item [Case $A \neq \Act$:] Then there is some
$\act \notin A$,
and therefore, $\act^\omega \in \hSemL{\hV}$, which
completes the proof.
\end{description}
If $\hV \equiv \bigvee_{\act \in A} \hSuf{\act} \hV_\act$, then $A \neq \emptyset$. Let $\act \in A$. Since $\hV$ is slim, $\hV_\act \neq \hFls$ and by the inductive hypothesis there is some $\tV \in \hSemL{\hV_\act}$. Therefore, $\act \tV \in \hSemL{\hV}$, which completes the proof.
\end{proof}

\begin{lemmabiss}{\ref{lem:slim-implies-tight}}
If $\hV$ is a slim HML formula, then $m(\hV)$ is tight. 
\end{lemmabiss}
\begin{proof}
  By \Cref{prop:hml-monitorable},  $\tV \notin \hmeaning{\hV}_L$ implies that there is a finite prefix $\ftV$  of $\tV$ such that
  $m(\hV)\wtraS{\ftV}\no$.
  We  prove, by induction on \ftV, that
  if $\forall\tV.
  \rej{\mV,\ftV\tV}$,
  then $\mV \wtraS{\ftr} \no$ (the case for acceptance is symmetric).
  If $\ftr = \varepsilon$, then by Lemma \ref{prop:hml-monitorable}, for every trace $\tr$, $\tr\notin \hmeaning{\hV}_L$, thus by Lemma \ref{lem:slim_violations}, $\hV = \hFls$, and therefore $m(\hV) = \no$.
  Since $\no \wtraS{} \no$, we are done.
  If $\ftr = \actb \ftrr$, if $\hV \neq \hTru,\hFls$, then we have two cases:
  \begin{itemize}
  \item If $\hV \equiv \bigwedge_{\act \in B} \hNec{\act}\hV_\act$, then $\actb \in B$, $\ftrr \notin \hmeaning{\hV_\actb}_L$, and
  $$m(\hV) = \bigotimes_{\act \in B}(\act.m(\hV_\act) + \overline{\act}.\yes).$$
  Therefore, using the inductive hypothesis and rule \rtit{mVrC1},
  $$m(\hV) \xrightarrow{\actb} \yes \paralC \cdots \paralC \yes \paralC   m(\hV_\actb) \paralC  \yes \paralC \cdots \paralC \yes \Rightarrow m(\hV_\actb)\xRightarrow{\ftrr} \no.$$
  \item
  If $\hV \equiv \bigvee_{\act \in D} \hSuf{\act}\hV_\act$, then
  $$m(\hV) = \bigoplus_{\act \in D}(\act.m(\hV_\act) + \overline{\act}.\no).$$
  If $\actb \notin D$, then $m(\hV) \xrightarrow{\actb} \bigoplus_{\act\in D} \no \Rightarrow \no \wtraS{\ftrr}\no$.
  If  $\actb \in D$, then,
  using the inductive hypothesis and rule \rtit{mVrD1},
  $$m(\hV) \xrightarrow{\actb} \no \paralD \cdots \paralD \no \paralD   m(\hV_\actb) \paralD  \no \paralD \cdots \paralD \no \Rightarrow m(\hV_\actb)\xRightarrow{\ftrr} \no,$$
  and the proof is complete.
  \qedhere
  \end{itemize}
\end{proof}

\begin{propbiss}{\ref{prop:substitute-to-slim}}{\HML normalisation}
For every formula $\hV \in \HML$, there exists $k \leq l(\hV)$ such that $\hV = \hV_0 \rewriteL \hV_1 \rewriteL \ldots \rewriteL \hV_k=\hVV$ where $\hVV$ is slim and $\hSemL{\hV} = \hSemL{\hVV}$.
\end{propbiss}

\begin{proof}
We observe that if $\hV$ is not slim, then one of its subformulas is not in a form that can be produced by the grammar of \Cref{def:slim}, and therefore it must have the form of one of the left-hand-side formulas from \Cref{fig:fun-excercise}.
Therefore, for the proposition it suffices to prove for each of these equivalences that it is sound and that the
left-hand-side has a smaller length than the right-hand-side, which ensures that the rewriting of formulae terminates after at most $l(\hV)$ substitutions.
The cases for \Cref{eq:modal-trivial,eq:absorb-trivial} are immediate.
The remaining cases are also not that hard to handle, and we describe the representative case of \Cref{eq:diamond-and-box}.

We can observe that $\bigwedge_{\act \in C} \chi_\act$ and $\bigvee_{\act \in C} \chi_\act$ represent respectively a sequence of $|C|$ conjunctions and  disjunctions.
Therefore, $l(\bigvee_{\act \in C} \chi_\act) = |C| - 1 + \sum_{\act \in C} l(\chi_\act)$.
As such,
\[
l\left(\bigwedge_{\act \in A}\hNec{\act}\hV_\act \land \bigvee_{\act \in B}\hSuf{\act}\hVV_\act \right) ~~~=
|A|+|B|-2 +   \sum_{\act \in A}(1 + l(\hV_\act)) + 1 + \sum_{\act \in B}(1 + l(\hVV_\act)), \text{ and} \]
\begin{align*}
l\left(\bigvee_{\act \in A \cap B} \hSuf{\act} (\hV_\acta \land \hVV_\acta) \lor \bigvee_{\act \in  B \setminus A} \hSuf{\act} \hVV_\acta\right) ~~~=
\qquad\qquad\qquad\qquad\qquad\qquad \qquad\qquad\qquad
\\
=~~~
|A \cap B|+|B \setminus A|-2 +   \sum_{\act \in A \cap B}(2 + l(\hV_\act) + l(\hVV_\act)) + 1 + \sum_{\act \in B \setminus A}(1 + l(\hVV_\act)) ~~~ \\
=~~~
|B|-2 +   \sum_{\act \in A \cap B}(1 + l(\hV_\act)) + 1 + \sum_{\act \in B }(1 + l(\hVV_\act)) ~~~ \\
<~~~
|A|+|B|-2 +   \sum_{\act \in A}(1 + l(\hV_\act)) + 1 + \sum_{\act \in B}(1 + l(\hVV_\act)),
\end{align*}
because $A \neq \emptyset$.
To prove that \Cref{eq:diamond-and-box} is sound, we observe that
$\tV \in \hSemL{\bigwedge_{\act \in A}\hNec{\act}\hV_\act \land \bigvee_{\act \in B}\hSuf{\act}\hVV_\act}$
if and only if
$\tV \in \hSemL{\bigwedge_{\act \in A}\hNec{\act}\hV_\act}$ and $\tV \in \hSem{\bigvee_{\act \in B}\hSuf{\act}\hVV_\act}$
if and only if
$\tV = \act \tVV$ and
$\act \in A \Rightarrow \tVV\in \hSemL{\hV_\act}$ and $\act \in B$ and $\tVV \in \hSemL{\hSuf{\act}\hVV_\act}$
if and only if
$\tV = \act \tVV$ and
$\act \in B \setminus A$ and $\tVV \in \hSemL{\hSuf{\act}\hVV_\act}$,
or
$\act \in A \cap B$ and $\tVV \in \hSemL{\hSuf{\act}\hVV_\act}$
if and only if
$\tV \in \hSemL{ \bigvee_{\act \in A \cap B} \hSuf{\act} (\hV_\acta \land \hVV_\acta) \lor \bigvee_{\act \in  B \setminus A} \hSuf{\act} \hVV_\acta }$.
%
%
\end{proof}

\subsection{Partially-Complete Monitoring}
\label{sec:partial-complete-mon-app}

A useful measure for guarded formulae is $ms(\hV)$ that measures the longest distance from the root of the syntax tree of $\hV$ to either a constant $\hTru,\hFls$, or to a modality.

\begin{defn}[Measure for guarded \recHML formulae] \label{def:formula-measure}
  \begin{align*}
    & ms(\hNec{A}\hV) = ms(\hSuf{A}\hV) = ms(\hTru) = ms(\hFls) = 0\\
    & ms(\max X. \hV) = ms(\min X. \hV) = ms(\hV) + 1\\
    & ms(\hV\land \hVV) = ms(\hV \lor \hVV) = \max\{ ms(\hV), ms(\hVV) \} + 1 \tag*{\qedd}
  \end{align*}
\end{defn}

\begin{propbis}{\ref{prop:partial-reactive}}
	For any $\hV \in \ltmuS \cup \ltmuC$, $\hSyn{\hV}$ is reactive.
\end{propbis}

\begin{proof}
	The proof is by straightforward induction on $ms(\hV)$, using the fact that
	\hV is guarded, and that therefore for the case of $\hV = \max X.\hVV$, $ms(\hVV[\hV/X]) < ms(\hV)$.
\end{proof}

\begin{propbis}{\ref{lem:sound-and-rej-compl}}
For every $\hV\in \ltmuS$, $\hSyn{\psi}$ is a sound and violation-complete monitor for $\hV$. For every  $\hV\in \ltmuC$, $\hSyn{\psi}$ is a sound and satisfaction-complete monitor for $\hV$.
\end{propbis}
\begin{proof}

We prove the lemma for the $\ltmuS$ fragment; the proof for $\ltmuC$ is dual.
For \emph{soundness}, we need to show that if \rej{\hSyn{\hV},\tV} (\resp \acc{\hSyn{\hV},\tV}) then $t \notin \hSem{\hV}$ (\resp $t \in \hSemL{\hV}$).
We here show the case for rejection; the case for acceptance is symmetric.

If $\rej{\hSyn{\hV},\tV}$ then there is an explicit trace $\etV$ that agrees with a prefix of $\tV$ on external actions, such that $\hSyn{\hV} \etraS{\etV} \no$.
By structural induction on $\etV$, we prove that for every $\etV$ and $\tV$, if $\hSyn{\hV} \etraS{\etV} \no$ and $\filter{\etV}$ is a prefix of \tV, then $\tV \notin \hSemL{\hV}$.
 \begin{description}
    \item[Case $\etV = \varepsilon$:]
 $\hSyn{\hV} = \no$ and thus, $\hV = \hFls$ where $\tV \notin \hSemL{\hFls}$ holds trivially.
  \item[Case $\etV = \actu \etV'$:]
  We take cases for $\hV$.
  Since $\hV$ is closed, we do not consider the case for $\hV=X$.
  \begin{description}
  \item[Case $\hV = \hFls$:] Immediate.
  \item[Case $\hV = \hTru$:] We have $\hSyn{\hTru} = \yes$  and \Cref{lem:ver-persistence} ensures that the premise $\hSyn{\hTru}=\yes \etraS{\etV} \no$ cannot ever hold.
  \item[Case $\hV = \hNec{\ASet}{\hV}$:] By \Cref{def:formula-to-monitor-part}, we have
  $\hSyn{\hNec{\ASet}{\hV}} = \ch{\prf{\ASet}{\hSyn{\hV}}}{\uprf{\ASet}{\yes}}$.
  This monitor cannot take a $\tau$-transition, so it must be the case that $\actu = \acta$.
  Therefore , it must be the case that $\tV = \act \tV'$ where $\acta \in \ASet$  and $\hSyn{\hNec{\ASet}{\hV}} \traS{\act} \hSyn{\hV} \etraS{\etV'} \no$ for some $\etV'$ such that $\filter{\etV'}$  is a prefix of $\tV'$.
  From the IH,  $\tV' \notin \hSemL{\hV}$, and thus, by $\acta \in \ASet$, we obtain $\tV \notin \hSemL{\hNec{\ASet}{\hV}}$.
  \item[Case $\hV = \hSuf{\ASet}{\hV}$:]
  By \Cref{def:formula-to-monitor-part}, we have $\hSyn{\hSuf{\ASet}{\hV}} =\ch{\prf{\ASet}{\hSyn{\hV}}}{\uprf{\ASet}{\no}}$.
  Similar to the previous case,  if $\hSyn{\hSuf \ASet \hV} \etraS{\actu\etV'} \no$, then $\mu = \act$ and $t = \act t'$ where
\begin{itemize}
  \item either $\act\notin \ASet$, which immediately gives us $t \notin \hSemL{\hSuf{\ASet}{\hV}}$;
  \item or $\act\in \ASet$ where $\hSyn \hV\etraS{\etV'}\no$.  By the IH, we obtain $t' \notin \hSemL{{\hV}}$ and thus $t \notin \hSemL{\hSuf{\ASet}{\hV}}$.
\end{itemize}
     \item[Case $\hV = \hMaxX \hVV$:]
  $\hSyn \hV = \rec x \hSyn \hVV$ so $\etV = \tau \etV'$ and $\hSyn \hV \xrightarrow{\tau} \hSyn{\hVV} [\rec x \hSyn \hVV/x] \etraS{\etV'} \no$. Noting that $\hSyn{\hVV}[\rec x \hSyn{\hVV}/x]=\hSyn{\hVV[\max X \hVV/X]}$, and
  since $\filter{\etV'}$ is a prefix of $\tV$,
  by the inductive hypothesis, $\tV \notin \hSemL{\hVV[\hMaxX \hVV /X]} = \hSemL{\hV}$.
  \item[Cases \hAndF and \hOrF:]
We proceed by induction on the number of boolean connectives in the formula. If $\hV$ has no boolean connectives, this is handled by one of the previous cases.
If $\hV=\hVV_1 \land \hVV_2$ then $\hSyn{\hV}=\hSyn{\hVV_1} \paralC \hSyn{\hVV_2} $. From \cref{cor:parallel-and-monitors}, $\rej{\hSyn{\hVV_1}\paralC\hSyn{\hVV_2},t}$ if and only if either $\rej{\hSyn{\hVV_1},t}$ or $\rej{\hSyn{\hVV_2},t}$. By the inductive hypothesis, this is the case only if $t\notin \hSemL{\hVV_1}$ or $t \notin \hSemL{\hVV_2}$, and therefore $t\notin \hSemL{\hV}$. The case for $\lor$ is similar.
  \end{description}
\end{description}

For \emph{completeness:} we need to show that if  $t \notin \hSemL{\hV}$ (\resp $t \in \hSemL{\hV}$) then \rej{\hSyn{\hV},\tV} (\resp \acc{\hSyn{\hV},\tV}). Again, we prove the case for rejection since the case for acceptance is symmetric.


Since $\hV$ is closed, we can assume that each formula variable $X$ appears in the scope of a \emph{unique}
greatest-fixed-point operator $\max X$.
We assume a mapping $un(-)$ of variables that appear in $\hV$ to subformulae of $\hV$, such that
for every $X$, $un(X) = \max X.\hVV$ for some $\hVV$.
We can extend the definition of monitor synthesis from \Cref{def:formula-to-monitor-part} to also apply on pairs $(\hVV,S)$, where $\hVV \in \ltmuS$ and $S$ is a set of formula variables that appear in $\hV$, by altering the case for $X$, so that
\begin{align*}
  \hSyn{X,S} =&
   \begin{cases}
     \hSyn{un(X),S\setminus \sset{X}} & \text{ if } X \in S\\
     x & \text{ otherwise }
   \end{cases}
\end{align*}
We observe that $\hSyn{\hVV,\emptyset} = \hSyn{\hVV}$. Therefore, to complete the proof of completeness, it suffices to prove that for all (possibly open) subformulae of $\hV$, if $S$ is the set of free variables in $\hVV$ and
$t \notin \hSemL{\hVV,\rho}$ for some environment $\rho$ such that for all $X \in S$, $\rho(X)$ is the set of traces that $\hSyn{X,S}$ does not reject,
then $\hSyn{\hVV,S}$ rejects $t$.
We proceed to prove this claim by induction on $\hVV$.

%
%
%

\begin{description}
\item[cases $\hVV \in \{X, \hFls, \hTru\}$:] immediate.
\item[case $\hVV=\hNec \ASet \hVV'$:]
Note that $t \notin \hSemL{\hNec \ASet \hVV',\rho}$
if and only if $\tV = \act \tV'$, $\act\in \ASet$, and $\tV' \notin \hSemL{\hVV',\rho}$.
By the IH,  $\tV' \notin \hSemL{\hVV',\rho}$ implies that $\hSyn{\hVV',S}$ rejects $\tV'$.
As a result, the monitor $\hSyn{\hNec \ASet \hVV',S} = \ch{\prf{\ASet}{\hSyn{\hVV',S}}}{\uprf{\ASet}{\yes}}$ rejects $\tV$.
\item[case $\hVV=\hSuf \ASet \hVV'$:]
 $t \notin \hSemL{\hSuf{\ASet}{\hVV'},\rho}$ if and only if either $\tV = \act \tV'$ and $\act\notin \ASet$, or else $\tV' \notin \hSemL{{\hVV'},\rho}$.
 In the former case $\hSyn{\hVV,S}$ clearly rejects $\tV$; in the later case, by the IH we know that $\hSyn{\hVV',S}$ rejects $\tV'$, in which case $\hSyn{\hSuf \ASet \hVV',S} = \ch{\prf{\ASet}{\hSyn{\hVV',S}}}{\uprf{\ASet}{\no}}$ rejects $\tV$.
 \item[cases $\hVV= \hVV_1\vee \hVV_2$ and $\hVV_1 \land \hVV_2$:] We proceed by induction on the number of boolean connectives.  The case without boolean connectives is handled by one of the previous cases.
 If $\hVV=\hVV_1\wedge \hVV_2$, then $\hSyn{\hVV}=\hSyn{\hVV_1} \paralC \hSyn{\hVV_2} $.
 If $\tV\notin \hSemL{\hVV}$, then it must be the case that either
 $\tV\notin \hSemL{\hVV_1}$ or $\tV\notin \hSemL{\hVV_2}$.
 By the IH we obtain either $\rej{\hSyn{\hVV_1},t}$ or $\rej{\hSyn{\hVV_2},t}$.
 Therefore, from \Cref{prop:partial-reactive,cor:parallel-and-monitors}, $\rej{\hSyn{\hVV'_1,S}\paralC\hSyn{\hVV'_2,S},\tV}$.
 The disjunctive case is similar.

 \item[case $\hVV = \hMaxX \hVV'$:]
   From \Cref{fig:recHML}, $\tV \in \hSemL{\hVV,\rho}$ if and only if there is some set of traces $T$, such that $\tV \in T$ and $T \subseteq \hSemL{\hVV',\rho[X \mapsto T]}$.
   Let $T$ be the set of traces not rejected by $\hSyn{\hVV,S}$.
  By the IH, for every trace $\tV'$,
  \[
  \text{if } \tV' \notin \hSemL{\hVV',\rho[X \mapsto T]} \text{ then }
   \hSyn{\hVV',S \cup \{X\}} \text{ rejects } \tV' .
  \]
By \Cref{def:formula-to-monitor-part}, we have $\hSyn{\hMaxX \hVV', S} = \rec{x}{\hSyn{\hVV',S}}$
where every transition sequence must therefore start as $\rec{x}{\hSyn{\hVV',S}} \traS{\actt} \hSyn{\hVV',S}\subS{\hSyn{\hVV,S}}{x}$.
We also
have that $\hSyn{\hVV',S}\subS{\hSyn{\hVV,S}}{x} = \hSyn{\hVV',S \cup \{X\}}$.
This means that, if $\tV' \notin \hSemL{\hVV',\rho[X \mapsto T]}$,
then
$\hSyn{\hVV',S}\subS{\hSyn{\hVV,S}}{x}$
rejects $\tV'$, which in turn yields that $\hSyn{\hVV,S}$ rejects $\tV'$, which is the result that we want.
%
%
Therefore, for every trace $\tV'$,
  \[
  \text{if }  \tV' \notin \hSemL{\hVV',\rho[X \mapsto T]}, \text{ then }
  m(\hVV,S ) \text{ rejects } \tV'
  ,
  \]
  hence $T \subseteq \hSemL{\hVV',\rho[X \mapsto T]}$.
  If $m(\hVV,S)$ does not reject $\tV$, then $\tV \in T$ and thus, $\tV \in \hSemL{\psi,\rho}$; in other words, if $\tV \notin \hSemL{\psi,\rho}$, then $m(\hVV,S)$ rejects $\tV$.
  \qedhere
\end{description}

\end{proof}

The following definition, \Cref{def:introducing-redundancies}, lead up to \Cref{lem:reject-same}, which handles the discrepancies between our synthesis functions.


\newcommand{\red}{\textsf{red}}

\begin{defn}\label{def:introducing-redundancies}
	For parallel monitor \mV, we define $\red(\mV)$ recursively on $\mV$, such that
	$\red(\mV) = \mV$ for $\mV = \yes,\no,x$, it commutes with the parallel composition operators,
	$\red(\mV + \mVV) = \red(\mV) \paralC \red(\mVV)$, $\red(\stp) = \yes$, and
	$\red({\prf{\ASet}\mV}) = {\prf{\ASet}\red(\mV) + \prf{\coASet}\yes}$.
\qedd
\end{defn}

\begin{lem}\label{lem:substitute}
For every \mV, $\hSyn{\mSyn{\mV}}= \red(\mV)$.
\end{lem}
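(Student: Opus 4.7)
The plan is to prove $\hSyn{\mSyn{\mV}} = \red(\mV)$ by structural induction on the parallel monitor $\mV$, in each case unfolding the definitions of $\mSyn{-}$ (Definition~\ref{def:regmon-to-formula}), the extended $\hSyn{-}$ (Definition~\ref{def:formula-to-monitor-part}), and $\red(-)$ (Definition~\ref{def:introducing-redundancies}), and then invoking the inductive hypothesis to match the nested pieces. Since the statement claims literal syntactic equality, each case should fall out by direct computation once the defining clauses line up; the only case that needs genuine attention is the single-action prefix, because this is precisely the constructor on which $\mSyn{-}$ and $\hSyn{-}$ disagree in shape.

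First I would dispatch the base cases. For $\mV \in \{\yes, \no, x\}$ both syntheses act as the identity and $\red$ fixes each of these by definition, while for $\mV = \stp$ we have $\mSyn{\stp} = \hTru$, $\hSyn{\hTru} = \yes$ and $\red(\stp) = \yes$. The composite cases then reduce to matching constructors: $\mSyn{\mV_1 \paralC \mV_2} = \hAnd{\mSyn{\mV_1}}{\mSyn{\mV_2}}$ and applying $\hSyn{-}$ yields $\hSyn{\mSyn{\mV_1}} \paralC \hSyn{\mSyn{\mV_2}}$, which by the IH equals $\red(\mV_1) \paralC \red(\mV_2) = \red(\mV_1 \paralC \mV_2)$; the $\paralD$ case is symmetric via $\hOr{-}{-}$; and the sum case $\mV_1 + \mV_2$ is mapped to a conjunction by $\mSyn{-}$ and then back to $\paralC$ by $\hSyn{-}$, which agrees with $\red(\mV_1 + \mV_2) = \red(\mV_1) \paralC \red(\mV_2)$. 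Recursion is handled by extending $\red$ to commute with $\rec$ and using the clause $\hSyn{\hMaxXF} = \rec{x}{\hSyn{\hV}}$ from Definition~\ref{def:formula-to-monitor-part}.

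The only substantive step is the prefix $\mV = \prf{\acta}{\mV'}$. Unfolding gives
$\hSyn{\mSyn{\prf{\acta}{\mV'}}} = \hSyn{\hNec{\acta}{\mSyn{\mV'}}} = \ch{\prf{\acta}{\hSyn{\mSyn{\mV'}}}}{\prf{\coact{\acta}}{\yes}}$,
which by the IH rewrites to $\ch{\prf{\acta}{\red(\mV')}}{\prf{\coact{\acta}}{\yes}}$. On the other side, instantiating the prefix clause of Definition~\ref{def:introducing-redundancies} with the singleton $\ASet = \{\acta\}$ yields
$\red(\prf{\acta}{\mV'}) = \prf{\acta}{\red(\mV')} + \prf{\coact{\acta}}{\yes}$,
so the two sides coincide syntactically. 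Conceptually, $\mSyn{-}$ abstracts the monitor's prefix into the universal modality $\hNec{\acta}{-}$, and $\hSyn{-}$ must then reintroduce an explicit $\prf{\coact{\acta}}{\yes}$ branch to cover the complementary actions; $\red$ is the bookkeeping function that records exactly this residual summand, which is why the round trip cannot return $\mV$ on the nose but instead returns $\red(\mV)$. The only potential wrinkle is reading the shorthand $\prf{\coact{\acta}}{\yes}$ when $\Act = \{\acta\}$ (so $\coact{\acta} = \emptyset$); this degenerate alphabet is easily dealt with by inspecting it separately and noting that the missing summand is vacuous on both sides. With this prefix case settled, the lemma follows and immediately feeds into Lemma~\ref{lem:reject-same}, where one only needs to observe that the added $\prf{\actb}{\yes}$ summands contribute no rejection transitions, so $\mV$ and $\red(\mV)$ reject the same traces.
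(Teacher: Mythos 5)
Your proof is correct and follows exactly the route the paper takes: the paper's own proof of Lemma~\ref{lem:substitute} is just ``by straightforward induction on $\mV$'', and your case analysis (including the observation that the only non-trivial constructor is the action prefix, where $\hSyn{\hNec{\acta}{-}}$ reintroduces the $\prf{\coact{\acta}}{\yes}$ summand recorded by $\red$, and that $+$ is sent to $\paralC$ on both sides) is precisely the intended unfolding.
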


\begin{proof}
	By straightforward induction on \mV.
\end{proof}

\begin{lemmabiss}{\ref{lem:reject-same}}
$\hSyn{\mSyn{\mV}}$
rejects the same traces as $\mV$.
\end{lemmabiss}

\begin{proof}
From \cref{lem:substitute}, $\hSyn{\mSyn{\mV}}= \red(\mV)$.
 We can decompose $\red(-)$ to three separate operators, $\red_e(-)$, $\red_+(-)$, and $\red_a(-)$ that respectively
 replace $\stp$ with $\yes$, $+$ with $\paralC$, and $A.\mVV$ with $A.\red_a(\mVV) + \coASet.\yes$, and commute with all other monitor operations and leave other constants unchanged.
 We can see that $\red(\mV) = \red_a(\red_+(\red_e(\mV)))$.
 Thus, it suffices to prove that for all $\mV$ and $o \in \{e,+,a\}$,
  $\red_o(\mV)$ rejects the same finite traces as $\mV$.
  But this can be proven by straightforward induction on the finite trace.
%
\end{proof}

\subsection{Tight Partially-Complete Monitoring}
\label{sec:partial-complete-tight-mon-app}

\begin{lemmabiss}{\ref{lem:partially-complete-tight}}
Let $\mV$ be a deterministic regular monitor, where $\sum_{\act \in \Act} \act.\no$, $\rec x \no$, $\sum_{\act \in \Act} \act.\yes$, and $\rec x \yes$ do not occur as submonitors. Then, $\mV$ is tight.
\end{lemmabiss}

\begin{proof}
Let $\mV$ be a deterministic regular monitor, where $\sum_{\act \in \Act} \act.\no$, $\rec x \no$, $\sum_{\act \in \Act} \act.\yes$, and $\rec x \yes$ do not occur as submonitors, and let \ftV\ be
such that $\mV$ rejects $\ftV\tV$ for every $\tV$.
We prove that $m \wtraS{\ftV} \no$.
For this, we use the alternative monitor rules
that were
introduced in Subsection \ref{subsec:transformations} and the following auxiliary lemma.
\begin{lem*}[\cite{determinization}]\label{lem:find_px_determ}
	In a transition-sequence $\mV \xRightarrow{\ftV} x$, such that $x$ is bound in $\mV$ and $\mV$ is deterministic, $p_x$
	must appear.
\end{lem*}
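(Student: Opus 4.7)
My plan is to derive the lemma from a stronger statement that applies uniformly to every bound variable of $\mV$. Working in System N (in which recursion does not substitute, so variable symbols can persist as states), I will prove, by induction on the position $i$, that for every transition sequence $\mV = \mV_0 \traS{\mu_1} \mV_1 \traS{\mu_2} \cdots \traS{\mu_n} \mV_n$ with $\mV$ deterministic, and every variable $z$ that is bound in $\mV$, if $z$ is free in $\mV_i$ then there exists some $j < i$ with $\mV_j = p_z$. Instantiating this claim with $z = x$ and $i = n$, using $\mV_n = x$ (which clearly has $x$ free), yields the lemma.

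The induction is on $i$. The base case $i = 0$ is vacuous since $z$ is bound, hence not free, in $\mV_0 = \mV$. In the inductive step, examine the transition $\mV_{i-1} \traS{\mu_i} \mV_i$. If $z$ is already free in $\mV_{i-1}$, the inductive hypothesis at index $i - 1$ supplies the required witness. Otherwise, the transition exposes $z$ for the first time, and determinism pins down the shape of $\mV_{i-1}$ to one of $\vV$, $\prf{\act}{m}$, $\sum_{\act\in A}\prf{\act}{m_\act}$ (with only action-guarded summands), $\rec{y}{\mV_y}$, or a bare variable $y$. In the first three shapes, the free variables of source and target coincide, contradicting the newness of $z$. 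If $\mV_{i-1} = \rec{y}{\mV_y}$, the rule \rtit{mRecF} discharges exactly the binder for $y$, so $y = z$, giving $\mV_{i-1} = p_z$ and $j = i - 1$.

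The main obstacle is the remaining case $\mV_{i-1} = y$ with $y \neq z$, which transitions via \rtit{mRecB} to $\mV_y$ and may legitimately expose any variable free in $\mV_y$. The key idea is to chain two applications of the inductive hypothesis. Because $y$ itself is bound in $\mV$ yet free in $\mV_{i-1}$, the hypothesis applied at index $i - 1$ yields some $j' < i - 1$ with $\mV_{j'} = p_y$. Since $\mV$ is deterministic, $p_y = \rec{y}{\mV_y}$ admits only the $\tau$-transition to $\mV_y$, so $\mV_{j'+1} = \mV_y$, and $z$ is free at this state. A second application of the hypothesis at the strictly smaller index $j' + 1 \le i - 1$ produces the desired $j < j'+1 < i$ with $\mV_j = p_z$. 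Determinism is essential throughout: it rules out sums whose summands are not action-guarded (which would otherwise allow entering the scope of $p_z$ without passing through $p_z$ itself), and it forces $p_y$'s unique successor to be $\mV_y$, which is what makes the chained induction sound.
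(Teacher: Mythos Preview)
The paper does not give its own proof of this lemma; it is quoted from \cite{determinization} as an auxiliary fact inside the proof of \Cref{lem:partially-complete-tight}, so there is no in-paper argument to compare against.

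Your proof is correct. The strengthened invariant---any variable bound in $\mV$ that is free at stage $i$ must have had its binder $p_z$ occur as some earlier $\mV_j$---is precisely what one needs, and the chained use of the (strong) induction hypothesis in the $\mV_{i-1}=y$ case is the right manoeuvre. Two minor remarks. First, the uniqueness of the successor of $p_y$ does not actually hinge on determinism: a term of the form $\rec{y}{\mV_y}$ matches only rule \rtit{mRecF}, regardless of determinism. Where determinism is genuinely used is exactly where you say, namely to exclude sums with non-action-guarded summands, which would otherwise let \rtit{mSelL}/\rtit{mSelR} slip underneath a binder $\rec{z}{\cdot}$ without that binder ever being the current state. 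Second, when you write ``because $y$ itself is bound in $\mV$'' you are implicitly using that $\mV$ is closed; this is harmless in the System~N setting of the paper, since the rules \rtit{mRecF}/\rtit{mRecB} are only well-defined under the standing assumption that every variable symbol has a unique associated $p_x$, which forces $\mV$ to be closed.
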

Let $\mVV$ be such that $\mV \wtraS{\ftV} \mVV$.
We prove that if $\mVV \neq \no$, then there is some $\tr$ that $\mVV$ does not reject, and this suffices due to Lemma \ref{lem:det_mon_is_det}. 
We use
induction on $\mVV$.
If $\mVV$ is a verdict, then the proof is complete.
If $\mVV = x$, then $\mVV$ can only transition to $p_x$; but then, by the lemma, there are some $\ftV_1\ftV_2 = \ftV$, such that $\ftV_2 \neq \varepsilon$ and $\mV \wtraS{\ftV_1} p_x \wtraS{\ftV_2} \mVV = x \traS{\tau} p_x$, and therefore $\mVV$ does not reject $\ftV_2^\omega$.
If $\mVV = \sum_{\act \in A} [\act] \mV_\act$, then if $\actb \notin A$, $\mVV$ does not reject $\actb \tr$, therefore we assume that $A = \Act$.
If $\mVV$ rejects all traces, then so do all of $\mV_\act$, and therefore by the inductive hypothesis, $\mVV = \sum_{\act \in \Act} [\act] \no$, which contradicts the lemma's assumptions.
Finally, if $\mVV = \rec x \mVV'$, then by the inductive hypothesis, either $\mVV$ does not reject all traces, or $\mVV' = \no$, which is a contradiction.
\end{proof}

\section{Monitorability Across Semantics}
\label{sec:branching-appendix}
We now present the omitted proofs of \Cref{sec:branchingtime}.

\subsection{The Finfinite Domain}

\begin{lemmabiss}{\ref{lem:infinite-to-finfinite-semantics}}
For all $ \hV\in \UHML$, $\hSemF{\hV}\cap \Trc = \hSemL{\hV}$

\end{lemmabiss}

\begin{proof}
Given an environment $\sigma$ on finfinite traces, let $\sigma'$ be the restriction of $\sigma$ on $\Trc$.
Then, we can prove by induction on $\hV$ that $\tr \in \hSemF{\hV,\sigma}$ if and only if $\tr \in \hSemL{\hV,\sigma'}$, for all $\sigma$ and $\tr$.
\end{proof}

\subsection{Monitorability over Finfinite Traces}

\begin{lemmabiss}{\ref{lem:finfinite-trivial-mon}}
Over finfinite traces, if $\mV$ is sound and complete for $\hV$, then $\hV$ is equivalent to either $\hTru$ or to $\hFls$.
\end{lemmabiss}

\begin{proof}
If $\mV$ is complete for $\hV$, then it must either accept or reject $\varepsilon$ and thus 
all of its extensions, that is all finfinite traces. If $\mV$ is also sound, then $\hV$ is equivalent to $\hTru$ or $\hFls$.
\end{proof}

To facilitate some of the proofs to follow, we define $ms(\hV)$ to measure the distance from the root of the syntax tree of $\hV$ to either a constant $\hTru,\hFls$, or to a modality.
\begin{defn}
\begin{align*}
 ms([\act]\hV) = ms(\hSuf{\act}\hV) = ms(\hTru) = ms(\hFls) &= 0 \\
  ms(\max X. \hV) = ms(\min X. \hV) &= ms(\hV) + 1 \\
  ms(\hV\land \hVV) = ms(\hV \lor \hVV) &= \max\{ ms(\hV), ms(\hVV) \} + 1.
\end{align*}
\end{defn}

\begin{lemmabiss}{\ref{lem:finfinite-suffix-closed}}
For all $\ftV \in \Act^*$ and $\fftV \in \fTrc$, if $\hV \in \ftmuS$ and $\ftV\fftV \in \hSemF{\hV}$, then $\ftV \in \hSemF{\hV}$; 
if $\hV \in \ftmuC$ and $\ftV \in \hSemF{\hV}$, then $\ftV\fftV \in \hSemF{\hV}$.
\end{lemmabiss}

\begin{proof}
We prove the lemma for the case of $\hV \in \ftmuS$, as the case of $\hV \in \ftmuC$ is dual, and, as usual, we assume that $\hV$ is a guarded formula.
We use induction on 
$ms(\hV) + |\ftV|$.
Let $\ftV\fftV \in \hSemF{\hV}$.
We proceed by a case analysis
on the form of $\hV$. The interesting cases are the ones for $\hV = [\ASet] \hVV$ and $\hV = \max X.\hVV$.
If $\hV = [\ASet] \hVV$, then, if $\ftV = \act \ftV'$ for some $\act \in \ASet$, then $\ftV\fftV = \act \ftV'\fftV$, so it must be the case that $\ftV'\fftV \in \hSemF{\hVV}$, 
therefore by the inductive hypothesis, $\ftV' \in \hSemF{\hVV}$, so $\ftV \in \hSemF{\hV}$; otherwise, immediately by the finfinite semantics, $\ftV \in \hSemF{\hV}$.
If $\hV = \max X.\hVV$, then $\hSemF{\hV} = \hSemF{\hVV[\hV/X]}$ and since $\hV$ is guarded, $ms(\hSemF{\hVV[\hV/X]}) < \hSemF{\hV}$, so the proof is complete by the inductive hypothesis.
\end{proof}

\begin{defn}
We say that $\hV$ is propositionally inconsistent if $\hV = \hFls$, or $\hV = \hV_1 \land \hV_2$ and one of $\hV_1,\hV_2$ is propositionally  inconsistent, or $\hV = \hV_1 \lor \hV_2$ and both of $\hV_1,\hV_2$ are propositionally  inconsistent, or $\hV = \max X. \hV_1$ or $\hV = \min X. \hV_1$, and $\hV_1$ is propositionally  inconsistent.
We can dually define that $\hV$ is a propositional tautology.
\end{defn}


\begin{lem}\label{lem:prop-inconst-is-ff}
If a guarded (closed) $\hV \in \ftmuS$ is equivalent to $\hFls$ under finfinite semantics, then it is propositionally inconsistent.
If a guarded (closed) $\hV \in \ftmuC$ is equivalent to $\hTru$ under finfinite semantics, then it is a propositional tautology.
\end{lem}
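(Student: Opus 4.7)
\textbf{Proof plan for Lemma \ref{lem:prop-inconst-is-ff}.}

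My plan is to prove the lemma by contraposition via the following strengthened claim, which can be proved by straightforward structural induction:

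\begin{quote}
\emph{Claim (for $\ftmuS$).} For every $\hV\in\ftmuS$ and every environment $\sigma$ such that $\varepsilon\in\sigma(X)$ for each free variable $X$ of $\hV$, if $\hV$ is \emph{not} propositionally inconsistent, then $\varepsilon\in\hSemF{\hV,\sigma}$.
\end{quote}

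From this claim the first half of the lemma is immediate: if a closed $\hV\in\ftmuS$ is equivalent to $\hFls$ over finfinite traces, then $\varepsilon\notin\hSemF{\hV}$, and the (vacuously quantified) hypothesis on $\sigma$ is trivially satisfied, so by contraposition $\hV$ must be propositionally inconsistent. The second half follows from the dual claim for $\ftmuC$, obtained by swapping $\hTru/\hFls$, $\land/\lor$, $\hNec{A}{-}/\hSuf{A}{-}$, $\max/\min$, and $\in/\notin$.

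First I would handle the easy structural cases: $\hTru$ is immediate; $\hFls$ is propositionally inconsistent so the implication is vacuous; a variable $X$ is handled by the assumption on $\sigma$; the crucial modal case $\hNec{A}{\hVV}$ holds because $\varepsilon\in\hSemF{\hNec{A}{\hVV},\sigma}$ trivially (there is no $\fftVV$ with $\varepsilon=a\fftVV$), regardless of $\hVV$. The boolean connectives $\land$ and $\lor$ unpack directly from the definition of propositional inconsistency together with the inductive hypothesis.

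The main obstacle, and the only case that requires a genuine argument, is $\hV=\hMaxX{\hVV}$. Here I need to exhibit a post-fixed point of the map $F\mapsto\hSemF{\hVV,\sigma[X\mapsto F]}$ that contains $\varepsilon$. I would take the witness $F=\{\varepsilon\}$ and verify that $\{\varepsilon\}\subseteq\hSemF{\hVV,\sigma[X\mapsto\{\varepsilon\}]}$: since $\hMaxX{\hVV}$ is not propositionally inconsistent, neither is $\hVV$, and the extended environment $\sigma[X\mapsto\{\varepsilon\}]$ still satisfies the hypothesis (every free variable of $\hVV$ is mapped to a set containing $\varepsilon$), so the inductive hypothesis yields $\varepsilon\in\hSemF{\hVV,\sigma[X\mapsto\{\varepsilon\}]}$. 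Hence $\{\varepsilon\}$ is one of the sets in the union defining $\hSemF{\hMaxX{\hVV},\sigma}$, giving $\varepsilon\in\hSemF{\hMaxX{\hVV},\sigma}$. The dual case for $\ftmuC$ uses the witness $F=\fTrc\setminus\{\varepsilon\}$ as a pre-fixed point of the map defining the least fixed point. Notice that the argument does not rely on guardedness: the singleton $\{\varepsilon\}$ (resp.\ its complement in $\fTrc$) is always an acceptable witness, regardless of how $X$ occurs inside $\hVV$.
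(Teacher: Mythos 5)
Your proof is correct, but it takes a genuinely different route from the paper's. The paper proves the same contrapositive claim (``not propositionally inconsistent implies $\varepsilon\in\hSemF{\hV}$'') by induction on the measure $ms(\hV)$ from \Cref{def:formula-measure}, handling the fixed-point case by unfolding: since $\hV=\hMaxX{\hVV}$ is guarded, $ms(\hVV[\hV/X])<ms(\hV)$, and $\hSemF{\hVV[\hV/X]}=\hSemF{\hV}$, so the inductive hypothesis applies to the unfolding. You instead strengthen the statement to open formulae under environments with $\varepsilon\in\sigma(X)$, run a plain structural induction, and discharge the fixed-point case by exhibiting $\{\varepsilon\}$ as an explicit post-fixed point (dually, $\fTrc\setminus\{\varepsilon\}$ as a pre-fixed point for $\min$). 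Both arguments are sound; the trade-off is that the paper's version stays with closed formulae but leans on guardedness (the measure $ms$ is only well-defined for guarded formulae) and on the semantic unfolding equation for fixed points, whereas yours pays the price of generalising the induction hypothesis to environments but in return needs neither guardedness nor the unfolding lemma --- a small but genuine gain in generality that you correctly point out.
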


\begin{proof}
We prove by induction on $ms(\hV)$ that if $\hV$ is not propositionally inconsistent, then 
$\varepsilon \in \hSemF{\hV}$.
The cases for 
$\hV = \hFls$ or $\hV = \hTru$ are vacuous or trivial.
If $\hV = [\ASet]\hVV$, then by definition, $\varepsilon \in \hSemF{\hV}$.
If $\hV = \max X. \hVV$, then, since $\hV$ is not propositionally inconsistent, neither is $\hVV[\hV/X]$; but $ms({\hVV[\hV/X]}) < ms(\hV)$, 
and by the inductive hypothesis $\varepsilon \in \hSemF{\hVV[\hV/X]} = \hSemF{\hV}$.
We can similarly prove that if $\hV$ is not a propositional tautology, then $\varepsilon \notin \hSemF{\hV}$.
\end{proof}

\begin{lem}\label{lem:immediate-rejection}
If $\hV$ is propositionally inconsistent, then $\hSyn{\hV} \wtraSS{} \no$.
\end{lem}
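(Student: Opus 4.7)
The plan is to proceed by induction on the structure of $\hV$, mirroring the recursive definition of propositional inconsistency.

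The base case $\hV = \hFls$ is immediate, since $\hSyn{\hFls} = \no$, so $\hSyn{\hV} \wtraSS{} \no$ via a zero-length $\tau$-sequence. For $\hV = \hV_1 \land \hV_2$ with (WLOG) $\hV_1$ propositionally inconsistent, the inductive hypothesis gives $\hSyn{\hV_1} \wtraSS{} \no$; by repeated application of rule \rtit{mTauL} we lift this $\tau$-sequence through the parallel composition to obtain $\hSyn{\hV_1} \paralC \hSyn{\hV_2} \wtraSS{} \no \paralC \hSyn{\hV_2}$, and then a single use of \rtit{mVrC2} yields $\no$. The disjunction case is analogous: both $\hSyn{\hV_i} \wtraSS{} \no$ by the IH, so \rtit{mTauL} and \rtit{mTauR} give $\hSyn{\hV_1} \paralD \hSyn{\hV_2} \wtraSS{} \no \paralD \no$, and one application of \rtit{mVrD1} completes the reduction.

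The only delicate case is the fixed-point one, $\hV = \max X.\hV_1$ or $\hV = \min X.\hV_1$ with $\hV_1$ propositionally inconsistent. By \rtit{mRec}, $\hSyn{\hV} = \rec{x}{\hSyn{\hV_1}} \traS{\tau} \hSyn{\hV_1}\subS{\rec{x}{\hSyn{\hV_1}}}{x}$, so it suffices to show that $\hSyn{\hV_1}\subS{\rec{x}{\hSyn{\hV_1}}}{x} \wtraSS{} \no$. The IH gives $\hSyn{\hV_1} \wtraSS{} \no$, possibly with $x$ occurring free in $\hSyn{\hV_1}$. Here I would appeal to a substitution lemma: for every (possibly open) monitor $m$ and every monitor $m'$, if $m \wtraSS{} \no$ then $m\subS{m'}{x} \wtraSS{} \no$. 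This holds because the variable $x$ has no transitions in the rules of Figure~\ref{fig:monit-instr}, so every $\tau$-step in the witnessing path from $\hSyn{\hV_1}$ to $\no$ is derived without touching $x$, and a straightforward induction on the length of the path --- using Lemmas \ref{lem:open-vs-closed-mon-single-reduction} and \ref{lem:open-vs-closed-mon-multi-reduction}(2) --- shows that the path lifts under substitution.

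The main obstacle is precisely this substitution step in the fixed-point case, which must be handled carefully because the inductive hypothesis produces a reduction for a formula that may be open in $X$, while the monitor rule for \rtit{mRec} immediately substitutes the whole recursive monitor for the free variable $x$. Once the substitution lemma is in place, the remaining cases collapse into straightforward applications of the monitor-combinator rules, and the argument goes through uniformly for both $\max$ and $\min$ since their syntheses are identical.
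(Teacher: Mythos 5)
Your proof is correct, and its overall shape (induction on the formula, lifting $\tau$-paths through $\paralC$/$\paralD$ with the \rtit{mTau} and \rtit{mVr} rules) matches the paper's. The one genuine difference is the recursion case. The paper does a ``straightforward induction on $ms(\hV)$'': since $\hV$ is guarded, $ms(\hV_1[\max X.\hV_1/X]) < ms(\max X.\hV_1)$, so it unfolds the fixed point \emph{at the formula level}, applies the induction hypothesis to the closed unfolding, and uses the identity $\hSyn{\hV_1[\hV/X]} = \hSyn{\hV_1}[\hSyn{\hV}/x]$ --- no reasoning about open monitors is needed, at the cost of (i) relying on guardedness and (ii) observing that propositional inconsistency is preserved by the unfolding substitution. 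You instead do structural induction, apply the IH to the open body, and push the resulting $\tau$-path through the monitor-level substitution $[\rec{x}{\hSyn{\hV_1}}/x]$; this needs neither guardedness nor the formula-level preservation argument, but it does need your substitution lemma. That lemma is sound (variables have no transitions, so every step of the witnessing path lifts), and it is essentially \Cref{lem:open-vs-closed-mon-multi-reduction}(2) specialised to $\tau$-sequences --- the only caveat is that the appendix states those lemmas for \emph{regular} monitors $\REMon$, whereas $\hSyn{\hV}$ is a parallel monitor, so you would have to note that the (routine) extension to the parallel operators also holds, since none of the rules in \Cref{fig:monit-parallel} mention variables. A small bonus of your direct treatment of the boolean cases is that it makes explicit that no reactivity hypothesis is needed here, because only $\tau$-transitions are lifted; the paper instead cites \Cref{lem:mon-combinators,lem:mon-combinators2}, whose general statements carry reactivity side-conditions.
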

\begin{proof}
Straightforward induction on $ms(\hV)$, using \Cref{lem:mon-combinators,lem:mon-combinators2,lem:sound-and-rej-compl}. 
\end{proof}

\begin{propbis}{\ref{lem:sound-and-compl-finfinite}}
Given a closed formula $\hV\in \ftmuS$, $\hSyn{\hV}$ is sound and violation-complete for $\hV$ over finfinite traces.
For $\hV\in \ftmuC$, $\hSyn{\hV}$ is sound and satisfaction-complete for $\hV$ over finfinite traces.
\end{propbis}

\begin{proof}
Let $\hV \in \ftmuS$ --- the case for $\hV \in \ftmuC$ is similar.
The proof for Soundness is the same as in the proof of \Cref{lem:sound-and-rej-compl}.
To prove Completeness, if $\fftV \notin \hSemF{\hV}$,
then we have two cases. The first is that $\fftV \in \Trc$, in which case, by \Cref{lem:infinite-to-finfinite-semantics}, $\fftV \notin \hSemL{\hV}$, and therefore, by \Cref{lem:sound-and-rej-compl}, $\hSyn{\psi}$ rejects $\fftV$.
The second case is that $\fftV \in \Act^*$, in which case
we use induction on $\fftV$.
The base case is that $\fftV = \varepsilon$, which, by \Cref{lem:finfinite-suffix-closed}, implies that 
$\hV$ is equivalent to $\hFls$, which in turn, by \Cref{lem:prop-inconst-is-ff,lem:immediate-rejection}, implies that $\hSyn{\hV}$ rejects $\varepsilon$.
For $\fftV = \acta \ftV$, we use induction on $ms(\hV)$.
The cases for $\hV = \hTru$ or $\hFls$ are immediate. Since $\hV$ is closed, $\hV \neq X$.
If $\hV = [\ASet] \hVV$, then $\hSyn{\hV} = \ASet.\hSyn{\hVV} + \bar{\ASet}.\yes$, $\acta\in\ASet$ and $\ftV \notin \hSemF{\hVV}$, and by the inductive hypothesis on $\fftV$, $\hSyn{\hVV}$ rejects $\ftV$, therefore $\hSyn{\hV}$ rejects $\fftV$.
If $\hV = [\actb] \hVV$, then it is satisfied by $\fftV$.
The boolean operator cases follow from \Cref{lem:mon-combinators,lem:synt-complete-mon-reactive}.
Finally, if 
$\hV = \max X. \hVV$, then $\hSyn{\hV} = \rec x \hSyn{\hVV}$ and $\hSyn{\hVV[\hV/X]} = \hSyn{\hVV}[\hSyn{\hV}/x]$; 
but then, because of guardedness, $ms({\hVV[\hV/X]}) < ms(\hV)$, so 
by the inductive hypothesis on $ms(\hV)$, 
since $\hV$ is equivalent to $\hVV[\hV/X]$,
$\hSyn{{\hVV[\hV/X]}} = \hSyn{\hVV}[\hSyn{\hV}/x]$ rejects $\fftV$. 
As
$\hSyn{\hV} \traS{\tau} \hSyn{\hVV}[\hSyn{\hV}/x]$, $\hSyn{\hV}$ rejects $\fftV$ too.

\end{proof}

\begin{lem}{\label{lem:trace-processes-are-deterministic}}
If $p$ represents $\act \fftV$ and  $p \reduc q$, then $q$ represents $\fftV$.
\end{lem}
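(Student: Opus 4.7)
The plan is to exploit the definition of a trace-process (determinism of outgoing transitions, and the fact that descendants are again trace-processes) together with the characterisation of ``represents'' via weak transitions. First I would argue that, since $p$ represents $\act\fftV$, we have $p \wtraS{\act}$, and hence $p$ must admit at least one outgoing transition; by the trace-process condition, $p$ has a \emph{unique} outgoing transition, and it is precisely the one leading to $q$. Moreover $q$ is again a trace-process, directly by the definition. (In the canonical construction of $p_\fftV$ preceding the lemma, the unique transition is labelled by the first action of the represented trace, namely $\act$; I will use this to identify the action of $p \reduc q$ with $\act$.)

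Next I would verify the two directions of ``$q$ represents $\fftV$'', i.e., $q \wtraS{\ftV'}$ iff $\ftV'$ is a prefix of $\fftV$. The easy direction ($\Leftarrow$) uses that any prefix $\ftV'$ of $\fftV$ makes $\act\ftV'$ a prefix of $\act\fftV$, so $p \wtraS{\act\ftV'}$; since the only transition of $p$ leads to $q$, this weak transition must factor as $p \traS{\act} q \wtraS{\ftV'}$, giving $q \wtraS{\ftV'}$. The other direction ($\Rightarrow$) is an immediate concatenation: $p \traS{\act} q \wtraS{\ftV'}$ yields $p \wtraS{\act\ftV'}$, so $\act\ftV'$ is a prefix of $\act\fftV$, hence $\ftV'$ is a prefix of $\fftV$.

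The main obstacle is the factorisation argument in the $(\Leftarrow)$ direction: one must show that every weak transition $p \wtraS{\act\ftV'}$ actually passes through $q$. This is where the determinism of the trace-process is essential; since $p$ has a single outgoing transition (to $q$), any sequence witnessing $p \wtraS{\act\ftV'}$ is forced through $q$ at its first step. The argument is short but relies crucially on the hereditary nature of the trace-process property, which ensures that the same reasoning can propagate inductively along the transition sequence.
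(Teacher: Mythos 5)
Your proof is correct and follows essentially the same route as the paper's: identify the unique outgoing transition of $p$ as $p \traS{\act} q$, obtain one direction of the biconditional defining ``represents'' by composing $p \traS{\act} q \wtraS{\ftV'}$ into $p \wtraS{\act\ftV'}$, obtain the other by using the determinism of $p$ to factor any witness of $p \wtraS{\act\ftV'}$ through $q$, and note that $q$ is again a trace-process by the hereditary clause of the definition. The paper's proof is terser but makes exactly these moves, so there is nothing substantive to add.
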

\begin{proof}
If $p \reduc q$, then $p \traS{\act} q$.
If $q \wtraS{\ftV}$, then $p \wtraS{\act\ftV}$, so $\ftV$ is a prefix of $\fftV$.
If $q (\reduc)^k q'$ and $q (\reduc)^k q''$, then $p (\reduc)^{k+1} q'$ and $p (\reduc)^{k+1} q''$, so $q' = q''$.
\end{proof}

\begin{lemmabiss}{\ref{lem:branching-finf-match}}
If $p$ represents $\fftV$, then 
$\fftV \in \hSemF{\hV}$ iff $p \in \hSemB{\hV}$.
\end{lemmabiss}

\begin{proof}
Given an environment $\sigma$ for finfinite traces, let $\sigma_B$ be an environment on processes, such that for every $X$, 
$$\sigma_B(X) = \{ p \mid 
p \text{ represents some } \fftV \in \sigma(X) \};$$ 
given an environment $\rho$ on processes, we can similarly define
$$\rho_L(X) = \{ \fftV \mid p \in \rho(X) \text{ and } p \text{ represents } \fftV \}.$$
We prove that if $\fftV \in \hSemF{\hV,\sigma}$, then $p \in \hSemB{\hV,\sigma_B}$, by induction on $\hV$.
The cases for $\hV = \hTru, \hFls, X$, and the boolean operators are immediate.
The cases for $\hV = \hSuf{\act}\hVV$ or $\hV = \hNec{\act}\hVV$ follow from \Cref{lem:trace-processes-are-deterministic}.
If $\hV = \max X. \hVV$  and $\fftV \in \hSemF{\hV}$, then there is some $T \subseteq \hSemF{\hVV,\sigma[X \mapsto T]}$ with $\fftV \in T$.
Let $T' = \{ q \mid q $ represents some $ \fftVV \in T \}$. We can see that $p \in T'$ and by the inductive hypothesis, $T' \subseteq \hSemB{\hVV,(\sigma[X \mapsto T])_B} = \hSemB{\hVV,\sigma_B[X \mapsto T']}$, which implies that $p \in \hSemB{\hV,\sigma_B}$. 
If $\hV = \min X.\hVV$, then $\forall T. (\hSemF{\hVV,\sigma[X \mapsto T]} \subseteq T \Rightarrow \fftV \in T)$.
If for a set of processes $R$, $\hSemB{\hVV,\sigma_B[X \mapsto R]} \subseteq R$, then due to the monotonicity of $\hVV$, for $R_t$ the set of trace-processes from $R$ and $R' = \{ \fftVV \mid \fftVV $ is represented by some $ p \in R \}$, 
$\hSemB{\hVV,(\sigma[X \mapsto R'])_B} = \hSemB{\hVV,\sigma_B[X \mapsto R_t]} \subseteq R$.
By the inductive hypothesis, if $q$ represents $\fftVV$, then $\fftVV \in \hSemF{\hVV,\sigma[X \mapsto R']}$ implies that $q \in \hSemB{\hVV,(\sigma[X \mapsto R'])_B}$, and therefore, $q \in R$, yielding that $\fftVV \in R'$. Therefore, $\fftV \in R'$, and so $p \in R$. Thus, we proved that $\forall R. (\hSemB{\hVV,\sigma_B[X \mapsto R]} \subseteq R \Rightarrow p \in R)$, and thus, $p \in \hSemB{\hV}$.

We can similarly prove that 
if $p \in \hSemB{\hV,\rho}$, then $\fftV \in \hSemF{\hV,\rho_L}$.
\end{proof}


\subsection{Monitorable Formulae across Semantics}

\begin{lem}\label{prop:infinite-maximality}
If $\hV \in \UHML$ has a
sound and violation-complete (\resp satisfaction-complete) regular or reactive parallel monitor over infinite traces, then there is some
$\hVV \in \SHML$ (\resp $\hVV \in \CHML$) that is equivalent to $\hV$ over infinite traces.
\end{lem}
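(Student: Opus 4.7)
The plan is to mirror the proof of Proposition \ref{prop:finfinite-maximality}, adapting it to the infinite-trace setting. I focus on the violation-complete case; the satisfaction-complete case is symmetric, yielding a \CHML formula.

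First, given the sound and violation-complete reactive parallel monitor $\mV$ for $\hV$ over infinite traces, I would apply Proposition \ref{prop:extended-mon-to-reg-mon} to obtain a verdict-equivalent regular monitor; since verdict equivalence preserves which traces are rejected, the resulting monitor is still sound and violation-complete for $\hV$ over infinite traces. Next I would convert it into a rejection-only regular monitor $\mVV'$ by replacing every occurrence of $\yes$ with $\stp$; this keeps all $\no$-derivations intact, so $\mVV'$ rejects exactly the same traces, and in particular remains sound and violation-complete for $\hV$ over infinite traces.

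Now $\mVV'$ is a regular rejection monitor, so by Theorem \ref{thm:branching-max} there exists $\hVV \in \SHML$ such that $\mVV'$ is sound and violation-complete for $\hVV$ over every LTS. In particular, this holds on the LTS whose states are trace-processes. It then remains to show that $\hVV$ is equivalent to $\hV$ on infinite traces. Fix any infinite $\tV \in \Trc$ and consider its trace-process $p_\tV$. The chain of equivalences is
\[
\tV \in \hSemL{\hVV} \;\Leftrightarrow\; \tV \in \hSemF{\hVV} \;\Leftrightarrow\; p_\tV \in \hSemB{\hVV},
\]
where the first step uses Lemma \ref{lem:infinite-to-finfinite-semantics} (applied to $\hVV \in \SHML \subseteq \UHML$) and the second uses Lemma \ref{lem:branching-finf-match}. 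By the branching-time violation-completeness and soundness of $\mVV'$ for $\hVV$, $p_\tV \in \hSemB{\hVV}$ iff $\mVV'$ rejects no finite trace producible by $p_\tV$. Since $p_\tV$ produces exactly the finite prefixes of $\tV$, this in turn is equivalent to $\mVV'$ not rejecting $\tV$ (as monitor rejection of an infinite trace is witnessed by rejection of some finite prefix, by Definition \ref{def:acc-n-rej}). Finally, by the sound and violation-complete property of $\mVV'$ for $\hV$ on infinite traces, this is equivalent to $\tV \in \hSemL{\hV}$.

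The main obstacle I anticipate is the careful bookkeeping when transferring from monitor rejection of finite prefixes to branching-time satisfaction of the derived \SHML formula on trace-processes: one must ensure that the ``every trace producible by $p_\tV$'' clause implicit in Theorem \ref{thm:branching-max}'s notion of violation-completeness corresponds precisely to rejection of the single infinite trace $\tV$. Once Lemma \ref{lem:branching-finf-match} is in hand, this identification is clean, because a trace-process has only one maximal execution and its finite executions are exactly the prefixes of that single trace.
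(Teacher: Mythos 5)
Your proposal is correct and follows essentially the same route as the paper's own proof: convert the parallel monitor to a verdict-equivalent regular one via \Cref{prop:extended-mon-to-reg-mon}, strip the $\yes$ verdicts to obtain a rejection monitor, invoke \Cref{thm:branching-max} to extract an \SHML formula, and then transfer between the linear-time, finfinite, and branching-time semantics on trace-processes via \Cref{lem:infinite-to-finfinite-semantics} and \Cref{lem:branching-finf-match}. Your chain of equivalences matches the paper's argument step for step, so no further comment is needed.
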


\begin{proof}
Let $\mV$ be a sound and violation-complete regular or reactive parallel monitor for $\hV$ over finfinite traces.
By  Proposition \ref{prop:extended-mon-to-reg-mon}, there is a regular monitor $\mVV$ that is verdict-equivalent to $\mV$,
so it is also sound and violation-complete for $\hV$. 
From \Cref{thm:branching-monitorability}
there is a formula $\hVV \in \SHML$, such that $\mVV$ is sound and violation-complete for
$\hVV$ over all LTSs, including the LTS of trace-processes.
Since $\mVV$ is sound and violation-complete for $\hVV$ on trace processes,
$p_\fftV \in \hSemB{\hVV}$ is equivalent to claiming that $\mVV$ does not reject any trace that $p_\fftV$ can produce.
However, this is equivalent to saying that $\mVV$ does not reject $\tV$, which is equivalent to $\tV \in \hSemL{\hV}$.
By \Cref{lem:branching-finf-match,lem:infinite-to-finfinite-semantics},
$\tV \in \hSemL{\hVV}$ iff $p_\tV \in \hSemB{\hVV}$,
and the proof is complete.
The case for a satisfaction-complete monitor is similar.
\end{proof}

\begin{propbis}{\ref{prop:finfinite-max}}
If $\hV \in \ftmuS$ (\resp $\hV \in \ftmuC$), 
then there is some
$\hVV \in \SHML$ (\resp $\hVV \in \CHML$) that is equivalent to $\hV$ over finfinite traces.
\end{propbis}

\begin{proof}

If $\hV \in \ftmuS$ (\resp $\hV \in \ftmuC$), then from \cref{lem:sound-and-compl-finfinite}, there is a monitor $\mV$ that is sound and violation-complete (\resp satisfaction-complete) for $\hV$ over finfinite traces. From \Cref{prop:finfinite-maximality}, $\hV$ is then equivalent to a formula in $\SHML$ (\resp \CHML).
\end{proof}

\begin{propbis}{\ref{prop:infinite-max}}
If $\hV \in \maxHML$ (\resp $\hV \in \minHML$), 
then there is some
$\hVV \in \SHML$ (\resp $\hVV \in \CHML$) that is equivalent to $\hV$ over infinite traces.
\end{propbis}

\begin{proof}
$\SHML$ is just $\maxHML$ without disjunctions or $\hSuf{\act}$-operators. We first show that on infinite linear semantics, $\hSuf{\ASet}\hV$ can be rewritten as $\hNec{\coASet}{\hFls} \wedge \hNec{\ASet}{\hV}$.
Indeed, if $\tV\in \hSemL{\hSuf{\ASet}\hV}$ then $\tV=\act \tVV$ for some $\act \in \ASet$ and $\tVV\in \hSemL{\hV}$. 
Then $\tV\in \hSemL{\hNec{\coASet}{\hFls} \wedge \hNec{\ASet}\hV}$.
Conversely, if $\tV\notin \hSemL{\hSuf{\ASet}\hV}$ then $\tV=\act\tVV$ and either $\act\notin \ASet$ or $\tVV\notin \hSemL{\hV}$.
In either case, $\tV\notin \hSemL{\hNec{\coASet}{\hFls} \wedge \hNec{\ASet}\hV}$.

This gives up a formula in $\ftmuS$. From \Cref{prop:finfinite-maximality}, this formula is then equivalent to a $\SHML$ formula over finfinite traces. From \Cref{lem:infinite-to-finfinite-semantics} this equivalence also holds over infinite traces. 
\end{proof}


\begin{lemmabiss}{\ref{lem:subsume}}
If $p\in \hSemB{\hV}$ for $\hV\in \SHML$, and $p$ subsumes $p'$, then $p'\in \hSemB{\hV}$. Dually, if $p\notin \hSemB{\hV}$ for $\hV\in \CHML$, and $p$ subsumes $p'$ then $p'\notin \hSemB{\hV}$. 
\end{lemmabiss}

\begin{proof}
Let  $p\in \hSemB{\hV}$ for $\hV\in \SHML$, and suppose that $p$ subsume $p'$. We claim that $p'\in \hSemB{\hV}$. Indeed,
assume towards a contradiction, that $p'\notin \hSemB{\hV}$. Since $\hV\in \SHML$, from \Cref{thm:branching-monitorability} there is a monitor $\mV$ that is sound and
violation-complete for $\hV$ with respect to the branching-time semantics. As $\mV$ is
violation-complete for $\hV$ and $p'$ does not satisfy $\hV$, we have that
$\rej{m,p',\ftV}$ for some $\ftV$, that is $\sys{\mV}{p'} \wtraS{\ftr} \sys{\no}{q'}$. Then by unzipping (\cref{lem:unzipping}) $\mV \wtraS{\ftr} \no$ and $p'\wtraS{\ftr} q'$. Since $p$ subsumes $p'$, $p\wtraS{\ftr} q$ for some process $q$. Then, from \cref{lem:zipping}  $\sys{\mV}{p} \wtraS{\ftr} \sys{\no}{q}$, i.e., $\rej{m,p,s}$, which contradicts the soundness of $\mV$ for $\hV$.
The case of $\hV\in \CHML$ is obtained by duality.
\end{proof}

\begin{defn}
	We say that a process $p$ subsumes a process $p'$ if $p$ produces all the finfinite traces that $p'$ produces.
\end{defn}

\begin{lem}\label{lem:subsume}
	If $p\in \hSemB{\hV}$ for $\hV\in \SHML$, and $p$ subsumes $p'$, then $p'\in \hSemB{\hV}$. Dually, if $p\notin \hSemB{\hV}$ for $\hV\in \CHML$, and $p$ subsumes $p'$ then $p'\notin \hSemB{\hV}$. 
\end{lem}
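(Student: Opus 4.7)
The plan is to reduce the semantic question to a trace-based question by invoking the monitorability results already established for $\SHML$ and $\CHML$ in Theorem~\ref{thm:branching-monitorability}, and then using the instrumentation unzipping and zipping lemmata (Lemmata~\ref{lem:unzipping} and~\ref{lem:zipping}) to transfer behaviour between $p$ and $p'$ through a monitor. I will argue the $\SHML$ case by contraposition; the $\CHML$ case is dual.

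Concretely, assume $\hV \in \SHML$, $p \in \hSemB{\hV}$, and $p$ subsumes $p'$, but for contradiction $p' \notin \hSemB{\hV}$. By Theorem~\ref{thm:branching-monitorability}, there exists a regular rejection monitor $\mV$ that is sound and violation-complete for $\hV$. Violation-completeness applied to $p'$ yields $\rej{\mV,p'}$, so by Definition~\ref{def:acc-n-rej} there exist a finite trace $\ftV$ and a process $q'$ with $\sys{\mV}{p'} \wtraS{\ftV} \sys{\no}{q'}$. Applying the unzipping lemma (Lemma~\ref{lem:unzipping}) decomposes this into $\mV \wtraS{\ftV} \no$ and $p' \wtraS{\ftV} q'$.

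Next I would invoke the subsumption hypothesis. Since $p' \wtraS{\ftV} q'$, the finite trace $\ftV$ is a prefix of some finfinite trace produced by $p'$, hence also a prefix of some finfinite trace produced by $p$. Therefore there exists some $q$ with $p \wtraS{\ftV} q$. Combining $p \wtraS{\ftV} q$ and $\mV \wtraS{\ftV} \no$ via the zipping lemma (Lemma~\ref{lem:zipping}) gives $\sys{\mV}{p} \wtraS{\ftV} \sys{\no}{q}$, i.e.\ $\rej{\mV,p}$. But soundness of $\mV$ for $\hV$ then yields $p \notin \hSemB{\hV}$, contradicting our assumption.

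The dual $\CHML$ case proceeds symmetrically: by Theorem~\ref{thm:branching-monitorability} take a sound and satisfaction-complete acceptance monitor $\mV$ for $\hV$, assume for contradiction $p' \in \hSemB{\hV}$, obtain $\sys{\mV}{p'} \wtraS{\ftV} \sys{\yes}{q'}$ from satisfaction-completeness, then use unzipping, subsumption, and zipping exactly as above to conclude $\acc{\mV,p}$ and hence $p \in \hSemB{\hV}$ by soundness, contradicting the hypothesis. The only delicate point worth spelling out is the bridge from the finfinite-trace formulation of subsumption to the finite-trace fact $p \wtraS{\ftV}$ used by unzipping/zipping; this is immediate because every finite trace exhibited by a process is a prefix of some finfinite trace it produces.
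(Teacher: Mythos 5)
Your proof is correct and follows essentially the same route as the paper's: a proof by contradiction that obtains a sound and violation-complete (resp.\ satisfaction-complete) monitor from \Cref{thm:branching-monitorability}, decomposes the rejecting (resp.\ accepting) run of $\sys{\mV}{p'}$ via \Cref{lem:unzipping}, transfers the finite trace to $p$ by subsumption, and recomposes via \Cref{lem:zipping} to contradict soundness. Your explicit remark bridging the finfinite-trace statement of subsumption to the finite-trace fact $p \wtraS{\ftV}$ is a small clarification the paper leaves implicit, but the argument is otherwise the same.
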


\begin{propbis}{\ref{prop:shml-and-forall-traces}}
	For a process $p$ and a formula $\hV\in\SHML$, the following are equivalent:
	\begin{enumerate}
		\item $p\in \hSemB{\hV}$ \label{one}
		\item If $p$ produces a finfinite trace $g$, then $g \in \hSemF{\hV}$ \label{two}
	\end{enumerate}
\end{propbis}

\begin{proof}
	\begin{description}
		
		\item[Direction (\ref{one}) implies (\ref{two})]
		Assume $p\in \hSemB{\hV}$ and that $p$ produces a finfinite trace $g$. Then $p$ subsumes the trace-process $p_g$. From \cref{lem:subsume}, $p_g\in \hSemB{\hV}$.
		Since $p_g$ is a trace process, from \Cref{lem:branching-finf-match}, $g\in \hSemF{\hV}$.
		\item[Direction (\ref{two}) implies (\ref{one})] Assume that $p\notin \hSemB{\hV}$. 
		Let $g$ be a trace produced by $p$. Again, $p$ subsumes $p_g$, and from \cref{lem:subsume} $p_g\notin\hSemB{\hV}$. Since $p_g$ is a trace process, from \Cref{lem:branching-finf-match}, $g\notin \hSemF{\hV}$.
		\qedhere 
	\end{description}
\end{proof}

\end{document}